\newtheorem{conjecture}{Conjecture}
\newtheorem{theorem}{Theorem}
\newtheorem{proposition}{Proposition}
\newtheorem{lemma}{Lemma}
\newtheorem{corollary}{Corollary}
\theoremstyle{definition}
\newtheorem{example}{Example}
\newtheorem{definition}{Definition}
\newtheorem{claim}{Claim}
\newcommand{\step}[1]{\stackrel{#1}{\longrightarrow}}
\newcommand{\cqastep}[1]{\stackrel{#1}{\longrightarrow\!\!\!\!\!\!\!\!\longrightarrow}}
\newcommand{\longstep}[1]{\stackrel{#1}{\xrightarrow{\hspace*{30pt}}}}
\newcommand{\longcqastep}[1]{\stackrel{#1}{\xrightarrow{\hspace*{30pt}}\!\!\!\!\!\!\!\!\longrightarrow}}
\newcommand{\verylongstep}[1]{\stackrel{#1}{\xrightarrow{\hspace*{70pt}}}}
\newcommand{\myparagraph}[1]{{\textbf{#1}.}}
\newcommand{\problem}[1]{\mathsf{#1}}
\newcommand{\FO}{$\mathbf{FO}$}
\newcommand{\PTIME}{$\mathbf{PTIME}$}
\newcommand{\LSPACE}{$\mathbf{L}$}
\newcommand{\NL}{$\mathbf{NL}$}
\newcommand{\coNP}{$\mathbf{coNP}$}
\newcommand{\typeset}[1]{\mathcal{#1}}
\newcommand{\bone}{\typeset{B}_{1}}
\newcommand{\btwoa}{\typeset{B}_{2a}}
\newcommand{\btwob}{\typeset{B}_{2b}}
\newcommand{\bthree}{\typeset{B}_{3}}
\newcommand{\cone}{\typeset{C}_{1}}
\newcommand{\ctwo}{\typeset{C}_{2}}
\newcommand{\cthree}{\typeset{C}_{3}}
\newcommand{\done}{\typeset{D}_{1}}
\newcommand{\dtwo}{\typeset{D}_{2}}
\newcommand{\dthree}{\typeset{D}_{3}}
\newcommand{\defeq}{\vcentcolon=}
\newcommand{\formula}[1]{\left({#1}\right)}
\newcommand{\normalformula}[1]{({#1})}
\newcommand{\card}[1]{|{#1}|}
\newcommand{\pair}[2]{\langle{#1},{#2}\rangle}
\newcommand{\adom}[1]{{\mathsf{adom}}({#1})}
\newcommand{\cqa}[1]{\mathsf{CERTAINTY}({#1})}
\newcommand{\emptyword}{\varepsilon}
\newcommand{\first}[1]{{\mathsf{first}}({#1})}
\newcommand{\last}[1]{\mathsf{last}({#1})}
\newcommand{\head}[1]{{\mathsf{head}}({#1})}
\newcommand{\rear}[1]{{\mathsf{rear}}({#1})}
\newcommand{\pumpclosure}[1]{{\mathcal{L}}^{\looparrowright}({#1})}
\mathchardef\mhyphen="2D
\newcommand{\sset}[3]{{\mathsf{ST}}_{#3}({#1},{#2})}
\newcommand{\csset}[3]{{\mathsf{cqaST}}_{#3}({#1},{#2})}
\newcommand{\nfa}[1]{{\mathsf{NFA}}({#1})}
\newcommand{\snfa}[2]{{\mathsf{S\mhyphen NFA}}({#1},{#2})}
\newcommand{\chr}[1]{{\mathsf{char}}({#1})}
\newcommand{\extend}[1]{{\mathsf{ext}}({#1})}
\newcommand{\nfashortest}[1]{{\mathsf{NFA^{\min}}}({#1})}
\newcommand{\startshortest}[2]{{\mathsf{start^{\min}}}({#1},{#2})}
\newcommand{\db}{\mathbf{db}}
\newcommand{\block}{\mathbf{blk}}
\newcommand{\rep}{\mathbf{r}}
\newcommand{\sep}{\mathbf{s}}
\newcommand{\var}[1]{\mathsf{vars}({#1})}
\newcommand{\queryvars}[1]{{\mathsf{vars}}({#1})}
\newcommand{\proves}[1]{\vdash_{#1}}
\newcommand{\nproves}[1]{\not\vdash_{#1}}
\newcommand{\constant}[1]{\mathtt{#1}}
\newcommand{\qconstant}[1]{\mbox{`$\mathtt{#1}$'}}
\newcommand{\myprec}[1]{\prec_{#1}}
\newcommand{\mypreceq}[1]{\preceq_{#1}}
\newcommand{\fixedhead}[2]{{#1}_{[#2]}}
\newcommand{\pathcons}[2]{[\![{#1},{#2}]\!]}
\newcommand{\starttwo}[2]{{\mathsf{start}}({#1},{#2})}
\newcommand{\kleene}[2]{\left({#1}\right)^{#2}}
\newcommand{\shortcdot}{\!\cdot\!}
\newcommand{\swipe}[3]{\mbox{${#1}\shortcdot{#2}\shortcdot\underline{#2}\shortcdot{#3}$}}
\newcommand{\prefixswipe}[2]{{#1}\shortcdot\underline{#1}\shortcdot{#2}}
\newcommand{\symbols}[1]{{\mathsf{symbols}}({#1})}
\author{Paraschos Koutris}
\author{Xiating Ouyang}
\affil{University of Wisconsin-Madison, WI, USA}
\author{Jef Wijsen}
\affil{University of Mons, Belgium}
\date{}
\title{Consistent Query Answering for Primary Keys on Path Queries\thanks{This paper is an evolved version of a paper with the same title and authors published at ACM PODS'21~\cite{DBLP:conf/pods/KoutrisOW21}. In particular, the proof of Lemma~\ref{lem:shrinking} in the current paper is new and replaces a flawed proof in the earlier version, and the technical treatment in the current Section~\ref{sec:datalog} strengthens some earlier results.}}
\begin{document}
\maketitle
 
\begin{abstract}
We study the data complexity of consistent query answering (CQA) on databases that may violate the primary key constraints. A repair is a maximal consistent subset of the database. For a Boolean query~$q$, the problem $\cqa{q}$ takes a database as input, and asks whether or not each repair satisfies~$q$. It is known that for any self-join-free Boolean conjunctive query $q$, $\cqa{q}$ is in \FO, \LSPACE-complete, or \coNP-complete. In particular, $\cqa{q}$ is in \FO\ for any self-join-free Boolean path query~$q$. In this paper, we show that if self-joins are allowed, the complexity of $\cqa{q}$ for Boolean path queries~$q$ exhibits a tetrachotomy between \FO, \NL-complete, \PTIME-complete, and \coNP-complete. Moreover, it is decidable, in polynomial time in the size of the query~$q$, which of the four cases applies. 
%We also show that when $\cqa{q}$ is in \FO, a consistent first-order rewriting can be effectively constructed. 
\end{abstract}

\section{Introduction}\label{sec:introduction}

%The primary key constraint is often imposed on databases as an integrity constraint. An (inconsistent) database violates the primary key constraint if it contains two distinct facts holding the same primary key. 
%Inconsistencies in databases are common in practice due to erroneous data integration, duplicate data exchange or acquisition of dirty data. Data cleaning is one common approach to deal with such inconsistencies, in which the inconsistent database is first repaired so that it satisfies the integrity constraints, and then execute the query on the repaired database. However, such methods often suffer from information loss since there are many different ways to repair the database. For that reason, we consider \emph{consistent query answering}, that is, to find the query answers that appear in \emph{every} repair. In this paper, we consider only the inconsistent databases that may violate the primary key constraint. 

Primary keys are probably the most common integrity constraints in relational database systems.
Although databases should ideally satisfy their integrity constraints,
data integration is today frequently cited as a cause for primary key violations, for example, when a same client is stored with different birthdays in two data sources.
A \emph{repair} of such an inconsistent database instance is then naturally defined as a maximal consistent subinstance.
Two approaches are then possible.
In \emph{data cleaning}, the objective is to single out the ``best'' repair, which however may not be practically possible.
In \emph{consistent query answering} (CQA)~\cite{10.1145/303976.303983}, instead of cleaning the inconsistent database instance, we change the notion of query answer: the \emph{consistent} (or \emph{certain}) \emph{answer} is defined as the intersection of the query answers over all (exponentially many) repairs. 
%\revision{Computing the consistent answers for non-Boolean queries reduces efficiently to computing that for Boolean queries \cite{10.1145/2188349.2188351,10.1145/3340531.3411911}.}  
In computational complexity studies, consistent query answering is commonly defined as the data complexity of the following decision problem, for a fixed Boolean query~$q$:  

\begin{description}
	\item[Problem:] $\cqa{q}$
	\item[Input:] A database instance $\db$.
	\item[Question:] Does $q$ evaluate to true on every repair of $\db$?
\end{description}

For every first-order query~$q$, the problem $\cqa{q}$ is obviously in \coNP. %since a repair on which $q$ is evaluated to be false (called a falsifying repair) can be verified in polynomial time by simply running the query on that repair. However, it is also known that for some boolean query $q$, the problem $\cqa{q}$ can be solved in polynomial time. In certain cases, $\cqa{q}$ even admits a \emph{first-order rewriting}: there exists another boolean query $q'$ such that $q'$ returns true on an inconsistent database $\db$ if and only if $\db$ is a ``yes''-instance for $\cqa{q}$, and we say that $\cqa{q}$ is in \FO. It has been long conjectured in the database theory community that the complexity of $\cqa{q}$ displays a dichotomy.
However, despite significant research efforts (see Section~\ref{sec:related-work}), a fine-grained complexity classification is still largely open. 
A notorious open conjecture is the following.

\begin{conjecture}\label{conj:dichotomy}
For each Boolean conjunctive query $q$, $\cqa{q}$ is 
either in \PTIME\ or \coNP-complete.
\end{conjecture}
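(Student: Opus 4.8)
The plan is, first and foremost, to be candid that this is the notorious open dichotomy conjecture for consistent query answering under primary keys, for which no full proof is known; what I can offer is the research program by which one attacks it, and of which the present paper's tetrachotomy for path queries is one instance. The overall shape of any proof must be a \emph{meta}-argument: one exhibits a syntactically checkable property $\Pi$ of conjunctive queries, proves that $\cqa{q}$ is in \PTIME\ whenever $q$ satisfies $\Pi$, and proves that $\cqa{q}$ is \coNP-complete whenever $q$ fails $\Pi$. The dichotomy then follows trivially, since every $q$ either has or lacks $\Pi$. So the real content is threefold: (i) discovering the correct dividing line $\Pi$, (ii) an algorithm witnessing membership in \PTIME\ on one side, and (iii) a uniform family of reductions witnessing \coNP-hardness on the other.

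For the upper-bound side I would build on the attack-graph machinery that settles the self-join-free case. There one orients the atoms of $q$ by an ``attack'' relation derived from the functional dependencies that the primary keys induce, and for self-join-free queries the tractable cases are exactly those whose attack graph contains no \emph{strong} cycle; tractability is witnessed either by a consistent first-order rewriting or, when that fails, by a fixpoint/Datalog computation that simulates a reachability or $2$-colourability check over the blocks (the groups of key-equal facts). To reach the general conjecture one must generalize this attack graph so that it stays meaningful when several atoms share a relation symbol, and then show that an acyclicity-type condition on the generalized object still yields a polynomial-time decision procedure.

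For the lower-bound side the template is a reduction from a fixed \coNP-complete source problem — typically the complement of monotone $3$-SAT or a $1$-in-$3$ variant — encoding each clause by a carefully built inconsistent block, so that repairs correspond to truth assignments and $q$ fails on some repair precisely when the formula is satisfiable. The combinatorial heart is designing gadgets whose inconsistent blocks force the self-joins of $q$ to propagate a choice along the encoded structure.

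The step I expect to be the genuine obstacle — and the reason the conjecture is still open — is reconciling (i) and (iii) in the presence of self-joins. When atoms repeat a relation symbol, a single block can simultaneously serve several atoms, so the clean separation between ``attacked'' and ``unattacked'' variables collapses, and neither the first-order rewriting nor the hardness gadgets transfer directly. The path-query restriction studied here is precisely a regime where this interaction is controllable: the atoms form a chain $R_1(x_0,x_1), R_2(x_1,x_2), \dots$, so the propagation of choices can be analysed with automata-theoretic tools, which is what yields the refined \FO/\NL-complete/\PTIME-complete/\coNP-complete classification. Extending that analysis to arbitrary join structure is the missing ingredient, and I would not expect the present paper to supply it.
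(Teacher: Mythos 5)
You have correctly identified that there is nothing to prove here: Conjecture~\ref{conj:dichotomy} is stated in the paper precisely as a notorious \emph{open} conjecture, the paper offers no proof of it, and its main contribution (Theorem~\ref{thm:path-tetrachotomy}) is only the special case of path queries --- where, notably, the finer classification is a tetrachotomy \FO{}/\NL-complete/\PTIME-complete/\coNP-complete rather than the two-way split the conjecture asserts at the \PTIME{} versus \coNP{} granularity. Your candour is the right response, and your sketch of the attack is an accurate account of the state of the art: the meta-structure (find a decidable property $\Pi$, a \PTIME{} algorithm on one side, uniform \coNP-hardness gadgets on the other) is exactly how Theorem~\ref{thm:sjf-theorem} was proved for self-join-free queries via attack graphs, and the obstruction you name --- that with self-joins a single fact can serve several atoms, so the attack-graph analysis and the hardness gadgets both break --- is the same difficulty the paper flags in its introduction and circumvents only by restricting to the chain structure of path queries, where automata-theoretic tools (the rewinding closure $\pumpclosure{q}$ and $\nfa{q}$) take over.

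That said, since your submission contains no proof, it cannot be accepted as one: there is a genuine gap, namely the entire argument, and it is a gap that the research community has not closed either. Two small factual calibrations to your survey. First, in the self-join-free case the attack-graph condition for \PTIME{} (no strong cycle) in fact yields membership in \LSPACE, and acyclicity yields \FO; the \PTIME-complete and \NL-complete cases that appear for path queries have no analogue there, which is further evidence that any property $\Pi$ witnessing the conjecture for general conjunctive queries will not be a direct generalization of attack-graph acyclicity. Second, the known partial evidence you might have cited is Fontaine's result that the conjecture for unions of conjunctive queries would imply Bulatov's dichotomy for conservative CSP, which indicates that a proof, if it exists, is likely to require machinery of at least that depth.
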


%The above dichotomy conjecture has also been stated for the larger class of unions of conjunctive queries (possibly with self-joins).
%Fontaine~\cite{10.1145/2699912} showed that such a dichotomy for unions of conjunctive queries implies Bulatov's dichotomy theorem for conservative CSP~\cite{10.1145/1970398.1970400}.
%This demonstrates the difficulty of proving dichotomies in CQA.

On the other hand, for the smaller class of self-join-free Boolean conjunctive queries, the complexity landscape is by now well understood,
as summarized by the following theorem.

\begin{theorem}[\cite{KoutrisWTOCS20}] \label{thm:sjf-theorem}
For each self-join-free Boolean conjunctive query $q$, $\cqa{q}$ is 
in \FO, \LSPACE-complete, or \coNP-complete, and it is decidable which of the three cases applies. 
\end{theorem}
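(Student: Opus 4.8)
The plan is to attach to each self-join-free query $q$ a combinatorial object, the \emph{attack graph}, whose structure dictates the complexity, and then to prove three separate statements: acyclicity yields membership in \FO, a purely ``weak'' cyclic structure yields \LSPACE-completeness, and the presence of a ``strong'' cycle yields \coNP-completeness. First I would set up the machinery. Writing each atom of $q$ with its primary-key positions underlined, associate to $q$ the set $\mathcal{K}$ of functional dependencies that send the key variables of each atom to all of its variables. For an atom $F$, let $F^{+}$ be the closure of its key variables under $\mathcal{K}$ with $F$'s own dependency removed, and say that $F$ \emph{attacks} an atom $G$ if $F$ and $G$ are joined, in the Gaifman graph of $q$ restricted to variables outside $F^{+}$, by a path of shared variables. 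Classify an attack $F \leadsto G$ as \emph{weak} if $\mathcal{K} \models \mathsf{key}(F) \to \mathsf{key}(G)$ and \emph{strong} otherwise; a cycle of the attack graph is \emph{strong} if it carries at least one strong attack. All of these objects are computable from $q$ in polynomial time (attribute closures and graph reachability), which already secures the decidability claim once the three complexity statements are in place.

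Next I would treat the \FO\ boundary. For the upper bound, assume the attack graph is acyclic; then some atom $F$ is unattacked. The point of being unattacked is that, in every repair, the block chosen for $F$ fixes the values of all variables reachable from $F$, so one can ``peel off'' $F$: I would argue that the certain answer to $q$ is equivalent to a first-order combination (a universal quantification over $F$'s block together with an existential witness) of the certain answer to the strictly smaller query obtained by deleting $F$ and substituting the determined variables. Induction on the number of atoms then produces a first-order rewriting. For the matching lower bound I would show that a cycle in the attack graph forces $\cqa{q}$ out of \FO: a shortest cycle can be reduced to a two-atom cyclic pattern, and for that pattern one builds, by a gadget reduction, instances that a first-order (hence local, bounded-quantifier-rank) sentence cannot separate, giving non-expressibility in \FO\ and in fact \LSPACE-hardness.

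For the \coNP\ case, suppose the attack graph contains a strong cycle. I would exhibit a polynomial-time reduction from a known \coNP-complete problem, such as the complement of monotone satisfiability or a suitable graph problem, into $\cqa{q}$. The strong attack on the cycle is exactly what lets the gadget encode an independent binary choice: because the key of one atom does \emph{not} determine the key of the next, a repair can select the two atoms' blocks independently, so a pair of conflicting blocks can simulate the truth value of a propositional variable, and the cyclic pattern wires these choices into clauses. Soundness and completeness of the reduction then reduce to checking that falsifying assignments correspond exactly to repairs violating $q$.

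Finally, for the \LSPACE\ case, assume the attack graph is cyclic but every cycle is weak. Hardness is obtained by reducing undirected $st$-connectivity (a canonical \LSPACE-complete problem) to $\cqa{q}$ along a weak cycle. The crux is membership: I would argue that weakness of every attack means that whenever two atoms are cyclically linked their keys are mutually determined, so the ``conflict'' relation among blocks is essentially \emph{symmetric}, and the check over all repairs collapses to an undirected reachability/connectivity question, which lies in \LSPACE\ by Reingold's theorem. I expect this membership argument to be the main obstacle, because it is precisely where self-join-freeness and weakness must be used to keep the problem symmetric (hence in \LSPACE) rather than directed: the contrast with the \NL-complete cases arising once self-joins are allowed shows that the symmetry is delicate and must be extracted carefully from the weak-cycle structure.
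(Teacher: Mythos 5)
The paper does not prove Theorem~\ref{thm:sjf-theorem} itself; it imports it from \cite{KoutrisWTOCS20}, and your proposal is a faithful reconstruction of exactly the approach taken there: the attack graph with weak/strong attacks, first-order rewriting by peeling off an unattacked atom when the graph is acyclic, \LSPACE-membership via a symmetric (undirected-reachability, Reingold-based) argument when all cycles are weak, and a SAT-style gadget reduction exploiting a strong cycle for \coNP-hardness. Your sketch correctly identifies the two genuinely delicate points (correctness of the peeling step and the symmetry needed for \LSPACE-membership), so there is nothing to flag beyond the fact that these are where the bulk of the work in the cited proof lies.
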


Abandoning the restriction of self-join-freeness turns out to be a major challenge.
The difficulty of self-joins is caused by the obvious observation that a single database fact can be used to satisfy more than one atom of a conjunctive query, as illustrated by Example~\ref{ex:intro1}.
Self-joins happen to significantly change the complexity landscape laid down in Theorem~\ref{thm:sjf-theorem}; this is illustrated by Example~\ref{ex:intro2}.
Self-join-freeness is a simplifying assumption that is also used outside CQA (e.g., \cite{FreireGIM15, berkholz2017answering, FreireGIM20}).

\begin{example}
\label{ex:intro1}
Take the self-join $q_1=\exists x\exists y(R(\underline{x},y)\land R(\underline{y},x))$ and its self-join-free counterpart $q_2=\exists x\exists y(R(\underline{x}, y)\land S(\underline{y}, x))$, where the primary key positions are underlined. Consider the inconsistent database instance $\db$ in Figure~\ref{tbl:instance}. We have that $\db$ is a ``no''-instance of $\cqa{q_2}$, because $q_2$ is not satisfied by the repair $\{R(\underline{a},a)$, $R(\underline{b},b)$, $S(\underline{a},b)$, $S(\underline{b},a)\}$. However, $\db$ is a ``yes''-instance of $\cqa{q_1}$. This is because every repair that contains $R(\underline{a},a)$ or $R(\underline{b},b)$ will satisfy $q_1$, while a repair that contains neither of these facts must contain $R(\underline{a},b)$ and $R(\underline{b},a)$, which together also satisfy~$q_1$.
\qed 
\end{example}

\begin{figure}
	\centering
	$$
\begin{array}{cc}
\begin{array}{c|cc}
R & \underline{1} & 2\bigstrut\\\cline{2-3}
  & a & a\\
  & a & b\\\cdashline{2-3}
  & b & a\\
  & b & b
\end{array}
&
\begin{array}{c|cc}
S & \underline{1} & 2\bigstrut\\\cline{2-3}
  & a & a\\
  & a & b\\\cdashline{2-3}
  & b & a\\
  & b & b
\end{array}
\end{array}
$$
	\caption{An inconsistent database instance $\db$.}
	\label{tbl:instance}
\end{figure}

\begin{example}
\label{ex:intro2}
Take the self-join $q_1=\exists x\exists y\exists z(R(\underline{x},z)\land R(\underline{y},z))$ and its self-join-free counterpart $q_2=\exists x\exists y\exists z(R(\underline{x},z)\land S(\underline{y},z))$.  $\cqa{q_2}$ is known to be \coNP-complete, whereas it is easily verified that $\cqa{q_1}$ is in \FO, by observing that a database instance is a ``yes''-instance of $\cqa{q_{1}}$ if and only if it satisfies $\exists x\exists y(R(\underline{x},y))$.
\qed
\end{example}

This paper makes a contribution to the complexity classification of $\cqa{q}$ for conjunctive queries, possibly with self-joins, of the form
$$q=\exists x_{1}\dotsm\exists x_{k+1}\normalformula{R_{1}(\underline{x_{1}},x_{2})\land 
R_{2}(\underline{x_{2}},x_{3})\land\dotsm\land R_{k}(\underline{x_{k}},x_{k+1})},$$
which we call \emph{path queries}.
The primary key positions are underlined.
As will become apparent in our technical treatment, the classification of path queries is already very challenging, even though it is only a first step towards Conjecture~\ref{conj:dichotomy}, which is currently beyond reach. 
If all $R_{i}$'s are distinct (i.e., if there are no self-joins), then $\cqa{q}$ is known to be in \FO\ for path queries~$q$.
However, when self-joins are allowed, the complexity landscape of $\cqa{q}$ for path queries exhibits a tetrachotomy, as stated by the following main result of our paper.

\begin{theorem}\label{thm:path-tetrachotomy}
For each Boolean path query $q$, $\cqa{q}$ is in \FO, \NL-complete, \PTIME-complete, or \coNP-complete, and it is decidable in polynomial time in the size of~$q$ which of the four cases applies.
\end{theorem}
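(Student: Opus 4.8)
The plan is to recast $\cqa{q}$ as a combinatorial question about the word $w = R_1 R_2 \cdots R_k$ underlying the path query, and then to read the complexity off a structural invariant of an automaton built from $w$. First I would view the input database $\db$ as an edge-labeled directed graph in which each fact $R_i(\underline{a},b)$ is an edge $a \to b$ labeled $R_i$. The primary key on the first coordinate means that the facts sharing a relation name and a source value form a \emph{block}, and a repair $\rep$ is exactly a choice of one outgoing edge from each block. The query holds in $\rep$ iff $\rep$ contains a directed walk whose label sequence is $w$, so $\db$ is a ``no''-instance of $\cqa{q}$ iff some repair admits no such walk. The crux, and the source of all difficulty beyond the self-join-free case of Theorem~\ref{thm:sjf-theorem}, is that a single committed edge may serve several positions of $w$ at once, coupling the adversary's choices across blocks.

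\paragraph{A structural invariant of the query.}
Second, I would attach to $q$ an automaton $\nfa{q}$ derived from $w$ --- morally the pattern-matching automaton whose states record how far a partial occurrence of $w$ has progressed --- and analyze whether an adversary building a repair can keep every partial occurrence from ever completing. The overlap structure of $w$ (which suffixes of a prefix are again prefixes, as in borders and periods) controls how choices propagate, and I expect the four complexity regimes to correspond to four qualitatively different shapes of this automaton: (i) when non-completion is forced or forbidden by a bounded, local condition, the problem is in \FO; (ii) when the obstruction reduces to reachability along a single chain of forced commitments, it is \NL-complete; (iii) when commitments branch and must be resolved simultaneously, as in a monotone/Horn or path-system computation, it is \PTIME-complete; and (iv) when they encode free, independent binary choices with conflicting constraints, it is \coNP-complete. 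To make this analysis go through on arbitrary instances, I would first normalize the input with a \emph{shrinking} argument (Lemma~\ref{lem:shrinking}) that collapses a database to a bounded canonical form without changing the answer, so that the behavior on all instances is governed by the finitely many shapes of $\nfa{q}$.

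\paragraph{Upper bounds.}
For the membership half I would treat the four cases separately. Self-join-free path queries are already in \FO\ (Theorem~\ref{thm:sjf-theorem}); in the general \FO\ case I would exhibit an explicit first-order certain-answer rewriting driven by the local obstruction condition above. In the \NL\ case I would reduce the existence of a falsifying repair to directed reachability in an auxiliary graph on block-states and close under complement using Immerman--Szelepcs\'enyi. In the \PTIME\ case I would express $\cqa{q}$ by a Datalog program, using the treatment of Section~\ref{sec:datalog}; evaluating this program places the problem in \PTIME. Membership in \coNP\ is immediate for every first-order $q$, since a falsifying repair is a polynomial-size certificate.

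\paragraph{Lower bounds, decidability, and the main obstacle.}
For hardness I would give reductions into $\cqa{q}$ from a canonical complete problem of each class: directed $s$--$t$ connectivity for \NL, a path-system or monotone-circuit-value problem for \PTIME, and (via the complement) $3$SAT for \coNP, in each case encoding the instance as a database whose block choices simulate the target computation precisely when $q$ falls in the corresponding regime. Finally, to get the effective tetrachotomy I would show that each of the four structural conditions on $\nfa{q}$ is checkable in time polynomial in $|q|$. I expect the main obstacle to lie at the \NL/\PTIME/\coNP\ boundaries: isolating the exact criterion that separates a chain of forced commitments (reachability, \NL) from branching-but-tractable commitments (Datalog, \PTIME) from genuinely free conflicting choices (\coNP) is delicate, because all three arise from the same overlap phenomenon and differ only in how the self-join-induced coupling constrains the adversary. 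Proving matching lower bounds in the two middle cases --- rather than collapsing them into either \FO\ or \coNP --- is where the real work, and the need for the new shrinking lemma, will concentrate.
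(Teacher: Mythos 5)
Your overall architecture --- view the database as an edge-colored graph in which a repair picks one outgoing edge per block, attach an automaton to the word $q$, prove upper bounds by FO-rewriting / reachability / Datalog and lower bounds by reductions from reachability, monotone circuit value, and SAT --- is indeed the paper's architecture. But the proposal has a genuine gap: it never identifies the classification criterion, and that criterion is the entire content of the theorem. The paper's tetrachotomy is governed by the \emph{rewinding} operation (replace a factor $RvR$ of $q$ by $RvRvR$) and three concrete word conditions: $\cone$ ($q$ is a prefix of every word it rewinds to), $\ctwo$, and $\cthree$ ($q$ is a factor of every word it rewinds to), further characterized by regex normal forms in Section~\ref{sec:syntax}. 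Your ``four qualitatively different shapes of the automaton'' is a placeholder for exactly this, and ``borders and periods'' of $w$ is not the right invariant: the relevant object is $\pumpclosure{q}$, the closure of $q$ under iterated rewinding, realized by the backward $\emptyword$-transitions of $\nfa{q}$, not the failure structure of a pattern-matching automaton. Without the concrete conditions there is nothing to check in polynomial time, no boundary at which the matching \NL\ and \PTIME\ lower bounds can be anchored, and no argument that the four cases are exhaustive.

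The second gap is the role you assign to Lemma~\ref{lem:shrinking}. It does not ``collapse a database to a bounded canonical form'': it proves that among the exponentially many repairs there is one that is pointwise minimal for the states-set preorder $\mypreceq{q}$. Its purpose is to establish the Reification Lemma (Lemma~\ref{lem:min-start}): for queries satisfying $\cthree$, $\db$ is a ``yes''-instance if and only if there is a \emph{single} constant $c$ such that every repair has a path starting in $c$ whose trace is accepted by $\nfa{q}$. This reification is what makes all three upper-bound algorithms correct --- they decide, per constant, whether every repair admits such a path --- and it is precisely here that the hypothesis $\cthree$ is used, since an accepted trace only guarantees an occurrence of $q$ as a factor of some longer rewound word. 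As written, your plan has no substitute for this step, so even granting the right syntactic conditions the correctness of the \FO, \NL, and \PTIME\ procedures would not follow.
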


Comparing Theorem~\ref{thm:sjf-theorem} and Theorem~\ref{thm:path-tetrachotomy},
it is striking that there are path queries $q$ for which $\cqa{q}$ is \NL-complete or \PTIME-complete, whereas these complexity classes do not occur for self-join-free queries (under standard complexity assumptions).
So even for the restricted class of path queries, allowing self-joins immediately results in a more varied complexity landscape.

Let us provide some intuitions behind Theorem~\ref{thm:path-tetrachotomy} by means of examples. 
Path queries use only binary relation names.
A database instance $\db$ with binary facts can be viewed as a directed edge-colored graph: a fact $R(\underline{a},b)$ is a directed edge from $a$ to $b$ with color~$R$.
A repair of $\db$ is obtained by choosing, for each vertex,  precisely one outgoing edge among all outgoing edges of the same color. We will use the shorthand $q = RR$ to denote the path query $q=\exists x\exists y\exists z(R(\underline{x},y)\land R(\underline{y},z))$.

%\jef{Since ``pumping'' is known from the existing Pumping Lemma, I propose to use ``sweeping.'' I suggest to introduce it here, because it is central our complexity classification.}

In general, path queries can be represented by words over the alphabet of relation names.
Throughout this paper, relation names are in uppercase letters $R$, $S$, $X$, $Y$ etc., while lowercase letters $u$, $v$, $w$ stand for (possibly empty) words.
An important operation on words is dubbed \emph{rewinding}: if a word has a factor of the form $RvR$, then rewinding refers to the operation that replaces this factor with $RvRvR$.
That is, rewinding the factor $RvR$ in the word $uRvRw$ yields the longer word $uRvRvRw$.
For short, we also say that $uRvRw$ \emph{rewinds to} the word $\swipe{u}{Rv}Rw$, where we used concatenation $(\cdot)$ and underlining for clarity.
For example, $TWITTER$ rewinds to $\prefixswipe{TWI}{TTER}$, but also to $\prefixswipe{TWIT}{TER}$ and to $\swipe{TWI}{T}{TER}$.

Let $q_1 = RR$. It is easily verified that a database instance is a ``yes''-instance of $\cqa{q_1}$ if and only if it satisfies the following first-order formula:
$$\varphi = \exists x\normalformula{\exists y R(\underline{x},y)\land\forall y\normalformula{R(\underline{x},y) \rightarrow \exists z R(\underline{y},z)}}.$$ 
Informally, every repair contains an $R$-path of length $2$ if and only if there exists some vertex $x$ such that every repair contains a path of length $2$ starting in~$x$. 

Let $q_2 = RRX$, and consider the database instance in Figure~\ref{fig:example-rrx}. Since the only conflicting facts are $R(\underline{1},2)$ and $R(\underline{1},3)$, this database instance has two repairs.
Both repairs satisfy $RRX$, but unlike the previous example, there is no vertex $x$ such that every repair has a path colored $RRX$ that starts in~$x$.
Indeed, in one repair, such path starts in~$0$; in the other repair it starts in~$1$.
For reasons that will become apparent in our theoretical development, it is significant that both repairs have paths that start in~$0$ and are colored by a word in the regular language defined by $RR\kleene{R}{*}X$.
This is exactly the language that contains $RRX$ and is closed under the rewinding operation.
In general, it can be verified with some effort that a database instance is a ``yes''-instance of $\cqa{q_{2}}$ if and only if it contains some vertex $x$ such that every repair has a path that starts in~$x$ and is colored by a word in the regular language defined by $RR\kleene{R}{*}X$.
The latter condition can be tested in \PTIME\ (and even in \NL).
%Our automaton-based approach in Section~\ref{sec:key} is motivated by our finding that the complexity classification of Theorem~\ref{thm:path-tetrachotomy} can be well explained by shifting from path queries to regular expressions induced by the sweeping operator.

\begin{figure}[!ht]
      \centering
      \begin{tikzpicture}[->,>=stealth,auto=left, scale=1.5,vnode/.style={circle,black,inner sep=1pt,scale=1},el/.style = {inner sep=3/2pt}]
		  \node[vnode] (s1) at (0, 0) {$0$};
		  \node[vnode] (s2) at (1, 0) {$1$};
		  \node[vnode] (s3) at (1.5, 0.732) {$2$};
		  \node[vnode] (s4) at (2, 0) {$3$};
		  \node[vnode] (s5) at (3, 0) {$4$};
		  
		  \path[->] (s1) edge node[el] {$R$} (s2);
		  \path[->] (s2) edge node[el] {$R$} (s3);
		  \path[->] (s2) edge node[el] {$R$} (s4);
		  \path[->] (s3) edge node[el] {$R$} (s4);
		  \path[->] (s4) edge node[el] {$X$} (s5);
		\end{tikzpicture}
      \caption{An example database instance $\db$ for $q_2 = RRX$.}
      \label{fig:example-rrx}
\end{figure}
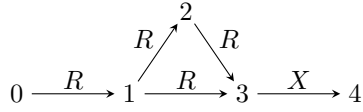

The situation is still different for $q_3 = ARRX$, for which it will be shown that $\cqa{q_{3}}$ is \coNP-complete.
Unlike our previous example, repeated rewinding of $ARRX$ into words of the language $ARR\kleene{R}{*}X$ is not helpful.
For example, in the database instance of Figure~\ref{fig:example-arrx},
every repair has a path that starts in~$0$ and is colored with a word in the language defined by $ARR\kleene{R}{*}X$.
However, the repair that contains $R(\underline{a},c)$ does not satisfy~$q_{3}$.
Unlike Figure~\ref{fig:example-rrx}, the ``bifurcation'' in Figure~\ref{fig:example-arrx} can be used as a gadget for showing \coNP-completeness in Section~\ref{sec:hardness}.

\begin{figure}[!ht]
      \centering
      \begin{tikzpicture}[->,>=stealth,auto=left, scale=2.5,vnode/.style={circle,black,inner sep=1pt,scale=1},el/.style = {inner sep=3/2pt}]
	    \node[vnode] (s1) at (-1, 0) {$0$};
	    \node[vnode] (leftmid) at (-0.5, 0) {};
	    \node[vnode] (s3) at (0, 0) {$a$};

	    \node[vnode] (mid1) at (0.5, 0.25) {$b$};
	    \node[vnode] (mid2) at (0.5, -0.25) {$c$};
	    \node[vnode] (mid3) at (1, -0.25) {};
	    
	  \node[vnode] (s4) at (1, 0.25) {};
	  \node[vnode] (s5) at (1.5, -0.25) {};
	  
	  \path[->] (s1) edge node[el,sloped] {$A$} (leftmid);
	  \path[->] (leftmid) edge node[el] {$R$} (s3);
	  \path[->] (s3) edge node[el] {$R$} (mid1);
	  \path[->] (mid1) edge node[el] {$X$} (s4);

	  \path[->] (s3) edge node[el,below] {$R$} (mid2);
	  \path[->] (mid2) edge node[el,below] {$R$} (mid3);
	  \path[->] (mid3) edge node[el,below] {$X$} (s5);
	\end{tikzpicture}
	\caption{An example database instance $\db$ for $q_3 = ARRX$.}
      \label{fig:example-arrx}
\end{figure}

%We remark that $\cqa{q}$ is always in \FO\ for self-join-free path queries by Theorem~\ref{thm:sjf-theorem}. It is therefore surprising that even for path queries, possibly with self-joins, $\cqa{q}$ already displays a tetrachotomy: it is either in \FO, \NL-complete, \PTIME-complete or \coNP-complete. It also shows two new complexity class within \PTIME\ that $\cqa{q}$ can be complete in, namely \NL-complete and \PTIME-complete. Therefore, consistent query answering on primary keys can be complete in at least five distinct complexity classes. 

%\begin{conjecture}[Pentachotomy Conjecture] \label{conj:general-pentachotomy}
%  Let $q$ be a boolean conjunctive query. Then $\cqa{q}$ is in \FO, \LSPACE-complete, \NL-complete, \PTIME-complete, or \coNP-complete, and it is decidable in polynomial time in the size of $q$ which of the five cases applies. 
%\end{conjecture}

\myparagraph{Organization} 
Section~\ref{sec:prelim} introduces the preliminaries. 
In Section~\ref{sec:classification}, the statement of Theorem~\ref{thm:main} gives the syntactic conditions for deciding the complexity of $\cqa{q}$ for path queries~$q$. 
To prove this theorem, we view the rewinding operator from the perspectives of regular expressions and automata, which are presented in Sections~\ref{sec:syntax} and~\ref{sec:key} respectively.
Sections~\ref{sec:algorithm} and~\ref{sec:hardness} present, respectively, complexity upper bounds and lower bounds of our classification. 
In Section~\ref{sec:constant}, we extend our classification result to path queries with constants. Section~\ref{sec:related-work} discusses related work, and Section~\ref{sec:conclusion} concludes this paper.

	\section{Preliminaries}
\label{sec:prelim}

We assume disjoint sets of {\em variables\/} and {\em constants\/}.
%If $\vec{x}$ is a sequence containing variables and constants, then $\sequencevars{\vec{x}}$ denotes the set of variables that occur in $\vec{x}$.
A {\em valuation\/} over a set $U$ of variables is a total mapping $\theta$ from~$U$ to the set of~constants.
%At several places, it is implicitly understood that such a valuation $\theta$ is extended to be the identity on constants and on variables not in $U$.
%If $V \subseteq U$, then $\theta[V]$ denotes the restriction of $\theta$ to $V$.

%If $\theta$ is a valuation over a set $U$ of variables, $x$ is a variable, and $a$ is a constant,
%then $\substitute{\theta}{x}{a}$ is the valuation over $U\cup\{x\}$ such that
%$\substitute{\theta}{x}{a}(x)=a$ and for every variable $y$ such that $y\neq x$, $\substitute{\theta}{x}{a}(y)=\theta(y)$.
%Notice that $x\in U$ is allowed.

\myparagraph{Atoms and key-equal facts}
We consider only $2$-ary \emph{relation names}, where the first position is called the \emph{primary key}.
If $R$ is a relation name, and $s,t$ are variables or constants,
then $R(\underline{s},t)$ is an \emph{atom}.
An atom without variables is a \emph{fact}.
Two facts are \emph{key-equal} if they use the same relation name and agree on the primary key.

\myparagraph{Database instances, blocks, and repairs}
A {\em database schema\/} is a finite set of relation names.
All constructs that follow are defined relative to a fixed database schema.

A {\em database instance\/} is a finite set $\db$ of facts using only the relation names of the schema.
%We refer to databases as ``uncertain databases'' to stress that such databases can violate primary key constraints.
%
We write $\adom{\db}$ for the active domain of $\db$ (i.e., the set of constants that occur in $\db$).
A {\em block\/} of $\db$ is a maximal set of key-equal facts of $\db$.
%The term $R$-block refers to a block of $R$-facts, i.e., facts with relation name $R$.
Whenever a database instance $\db$ is understood,
we write $R(\underline{c},*)$ for the block that contains all facts with relation name~$R$ and primary-key value~$c$.
%If $A$ is a fact of $\db$, then $\theblock{A}{\db}$ denotes the block of $\db$ that contains $A$.
A database instance $\db$ is {\em consistent\/} if it contains no two distinct facts that are key-equal (i.e., if no block of $\db$ contains more than one fact).
A {\em repair\/} of $\db$ is an inclusion-maximal consistent subset of $\db$. 
%
%We write $\repairs{\db}$ for the set of repairs of $\db$.

\myparagraph{Boolean conjunctive queries}
%A {\em Boolean query\/} is a mapping $q$ that associates a Boolean (true or false) to each uncertain database, such that $q$ is closed under isomorphism.
%We write $\db\models q$ to denote that $q$ associates true to $\db$, in which case $\db$ is said to {\em satisfy\/} $q$.
%A {\em Boolean first-order query\/} is a Boolean query that can be defined in first-order logic (with equality and constants, but without other built-in predicates).
A {\em Boolean conjunctive query\/} is a finite set 
$q=\{R_{1}(\underline{{x}_{1}},{y}_{1})$, $\dots$, $R_{n}(\underline{{x}_{n}},{y}_{n})\}$ of atoms.
We denote by $\queryvars{q}$ the set of variables that occur in $q$.
The set $q$ represents the first-order sentence 
$$\exists u_{1}\dotsm\exists u_{k}\normalformula{R_{1}(\underline{{x}_{1}},{y}_{1})\land\dotsm\land R_{n}(\underline{{x}_{n}},{y}_{n})},$$ where $\{u_{1}, \dots, u_{k}\}=\queryvars{q}$.
%We write $\db\models q$ to denote that $q$ is \emph{satisfied} by database instance $\db$, under standard first-order logic semantics.

We say that a Boolean conjunctive query $q$ has a {\em self-join\/} if some relation name occurs more than once in $q$.
A conjunctive query without self-joins is called {\em self-join-free\/}.
%If $q$ has no self-join, then it is called {\em self-join-free\/}.
%By a little abuse of notation, we may confuse atoms with their relation names in a self-join-free Boolean conjunctive query $q$.
%That is, if we use a relation name $R$ at places where an atom is expected,
%then we mean the (unique) $R$-atom of~$q$.

%If $q$ is a Boolean conjunctive query, $\vec{x}=\tuple{x_{1},\dots,x_{\ell}}$ is a sequence of distinct variables that occur in $q$, and $\vec{a}=\tuple{a_{1},\dots,a_{\ell}}$ is a sequence of constants,
%then $\substitute{q}{\vec{x}}{\vec{a}}$ denotes the query obtained from $q$ by replacing all occurrences of $x_{i}$ with $a_{i}$, for all $1\leq i\leq\ell$.

\myparagraph{Consistent query answering}
For every Boolean conjunctive query $q$, the decision problem $\cqa{q}$ takes as input a database instance $\db$, and asks whether $q$ is satisfied by every repair of~$\db$.
%
%It is easy to show the following upper bound on the complexity of $\cqa{q}$.
It is straightforward that for every Boolean conjunctive query~$q$, $\cqa{q}$ is in \coNP.

\myparagraph{Path queries} A {\em path query} is a Boolean conjunctive query without constants of the following form:
\begin{equation*}\label{eq:pq}
q = \{R_{1}(\underline{x_{1}},x_{2}), R_{2}(\underline{x_{2}},x_{3}), \dots, R_{k}(\underline{x_{k}},x_{k+1}) \},
\end{equation*}
where $x_{1}$, $x_{2}$,\dots, $x_{k+1}$ are distinct variables, and $R_{1}$, $R_{2}$,\dots, $R_{k}$ are (not necessarily distinct) relation names.
It will often be convenient to denote this query as a {\em word} $R_{1} R_{2} \dotsm R_{k}$ over the alphabet of relation names.
This ``word'' representation is obviously lossless up to a variable renaming.
Importantly, path queries may have self-joins, i.e., a relation name may occur multiple times. 
% \paris{Don't forget to mention constants if needed}
Path queries containing constants will be discussed in Section~\ref{sec:constant}.
The treatment of constants is significant, because it allows moving from Boolean to non-Boolean queries, by using that free variables behave like constants.
%It will be often convenient  to view a path query equivalently as a {\em word} $R_{1} R_{2} \dots R_{k}$ over the alphabet of relation names.

\section{The Complexity Classification}
\label{sec:classification}
We define syntactic conditions $\cone$, $\ctwo$, and $\cthree$ for path queries~$q$. 
Let $R$ be any relation name in $q$, and let $u, v, w$ be (possibly empty) words over the alphabet of relation names of $q$.
%\jef{Is the word ``consecutive'' really needed in $\ctwo$?}

\begin{description}
\item[$\cone$:] Whenever $q = uRvRw$, $q$ is a prefix of  $uRvRvRw$.
\item[$\ctwo$:] Whenever $q = uRvRw$, $q$ is a factor of  $uRvRvRw$; and whenever $q= u Rv_1 Rv_2 Rw$ for consecutive occurrences of $R$,  $v_1 = v_2$ or $Rw$ is a prefix of $Rv_1$.
\item[$\cthree$:] Whenever $q = uRvRw$, $q$ is a factor of  $uRvRvRw$.
\end{description}
It is instructive to think of these conditions in terms of the rewinding operator introduced in Section~\ref{sec:introduction}:
$\cone$ is tantamount to saying that~$q$ is a prefix of every word to which $q$ rewinds;
and $\cthree$ says that $q$ is a factor of every word to which $q$ rewinds. 
These conditions can be checked in polynomial time in the length of the word $q$.
The following result has an easy proof.

\begin{proposition}
Let $q$ be a path query. 
If $q$ satisfies $\cone$, then $q$ satisfies $\ctwo$; and if $q$ satisfies $\ctwo$, then $q$ satisfies $\cthree$.
\end{proposition}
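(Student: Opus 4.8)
The plan is to prove two implications directly from the definitions of $\cone$, $\ctwo$, and $\cthree$, exploiting the fact that each condition is a universally quantified statement over all ways of writing $q = uRvRw$. The statement has two parts: ($\cone \Rightarrow \ctwo$) and ($\ctwo \Rightarrow \cthree$). Since these conditions are built from the same "rewinding" template, the proof should reduce to comparing what each condition asserts about the same decomposition.

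For the second implication, $\ctwo \Rightarrow \cthree$, the work is essentially trivial: $\ctwo$ is literally the conjunction of the defining clause of $\cthree$ (namely "whenever $q = uRvRw$, $q$ is a factor of $uRvRvRw$") together with an extra clause about three consecutive occurrences. So I would simply observe that $\ctwo$ contains $\cthree$'s requirement verbatim as its first conjunct, and dropping the second conjunct yields $\cthree$. No calculation is needed.

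For the first implication, $\cone \Rightarrow \ctwo$, the plan is to fix an arbitrary decomposition and verify each of the two clauses of $\ctwo$. First, I would note that being a prefix of $uRvRvRw$ implies being a factor of it, so the first clause of $\ctwo$ follows immediately from $\cone$. The interesting clause is the second: given $q = uRv_1Rv_2Rw$ for three consecutive occurrences of $R$, I must show $v_1 = v_2$ or $Rw$ is a prefix of $Rv_1$. Here I would apply $\cone$ to the inner decomposition. Writing $q = u'Rv_2Rw$ with $u' = uRv_1$ exhibits a factor $Rv_2R$, so $\cone$ says $q$ is a prefix of $uRv_1 \cdot Rv_2 \cdot Rv_2 \cdot Rw = uRv_1Rv_2Rv_2Rw$. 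Comparing this with $q = uRv_1Rv_2Rw$ character by character past the common prefix $uRv_1Rv_2R$, the prefix condition forces the suffix $w$ of $q$ to be a prefix of $v_2Rw$; that is, $Rw$ is a prefix of $Rv_2Rw$, hence $w$ is a prefix of $v_2Rw$. I would then do a short length/alignment argument to conclude that either $w$ is a prefix of $v_2$ (giving $Rw$ a prefix of $Rv_2$, and a symmetric application with the outer pair of $R$'s should relate $v_1$ and $v_2$) or the two occurrences align so that $v_1 = v_2$.

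The main obstacle will be the bookkeeping in this last alignment argument: extracting "$v_1 = v_2$ or $Rw$ is a prefix of $Rv_1$" cleanly from the prefix condition requires carefully tracking where the extra copy of $v_2R$ (or $v_1R$) is inserted and handling the case split by comparing $|w|$ against $|v_1|$ and $|v_2|$. I expect that applying $\cone$ to the \emph{outer} pair of $R$'s (decomposition $q = u R (v_1Rv_2) R w$) in parallel with the inner pair, and then reconciling the two resulting prefix constraints, is what yields the desired disjunction. Everything else is routine prefix/factor reasoning on words.
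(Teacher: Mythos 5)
Your proposal is correct; the paper itself omits the proof of this proposition (it merely remarks that it has an easy proof), and your direct verification from the definitions --- prefix implies factor for the first clause of $\ctwo$, and the first clause of $\ctwo$ is verbatim $\cthree$ --- is exactly the intended argument, with your alignment bookkeeping for the second clause of $\ctwo$ going through as sketched. One simplification you may wish to note: applying $\cone$ to the decomposition $q = uRv_1R(v_2Rw)$, i.e.\ to the \emph{first and second} of the three consecutive occurrences of $R$, forces $v_2Rw$ to be a prefix of $v_1Rv_2Rw$; since $R$ occurs in neither $v_1$ nor $v_2$, the leading occurrences of $R$ in these two words must align, which yields $|v_1|=|v_2|$ and hence $v_1=v_2$ outright, so under $\cone$ the disjunct ``$Rw$ is a prefix of $Rv_1$'' is never actually needed.
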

%\begin{proof}
%If $q$ satisfies $\ctwo$, then $q$ satisfies $\cthree$ by definition.
%%%
%For the other implication, assume that $q$ satisfies $\cone$ and $q=uRv_1Rv_2Rw$ where the occurrences of $R$ are consecutive. 
%Since $q$ satisfies~$\cone$, it is a prefix of $uRv_{1}Rv_{1}Rv_{2}Rw$, and therefore $v_{1}=v_{2}$. Thus $q$ satisfies~$\ctwo$.
%\end{proof}

The main part of this paper comprises a proof of the following theorem, which refines the statement of Theorem~\ref{thm:path-tetrachotomy} by adding syntactic conditions.
The theorem is illustrated by Example~\ref{ex:main}. 

\begin{theorem}
  \label{thm:main}
For every path query~$q$, the following complexity upper bounds obtain:
  \begin{itemize}
    \item if $q$ satisfies $\cone$, then $\cqa{q}$ is in \FO;
    \item if $q$ satisfies $\ctwo$, then $\cqa{q}$ is in \NL; and
    \item if $q$ satisfies $\cthree$, then $\cqa{q}$ is in \PTIME.
  \end{itemize}
Moreover, for every path query $q$, the following complexity lower bounds obtain:
  \begin{itemize}
    \item if $q$ violates $\cone$, then $\cqa{q}$ is \NL-hard;
    \item if $q$ violates $\ctwo$, then $\cqa{q}$ is \PTIME-hard; and
    \item if $q$ violates $\cthree$, then $\cqa{q}$ is \coNP-complete.
  \end{itemize}
\end{theorem}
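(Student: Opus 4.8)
The plan is to prove the six claims of Theorem~\ref{thm:main} as a coordinated package, separating the upper bounds from the lower bounds but exploiting the nested structure $\cone \Rightarrow \ctwo \Rightarrow \cthree$ established in the Proposition. The guiding intuition, already suggested by the examples in the introduction, is that a database instance is a ``yes''-instance of $\cqa{q}$ precisely when there is a vertex~$x$ such that every repair admits a path starting at~$x$ whose color-word lies in the \emph{rewinding closure} of~$q$ --- the smallest language containing~$q$ and closed under replacing a factor $RvR$ by $RvRvR$. Each syntactic condition $\cone$, $\ctwo$, $\cthree$ controls how ``tame'' this closure is, and hence what computational resources suffice to decide the existence of such a vertex. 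I would therefore first set up, as a shared technical core, the equivalence between $\cqa{q}$ and a reachability-style problem over an automaton recognizing the rewinding closure of~$q$ (this is the role of Sections~\ref{sec:syntax} and~\ref{sec:key} in the paper's roadmap), and then read off each complexity bound from structural properties of that automaton under the respective condition.

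For the upper bounds I would proceed from strongest to weakest condition. Under $\cone$, the condition says $q$ is a \emph{prefix} of every word it rewinds to, which should force a ``deterministic'' or ``no-branching'' behavior: the first-order formula exhibited for $q_1 = RR$ generalizes, so that membership in $\cqa{q}$ is expressible by a fixed \FO\ sentence asserting the existence of a suitable starting vertex all of whose outgoing consistent continuations realize~$q$. Under $\ctwo$, I would show that the associated reachability question reduces to $s$--$t$ connectivity in a directed graph of polynomial size computable from~$\db$; the extra clause of $\ctwo$ (that for consecutive occurrences $u R v_1 R v_2 R w$ either $v_1 = v_2$ or $Rw$ is a prefix of $Rv_1$) is exactly what prevents the branching from requiring simultaneous bookkeeping of independent choices, keeping the problem in \NL. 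Under $\cthree$, the weaker guarantee that $q$ is merely a \emph{factor} (not necessarily prefix) of every rewinding still yields a polynomial-time fixpoint/marking algorithm: one propagates, in \PTIME, the set of vertices from which every repair has a path in the closure language, but the loss of the prefix property means the computation is genuinely iterative rather than read-once, landing in \PTIME\ but not obviously in \NL.

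For the lower bounds I would argue the contrapositive chain, building a hardness gadget from a single ``witness'' of the failed condition and composing copies of it. If $q$ violates $\cone$, there is a decomposition $q = uRvRw$ for which $q$ is not a prefix of $uRvRvRw$; this mismatch yields a directed-reachability gadget giving \NL-hardness via reduction from $\mathsf{STCON}$. If $q$ violates $\ctwo$, the failure of the second clause produces a genuine binary-choice ``bifurcation'' (as in Figure~\ref{fig:example-arrx}) that can encode the alternation needed for \PTIME-hardness, reducing from a $\mathsf{P}$-complete problem such as monotone circuit value or path systems. If $q$ violates $\cthree$, so that some rewinding of~$q$ omits~$q$ as a factor entirely, the bifurcation becomes a free Boolean choice whose two branches are independent, and I would reduce from a canonical \coNP-complete problem (e.g.\ the complement of SAT, or the standard $\forall$-style CQA gadget) to obtain \coNP-completeness, the matching upper bound being the trivial membership of every $\cqa{q}$ in \coNP.

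The main obstacle I anticipate is the \NL-versus-\PTIME\ boundary, i.e.\ pinning down exactly why the second clause of $\ctwo$ is the right dividing line. Both the \NL\ upper bound under $\ctwo$ and the \PTIME-hardness under its violation hinge on a precise analysis of how consecutive occurrences of a repeated relation name interact during rewinding: showing that $v_1 = v_2$ or the prefix condition collapses all relevant branching into a single reachability query requires a careful combinatorial lemma about the structure of the rewinding closure (plausibly the new proof of Lemma~\ref{lem:shrinking} advertised in the paper's footnote), and the matching hardness must exhibit that any \emph{other} branching pattern is rich enough to simulate alternation. Establishing that this single syntactic clause cleanly separates logspace-nondeterminism from full polynomial time --- in both directions, and with the automaton-theoretic machinery doing the heavy lifting --- is where I expect the bulk of the technical difficulty to lie.
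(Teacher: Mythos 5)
Your proposal is correct and follows essentially the same route as the paper: the reification of a single starting vertex whose rewinding-closure paths survive in every repair is the shared core (the paper's Lemmas~\ref{lem:key} and~\ref{lem:min-start}), the upper bounds are obtained exactly as you sketch (an inductive \FO\ rewriting under $\cone$, a reachability/linear-Datalog computation under $\ctwo$, and a fixpoint marking algorithm under $\cthree$), and the lower bounds use the same three reductions (directed reachability, monotone circuit value, and SAT). The only slight misattribution is that the combinatorial heavy lifting for the \NL/\PTIME\ boundary lies in the regex characterizations of $\ctwo$ (Lemmas~\ref{lem:notctwo} and~\ref{lem:stutterb2}) rather than in Lemma~\ref{lem:shrinking}, which instead underpins the shared reification core.
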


%The following example illustrates Theorem~\ref{thm:main}.

\begin{example}\label{ex:main}
The query $q_{1}=RXRX$ rewinds to (and only to) $\prefixswipe{RX}{RX}$ and $\swipe{R}{XR}{X}$, which both contain $q_{1}$ as a prefix. It is correct to conclude that $\cqa{q_{1}}$ is in~\FO.

The query $q_{2}=RXRY$ rewinds only to $\prefixswipe{RX}{RY}$, which contains $q_{2}$ as a factor, but not as a prefix. Therefore, $q_{2}$ satisfies $\cthree$, but violates $\cone$.
Since $q_{2}$ vacuously satisfies $\ctwo$ (because no relation name occurs three times in~$q_{2}$), it is correct to conclude that $\cqa{q_{2}}$ is \NL-complete.

The query $q_{3}=RXRYRY$ rewinds to
$\prefixswipe{RX}{RYRY}$, to
$\prefixswipe{RXRY}{RY}$, and to
$\swipe{RX}{RY}{RY}=\swipe{RXR}{YR}{Y}$.
Since these words contain $q_{3}$ as a factor, but not always as a prefix, we have that $q_{3}$ satisfies $\cthree$ but violates $\cone$.
It  can be verified that $q_{3}$ violates $\ctwo$ by writing it as follows:
$$
q_3 = \underbrace{\emptyword}_{u} \underbrace{\boldsymbol{R}X}_{Rv_1} \underbrace{\boldsymbol{R}Y}_{Rv_2} \underbrace{\boldsymbol{R}Y}_{Rw} $$
We have $X=v_1\neq v_2=Y$, but $Rw=RY$ is not a prefix of $Rv_1=RX$. 
Thus,  $\cqa{q_3}$ is \PTIME-complete.

Finally, the path query $q_{4}=RXRXRYRY$ rewinds, among others, to $\swipe{RX}{RXRY}{RY}$, which does not contain $q_{4}$ as a factor.
It is correct to conclude that $\cqa{q_{4}}$ is~\coNP-complete.
\qed
\end{example}

\section{Regexes for $\cone$, $\ctwo$, and $\cthree$}\label{sec:syntax}

In this section, we show that the conditions $\cone$, $\ctwo$, and $\cthree$ can be captured by regular expressions (or regexes) on path queries, which will be used in the proof of Theorem~\ref{thm:main}.  
Since these results are within the field of \emph{combinatorics of words}, we will use the term \emph{word} rather than \emph{path query}. 
%\jef{I think the term ``word'' is appropriate here, to make clear that it is not (only) about (conjunctive) queries.}

\begin{definition}\label{def:bs}
We define four properties $\bone$, $\btwoa$, $\btwob$, $\bthree$ that a word $q$ can possess:
\begin{description}
\item[$\bone$:]
For some integer $k\geq 0$,
there are words $v$, $w$ such that $vw$ is self-join-free and $q$ is a prefix of $w\kleene{v}{k}$.
\item[$\btwoa$:]
For some integers $j,k\geq 0$,
there are words $u$, $v$, $w$ such that $uvw$ is self-join-free and $q$ is a factor of $\kleene{u}{j}w\kleene{v}{k}$.
\item[$\btwob$:] 
For some integer $k\geq 0$,
there are words $u$, $v$, $w$ such that $uvw$ is self-join-free and $q$ is a factor of $\kleene{uv}{k}wv$.
\item[$\bthree$:]
For some integer $k\geq 0$,
there are words $u$, $v$, $w$ such that $uvw$ is self-join-free and $q$ is a factor of $uw\kleene{uv}{k}$.
\qed
\end{description}
%\qed
\end{definition}

%Note that the conditions $\bone$, $\btwoa$, $\btwob$ and $\bthree$ are not mutually exclusive. For example, $\bone$ logically implies $\btwoa$. 
%%%
We can identify each condition among $\cone$, $\ctwo$, $\cthree$, $\bone$, $\btwoa$, $\btwob$,  $\bthree$ with the set of all words satisfying this condition.
Note then that $\bone\subseteq\btwoa\cap\bthree$.
The results in the remainder of this section can be summarized as follows:
\begin{itemize}
\item $\cone=\bone$ (Lemma~\ref{lem:fofo})
\item $\ctwo=\btwoa\cup\btwob$ (Lemma~\ref{lem:notctwo})
\item $\cthree=\btwoa\cup\btwob\cup\bthree$ (Lemma~\ref{lem:forms})
\end{itemize}
Moreover, Lemma~\ref{lem:notctwo} characterizes $\cthree\setminus\ctwo$. 

\begin{lemma}\label{lem:fofo}
For every word $q$, the following are equivalent:
\begin{enumerate}
\item\label{it:focone}
$q$ satisfies $\cone$; and
\item\label{it:fobone}
$q$ satisfies $\bone$.
\end{enumerate}
\end{lemma}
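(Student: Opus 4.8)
The plan is to prove both implications by reducing the ``prefix of a rewinding'' condition to a statement about periodicity, via the elementary fact that for any words $z$ and $t$, $z$ is a prefix of $tz$ if and only if $z$ is a prefix of the infinite periodic word $t^{\omega}$ (equivalently, $z$ is a prefix of $t^{m}$ for all sufficiently large $m$). I would isolate this as a small helper claim at the outset, since it is the workhorse of both directions. Its payoff is that $q = uRvRw$ is a prefix of its rewinding $uRvRvRw$ exactly when $w$ is a prefix of $(vR)^{\omega}$, i.e.\ when the suffix of $q$ after the first displayed $R$ is governed by the period $vR$.

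For \ref{it:focone}$\Rightarrow$\ref{it:fobone} ($\cone\Rightarrow\bone$), I would first dispose of the self-join-free case: if $q$ has no repeated letter, then $\bone$ holds with witnesses $w := q$, $v := \emptyword$, $k := 0$. Otherwise I would locate the \emph{first} repetition, writing $q = uRvRw$ so that $uRv$ is self-join-free and the two displayed occurrences of $R$ are the first two occurrences of $R$ in $q$; in particular $v$ contains no $R$ and all letters of $uRv$ are pairwise distinct. Applying $\cone$ to this single factorization and invoking the helper claim shows that $w$ is a prefix of $(vR)^{\omega}$; re-bracketing then yields that $q$ itself is a prefix of $u(Rv)^{K}$ for large $K$. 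Finally, since $uRv$ is self-join-free, its rearrangement $(Rv)u$ is too, so setting $w := u$ and taking period $Rv$ witnesses $\bone$. The noteworthy point is that a single instance of $\cone$ already forces the entire periodic shape of the tail.

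For \ref{it:fobone}$\Rightarrow$\ref{it:focone} ($\bone\Rightarrow\cone$), I would argue that $\bone$ forces every repeated letter of $q$ to sit inside the period. Concretely, from $q$ a prefix of $w\kleene{v}{k}$ with $vw$ self-join-free, the letters of $w$ occur only once while each letter of $v$ recurs with gap exactly $|v|$; hence any repeated letter $R$ lies in $v$, and any two of its occurrences are separated by a multiple of $|v|$. Given an arbitrary factorization $q = u'Rv'Rw'$ (the hypothesis of $\cone$), it then follows that $Rv'$ is a power of the rotation $\gamma$ of $v$ that begins at $R$, so the suffix of $q$ from the first displayed $R$ is a prefix of $\gamma^{\omega}$; equivalently $w'$ is a prefix of $(v'R)^{\omega}$. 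By the helper claim this is precisely the condition making $q$ a prefix of $u'Rv'Rv'Rw'$, which establishes $\cone$.

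The main obstacle I anticipate is bookkeeping rather than ideas: making the position/phase arithmetic precise (that consecutive occurrences of a periodic letter are exactly $|v|$ apart, and that the factor between two such occurrences is a power of the appropriate rotation of $v$), and carefully treating the boundary and degenerate cases (empty $v$, $q$ shorter than one period, $q$ running well past several periods). The cleanest route is to route everything through the single equivalence ``$z$ is a prefix of $tz$ iff $z$ is a prefix of $t^{\omega}$'' and never manipulate rewindings directly, which keeps the two directions symmetric and short.
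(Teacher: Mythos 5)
Your proposal is correct and follows essentially the same route as the paper: for $\cone\Rightarrow\bone$ you locate the first repeated letter, apply $\cone$ once to that factorization, and conclude that the tail is periodic with period $Rv$ via the fact that $z$ is a prefix of $tz$ iff $z$ is a prefix of a high power of $t$ (the paper's Lemma~\ref{lem:repeat} plays exactly the role of your helper claim). The only difference is that the paper dismisses $\bone\Rightarrow\cone$ as obvious, whereas you supply the phase-arithmetic argument in full; your version of that direction is sound.
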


\begin{lemma}\label{lem:forms}
For every word $q$, the following are equivalent:
\begin{enumerate}
\item\label{it:pumping}
$q$ satisfies $\cthree$; and
\item\label{it:forms}
$q$ satisfies $\btwoa$, $\btwob$, or $\bthree$.
\end{enumerate}
\end{lemma}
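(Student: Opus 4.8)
The plan is to prove Lemma~\ref{lem:forms} by establishing both directions of the equivalence, working with the characterization of $\cthree$ as the set of words $q$ such that whenever $q = uRvRw$, then $q$ is a factor of $uRvRvRw$ (equivalently, $q$ is a factor of every word to which it rewinds).

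First I would prove the easy direction, namely that $\btwoa\cup\btwob\cup\bthree\subseteq\cthree$. The natural approach is to show directly that each of the three $\mathcal{B}$-languages is closed under the rewinding operation in the sense required by $\cthree$: if $q$ lies in one of these families and $q = uRvRw$ for consecutive or non-consecutive occurrences of a repeated relation name $R$, then $q$ is a factor of $uRvRvRw$. Since each $\mathcal{B}$-property asserts that $q$ is a factor of a word of the form $\kleene{u}{j}w\kleene{v}{k}$ (or similar) built from a self-join-free core $uvw$, the key observation is that within such a ``pumped'' word, any two occurrences of the same letter $R$ are positioned at distances that are multiples of the period $|u|$ or $|v|$; rewinding $RvR$ to $RvRvR$ merely reinserts one more copy of a period, keeping $q$ a factor of the correspondingly longer pumped word. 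I would verify this letter-counting argument separately for $\btwoa$, $\btwob$, and $\bthree$, using self-join-freeness of the core to control where repeated letters can occur.

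For the harder direction, $\cthree\subseteq\btwoa\cup\btwob\cup\bthree$, I would argue contrapositively or structurally: assume $q$ satisfies $\cthree$ and extract from $q$ a self-join-free core together with a periodic structure that forces $q$ into one of the three regex families. The idea is to look at the repeated relation names in $q$ and use the factor-closure property repeatedly to deduce a global periodicity. Concretely, if $q$ has any repeated letter, choose a suitable occurrence pattern $q = uRvRw$; $\cthree$ guarantees $q$ is a factor of $uRvRvRw$, and iterating this rewinding produces an infinite family of words all containing $q$ as a factor, which by the Fine--Wilf theorem (or a direct combinatorial argument on overlaps) pins down a period. Matching this period against the positions of repeated letters should force $q$ to embed into $\kleene{u}{j}w\kleene{v}{k}$, $\kleene{uv}{k}wv$, or $uw\kleene{uv}{k}$ depending on whether the repetition sits in the interior, near the suffix, or near the prefix of $q$.

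The main obstacle I expect is precisely this structural direction: translating the purely \emph{local} combinatorial condition $\cthree$ (a statement about each individual rewinding) into a \emph{global} periodic normal form (membership in one of three explicit regular languages). The difficulty is bookkeeping the three cases and ensuring the extracted words $u$, $v$, $w$ simultaneously satisfy self-join-freeness of $uvw$ \emph{and} the factor condition; a careless choice of decomposition may satisfy one but not the other. I anticipate needing a careful case analysis on the relative positions of the outermost and innermost repeated letters, together with a minimality argument (e.g. choosing the shortest repeated factor, or the first repeated letter) to break symmetry and guarantee that the self-join-free core is genuinely self-join-free. Once the periodicity is established via overlap/Fine--Wilf reasoning, assembling the final membership claim should be routine, but getting the three cases to exhaust all possibilities without gaps is where the real care lies.
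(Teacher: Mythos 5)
Your plan follows essentially the same route as the paper's proof: the easy inclusion $\btwoa\cup\btwob\cup\bthree\subseteq\cthree$ is indeed routine, and for the converse the paper also derives periodicity from the local factor condition via an elementary overlap lemma (if $w$ is a prefix of $uw$ with $u\neq\emptyword$, then $w$ is a prefix of $\kleene{u}{\card{w}}$ --- a weak Fine--Wilf statement), uses an extremal choice of repeated factor to secure self-join-freeness of the core, and finishes by a case analysis on where the repetitions sit. So the strategy is sound and correctly identified.

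The gap is that the structural direction, which you rightly flag as the hard part, is left entirely unexecuted, and the one concrete idea that makes the paper's case analysis go through is absent from your sketch. The paper works with \emph{episodes}: factors $RuR$ in which $R$ does not recur inside $u$. The Repeating Lemma shows that $\cthree$ forces every episode to be left-repeating or right-repeating (because the offset of $q$ inside the rewound word is at most $\card{u}+1$, so $q$ must sit at one end), and a second lemma shows that the rightmost left-repeating episode $L\ell L$ has $L\ell$ self-join-free. With these in hand, disjoint episodes yield $\btwoa$, and the genuinely delicate case --- the one your sketch would stall on --- is two \emph{overlapping} episodes: there the paper must rule out that the second episode is right-repeating beyond the overlap (by exhibiting a rewound word in which the alternation of the two episodes destroys $q$ as a factor), and this is precisely what produces the somewhat unnatural-looking forms $\kleene{uv}{k}wv$ and $uw\kleene{uv}{k}$ in $\btwob$ and $\bthree$. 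Your appeal to "Fine--Wilf plus careful bookkeeping" does not by itself predict these two asymmetric families, nor does it explain why no further family is needed; without the episode dichotomy and the overlapping-episode argument, the exhaustiveness of your three cases is exactly the claim that remains unproved.
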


\begin{definition}[First and last symbol]
For a nonempty  word $u$, we write $\first{u}$ and $\last{u}$ for, respectively, the first and the last symbol of $u$.
\qed
\end{definition}

\begin{lemma}\label{lem:notctwo}
Let $q$ be a word that satisfies $\cthree$.
Then, the following three statements are equivalent:
\begin{enumerate}
\item\label{it:notctwo}
$q$ violates $\ctwo$;
\item\label{it:falsifies}
$q$ violates both~$\btwoa$ and~$\btwob$; and
\item\label{it:factors}
there are words $u$, $v$, $w$ with $u\neq\emptyword$ and $uvw$ self-join-free such that either
\begin{enumerate}
\item\label{it:hercules}
$v\neq\emptyword$ and $\last{u}\cdot wuvu\cdot\first{v}$ is a factor of $q$; or
\item\label{it:tritan}
$v=\emptyword$, $w\neq\emptyword$, and $\last{u}\cdot w\kleene{u}{2}\cdot\first{u}$ is a factor of $q$. 
\end{enumerate}
\end{enumerate}
\end{lemma}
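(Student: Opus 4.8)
The plan is to exploit that the hypothesis $q\in\cthree$ trivializes the first conjunct of $\ctwo$, since that conjunct is literally the defining condition of $\cthree$. Hence $q$ violates $\ctwo$ exactly when it violates the \emph{second} conjunct: there are consecutive occurrences of some relation name $R$, say $q=xRy_1Ry_2Rz$ with $R\notin\symbols{y_1}\cup\symbols{y_2}$, such that $y_1\neq y_2$ and $Rz$ is not a prefix of $Ry_1$. I also record that, by Lemma~\ref{lem:forms}, $\cthree=\btwoa\cup\btwob\cup\bthree$, so under the hypothesis statement~\ref{it:falsifies} is equivalent to $q\in\bthree\setminus(\btwoa\cup\btwob)$. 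I establish the equivalences by the cycle \ref{it:factors}$\Rightarrow$\ref{it:notctwo}$\Rightarrow$\ref{it:falsifies}$\Rightarrow$\ref{it:factors}. For \ref{it:factors}$\Rightarrow$\ref{it:notctwo} I argue directly from the witness. Writing $R\defeq\last{u}$ and letting $u^{-}$ denote $u$ with its last symbol removed, self-join-freeness of $uvw$ guarantees that $R$ occurs exactly once in $uvw$ (namely as $\last{u}$); so inside $\last{u}\cdot wuvu\cdot\first{v}$ the only occurrences of $R$ are the leading one and the two ending the displayed copies of $u$. These are three consecutive occurrences of $R$ in $q$, with gaps $y_1=wu^{-}$ and $y_2=vu^{-}$, and with $\first{v}$ following the third. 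Self-join-freeness forces $y_1\neq y_2$ (since $v\neq\emptyword$ gives $w\neq v$) and forces $\first{v}$ to differ from the first symbol of $y_1$, so that $Rz$ is not a prefix of $Ry_1$; thus the second conjunct fails. Case~\ref{it:tritan} ($v=\emptyword$) is handled identically using $\last{u}\cdot w\kleene{u}{2}\cdot\first{u}$, splitting off the degenerate subcase $|u|=1$.

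For \ref{it:notctwo}$\Rightarrow$\ref{it:falsifies} I prove the contrapositive, i.e.\ $\btwoa\subseteq\ctwo$ and $\btwob\subseteq\ctwo$ (both classes lie in $\cthree$, so only the second conjunct needs checking). The engine is periodicity: if $q$ is a factor of $\kleene{u}{j}w\kleene{v}{k}$ with $uvw$ self-join-free, then each relation name occurs in only one of $u,v,w$, so all of its occurrences lie in a single periodic region and consecutive ones are separated by a constant gap; hence $y_1=y_2$ for every triple, and the conjunct holds. For $\btwob$, i.e.\ $q$ a factor of $\kleene{uv}{k}wv$, the same holds except for a single defect gap created by the trailing $wv$: a letter of $v$ has all gaps equal except possibly the last, but there the suffix $z$ after the third occurrence is a prefix of the normal gap $y_1$, so $Rz$ is a prefix of $Ry_1$ and the conjunct holds again.

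The hard part is \ref{it:falsifies}$\Rightarrow$\ref{it:factors}. Here $q\in\bthree\setminus(\btwoa\cup\btwob)$, so I fix a representation of $q$ as a factor $P[s,e)$ of $P=uw\kleene{uv}{k}$ with $uvw$ self-join-free, and show that the occurrence must contain all of $\last{u}\cdot wuvu\cdot\first{v}$ when $v\neq\emptyword$, which is exactly the witness of case~\ref{it:hercules}. I pin the occurrence down by two reach bounds. On the left, $s\le|u|-1$: otherwise $q$ is a factor of $w\kleene{uv}{k}$, which lies in $\btwoa$ via an \emph{empty} left period (take $\kleene{u'}{j}w'\kleene{v'}{k'}$ with $u'=\emptyword$, $w'=w$, $v'=uv$, seed $uvw$), contradicting $q\notin\btwoa$. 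On the right, $e\ge 3|u|+|w|+|v|+1$: otherwise $q$ is a factor of $uwuvu$, and $uwuvu$ is a factor of $\kleene{wu}{2}vu$ (offset $|w|$), putting it in $\btwob$ with seed $wuv$ and contradicting $q\notin\btwob$. Together these bounds force the occurrence to cover $P[\,|u|-1,\;3|u|+|w|+|v|+1)=\last{u}\cdot wuvu\cdot\first{v}$. The case $v=\emptyword$ is analogous, with boundary words $w\kleene{u}{k}\in\btwoa$ and $uwuu\in\btwob$ and the witness $\last{u}\cdot w\kleene{u}{2}\cdot\first{u}$ of case~\ref{it:tritan}; the degenerate possibilities $u=\emptyword$ or $w=\emptyword$ are excluded because they too would place $q$ in $\btwoa$.

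I expect the real work to be concentrated in \ref{it:falsifies}$\Rightarrow$\ref{it:factors}: the two reach bounds hinge on recognizing the boundary words $w\kleene{uv}{k}$ and $uwuvu$ and on the slightly non-obvious fact that $uwuvu$ re-factorizes as a cyclic shift witnessing membership in $\btwob$ (seed $wuv$ rather than $uvw$). The other two implications are comparatively routine periodicity and self-join-freeness bookkeeping.
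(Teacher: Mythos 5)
Your proposal is correct and follows essentially the same route as the paper's proof: the same three-implication cycle, with \ref{it:factors}$\Rightarrow$\ref{it:notctwo} verified directly on the witness factor, \ref{it:notctwo}$\Rightarrow$\ref{it:falsifies} via the contrapositive $\btwoa\cup\btwob\subseteq\ctwo$ using the periodicity/single-defect-gap analysis, and the hard direction \ref{it:falsifies}$\Rightarrow$\ref{it:factors} pinned down by exactly the paper's two boundary observations, namely that $w\kleene{uv}{k}$ witnesses $\btwoa$ and that $uwuvu$ (resp.\ $uwuu$) is a factor of $\kleene{wu}{2}vu$ (resp.\ $\kleene{wu}{2}u$) and hence witnesses $\btwob$. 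Your version merely makes the offset bounds explicit where the paper argues them in prose.
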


The shortest word of the form~\eqref{it:hercules} in the preceding lemma is $RRSRS$ (let $u=R$, $v=S$, and $w=\emptyword$), and the shortest word of the form~\eqref{it:tritan} is $RSRRR$ (let $u=R$, $v=\emptyword$, and $w=S$).
Note that since each of $\ctwo$, $\btwoa$, and $\btwob$ implies $\cthree$, it is correct to conclude that the equivalence between the first two items in Lemma~\ref{lem:notctwo} does not need the hypothesis that $q$ must satisfy $\cthree$.

\section{Automaton-Based Perspective}
\label{sec:key}

In this section, we prove an important lemma, Lemma~\ref{lem:min-start}, which will be used for proving the complexity upper bounds in Theorem~\ref{thm:main}. 

\subsection{From Path Queries to Finite Automata}

We can view a path query $q$ as a word where the alphabet is the set of relation names. We now associate each path query $q$ with a nondeterministic finite automaton (NFA), denoted $\nfa{q}$.

\begin{definition}[$\nfa{q}$]\label{def:nfa}
Every word $q$ gives rise to a nondeterministic finite automaton (NFA) with $\emptyword$-moves, denoted $\nfa{q}$, as follows.
\begin{description}
\item[States:]
The set of states is the set of prefixes of $q$. The empty word $\emptyword$ is a prefix of~$q$.
%%%
\item[Forward transitions:]
If $u$ and $uR$ are states, then there is a transition with label $R$ from state $u$ to state $uR$.
These transitions are said to be \emph{forward}. 
\item[Backward transitions:]
If $uR$ and $wR$ are states such that $\card{u}<\card{w}$ (and therefore $uR$ is a prefix of $w$),
then there is a transition with label~$\emptyword$ from state $wR$ to state $uR$.
These transitions are said to be \emph{backward}, and capture the operation dubbed rewinding. 
\item[Initial and accepting states:]
The initial state is~$\emptyword$ and the only accepting state is $q$.
\qed
\end{description}
%\qed
\end{definition}

\begin{figure*}[t]\centering
\begin{tikzpicture}[shorten >=1pt,node distance=2cm,on grid,auto] 
   \node[state,initial] (q_0)   {$\emptyword$}; 
   \node[state] (q_1) [right=of q_0] {$R$}; 
   \node[state] (q_2) [right=of q_1] {$RX$}; 
   \node[state] (q_3) [right=of q_2] {$RXR$}; 
   \node[state] (q_4) [right=of q_3] {$RXRR$};
   \node[state,accepting] (q_5) [right=of q_4] {$RXRRR$};
   
    \path[->] 
    (q_0) edge  node {$R$} (q_1)
    (q_1) edge  node {$X$} (q_2)
    (q_2) edge  node {$R$} (q_3)
    (q_3) edge  node {$R$} (q_4)
    (q_4) edge  node {$R$} (q_5)
    (q_3) edge[bend left,in=130,out=50]  node[above] {$\varepsilon$} (q_1)
    (q_4) edge[bend left,in=130,out=50]  node[above] {$\varepsilon$} (q_1)
    (q_4) edge[bend left,in=130,out=50]  node[above] {$\varepsilon$} (q_3)
    (q_5) edge[bend left,in=130,out=50]  node[above] {$\varepsilon$} (q_1)
    (q_5) edge[bend left,in=230,out=310]  node {$\varepsilon$} (q_3)
    (q_5) edge[bend left,in=130,out=50]  node[above] {$\varepsilon$} (q_4);
\end{tikzpicture}
\caption{The $\nfa{q}$ automaton for the path query $q=RXRRR$.}
\label{fig:nfa}
\end{figure*}
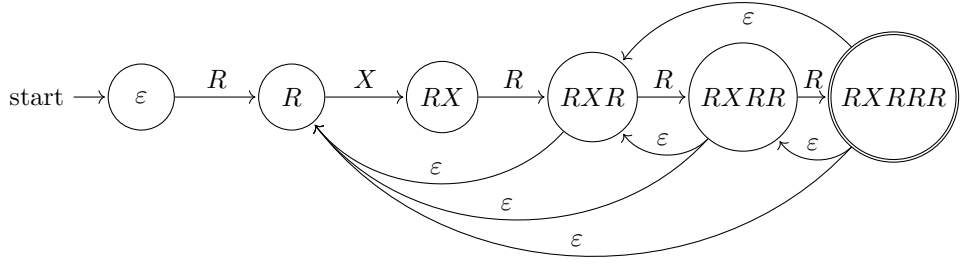

Figure~\ref{fig:nfa} shows the automaton $\nfa{RXRRR}$. Informally, the forward transitions capture the automaton that would accept the word $RXRRR$, while the backward transitions capture the existence of self-joins that allow an application of the rewind operator. 
%%%
We now take an alternative route for defining the language accepted by ~$\nfa{q}$, which straightforwardly results in  Lemma~\ref{lem:pumpingnfa}.
Then, Lemma~\ref{lem:pumpclosure} gives alternative ways for expressing $\cone$ and $\cthree$. 

%\jef{Maybe, we should say again that $\pumpclosure{q}$ is a language over the alphabet of relation names. So $q$ in the following definition uses the ``word'' representation of a path query.}

%\jef{Changed the symbol $\star$, to make it different from Kleene star.}

\begin{definition}\label{def:pumpclosure}
Let $q$ be a path query, represented as a word over the alphabet of relation names.
We define the language $\pumpclosure{q}$ as the smallest set of words such that
\begin{enumerate}[label=(\alph*)]
\item 
$q$ belongs to $\pumpclosure{q}$; and
\item \label{it:pump}
\emph{Rewinding:}
if $uRvRw$ is in $\pumpclosure{q}$ for some relation name $R$ and (possibly empty) words $u, v, w$,
then $uRvRvRw$ is also in $\pumpclosure{q}$.
\qed
\end{enumerate}
%\qed
\end{definition}
That is,  $\pumpclosure{q}$ is the smallest language that contains~$q$ and is closed under rewinding.

\begin{lemma}\label{lem:pumpingnfa}
For every path query $q$, the automaton $\nfa{q}$ accepts the language $\pumpclosure{q}$.
\end{lemma}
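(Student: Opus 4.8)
The plan is to show that the language $L(\nfa{q})$ accepted by $\nfa{q}$ equals $\pumpclosure{q}$, by proving the two inclusions separately. Throughout, write $q=R_1R_2\cdots R_k$ and let $p_\ell$ denote the state that is the length-$\ell$ prefix of $q$. The guiding picture is that in any run the forward transitions spell out consecutive letters of $q$ (every letter that is read is necessarily read by a forward transition), whereas each backward $\emptyword$-transition performs one ``jump'' that re-positions the automaton at an earlier occurrence of the current letter; intuitively, a jump realises one rewind.

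For $\pumpclosure{q}\subseteq L(\nfa{q})$ I would check that $q\in L(\nfa{q})$ and that $L(\nfa{q})$ is closed under rewinding; the inclusion then follows because $\pumpclosure{q}$ is by definition the least language with these two properties. Membership of $q$ is witnessed by the purely forward run $\emptyword\to p_1\to\cdots\to p_k$. For closure, start from an accepting run of a word $uRvRw$ and split it at the two distinguished occurrences of $R$: let $p_{m_1}$ and $p_{m_2}$ be the states entered by the forward transitions that read the first and the second $R$ (so $R_{m_1}=R_{m_2}=R$), and let $\sigma$ be the subrun from $p_{m_1}$ to $p_{m_2}$, which reads $vR$. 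To build an accepting run for $uRvRvRw$ I splice a second copy of $\sigma$ directly after $p_{m_2}$: if $m_1=m_2$ then $\sigma$ is a loop that can simply be repeated, and if $m_1<m_2$ the backward transition $p_{m_2}\to p_{m_1}$ returns the automaton to the entry of $\sigma$, so $\sigma$ can be replayed.

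For $L(\nfa{q})\subseteq\pumpclosure{q}$ the key device is the following invariant, proved for every walk of $\nfa{q}$ of the form $\emptyword\rightsquigarrow p_\ell$ reading a word $x$: for every word $\beta$ with $p_\ell\beta\in\pumpclosure{q}$ one has $x\beta\in\pumpclosure{q}$. Applying it to an accepting walk (where $p_\ell=p_k=q$) with $\beta=\emptyword$ shows that every accepted word lies in $\pumpclosure{q}$. I would prove the invariant by induction on the length of the walk, the base case being the forward walk with $x=p_\ell$, where the statement is immediate. If the last transition is forward, reading $R_\ell$ from $p_{\ell-1}$ into $p_\ell$, then $x=x'R_\ell$ and $p_{\ell-1}(R_\ell\beta)=p_\ell\beta\in\pumpclosure{q}$, so the induction hypothesis for the shorter walk $\emptyword\rightsquigarrow p_{\ell-1}$ applied with $R_\ell\beta$ yields $x'R_\ell\beta=x\beta\in\pumpclosure{q}$. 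If the last transition is a backward transition $p_m\to p_\ell$ (so $m>\ell$, $R_m=R_\ell$, and $x$ already labels the shorter walk $\emptyword\rightsquigarrow p_m$), it suffices to establish the lifting statement $p_\ell\beta\in\pumpclosure{q}\Rightarrow p_m\beta\in\pumpclosure{q}$ and then invoke the induction hypothesis at $p_m$.

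The main obstacle is exactly this lifting statement, because a single backward transition does not in general correspond to a single rewind: descending deep into $q$, re-reading a block, and jumping back can only be reproduced by several rewinds (for $q=RXRR$, for instance, the accepted word $RXRRRXRR$ is obtained from $q$ only after two rewinds). I would prove the lifting statement, for $\ell<m$ and $R_\ell=R_m$, by induction on a rewinding derivation of $p_\ell\beta$ from $q$: the base case $p_\ell\beta=q$ holds by rewinding the single factor $R_\ell\cdots R_m$ of $q$, which turns $q$ into $p_m\cdot R_{\ell+1}\cdots R_k=p_m\beta$; and the inductive step copies a rewind performed to produce $p_\ell\beta$ onto $p_m\beta$. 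The genuinely delicate point, where I expect the real work to lie, is the inductive step when the rewound factor straddles the boundary between the prefix $p_\ell$ and $\beta$. A symmetric subtlety arises in the $\supseteq$ direction, namely guaranteeing the ordering $m_1\le m_2$ by choosing the accepting run of $uRvRw$ suitably, which is what must be verified to make the splicing go through.
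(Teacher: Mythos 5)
The paper offers no written proof of Lemma~\ref{lem:pumpingnfa}: it is asserted to follow ``straightforwardly'' from Definitions~\ref{def:nfa} and~\ref{def:pumpclosure}, and no argument appears in the appendix. So the question is whether your argument is complete on its own, and it is not: both inclusions bottom out in sublemmas that you explicitly leave open, and for one of them the repair you propose provably cannot work.

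For $\pumpclosure{q}\subseteq L(\nfa{q})$, your splicing step replays the subrun $\sigma$ (reading $vR$) after the state $p_{m_2}$ via the backward transition $p_{m_2}\step{\emptyword}p_{m_1}$, which exists only when $m_1\le m_2$. You defer the ordering $m_1\le m_2$ to a suitable choice of accepting run, but no such choice exists in general. Take $q=RXRR$ and $uRvRw=RXRRXRR\in\pumpclosure{q}$ (one rewind of the factor spanning the first and fourth letters of $q$), with the distinguished occurrences of $R$ at positions $4$ and $6$, i.e.\ $u=RXR$, $v=X$, $w=R$. Tracing $\nfa{RXRR}$ on $RXRRXRR$ shows that \emph{every} accepting run reads position~$4$ into the state $RXRR$ and position~$6$ into the state $RXR$, so $m_1=4>3=m_2$ unavoidably. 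The rewound word $RXRRXRXRR$ is still accepted, but only via the loop $RXR\step{\emptyword}R\step{X}RX\step{R}RXR$, which is not a copy of your $\sigma$ (that subrun cannot be replayed from $RXR$, since $\emptyword$-moves only shorten the state). So the splicing recipe genuinely fails here and this direction needs a different construction, e.g.\ an induction on the rewinding derivation that maintains a stronger invariant about which state each letter is read into.

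For $L(\nfa{q})\subseteq\pumpclosure{q}$, the invariant and the forward/backward case split are sound, but the entire weight rests on the lifting statement ($p_\ell\beta\in\pumpclosure{q}$ implies $p_m\beta\in\pumpclosure{q}$ for $\ell<m$ with $R_\ell=R_m$), whose inductive step you identify as ``where the real work lies'' and do not carry out. The delicate cases are real: the rewound factor producing $p_\ell\beta$ may straddle the boundary between $p_\ell$ and $\beta$, and the derivation may rewind entirely inside the first $\ell$ positions, so that the length-$\ell$ prefix of the derived word equals $p_\ell$ only coincidentally; in either case the rewind cannot simply be ``copied onto $p_m\beta$''. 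As it stands the proposal is a plan whose two genuinely hard steps are missing, one of them hidden behind a fix that does not go through.
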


\begin{lemma} \label{lem:pumpclosure}
Let $q$ be a path query. Then,
\begin{enumerate}
\item\label{it:lc1} 
$q$ satisfies $\cone$ if and only if $q$ is a prefix of each $p \in \pumpclosure{q}$;
\item\label{it:lc3}
$q$ satisfies $\cthree$ if and only if $q$ is a factor of each $p \in \pumpclosure{q}$.
\end{enumerate}
\end{lemma}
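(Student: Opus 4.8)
The plan is to prove Lemma~\ref{lem:pumpclosure} by reducing both biconditionals to facts already established in the excerpt. The key observation is that by Lemma~\ref{lem:pumpingnfa}, the automaton $\nfa{q}$ accepts exactly $\pumpclosure{q}$, so the set of words to which $q$ rewinds (iteratively) coincides with $\pumpclosure{q}$. The conditions $\cone$ and $\cthree$ are defined in Section~\ref{sec:classification} in terms of a \emph{single} rewinding step (replacing a factor $RvR$ by $RvRvR$), whereas $\pumpclosure{q}$ is the closure under \emph{arbitrarily many} rewinding steps. Thus the real content of the lemma is that closing under a single step is equivalent, for the purposes of the prefix/factor tests, to closing under all steps.

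First I would handle the easy direction of each biconditional. For item~\eqref{it:lc1}, suppose $q$ is a prefix of every $p\in\pumpclosure{q}$. Since any word obtained from $q$ by one rewinding step, namely $uRvRvRw$ from $q=uRvRw$, lies in $\pumpclosure{q}$ by Definition~\ref{def:pumpclosure}\ref{it:pump}, it follows immediately that $q$ is a prefix of $uRvRvRw$, which is exactly $\cone$. The same argument gives the easy direction of~\eqref{it:lc3} with ``prefix'' replaced by ``factor''. So the trivial direction is just specializing the universal statement over $\pumpclosure{q}$ to the one-step rewinds.

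The substantive direction is the converse: assuming $\cone$ (respectively $\cthree$), show $q$ is a prefix (respectively factor) of \emph{every} $p\in\pumpclosure{q}$. Here I would argue by induction on the number of rewinding steps used to generate $p$, using the inductive definition of $\pumpclosure{q}$. The base case $p=q$ is immediate. For the inductive step, one is given some $p'=uRvRw\in\pumpclosure{q}$ that already has $q$ as a prefix (resp.\ factor), and one must show the same for its rewind $p=uRvRvRw$. The crux is a \emph{preservation} argument: rewinding never destroys the property of having $q$ as a prefix/factor. For the factor case ($\cthree$), this is intuitively clear because rewinding only \emph{inserts} a copy of $Rv$ into $p'$, so any occurrence of $q$ as a factor of $p'$ that does not straddle the insertion point survives, and one must check that even an occurrence straddling the insertion point still leaves a copy of $q$ somewhere; the condition $\cthree$ itself (applied at the right occurrence of $R$) is what guarantees this. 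For the prefix case ($\cone$), the insertion happens to the right of, or inside, the initial copy of $q$, and $\cone$ guarantees the prefix is preserved.

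I expect the main obstacle to be making the preservation step fully rigorous, because the interaction between an arbitrary rewinding site in $p'$ and the existing occurrence of $q$ is combinatorially delicate: the factor $RvR$ being rewound may overlap the witnessing occurrence of $q$ in several ways, and one cannot simply invoke $\cone$/$\cthree$ on $q$ directly since those conditions speak about factorizations of $q$, not of the longer word $p'$. The clean way to surmount this is probably to route the argument through the automaton $\nfa{q}$ of Definition~\ref{def:nfa} and Lemma~\ref{lem:pumpingnfa} rather than through raw word surgery: a word $p$ is accepted iff there is an accepting run, and one can read off from the structure of forward and backward ($\emptyword$-labeled) transitions that every accepting run must traverse the full forward chain $\emptyword\to R\to\cdots\to q$ at least once, which forces $q$ to appear as a factor of $p$; and if no backward transition is taken before the chain is first completed, $q$ appears as a prefix. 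Relating ``$\cone$ holds'' to ``no useful backward transition precedes completion of the forward chain'' is the technical heart, and I would isolate it as the one place needing careful case analysis on where backward transitions can fire.
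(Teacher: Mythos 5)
Your skeleton matches the paper's: the $\impliedby$ directions are dispatched exactly as you say (a one-step rewind of $q$ lies in $\pumpclosure{q}$), and the $\implies$ directions are proved by induction on the number of rewinds, with the inductive step being a preservation argument. However, the way you plan to carry out that step contains a genuine gap. You assert that one ``cannot simply invoke $\cone$/$\cthree$ on $q$ directly since those conditions speak about factorizations of $q$, not of the longer word $p'$,'' and you therefore retreat to the automaton. This dismisses precisely the idea that makes the direct argument work: if the witnessing occurrence of $q$ in $p'=uRvRw$ is \emph{not} contained in $uRvR$ nor in $RvRw$, then that occurrence must contain the entire middle block $RvR$, so it induces a factorization of $q$ \emph{itself} as $q=q^{-}\cdot Rv R\cdot q^{+}$ with $q^{-}$ a suffix of $u$ and $q^{+}$ a prefix of $w$; applying $\cthree$ to this factorization of $q$ yields that $q$ is a factor of $q^{-}RvRvRq^{+}$, hence of $uRvRvRw$. (The prefix case is analogous: either $q$ is a prefix of $uRvR$, or $uRvR$ is a proper prefix of $q$, giving $q=uRvRt$ to which $\cone$ applies.) This is exactly the paper's proof, and it is short.

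The automaton detour you propose instead does not work as described. Your central claim --- that every accepting run of $\nfa{q}$ ``must traverse the full forward chain $\emptyword\to R\to\cdots\to q$ at least once, which forces $q$ to appear as a factor of $p$'' --- is false without already assuming $\cthree$: a run can visit every state while interleaving backward $\emptyword$-moves, so the letters it reads need not contain $q$ contiguously. For example, $\nfa{ARRX}$ accepts $ARRRX$ (and indeed $ARRRX\in\pumpclosure{ARRX}$), the accepting run visits all five states, yet $ARRX$ is not a factor of $ARRRX$ --- consistently with the fact that $ARRX$ violates $\cthree$. So the automaton gives you no shortcut; the hypothesis $\cone$/$\cthree$ must enter through a case analysis on how the rewound block $RvR$ overlaps the occurrence of $q$, which is exactly the ``raw word surgery'' you were trying to avoid. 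To repair your proof, drop the automaton route and make the overlap case analysis explicit as above.
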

\begin{proof}
\framebox{$\impliedby$ in~\eqref{it:lc1} and~\eqref{it:lc3}}
This direction  is trivial, because whenever $q = u Rv Rw$, we have that $u Rv Rv Rw \in\pumpclosure{q}$. 

We now show the $\implies$ direction in both items.
To this end, we call an application of the rule~\ref{it:pump} in Definition~\ref{def:pumpclosure} a \emph{rewind}. 
By construction, each word in $\pumpclosure{q}$ can be obtained from $q$ by using $k$~rewinds, for some nonnegative integer~$k$.
Let $q_k$ be a word in $\pumpclosure{q}$ that can be obtained from~$q$ by using $k$~rewinds. 

\framebox{$\implies$ in~\eqref{it:lc1}}
We use induction on $k$ to show that $q$ is a prefix of~$q_k$.
For he induction basis, $k=0$, we have that~$q$ is a prefix of $q_0 = q$.
We next show the induction step $k\rightarrow k+1$.
%\begin{description}
%\item[Basis.] We have $q_0 = q$, which contains $q$ as a prefix.
%\item[Induction step.] 
Let $q_{k+1} = u Rv Rv Rw$ where $q_{k} = u Rv Rw$ is a word in $\pumpclosure{q}$ obtained with $k$~rewinds. By the induction hypothesis, we can assume a word $s$ such that $q_k =q\cdot s$.

    \begin{itemize}
      \item If $q$ is a prefix of $u RvR$, then $q_{k+1} = u Rv  Rv Rw$ trivially contains $q$ as a prefix;
      \item otherwise $u Rv R$ is a proper prefix of $q$. Let $q = u  Rv Rt$ where $t$ is nonempty. Since $q$ satisfies $\cone$,  $Rt$ is a prefix of $Rv$. Then $q_{k+1} = u Rv Rv Rw$ contains $q = u \cdot Rv \cdot Rt$ as a prefix. 
    \end{itemize}
%\end{description}

\framebox{$\implies$ in~\eqref{it:lc3}}
We use induction on $k$ to show that $q$ is a factor of~$q_k$.
For the induction basis, $k=0$, we have that $q$ is a prefix of $q_0 = q$.
%\begin{description}
%\item[Basis.] We have $q_0 = q$, which contains $q$ as a factor.
%\item[Induction step.] 
For the induction step, $k\rightarrow k+1$, let $q_{k+1} = u Rv Rv Rw$ where $q_{k} = u Rv Rw$ is a word in $\pumpclosure{q}$ obtained with $k$~rewinds. By the induction hypothesis, $q_k = u  Rv Rw$ contains $q$ as a factor. If $q$ is a factor of either $u Rv R$ or $Rv  Rw$, then $q_{k+1} = u Rv  Rv Rw$ contains $q$ as a factor. Otherwise, we can decompose $q_{k} = u^- q^-  Rv  Rq^+  w^+$ where $q = q^-  Rv  R  q^+$, $u = u^- q^-$ and $w = q^+w^+$. Since $q$ satisfies $\cthree$, the word $q^-  Rv Rv  R  q^+$, which is a factor of~$q_{k+1}$, contains $q$ as a~factor. 
 %\end{description}
\end{proof}

In the technical treatment, it will be convenient to consider the automaton obtained from $\nfa{q}$ by changing its start state, as defined next.

\begin{definition}\label{def:snfa}
If $u$ is a prefix of $q$ (and thus $u$ is a state in $\nfa{q}$),
then $\snfa{q}{u}$ is the automaton obtained from $\nfa{q}$ by letting the initial state be $u$ instead of the empty word. 
%Informally, $\snfa{q}{u}$ accepts suffixes of words accepted by $\nfa{q}$. 
Note that $\snfa{q}{\emptyword}=\nfa{q}$.
It may be helpful to think of the first $\mathsf{S}$ in $\snfa{q}{u}$ as ``$\underline{\mathsf{S}}$tart at $u$.''
\qed
\end{definition}

\subsection{Reification Lemma}

In this subsection, we first define how an automaton executes on a database instance.
We then state an helping lemma which will be used in the proof of Lemma~\ref{lem:min-start}, which constitutes the main result of Section~\ref{sec:key}.
To improve the readability and logical flow of our presentation,  
we postpone the proof of the helping lemma to~Section~\ref{sec:proofkey}.

\begin{definition}[Automata on database instances]
Let $\db$ be a database instance.
A \emph{path (in $\db$)} is defined as a sequence $R_{1}(\underline{c_{1}},c_{2})$, $R_{2}(\underline{c_{2}},c_{3})$, \dots, $R_{n}(\underline{c_{n}},c_{n+1})$  of facts in $\db$. 
Such a path is said to \emph{start in $c_{1}$}.
We call $R_{1}R_{2}\dotsm R_{n}$ the \emph{trace} of this path.
A path is said to be \emph{accepted} by an automaton if its trace is accepted by the automaton.

Let $q$ be a path query and $\rep$ be a consistent database instance.
We define $\starttwo{q}{\rep}$ as the set containing all (and only) constants $c\in\adom{\rep}$ such that there is a path in $\rep$ that starts in~$c$ and is accepted by $\nfa{q}$.
\qed
\end{definition}

\begin{example}\label{ex:trace}
Consider the query $q_{2}=RRX$ and the database instance of Figure~\ref{fig:example-rrx}.
Let $\rep_{1}$ and $\rep_{2}$ be the repairs containing, respectively, $R(\underline{1},2)$ and $R(\underline{1},3)$.
The only path with trace $RRX$ in $\rep_{1}$ starts in~$1$; and
the only path with trace $RRX$ in $\rep_{2}$ starts in~$0$.
%%%
The regular expression for $\pumpclosure{q}$ is $RR\kleene{R}{*}X$.
We have $\starttwo{q}{\rep_{1}}=\{0,1\}$ and $\starttwo{q}{\rep_{2}}=\{0\}$.
\qed
\end{example}

The following lemma tells us that, among all repairs, there is one that is inclusion-minimal with respect to $\starttwo{q}{\cdot}$.
In the preceding example, the repair $\rep_{2}$ minimizes~$\starttwo{q}{\cdot}$.

\begin{lemma}\label{lem:key}
Let $q$ be a path query, and $\db$ a database instance.
There exists a repair $\rep^{*}$ of $\db$ such that for every repair $\rep$ of $\db$,
$\starttwo{q}{\rep^{*}}\subseteq\starttwo{q}{\rep}$.
\end{lemma}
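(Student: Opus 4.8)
The plan is to show that the pointwise minimum $\bigcap_{\rep}\starttwo{q}{\rep}$ is actually attained by a single repair, by characterizing it through a least fixpoint over \emph{configurations} and then constructing one witnessing repair. A configuration is a pair $(c,u)$ with $c\in\adom{\db}$ and $u$ a state of $\nfa{q}$, read as ``the automaton $\snfa{q}{u}$ sits at vertex $c$''; note $c\in\starttwo{q}{\rep}$ iff $(c,\emptyword)$ admits an accepting path in $\rep$. Write $E(u)$ for the set of states reachable from $u$ by backward $\emptyword$-transitions ($u$ included); since backward transitions strictly shorten the prefix, every state in $E(u)$ has length at most $\card{u}$, and in particular $q\in E(u)$ only when $u=q$. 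Define a monotone operator $\Gamma$ on sets $S$ of configurations by: $(c,u)\in\Gamma(S)$ iff $u=q$, or there is $u'\in E(u)$ with $u'R$ a prefix of $q$ such that the block $R(\underline{c},*)$ is nonempty and $(d,u'R)\in S$ for \emph{every} fact $R(\underline{c},d)\in\db$. Let $S^{*}$ be the least fixpoint of $\Gamma$ and set $X^{*}=\{c : (c,\emptyword)\in S^{*}\}$; the goal becomes to find a repair realizing $X^{*}$.

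The engine is a monotonicity property: if $u'\in E(u)$ and $(c,u')\in S^{*}$, then $(c,u)\in S^{*}$, because $E(u')\subseteq E(u)$ so any witness for $(c,u')$ also witnesses $(c,u)$. I would extract two consequences used throughout: (i) $(c,u)\notin S^{*}$ forces $(c,u')\notin S^{*}$ for every $u'\in E(u)$; and (ii) if $uR$ and $wR$ are prefixes of $q$ with $\card{u}<\card{w}$, then there is a backward transition from $wR$ to $uR$, whence $uR\in E(wR)$ and so $(d,wR)\notin S^{*}$ implies $(d,uR)\notin S^{*}$. The easy direction, $X^{*}\subseteq\starttwo{q}{\rep}$ for every repair $\rep$, is then a routine induction on the stage at which a configuration enters $S^{*}$: the base case $u=q$ is the empty accepting path, and in the inductive case the unique fact $R(\underline{c},d_{0})\in\rep$ has $(d_{0},u'R)\in S^{*}$, so prefixing an accepting path from $(d_{0},u'R)$ with the $\emptyword$-moves $u\rightsquigarrow u'$ and the forward edge $R(\underline{c},d_{0})$ yields one from $(c,u)$.

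The substance is the reverse inclusion: I would build one repair $\rep^{*}$ with $\starttwo{q}{\rep^{*}}\subseteq X^{*}$, which together with the easy direction forces $\starttwo{q}{\rep^{*}}=X^{*}\subseteq\starttwo{q}{\rep}$ for all $\rep$. The repair is chosen blockwise. For a block $R(\underline{c},*)$, consider the states $u'$ with $u'R$ a prefix of $q$ and $(c,u')\notin S^{*}$; if there are none, pick any fact of the block, and otherwise let $u'_{\max}$ be such a state with $u'_{\max}R$ of \emph{maximal} length. Since $(c,u'_{\max})\notin S^{*}=\Gamma(S^{*})$ and the block is nonempty, the clause of $\Gamma$ for the witness $u'_{\max}$ must fail, so some fact $R(\underline{c},d^{*})\in\db$ has $(d^{*},u'_{\max}R)\notin S^{*}$; put this fact into $\rep^{*}$. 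The decisive point — and the step I expect to be the main obstacle — is that this \emph{single} outgoing $R$-edge must avoid $S^{*}$ from \emph{all} offending states at $c$ simultaneously; choosing the longest offending state is exactly what makes this possible, since by (ii), for every $v$ with $vR$ a prefix of $q$ and $(c,v)\notin S^{*}$ we have $\card{vR}\le\card{u'_{\max}R}$ and hence $(d^{*},vR)\notin S^{*}$ as well.

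It remains to verify that $\rep^{*}$ has no accepting path from any $(c,u)\notin S^{*}$, giving $\starttwo{q}{\rep^{*}}\subseteq X^{*}$. Suppose not, and take over all configurations outside $S^{*}$ an accepting path in $\rep^{*}$ using the fewest forward edges, from some $(c,u)$. Reaching acceptance by $\emptyword$-moves alone would put $q\in E(u)$, hence $u=q\in S^{*}$, a contradiction; so the path makes a first forward edge from a state $u'\in E(u)$, and by (i) we have $(c,u')\notin S^{*}$. That edge reads the symbol $R$ with $u'R$ a prefix of $q$ along the unique fact $R(\underline{c},d^{*})\in\rep^{*}$, and our blockwise choice guarantees $(d^{*},u'R)\notin S^{*}$. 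But the remainder of the path is a strictly shorter accepting path from $(d^{*},u'R)$, contradicting minimality. Hence $\rep^{*}$ accepts only from configurations in $S^{*}$, so $\rep^{*}$ realizes the minimum and the lemma follows.
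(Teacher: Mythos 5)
Your proof is correct, but it takes a different route through the material than the paper does. The paper derives Lemma~\ref{lem:key} from Lemma~\ref{lem:shrinking}: it attaches to each fact $f$ and repair $\rep$ a states set $\sset{f}{\rep}{q}$, defines the certain states set $\csset{f}{\db}{q}$ \emph{semantically} as an intersection over all repairs containing $f$, builds $\rep^*$ by picking in each block a fact with $\subseteq$-minimal certain states set, and shows by an extremal-index argument along a hypothetical run (Claim~\ref{claim:smallest}) that this minimum is actually attained. You instead characterize the certain configurations \emph{inductively}, as the least fixpoint $S^*$ of your operator $\Gamma$ on pairs $(c,u)$; this is exactly the relation $\db\proves{q}\pair{c}{u}$ computed by the algorithm of Figure~\ref{fig:algo}, and your two directions --- soundness of $S^*$ by induction on the fixpoint stage, and completeness via the constructed repair together with a fewest-forward-edges minimal counterexample --- mirror the two directions of Lemma~\ref{lem:correctness}. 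Your blockwise choice (take the fact that kills the longest offending prefix $u'_{\max}$) is literally the repair built in the $\impliedby$ direction of that proof, which the paper itself remarks coincides with the $\rep^*$ of Lemma~\ref{lem:shrinking}. What your route buys is self-containment and constructivity: Lemma~\ref{lem:key} and the correctness of the \PTIME\ algorithm fall out of a single argument, with no need for the preorder $\mypreceq{q}$ or the per-fact states sets. What the paper's decomposition buys is a reusable semantic object (the $\mypreceq{q}$-minimal repair and the sets $\csset{f}{\db}{q}$) stated independently of any evaluation strategy. Two cosmetic points you should make explicit: first, that $E(wR)$ is exactly the set of prefixes $vR$ of $q$ with $\card{v}\leq\card{w}$, so that two offending states of equal length coincide and your maximality argument closes; second, that the ``empty accepting path'' in your base case $u=q$ is only an intermediate device of the induction, since the paths ultimately produced for configurations $(c,\emptyword)$ are nonempty.
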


%Example~\ref{ex:trace} illustrates that the above lemma does not hold true if we replace $\nfa{q}$ by the automaton that accepts only~$q$. 
%Lemma~\ref{lem:key} is helpful for solving $\cqa{q}$, as will become apparent in 
Informally, we think of the next Lemma~\ref{lem:min-start} as a \emph{reification lemma}.
The notion of \emph{reifiable variable} was coined in~\cite[Definition~8.5]{10.1145/2188349.2188351}, to refer to a variable~$x$ in a query~$\exists x\formula{\varphi(x)}$ such that whenever that query is true in every repair of a database instance, then there is a constant $c$ such that $\varphi(c)$ is true in every repair.
The following lemma captures a very similar concept.

\begin{lemma}[Reification Lemma for $\cthree$] \label{lem:min-start}
Let $q$ be a path query that satisfies $\cthree$.
Then, for every database instance $\db$, the following are equivalent:
\begin{enumerate}
\item\label{it:rifidb}
 $\db$ is a ``yes''-instance of $\cqa{q}$; and
\item \label{it:rificonstant}
there exists a constant $c$ (which depends on $\db$) such that for every repair $\rep$ of $\db$,
$c\in\starttwo{q}{\rep}$.
\end{enumerate}
\end{lemma}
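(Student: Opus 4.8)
The plan is to prove the two implications $\eqref{it:rificonstant}\Rightarrow\eqref{it:rifidb}$ and $\eqref{it:rifidb}\Rightarrow\eqref{it:rificonstant}$ separately, leaning on the helping Lemma~\ref{lem:key} for the harder direction and on Lemma~\ref{lem:pumpclosure} (together with Lemma~\ref{lem:pumpingnfa}) for the other. Throughout I will use the elementary observation that a path in a consistent instance whose trace contains $q$ as a factor yields, by restricting to the corresponding contiguous subsequence of facts, a path whose trace is exactly $q$, and hence a valuation witnessing that $q$ holds in that instance.

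For the direction $\eqref{it:rificonstant}\Rightarrow\eqref{it:rifidb}$, I would fix a constant $c$ as in item~\eqref{it:rificonstant} and an arbitrary repair $\rep$. By hypothesis $c\in\starttwo{q}{\rep}$, so $\rep$ contains a path $P$ that starts in $c$ and is accepted by $\nfa{q}$; by Lemma~\ref{lem:pumpingnfa} its trace $p$ lies in $\pumpclosure{q}$. This is the only place where the hypothesis $\cthree$ enters: by Lemma~\ref{lem:pumpclosure}\eqref{it:lc3}, $q$ is a factor of $p$, so the contiguous subpath of $P$ corresponding to that factor has trace $q$ and witnesses $q$ in $\rep$. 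Since $\rep$ was arbitrary, every repair satisfies $q$, i.e., $\db$ is a ``yes''-instance.

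For the converse $\eqref{it:rifidb}\Rightarrow\eqref{it:rificonstant}$, the work is done by Lemma~\ref{lem:key}: it provides a repair $\rep^{*}$ that is inclusion-minimal for $\starttwo{q}{\cdot}$, that is, $\starttwo{q}{\rep^{*}}\subseteq\starttwo{q}{\rep}$ for every repair $\rep$. Since $\db$ is a ``yes''-instance, $\rep^{*}$ itself satisfies $q$, so it contains a path with trace $q$; as $q\in\pumpclosure{q}$ is accepted by $\nfa{q}$, the start constant $c$ of this path belongs to $\starttwo{q}{\rep^{*}}$, which is therefore nonempty. Picking any such $c$ and invoking minimality gives $c\in\starttwo{q}{\rep}$ for every repair $\rep$, which is exactly item~\eqref{it:rificonstant}. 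Note that this direction does not use $\cthree$.

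I do not expect a genuine obstacle in this argument once Lemma~\ref{lem:key} is available: both directions are short, and the two nontrivial ingredients---the factor characterization of $\cthree$ and the existence of a uniformly $\starttwo{q}{\cdot}$-minimal repair---are quoted rather than reproved. The real difficulty is thus entirely packaged into Lemma~\ref{lem:key}, whose proof is deferred; establishing that a \emph{single} repair simultaneously minimizes the start set against \emph{all} repairs (rather than merely that the start sets have nonempty common intersection) is where I would expect to spend the effort if it had to be proved from scratch.
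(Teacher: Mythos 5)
Your proposal is correct and follows essentially the same route as the paper: the paper likewise derives $\eqref{it:rifidb}\implies\eqref{it:rificonstant}$ from Lemma~\ref{lem:key} by taking the start vertex of a $q$-traced path in the minimizing repair $\rep^{*}$, and derives $\eqref{it:rificonstant}\implies\eqref{it:rifidb}$ via Lemma~\ref{lem:pumpingnfa} and the factor characterization in Lemma~\ref{lem:pumpclosure}\eqref{it:lc3}. Your added observation that $\cthree$ is needed only in the second direction is accurate and consistent with the paper's argument.
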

\begin{proof}
\framebox{\ref{it:rifidb}$\implies$\ref{it:rificonstant}}
Assume~\eqref{it:rifidb}.
By Lemma~\ref{lem:key}, there exists a repair $\rep^{*}$ of $\db$ such that for every repair $\rep$ of $\db$, $\starttwo{q}{\rep^{*}}\subseteq\starttwo{q}{\rep}$.
Since $\rep^{*}$ satisfies $q$, there is a path $R_{1}(\underline{c_{1}},c_{2})$, $R_{2}(\underline{c_{2}},c_{3})$, \dots, $R_{n}(\underline{c_{n}},c_{n+1})$ in $\rep^{*}$ such that $q=R_{1}R_{2}\dotsm R_{n}$.
Since~$q$ is accepted by $\nfa{q}$, we have $c_{1}\in\starttwo{q}{\rep^{*}}$.
It follows that $c_{1}\in\starttwo{q}{\rep}$ for every repair $\rep$ of $\db$.

\framebox{\ref{it:rificonstant}$\implies$\ref{it:rifidb}}
Let $\rep$ be any repair of $\db$. 
By our hypothesis that~\eqref{it:rificonstant} holds true, there is some $c\in\starttwo{q}{\rep}$.
Therefore, there is a path in $\rep$ that starts in $c$ and is accepted by $\nfa{q}$.
Let $p$ be the trace of this path.
By Lemma~\ref{lem:pumpingnfa}, $p\in\pumpclosure{q}$.
%By Lemma~\ref{lem:pumpingnfa}, $q\in\pumpclosure{q}$.
Since $q$ satisfies $\cthree$ by the hypothesis of the current lemma, it follows by Lemma~\ref{lem:pumpclosure} that $q$ is a factor of~$p$.
Consequently, there is a path in $\rep$ whose trace is~$q$.
It follows that $\rep$ satisfies~$q$.
\end{proof}

\subsection{Proof of Lemma~\ref{lem:key}}\label{sec:proofkey}

We will use the following definition.

\begin{definition}[States Set]\label{def:statesset}
This definition is relative to a path query~$q$.
Let $\rep$ be a consistent database instance, and let $f$ be an $R$-fact in $\rep$, for some relation name~$R$. 
The \emph{states set} of $f$ in $\rep$, denoted $\sset{f}{\rep}{q}$, is defined as the smallest set of states satisfying the following property, for all prefixes $u$ of $q$:
\begin{quote}
if $\snfa{q}{u}$ accepts a path in $\rep$ that starts with $f$,
then $uR$ belongs to $\sset{f}{\rep}{q}$.
\end{quote}
Note that if $f$ is an $R$-fact, then all states in $\snfa{q}{\rep}$ have~$R$ as their last relation name.
\qed
\end{definition}

\begin{example}
Let $q=RRX$ and $\rep=\{R(\underline{a},b)$, $R(\underline{b},c)$, $R(\underline{c},d)$, $X(\underline{d},e)$, $R(\underline{d},e)\}$.
Then  $\nfa{q}$ has states $\{\emptyword, R, RR, RRX\}$ and accepts the regular language $RR\kleene{R}{*}X$.
Since $\snfa{q}{\emptyword}$ accepts the path $R(\underline{b},c)$, $R(\underline{c},d)$, $X(\underline{c},d)$, the states set $\sset{R(\underline{b},c)}{\rep}{q}$ contains $\emptyword\cdot R=R$.
Since the latter path is also accepted by  $\snfa{q}{R}$, we also have $R\cdot R\in\sset{R(\underline{b},c)}{\rep}{q}$.
Finally, note that $\sset{R(\underline{d},e)}{\rep}{q}=\emptyset$,
because there is no path that contains $R(\underline{d},e)$ and is accepted by $\nfa{q}$. 
\qed
\end{example}

\begin{lemma}\label{lem:sufclo}
Let $q$ be a path query, and $\rep$ a consistent database instance.
If  $\sset{f}{\rep}{q}$  contains state $uR$, then it contains every state of the form $vR$ with $\card{v}\geq\card{u}$.
\end{lemma}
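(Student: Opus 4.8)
The plan is to unfold Definition~\ref{def:statesset} and argue entirely at the level of runs of the $\emptyword$-NFA. By that definition, $uR\in\sset{f}{\rep}{q}$ holds exactly when $\snfa{q}{u}$ accepts some path $\pi$ in $\rep$ that starts with the $R$-fact $f$; so for a state $vR$ with $\card{v}\geq\card{u}$ it suffices to exhibit an accepting run of $\snfa{q}{v}$ on the \emph{same} path $\pi$. First I would record the shape of $\pi$: since $f$ is an $R$-fact, the trace of $\pi$ is $Rw'$ for some (possibly empty) word $w'$.

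Next I would dissect the accepting run of $\snfa{q}{u}$ on $Rw'$. Because every label-reading move is a forward transition while every $\emptyword$-move is backward, the run must begin with a (possibly empty) sequence of backward $\emptyword$-moves from $u$ to some prefix $u''$ with $\card{u''}\leq\card{u}$, followed by the forward $R$-transition $u''\to u''R$; write $p\defeq u''R$, which is a prefix of $q$ by construction. The remainder of the run then reads $w'$ starting from state $p$ and ends in the accepting state $q$.

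Then I would assemble an accepting run of $\snfa{q}{v}$ on the very same trace $Rw'$. Since $vR$ is a state, the forward transition $v\to vR$ reads the first symbol $R$ of $f$. From $vR$ I would reach $p=u''R$ by at most one backward $\emptyword$-move: this is legal because $\card{u''}\leq\card{u}\leq\card{v}$, so either $\card{u''}<\card{v}$, in which case Definition~\ref{def:nfa} supplies the backward transition $vR\to u''R$, or $\card{u''}=\card{v}$, forcing $u''=v$ and hence $p=vR$ with no move needed. From $p$ onward I reuse the tail of the original run to read $w'$ and accept. This shows $\snfa{q}{v}$ accepts $\pi$, whence $vR\in\sset{f}{\rep}{q}$.

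The step I expect to be the main obstacle is the run analysis of the second paragraph combined with the length bookkeeping in the third: one must not assume the first $R$ is read by the naive transition $u\to uR$, since $\emptyword$-moves may precede it, and one must verify in all boundary cases ($u=\emptyword$, $u''=u$, and $\card{u''}=\card{v}$) that rewinding from $vR$ to $p$ is either a genuine backward transition or unnecessary. Once the chain $\card{u''}\leq\card{u}\leq\card{v}$ is in hand, the backward transitions of Definition~\ref{def:nfa} do exactly the required work.
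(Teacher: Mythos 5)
Your proof is correct and takes essentially the same route as the paper's: read $R$ from $v$ to reach state $vR$, use a backward $\emptyword$-transition to rejoin the accepting run of $\snfa{q}{u}$ on the very same path, and accept. You are in fact slightly more careful than the paper, whose phrase ``from there on, $\snfa{q}{v}$ behaves like $\snfa{q}{u}$'' tacitly assumes the original run reaches $uR$ immediately after reading $f$; your bookkeeping with the state $u''R$ actually reached (and the chain $\card{u''}\leq\card{u}\leq\card{v}$) explicitly covers the case where $\emptyword$-moves precede the first forward transition.
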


\begin{proof}
Assume $uR\in\sset{f}{\rep}{q}$. Then $f$ is an $R$-fact and there is a path $f\cdot\pi$ in $\rep$ that is accepted by $\snfa{q}{u}$. 
Let $vR$ be a state with $\card{v}>\card{u}$.
Thus, by construction, $\nfa{q}$ has a backward transition with label $\emptyword$ from state $vR$ to state $uR$.

It suffices to show that $f\cdot\pi$ is accepted by $\snfa{q}{v}$.
Starting in state $v$,  $\snfa{q}{v}$ traverses $f$ (reaching state $vR$) and then uses the backward transition (with label $\emptyword$) to reach the state $uR$.
From there on,  $\snfa{q}{v}$ behaves like $\snfa{q}{u}$.
\end{proof}

From Lemma~\ref{lem:sufclo}, it follows that $\sset{f}{\rep}{q}$ is completely determined by the shortest word in it.

\begin{definition}
Let $q$ be a path query and $\db$ a database instance.
For every fact $f\in\db$, we define:
$$
\csset{f}{\db}{q}\defeq\bigcap\{\sset{f}{\rep}{q}\mid\textnormal{$\rep$ is a repair of $\db$ that contains $f$}\},
$$
where $\bigcap X = \bigcap_{S \in X} S$.
\qed
\end{definition}

It is to be noted here that whenever $\rep_{1}$ and $\rep_{2}$ are repairs containing~$f$, then by Lemma~\ref{lem:sufclo}, $\sset{f}{\rep_{1}}{q}$ and $\sset{f}{\rep_{2}}{q}$ are comparable by set inclusion.
Therefore, informally, $\csset{f}{\db}{q}$ is the $\subseteq$-minimal states set of $f$ over all repairs that contain~$f$.

%However, for theoretical clarity, we keep the entire set.
%Example~\ref{ex:ssempty} shows that this set can be empty.

\begin{definition}[Preorder $\mypreceq{q}$ on repairs]
Let $\db$ be a database instance.
For all repairs $\rep,\sep$ of $\db$, we define $\rep\mypreceq{q}\sep$ if 
for every $f\in\rep$ and $g\in\sep$ such that $f$ and $g$ are key-equal, 
we have $\sset{f}{\rep}{q}\subseteq\sset{g}{\sep}{q}$.
 
Clearly, $\mypreceq{q}$ is a reflexive and transitive binary relation on the set of repairs of $\db$.
We write $\rep\myprec{q}\sep$ if both $\rep\mypreceq{q}\sep$ and for some $f\in\rep$ and $g\in\sep$ such that $f$ and $g$ are key-equal, 
$\sset{f}{\rep}{q}\subsetneq\sset{g}{\sep}{q}$.
\qed
\end{definition}

%\jef{I simplified the proof of the following lemma.}

%We start with a new definition relative to a Boolean path query $q$ and a database instance $\db$.

\begin{lemma}\label{lem:shrinking}
Let $q$ be a path query.
For every database instance $\db$,
there is a repair $\rep^{*}$ of $\db$ such that for every repair $\rep$ of $\db$, 
$\rep^{*} \mypreceq{q} \rep$.
\end{lemma}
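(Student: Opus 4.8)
The plan is to construct a single $\mypreceq{q}$-least repair $\rep^{*}$ directly, by a least-fixpoint computation that exploits the nestedness supplied by Lemma~\ref{lem:sufclo}. The first step is to recast the states sets recursively. For a constant $c$ and a repair $\rep$, let $\Phi_{\rep}(c)$ be the set of prefixes $u$ of $q$ such that $\snfa{q}{u}$ accepts some path in $\rep$ that starts in $c$. Reading the first relation name $R$ of such a path moves $\snfa{q}{u}$ forward to $uR$ and then, via the backward $\emptyword$-transitions, to any shorter state $u'R$; hence whether $\snfa{q}{u}$ accepts a path beginning with a kept fact $R(\underline{c},d)\in\rep$ is governed solely by the shortest $R$-ending state in $\Phi_{\rep}(d)$. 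Thus $\Phi_{\rep}(c)$ is obtained from the values $\Phi_{\rep}(d)$ at the out-neighbours $d$ of $c$ by a fixed, monotone rule, and by Lemma~\ref{lem:sufclo} each $\sset{f}{\rep}{q}$ with $f=R(\underline{c},d)$ is exactly the ``contribution'' of the edge $(R,d)$ determined by $\Phi_{\rep}(d)$.

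The second step packages these local rules into one operator. In the finite complete lattice $L$ of functions $\Psi$ from $\adom{\db}$ to sets of prefixes of $q$, ordered pointwise by $\subseteq$, I define $T(\Psi)(c)$ to be the union, over the relation names $R$ occurring in a block at $c$, of the smallest contribution achievable by any candidate $R$-edge at $c$ evaluated against $\Psi$. This smallest contribution is well defined: for fixed $R$ the candidate contributions are upward closed and pinned down by a single threshold length, hence totally ordered, so ``keep the best $R$-edge'' is just an intersection over the candidate edges. The operator $T$ is monotone, and for each repair $\rep$ the fixed-edge specialization $T_{\rep}$ satisfies $T\le T_{\rep}$ pointwise and has $\Phi_{\rep}$ as its least fixpoint (the usual reachability fixpoint). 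By Knaster--Tarski, $T$ has a least fixpoint $\Phi^{\min}$.

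The third step reads off $\rep^{*}$ by keeping, in each block $R(\underline{c},*)$, an edge attaining the minimum in the definition of $T(\Phi^{\min})(c)$. The verification has two halves. For realizability, $\Phi^{\min}$ is a fixpoint of $T_{\rep^{*}}$, so $\Phi_{\rep^{*}}=\operatorname{lfp}(T_{\rep^{*}})\le\Phi^{\min}$, while $\Phi^{\min}=\operatorname{lfp}(T)\le\operatorname{lfp}(T_{\rep^{*}})=\Phi_{\rep^{*}}$ because $T\le T_{\rep^{*}}$; hence $\Phi_{\rep^{*}}=\Phi^{\min}$, so the chosen edges genuinely realize the optimistic values. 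For minimality, any repair $\rep$ has $\Phi_{\rep}=\operatorname{lfp}(T_{\rep})\ge\operatorname{lfp}(T)=\Phi^{\min}=\Phi_{\rep^{*}}$, and translating back through the edge-contribution description gives $\sset{f}{\rep^{*}}{q}\subseteq\sset{g}{\rep}{q}$ for all key-equal $f\in\rep^{*}$ and $g\in\rep$, which is precisely $\rep^{*}\mypreceq{q}\rep$.

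The step I expect to be the main obstacle is realizability: showing that the least fixpoint $\Phi^{\min}$ is actually witnessed by genuine paths in the single repair $\rep^{*}$, rather than being an over-optimistic lower bound. This is exactly where directed cycles in the graph bite, since the states set of a fact can depend, through a cycle, on itself; a naive block-by-block ``pick the locally better edge'' argument (the likely source of the earlier flawed proof) can manufacture circular justifications unsupported by any real accepting path. The least-fixpoint semantics is what excludes this: because $\Phi^{\min}$ is the least fixpoint of $T$ and simultaneously a fixpoint of $T_{\rep^{*}}$, it is squeezed against $\Phi_{\rep^{*}}$ from both sides, forcing equality and certifying that every prefix claimed in $\Phi^{\min}(c)$ is backed by an actual path in $\rep^{*}$.
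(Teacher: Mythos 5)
Your proposal is correct, but it reaches the conclusion by a genuinely different route than the paper. The paper defines the optimal states set of a fact \emph{semantically}, as $\csset{f}{\db}{q}=\bigcap\{\sset{f}{\rep}{q}\mid f\in\rep\}$, builds $\rep^{*}$ by picking in each block a fact minimizing this quantity, and then proves realizability (that $\sset{f}{\rep^{*}}{q}=\csset{f}{\db}{q}$ for every $f\in\rep^{*}$) by a combinatorial contradiction: it fixes an accepting run in $\rep^{*}$ witnessing a supposedly spurious state and derives a contradiction at the \emph{largest} index along that run where the actual and optimal states sets disagree. You instead define the optimal value function \emph{operationally}, as the least fixpoint $\Phi^{\min}$ of a monotone Bellman-type operator $T$ on the finite lattice of maps from $\adom{\db}$ to sets of prefixes, choose $\rep^{*}$ to attain the per-block minima against $\Phi^{\min}$, and get realizability from the order-theoretic squeeze $\Phi^{\min}=\mathrm{lfp}(T)\le\mathrm{lfp}(T_{\rep^{*}})=\Phi_{\rep^{*}}\le\Phi^{\min}$. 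Both arguments rest on the same two structural facts: the nestedness of states sets (Lemma~\ref{lem:sufclo}), which makes the per-block minimum attainable by an actual edge, and the exclusion of circular justifications around directed cycles, which the paper handles with the maximal-index argument and you handle with least-fixpoint semantics. Your route buys a cleaner isolation of the cycle issue inside standard lattice theory, and it essentially yields the correctness of the algorithm of Figure~\ref{fig:algo} (Lemma~\ref{lem:correctness}) for free, since $\Phi^{\min}$ is precisely the relation $N$ computed there; the paper proves that separately. The one step you should make fully explicit in a write-up is the identification $\Phi_{\rep}=\mathrm{lfp}(T_{\rep})$ rather than merely ``$\Phi_{\rep}$ is a fixpoint of $T_{\rep}$'': the $\le$ direction needs an induction on the length of a witnessing accepting path, and it is exactly this step that rules out over-optimistic cyclic fixpoints. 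With that spelled out, the final chain $\sset{f}{\rep^{*}}{q}=\mbox{(contribution of $f$ against $\Phi^{\min}$)}\subseteq\mbox{(contribution of $g$ against $\Phi_{\rep}$)}=\sset{g}{\rep}{q}$ for key-equal $f\in\rep^{*}$, $g\in\rep$ gives $\rep^{*}\mypreceq{q}\rep$ as required.
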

\begin{proof}
Construct a repair $\rep^{*}$ as follows.
For every block $\block$ of  $\db$, insert into $\rep^{*}$ a fact $f$ of $\block$ such that $\csset{f}{\db}{q}=\bigcap\{\csset{g}{\db}{q}\mid g\in\block\}$.
More informally, we insert into $\rep^{*}$ a fact $f$ from $\block$ with a states set that is $\subseteq$-minimal over all repairs and all facts of $\block$.
We first show the following claim.

\begin{claim}
\label{claim:smallest}
For every fact $f$ in $\rep^{*}$, we have
$\sset{f}{\rep^{*}}{q}=\csset{f}{\db}{q}$.
\end{claim}
\begin{proof}
Let $f_{1}$ be an arbitrary fact in $\rep^{*}$.
We show $\sset{f_{1}}{\rep^{*}}{q}=\csset{f_{1}}{\db}{q}$.

\framebox{$\supseteq$}
Obvious, because $\rep^{*}$ is itself a repair of $\db$ that contains~$f_{1}$.

\framebox{$\subseteq$}
Let $f_{1}=R_{1}(\underline{c_0},c_{1})$.
Assume by way of a contradiction that there is $p_{1}\in\sset{f_{1}}{\rep^{*}}{q}$ such that $p_{1}\notin\csset{f_{1}}{\db}{q}$.
Then, for some (possibly empty) prefix~$p_{0}$ of $q$, there is a sequence:
\begin{equation}
p_{0}
\step{\emptyword} p_{0}' 
\verylongstep{f_{1}=R_{1}(\underline{c_{0}}, c_{1})}p_{1}
\step{\emptyword} p_{1}'
\verylongstep{f_{2}=R_{2}(\underline{c_{1}}, c_{2})}p_{2} 
\quad
\dotsm
\quad 
p_{n-1}
\step{\emptyword} p_{n-1}'
\verylongstep{f_{n}=R_{n}(\underline{c_{n-1}}, c_{n})}p_{n} = q,
\label{eq:trace}
\end{equation}
where $f_{1},f_{2},\ldots,f_{n}\in\rep^{*}$, for each $i\in\{1,\ldots,n\}$,  $p_{i}=p_{i-1}'R_{i}$, and for each $i\in\{0,\ldots,n-1\}$,  either $p_{i}'=p_{i}$ or $p_{i}'$ is a strict prefix of $p_{i}$ such that $p_{i}'$ and~$p_{i}$ agree on their rightmost relation name. 
%If $p_{i}=p_{i}'$, the $\emptyword$-transition from $p_{i}$ to $p_{i}'$ can be ignored.
Informally, the sequence~\eqref{eq:trace} represents an accepting run of $\snfa{q}{p_{0}}$ in $\rep^{*}$.
%In particular, $p_{1}=p_{0}'R_{1}$.
Since $p_{1}\in\sset{f_{1}}{\rep^{*}}{q}\setminus\csset{f_{1}}{\db}{q}$,
we can assume a largest index $\ell\in\{1,\ldots,n\}$ such that 
$p_{\ell}\in\sset{f_{\ell}}{\rep^{*}}{q}\setminus\csset{f_{\ell}}{\db}{q}$.
By construction of $\rep^{*}$, there is a repair $\sep$ such that $f_{\ell}\in\sep$ and $\sset{f_{\ell}}{\sep}{q}=\csset{f_{\ell}}{\db}{q}$.
Consequently, $p_{\ell}\notin\sset{f_{\ell}}{\sep}{q}$.
We distinguish two cases:
\begin{description}
\item[Case that $\ell=n$.]
Thus, the run~\eqref{eq:trace} ends with
\begin{equation*}
\dotsm
\quad
p_{\ell-1}
\step{\emptyword} p_{\ell-1}'
\verylongstep{f_{\ell}=R_{\ell}(\underline{c_{\ell-1}}, c_{\ell})}p_{\ell}=q.
\end{equation*} 
Thus, the rightmost relation name in $q$ is $R_{\ell}$.
Since $f_{\ell}\in\sep$, it is clear that $p_{\ell}\in\sset{f_{\ell}}{\sep}{q}$, a contradiction.
\item[Case that $\ell<n$.]
Thus, the run~\eqref{eq:trace} includes
\begin{equation*}
\dotsm
\quad
p_{\ell-1}
\step{\emptyword} p_{\ell-1}'
\verylongstep{f_{\ell}=R_{\ell}(\underline{c_{\ell-1}}, c_{\ell})}p_{\ell} 
\step{\emptyword} p_{\ell}'
\verylongstep{f_{\ell+1}=R_{\ell+1}(\underline{c_{\ell}}, c_{\ell+1})}p_{\ell+1}
\quad
\dotsm,
\end{equation*} 
where $\ell+1$ can be equal to~$n$.
Clearly, $p_{\ell+1}\in\sset{f_{\ell+1}}{\rep^{*}}{q}$.
Assume without loss of generality that $\sep$ contains $f_{\ell+1}'\defeq R_{\ell+1}(\underline{c_{\ell}},c_{\ell+1}')$, which is key-equal to $f_{\ell+1}$ (possibly $c_{\ell+1}'=c_{\ell+1}$).
From $p_{\ell}\notin\sset{f_{\ell}}{\sep}{q}$, it follows $p_{\ell+1}\notin\sset{f_{\ell+1}'}{\sep}{q}$.
Consequently, $p_{\ell+1}\notin\csset{f_{\ell+1}'}{\db}{q}$.
By our construction of $r^{*}$, we have $p_{\ell+1}\notin\csset{f_{\ell+1}}{\db}{q}$.
Consequently, $p_{\ell+1}\in\sset{f_{\ell+1}}{\rep^{*}}{q}\setminus\csset{f_{\ell+1}}{\db}{q}$, which contradicts that~$\ell$ was chosen to be the largest such an index possible.
\end{description}
The proof of Claim~\ref{claim:smallest} is now concluded.
\renewcommand\qedsymbol{$\triangleleft$}
\end{proof}
To conclude the proof of the lemma, let $\rep$ be any repair of $\db$, and let $f\in\rep^*$ and $f'\in\rep$ be two key-equal facts in~$\db$. 
By Claim~\ref{claim:smallest} and the construction of $\rep^*$, we have that 
$\sset{f}{\rep^{*}}{q}=\csset{f}{\db}{q} \subseteq \csset{f'}{\db}{q} \subseteq \sset{f'}{\rep}{q},$
as desired.
\end{proof}

We can now give the proof of Lemma~\ref{lem:key}.

\begin{proof}[Proof of Lemma~\ref{lem:key}]
Let $\db$ be a database instance.
Then by Lemma~\ref{lem:shrinking}, there is a repair $\rep^{*}$ of $\db$ such that for every repair $\rep$ of $\db$, $\rep^{*} \mypreceq{q} \rep$.
It suffices to show that for every repair $\rep$ of $\db$,
$\starttwo{q}{\rep^{*}}\subseteq\starttwo{q}{\rep}$.
%%%
To this end, consider any repair $\rep$ and $c\in\starttwo{q}{\rep^{*}}$. 
Let $R$ be the first relation name of $q$.
Since $c\in\starttwo{q}{\rep^{*}}$, there is $d\in\adom{\rep^{*}}$ such that $R\in\sset{R(\underline{c},d)}{\rep^{*}}{q}$. 
Then, there is a unique $d'\in\adom{\rep}$ such that $R(\underline{c},d')\in\rep$, where it is possible that $d'=d$.
From $\rep^{*} \mypreceq{q} \rep$,
it follows $\sset{R(\underline{c},d)}{\rep^{*}}{q}\subseteq\sset{R(\underline{c},d')}{\rep}{q}$.
Consequently, $R\in\sset{R(\underline{c},d')}{\rep}{q}$,
which implies $c\in\starttwo{q}{\rep}$.
This conclude the proof.
\end{proof}

	\section{Complexity Upper Bounds}
\label{sec:algorithm}

We now show the complexity upper bounds of Theorem~\ref{thm:main}.

\subsection{A \PTIME\ Algorithm for $\cthree$}

We now specify a polynomial-time algorithm for $\cqa{q}$, for path queries~$q$ that satisfy condition $\cthree$.
%By Lemma~\ref{lem:pumpclosure}, $q$ satisfies $\cthree$ if and only if $q$ is a factor of every word that is accepted by $\nfa{q}$.
The algorithm is based on the automata defined in Definition~\ref{def:snfa}, and uses the concept defined next.

\begin{definition}[Relation $\proves{q}$]
Let $q$ be a path query and $\db$ a database instance.
For every $c\in\adom{q}$ and every prefix~$u$ of~$q$, 
we write $\db\proves{q}\pair{c}{u}$
if every repair of $\db$ has a path that starts in~$c$ and is accepted by $\snfa{q}{u}$.
\qed
\end{definition}

An algorithm that decides the relation~$\proves{q}$ can be used to solve $\cqa{q}$  for path queries satisfying~$\cthree$.
Indeed, by Lemma~\ref{lem:min-start}, for path queries satisfying~$\cthree$, $\db$ is a ``yes''-instance for the problem $\cqa{q}$ if and only if there is a constant $c\in\adom{\db}$ such that $\db\proves{q}\pair{c}{u}$ with $u=\emptyword$.

\begin{figure*}[t]
\begin{tabular}{rl}
\textbf{Initialization Step:} & $N \leftarrow \{\pair{c}{q}\mid c\in\adom{\db}\}.$\\
\textbf{Iterative Rule:} & 
\begin{tabular}[t]{rl}
\textbf{if}
   & $uR$ is a prefix of $q$, and\\
   & $R(\underline{c},*)$ is a nonempty block in $\db$ s.t.\ for every $R(\underline{c},y)\in\db$, $\pair{y}{uR}\in N$\\
\textbf{then}\\
 & 
$N \leftarrow N
\cup
\underbrace{\{\pair{c}{u}\}}_{\mathrm{forward}}
\cup
\underbrace{
\{\pair{c}{w}\mid\mbox{$\nfa{q}$ has a backward transition from $w$ to $u$}\}}_{\mathrm{backward}}.
$   
\end{tabular}  
\end{tabular}
\caption{Polynomial-time algorithm for computing $\{\pair{c}{u}\mid\db\proves{q}\pair{c}{u}\}$, for a fixed path query~$q$ satisfying $\cthree$.}
\label{fig:algo}
\end{figure*}

\begin{figure*}\small
\begin{tabular}{cc}
      \begin{tabular}{c | c}
		  Iteration & Tuples added to $N$ \\
		  \hline
		  init. & \texttt{<0, RRX>, <1, RRX>, <2, RRX>, <3, RRX>, <4, RRX>, <5, RRX>} \\
		  1 & \texttt{<4, RR>} \\
		  2 & \texttt{<3, R>, <3, RR>} \\
		  3 & \texttt{<2, R>, <2, RR>} \\
		  4 & \texttt{<1, R>, <1, RR>} \\
		  5 & \texttt{<0, R>, <0, RR>, <0, $\varepsilon$>} \\
	  \end{tabular}
&
\begin{minipage}{0.25\textwidth}
      \begin{tikzpicture}[->,>=stealth,auto=left, scale=1.3,vnode/.style={circle,black,inner sep=1pt,scale=1},el/.style = {inner sep=3/2pt}]
		  \node[vnode] (s0) at (0, 0) {$0$};
		  \node[vnode] (s1) at (1, 0) {$1$};
		  \node[vnode] (s2) at (2, 0) {$2$};
		  \node[vnode] (s3) at (3, 0) {$3$};

		  \node[vnode] (s4) at (2, -1) {$4$};
		  \node[vnode] (s5) at (3, -1) {$5$};
		  \path[->] (s0) edge node[el] {$R$} (s1);
		  
		  \path[->] (s1) edge node[el] {$R$} (s2);
		  \path[->] (s2) edge node[el] {$R$} (s3);
		  \path[->] (s1) edge node[el] {$R$} (s2);
		  \path[->] (s1) edge node[el] {$R$} (s4);
		  
		  \path[->] (s2) edge node[el] {$R$} (s4);
		  \path[->] (s3) edge node[el] {$R$} (s4);
		  \path[->] (s4) edge node[el,below] {$X$} (s5);
		\end{tikzpicture}
\end{minipage}
\end{tabular}	  
      \caption{Example run of our algorithm for $q=RRX$, on the database instance $\db$ shown at the right.}
      \label{fig:example-rrx-algo-table}
\end{figure*}
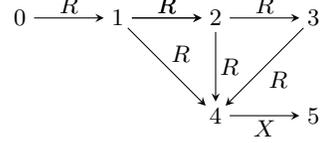

%Figure~\ref{fig:algo} shows an algorithm that computes $\{\pair{c}{u}\mid\db\proves{q}\pair{c}{u}\}$ as the fixed point of a binary relation $N$. 
%The \emph{Initialization Step} inserts into $N$ all pairs $\pair{c}{q}$, which is correct because $\db\proves{q}\pair{c}{q}$ holds vacuously, as $q$ is the accepting state of~$\snfa{q}{q}$.
%Then, the \emph{Iterative Rule} is executed until $N$ remains unchanged.
%This rule makes sure that, whenever a pair $\pair{c}{vS}$ is added, all pairs  $\pair{c}{wS}$ with $\card{w}>\card{v}$ are also added. 
%This is correct because $\db\proves{q}\pair{c}{vS}$ and $\card{w}>\card{v}$ implies $\db\proves{q}\pair{c}{wS}$, as $\snfa{q}{wS}$ has an $\emptyword$-transition from state $wS$ to $uS$.
%Figure~\ref{fig:example-rrx-algo-table} shows an example run of the algorithm in Figure~\ref{fig:algo}.
%The next lemma states the correctness of the algorithm.

Figure~\ref{fig:algo} shows an algorithm that computes
$\{\pair{c}{u}\mid\db\proves{q}\pair{c}{u}\}$ 
as the fixed point of a binary relation $N$. 
The \emph{Initialization Step} inserts into $N$ all pairs $\pair{c}{q}$, which is correct because $\db\proves{q}\pair{c}{q}$ holds vacuously, as $q$ is the accepting state of~$\snfa{q}{q}$.
Then, the \emph{Iterative Rule} is executed until $N$ remains unchanged; it intrinsically reflects the constructive proof of Lemma~\ref{lem:shrinking}:
$\db \proves{q} \pair{c}{u}$ if and only if  for every fact $f=R(\underline{c}, d)\in \db$, we have $uR \in \csset{f}{\db}{q}$.
Figure~\ref{fig:example-rrx-algo-table} shows an example run of the algorithm in Figure~\ref{fig:algo}.
The next lemma states the correctness of the algorithm.

 \begin{lemma}
 \label{lem:correctness}
 Let $q$ be a path query. 
 Let $\db$ be a database instance.
 Let $N$ be the output relation returned by the algorithm in~Figure~\ref{fig:algo} on input~$\db$. 
 Then, for every $c \in \adom{\db}$ and every prefix~$u$ of~$q$, 
\begin{center} 
$\pair{c}{u} \in N$ if and only if $\db \proves{q} \pair{c}{u}$.
\end{center}
 \end{lemma}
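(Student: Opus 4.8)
The plan is to prove the two inclusions separately. Write $P\defeq\{\pair{c}{u}:\db\proves{q}\pair{c}{u}\}$ and let $N$ be the fixed point returned by the algorithm.

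\textbf{Soundness} ($N\subseteq P$). I would show that the invariant ``every pair currently in $N$ lies in $P$'' is preserved by every step, by induction on the number of rule applications. The \emph{Initialization Step} is sound because $q$ is the accepting state of $\snfa{q}{q}$, so the empty path witnesses $\db\proves{q}\pair{c}{q}$ vacuously. For a firing of the \emph{Iterative Rule} at a state $u$ with $uR$ a prefix of $q$: the \emph{forward} addition $\pair{c}{u}$ is sound because every repair $\rep$ contains exactly one fact $R(\underline{c},d)$ of the nonempty block $R(\underline{c},*)$, and by the induction hypothesis $\db\proves{q}\pair{d}{uR}$; prepending the forward $R$-transition that consumes $R(\underline{c},d)$ turns a path from $d$ accepted by $\snfa{q}{uR}$ into one from $c$ accepted by $\snfa{q}{u}$. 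Each \emph{backward} addition $\pair{c}{w}$ (for a backward transition from $w$ to $u$) is sound because $\snfa{q}{w}$ may first take the $\emptyword$-labelled backward transition to $u$ and then mimic $\snfa{q}{u}$, so any witness for $\db\proves{q}\pair{c}{u}$ is also one for $\db\proves{q}\pair{c}{w}$.

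\textbf{Completeness} ($P\subseteq N$). The naive idea of reading off a single accepting run fails, since the forward rule fires at $u$ only when \emph{every} fact $R(\underline{c},y)$ of the block satisfies $\pair{y}{uR}\in N$, whereas an accepting run in one repair exhibits only one such $y$. I would instead exhibit one carefully chosen repair $\widehat{\rep}$ and prove that if $\widehat{\rep}$ has a path from $c$ accepted by $\snfa{q}{u}$ then $\pair{c}{u}\in N$; completeness then follows at once, because $\db\proves{q}\pair{c}{u}$ forces the repair $\widehat{\rep}$ in particular to have such a path. To build $\widehat{\rep}$, first note that for a fixed relation name $R$ and constant $y$ the set $T_{y}\defeq\{p:\text{$p$ is a prefix of $q$ with $\last{p}=R$ and }\pair{y}{p}\in N\}$ is upward closed among the $R$-ending prefixes of $q$: if $p\in T_{y}$ and $p$ is a prefix of a longer $R$-ending prefix $p'$, then $\pair{y}{p}$ was added by a forward firing at some $R$-ending prefix $s$ of $p$, and that same firing also adds $\pair{y}{p'}$ via the backward transition from $p'$ to $s$ (this is the algorithmic counterpart of Lemma~\ref{lem:sufclo}). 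Since the $R$-ending prefixes of $q$ form a chain, the sets $T_{y}$ are totally ordered by inclusion, so for each block $R(\underline{c},*)$ I may place in $\widehat{\rep}$ a fact $R(\underline{c},\widehat{d})$ with $T_{\widehat{d}}$ of minimal inclusion over the block; this is precisely the shrinking choice of Lemma~\ref{lem:shrinking}, rephrased through $N$.

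\textbf{The induction.} I would then prove the displayed implication by induction on the length of a shortest path from $c$ accepted by $\snfa{q}{u}$ in $\widehat{\rep}$. If the path is empty, then $\snfa{q}{u}$ reaches $q$ using only $\emptyword$-moves; as backward transitions strictly shorten the state and $q$ is the longest prefix, this forces $u=q$, and $\pair{c}{q}\in N$ by initialization. Otherwise the accepting run first performs some backward $\emptyword$-moves from $u$ to a prefix $v$ with $\last{v}=\last{u}$ (possibly $v=u$), then a forward move consuming the first fact, which must be the unique fact $R(\underline{c},\widehat{d})$ chosen for the block, with $vR$ a prefix of $q$; the remainder is a strictly shorter accepting run of $\snfa{q}{vR}$ from $\widehat{d}$. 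By the induction hypothesis $\pair{\widehat{d}}{vR}\in N$, so $vR\in T_{\widehat{d}}$, and by minimality $vR\in T_{y}$, that is $\pair{y}{vR}\in N$, for \emph{every} $R(\underline{c},y)$ in the block. Hence the \emph{Iterative Rule} fires at $v$, adding $\pair{c}{v}$ and, through the backward transition from $u$ to $v$, also $\pair{c}{u}$ (for $v=u$ the forward addition already yields $\pair{c}{u}$).

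\textbf{Main obstacle.} The crux is exactly the universally quantified premise of the forward rule, and the whole point of the construction is that selecting, block by block, a fact with $\subseteq$-minimal $N$-tail makes the single witness produced by the induction hypothesis suffice for all facts of the block. I expect this to be the delicate step; note in particular that one cannot shortcut completeness by asserting that $\db\proves{q}\pair{c}{u}$ holds iff $uR\in\csset{f}{\db}{q}$ for all $f$ in the next block, because rewinding lets $\snfa{q}{u}$ accept paths whose first consumed symbol is not $R$. The backward additions are therefore genuinely necessary, and it is they that make each $T_{y}$ upward closed and thus guarantee that the minimizing choice of $\widehat{\rep}$ exists.
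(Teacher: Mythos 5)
Your proof is correct and follows essentially the same route as the paper's: the soundness half is the paper's induction on rule firings almost verbatim, and the completeness half builds the same block-by-block minimizing repair (the $\rep^{*}$ of Lemma~\ref{lem:shrinking}, which the paper's proof also explicitly invokes) and relies on the same upward-closure property of the sets of reachable states (the algorithmic counterpart of Lemma~\ref{lem:sufclo}). The only difference is presentational: you verify the key implication by a direct induction on the length of an accepted path in $\widehat{\rep}$, whereas the paper argues by contraposition, assuming $\pair{c}{u}\notin N$ and refuting a hypothetical accepted path by locating the largest index at which membership in $N$ flips --- two contrapositive renderings of the same argument.
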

 
\begin{proof}
\framebox{$\impliedby$}
Proof by contraposition.
Assume $\pair{c}{u}\notin N$. 
The proof shows the construction of a repair $\rep$ of $\db$ such that $\rep$ has no path that starts in~$c$ and is accepted by $\snfa{q}{u}$.
Such a repair shows $\db\nproves{q}\pair{c}{u}$.

We explain which fact of an arbitrary block $R(\underline{a},*)$ of $\db$ will be inserted in~$\rep$. 
Among all prefixes of $q$ that end with $R$,
let $u_{0}R$ be the longest prefix such that $\pair{a}{u_{0}}\notin N$.
%Consider all states of $\nfa{q}$ of the form $uR$.
%Let $u_{0}$ be the smallest state such that $\tuple{a,u_{0}} \notin N$. 
If such $u_{0}R$ does not exist,
then an arbitrarily picked fact of the block $R(\underline{a},*)$ is inserted in $\rep$.
Otherwise, the \emph{Iterative Rule} in Figure~\ref{fig:algo} entails the existence of a fact $R(\underline{a},b)$ such that $\pair{b}{u_{0}R} \notin N$. 
Then, $R(\underline{a},b)$ is inserted in~$\rep$. 
We remark that this repair $\rep$ is constructed in exactly the same way as the repair~$\rep^*$ built in the proof of Lemma~\ref{lem:shrinking}. 

Assume for the sake of contradiction that there is a path~$\pi$ in~$\rep$ that starts in~$c$ and is accepted by~$\snfa{q}{u}$.
Let 
$\pi\defeq R_1(\underline{c_{0}},c_{1})$, $R_2(\underline{c_{1}},c_{2})$, \dots, $R_n(\underline{c_{n-1}}, c_{n})$ where $c_{0}=c$.
Since $\pair{c_{0}}{u}\not\in N$ and $\pair{c_{n}}{q}\in N$,
there is a longest prefix $u_{0}$ of $q$, where $\card{u_{0}}\geq\card{u}$, and $i\in\{1,\dots,n\}$ such that
$\pair{c_{i-1}}{u_{0}}\not\in N$ and
$\pair{c_{i}}{u_{0}R_{i}}\in N$.
From $\pair{c_{i-1}}{u_{0}}\not\in N$, it follows that $\db$ contains a fact $R_{i}(\underline{c_{i-1}},d)$ such that $\pair{d}{u_{0}R_{i}}\not\in N$.
Then $R_{i}(\underline{c_{i-1}},c_{i})$ would not be chosen in a repair, contradicting $R_i(\underline{c_{i-1}},c_{i})\in\rep$.

\framebox{$\implies$} 
Assume that $\pair{c}{u}\in N$.
Let $\ell$ be the number of executions of the \emph{Iterative Rule} that were used to insert $\pair{c}{u}$ in $N$.
We show $\db\proves{q}\pair{c}{u}$ by induction on~$\ell$.

The basis of the induction, $\ell=0$, holds because the \emph{Initialization Step} is obviously correct.
Indeed, since $q$ is an accepting state of $\snfa{q}{q}$, we have $\db\proves{q}\pair{c}{q}$.
For the inductive step, $\ell\rightarrow\ell+1$, we distinguish two cases.
\paragraph{
Case that $\pair{c}{u}$ is added to $N$ by the \emph{forward} part of the \emph{Iterative Rule}.}
That is, $\pair{c}{u}$ is added because $\db$ has a block $\{R(\underline{c},d_{1})$, \dots, $R(\underline{c},d_{k})\}$ with $k\geq 1$ and for every $i\in\{1,\dots,k\}$, 
we have that $\pair{d_i}{uR}$ was added to $N$ by a previous execution of the \emph{Iterative Rule}.
Let $\rep$ be an arbitrary repair of $\db$.
Since every repair contains exactly one fact from each block,
we can assume $i\in\{1,\dots,k\}$ such that $R(\underline{c},d_{i})\in\rep$.
By the induction hypothesis, $\db \proves{q} \pair{d_i}{uR}$ and thus $\rep$ has a path that starts in~$d_{i}$ and is accepted by $\snfa{q}{uR}$.
Clearly, this path can be left extended with $R(\underline{c},d_{i})$, and this left extended path is accepted by $\snfa{q}{u}$.
Note incidentally that the path in $\rep$ may already use $R(\underline{c},d_{i})$, in which case the path is cyclic.
Since $\rep$ is an arbitrary repair, it is correct to conclude $\db\proves{q}\pair{c}{u}$.
\paragraph{
Case that $\pair{c}{u}$ is added to $N$ by the \emph{backward} part of the \emph{Iterative Rule}.}
Then, there exists a relation name $S$ and words $v, w$ such that $u=vSwS$, and $\pair{c}{u}$ is added because $\pair{c}{vS}$ was added in the same iteration.
Then, $\snfa{q}{u}$ has an $\emptyword$-transition from state~$u$ to~$vS$.
Let $\rep$ be an arbitrary repair of $\db$.
By the reasoning in the previous case, $\rep$ has a path that starts in $c$ and is accepted by $\snfa{q}{vS}$.
We claim that $\rep$ has a path that starts in~$c$ and is accepted by $\snfa{q}{u}$.
Indeed, $\snfa{q}{u}$ can use the $\emptyword$-transition to reach the state $vS$, and then behave like~$\snfa{q}{vS}$.
This concludes the proof.
\end{proof}

The following corollary is now immediate.

\begin{corollary}\label{cor:induction}
Let $q$ be a path query.
Let $\db$ be a database instance, and $c\in\adom{\db}$.
Then, the following are equivalent:
\begin{enumerate}
\item 
$c\in\starttwo{q}{\rep}$ for every repair $\rep$ of $\db$; and
%\deleted{$c\in\bigcap\{\starttwo{q}{\rep}\mid\mbox{$\rep$ is a repair of $\db$}\}$; and}
\item
$\pair{c}{\epsilon} \in N$, where $N$ is the output of the algorithm in~Figure~\ref{fig:algo}. 
\end{enumerate}
\end{corollary}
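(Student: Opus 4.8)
The plan is to chain together two results already established in this section. By Lemma~\ref{lem:min-start}, since the corollary concerns the relation $\starttwo{q}{\cdot}$ and we ultimately care about membership of a constant in every repair's start set, I will first observe that the statement we must prove is exactly the purely combinatorial fact connecting $\starttwo{q}{\rep}$ to the derived relation $\proves{q}$. Specifically, unpacking the definition of $\db\proves{q}\pair{c}{u}$, the condition ``$c\in\starttwo{q}{\rep}$ for every repair $\rep$ of $\db$'' says that every repair has a path starting in~$c$ accepted by $\nfa{q}=\snfa{q}{\emptyword}$; and this is precisely the assertion $\db\proves{q}\pair{c}{\emptyword}$.

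The second and main step is to invoke Lemma~\ref{lem:correctness}, which states that the output relation $N$ of the algorithm in Figure~\ref{fig:algo} satisfies $\pair{c}{u}\in N$ if and only if $\db\proves{q}\pair{c}{u}$, for every $c\in\adom{\db}$ and every prefix~$u$ of~$q$. Instantiating this equivalence at the particular prefix $u=\emptyword$ gives $\pair{c}{\emptyword}\in N$ if and only if $\db\proves{q}\pair{c}{\emptyword}$. Combining this with the reformulation from the first step yields the desired biconditional between the two items. I would note that $\emptyword$ is indeed a prefix of~$q$ (as recorded in Definition~\ref{def:nfa}), so the instantiation is legitimate, and that the corollary requires no hypothesis that $q$ satisfy $\cthree$, since Lemma~\ref{lem:correctness} holds for arbitrary path queries~$q$.

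I do not anticipate a genuine obstacle here, as the corollary is essentially a restatement of Lemma~\ref{lem:correctness} specialized to the initial state, after translating the defined notation $\proves{q}$ back into the explicit ``every repair has an accepting path from~$c$'' phrasing. The only point requiring a sentence of care is making the translation between $\starttwo{q}{\rep}$ and $\snfa{q}{\emptyword}$-acceptance fully explicit: by definition $c\in\starttwo{q}{\rep}$ means there is a path in~$\rep$ starting in~$c$ accepted by $\nfa{q}$, and since $\nfa{q}=\snfa{q}{\emptyword}$, quantifying over all repairs matches the definition of $\db\proves{q}\pair{c}{\emptyword}$ verbatim. Hence the proof is a two-line deduction and I would write it as such rather than reproving anything from scratch.
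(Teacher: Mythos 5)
Your proposal is correct and matches the paper exactly: the paper states the corollary is ``now immediate'' from Lemma~\ref{lem:correctness}, and your two-step unpacking (condition~1 is by definition $\db\proves{q}\pair{c}{\emptyword}$ since $\nfa{q}=\snfa{q}{\emptyword}$, then instantiate Lemma~\ref{lem:correctness} at $u=\emptyword$) is precisely that intended deduction. Your added remarks that $\emptyword$ is a valid prefix and that no $\cthree$ hypothesis is needed are accurate.
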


Finally, we obtain the following tractability result.
\begin{figure*}
\centering
\begin{minipage}{\textwidth}
\begin{equation*}
\varphi_{q}(N,x,z):=
\left(
\begin{array}{ll}
& \formula{\alpha(x)\land z=\qconstant{q}}\\[1ex]
\lor & \formula{\bigvee_{\constant{uR}\leq\constant{q}}\formula{\formula{z=\qconstant{u}}\land\exists y R(\underline{x},y)\land\forall y\formula{R(\underline{x},y)\rightarrow N(y,\qconstant{uR})}}}\\[1ex]
\lor &
\formula{\bigvee_{
\substack{
\emptyword<\constant{u}<\constant{uv}\leq\constant{q}\\
\last{\constant{u}}=\last{\constant{v}}
}
}
\formula{ 
N(x,\qconstant{u})\land z=\qconstant{uv}}}
\end{array}
\right)
\end{equation*}
\end{minipage}
\caption{Definition of $\varphi_q(N, x, z)$.
The predicate $\alpha(x)$ states that $x$ is in the active domain,
and $<$ is shorthand for \emph{``is a strict prefix of''.}}
\label{fig:lfp}
\end{figure*}
\begin{lemma}\label{lem:lfp}
For each path query $q$ satisfying $\cthree$, $\cqa{q}$ is expressible in Least Fixpoint~Logic, and hence is in \PTIME.
\end{lemma}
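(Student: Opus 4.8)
The goal is to show that for path queries $q$ satisfying $\cthree$, the problem $\cqa{q}$ is expressible in Least Fixpoint Logic (LFP), hence in \PTIME.

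The plan is to convert the fixpoint algorithm of Figure~\ref{fig:algo} directly into an LFP formula. The algorithm computes a relation $N$ consisting of pairs $\pair{c}{u}$ where $c$ is an active-domain constant and $u$ is a prefix of $q$. Since $q$ is fixed, there are only finitely many prefixes $u$, so I can encode the second coordinate as a constant ranging over the (constantly many) prefixes of $q$; the predicate $N(x,z)$ then has one first-order variable $x$ ranging over the active domain and one variable $z$ ranging over the fixed finite set of prefix-constants $\qconstant{u}$. The formula $\varphi_q(N,x,z)$ in Figure~\ref{fig:lfp} is exactly the one-step operator of this computation: its first disjunct encodes the \emph{Initialization Step} ($z=\qconstant{q}$ for every active-domain element $x$), its second disjunct encodes the \emph{forward} part of the \emph{Iterative Rule} (for a prefix $uR$ of $q$, if $R(\underline{x},*)$ is a nonempty block all of whose outgoing facts $R(\underline{x},y)$ satisfy $N(y,\qconstant{uR})$, then add $\pair{x}{u}$), and its third disjunct encodes the \emph{backward} part (whenever $\pair{x}{u}$ is already derived and $uv$ is a longer prefix of $q$ with $\last{u}=\last{v}$, corresponding to a backward $\emptyword$-transition capturing rewinding, add $\pair{x}{uv}$).

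First I would verify that $\varphi_q$ is positive in the predicate $N$ — $N$ occurs only under even numbers of negations (in fact only positively, inside conjunctions and disjunctions) — so the operator $N\mapsto\{\pair{x}{z}\mid\varphi_q(N,x,z)\}$ is monotone and the least fixpoint $N^{\infty}$ exists and is well-defined in LFP. Next I would argue that this least fixpoint coincides with the output relation computed by the algorithm of Figure~\ref{fig:algo}: the stages of the LFP inflation mirror the successive applications of the \emph{Iterative Rule}, so by Lemma~\ref{lem:correctness} we obtain $\pair{c}{u}\in N^{\infty}$ if and only if $\db\proves{q}\pair{c}{u}$. Then, invoking Lemma~\ref{lem:min-start} (which requires the $\cthree$ hypothesis), $\db$ is a ``yes''-instance of $\cqa{q}$ if and only if there is a constant $c$ with $c\in\starttwo{q}{\rep}$ for every repair $\rep$, which by Corollary~\ref{cor:induction} is equivalent to $\pair{c}{\emptyword}\in N^{\infty}$. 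Hence $\cqa{q}$ is defined by the closed LFP sentence
$$
\exists x\,\formula{\bigl[\mathbf{lfp}_{N,x,z}\,\varphi_q(N,x,z)\bigr](x,\qconstant{\emptyword})},
$$
which places $\cqa{q}$ in LFP and therefore in \PTIME.

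The step I expect to require the most care is confirming the exact correspondence between the syntactic disjuncts of $\varphi_q$ and the two parts of the \emph{Iterative Rule}, especially the backward disjunct: I must check that the side condition ``$\emptyword<\constant{u}<\constant{uv}\leq\constant{q}$ with $\last{\constant{u}}=\last{\constant{v}}$'' in Figure~\ref{fig:lfp} captures precisely the backward $\emptyword$-transitions of $\nfa{q}$ (from state $uv$ back to the shorter state $u$ sharing the same final relation name), and that quantifying over these fixed prefix-constants correctly simulates, in a single LFP stage, the backward-closure step of the algorithm. Once this bookkeeping is matched against Lemma~\ref{lem:correctness}, the monotonicity of $\varphi_q$ and the invocation of Lemma~\ref{lem:min-start} and Corollary~\ref{cor:induction} make the remainder routine.
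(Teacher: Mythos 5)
Your proposal is correct and follows essentially the same route as the paper's own proof: both translate the fixpoint algorithm of Figure~\ref{fig:algo} into the LFP formula $\varphi_q(N,x,z)$ of Figure~\ref{fig:lfp}, check positivity in $N$, and combine Lemma~\ref{lem:correctness} with Lemma~\ref{lem:min-start} (via Corollary~\ref{cor:induction}) to conclude that testing $\pair{c}{\emptyword}\in N$ for some $c$ decides $\cqa{q}$. The paper states the correspondence between the disjuncts of $\varphi_q$ and the steps of the algorithm only as ``easily verified,'' whereas you spell it out; the substance is identical.
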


\begin{proof}
%It remains to be argued that the algorithm in Figure~\ref{fig:algo} runs in polynomial time. 
%Indeed, for every block $R(\underline{c},*)$, 
%there are at most $\card{q}+1$ pairs of the form $\pair{c}{u}$.
%So the number of iterations is $\bigo{\card{\db}\cdot\formula{\card{q}+1}}$.
%Moreover, given $c \in \adom{\db}$, we can test in polynomial time whether $\pair{c}{\epsilon}\in N$.

For a path query~$q$,
define the following formula in LFP~\cite{DBLP:books/sp/Libkin04}:
\begin{equation}\label{eq:lfp}
\psi_{q}(s,t)\defeq\left[\mathbf{lfp}_{N,x,z}\varphi_{q}(N,x,z)\right](s,t),
\end{equation}
where $\varphi_{q}(N,x,z)$ is given in Figure~\ref{fig:lfp}.
Herein, $\alpha(x)$ denotes a first-order query that computes the active domain.
That is, for every database instance $\db$ and constant $c$,
$\db\models\alpha(c)$ if and only if $c\in\adom{\db}$.
Further, $u\leq v$ means that $u$ is a prefix of $v$;
and $u<v$ means that $u$ is a proper prefix of $v$.
Thus, $u<v$ if and only if $u\leq v$ and $u\neq v$.
The formula $\varphi_{q}(N,x,z)$  is positive in $N$, which is a $2$-ary predicate symbol.
It is understood that the middle disjunction ranges over all nonempty prefixes $uR$ of $q$ (possibly $u=\emptyword$).
The last disjunction ranges over all pairs $(u,uv)$ of distinct nonempty prefixes of $q$ that agree on their last symbol.
We used a different typesetting to distinguish the constant words $\constant{q}$, $\constant{uR}$, $\constant{uv}$ from first-order variables $x$, $z$.
It is easily verified that the LFP query~\eqref{eq:lfp} expresses the algorithm of Figure~\ref{fig:algo}. 
\end{proof}

Since the formula~\eqref{eq:lfp} in the proof of~Lemma~\ref{lem:lfp} uses universal quantification, it is not in Existential Least Fixpoint Logic, which is equal to~$\mathsf{DATALOG}_{\neg}$~\cite[Theorem~10.18]{DBLP:books/sp/Libkin04}.
% It can be shown that there exists a path query $q$ such that $\cqa{q}$ is in \PTIME\ but not expressible in stratified Datalog.

\subsection{\FO-Rewritability for $\cone$}

We now show that if a path query $q$ satisfies $\cone$, then $\cqa{q}$ is in \FO, and a first-order rewriting for $q$ can be effectively constructed.

%\jef{Note that we define a  ``first-order rewriting for~$q$'' (thus not ``for $\cqa{q}$'').}

\begin{definition}[First-order rewriting]
If $q$ is a Boolean query such that $\cqa{q}$ is in \FO,
then a \emph{(consistent) first-order rewriting} for~$q$ is a first-order sentence~$\psi$ such that for every database instance $\db$, the following are equivalent:
\begin{enumerate}
\item $\db$ is a ``yes''-instance of $\cqa{q}$; and
\item $\db$ satisfies $\psi$.
\qed
\end{enumerate} 
\end{definition}

\begin{definition}\label{def:fixedhead}
If 
$q = \{R_{1}(\underline{x_{1}},x_{2})$, $R_{2}(\underline{x_{2}},x_{3})$, \dots, $R_{k}(\underline{x_{k}},x_{k+1})\}$, $k\geq 1$,
and $c$ is a constant,
then $\fixedhead{q}{c}$ is the Boolean conjunctive query
$
\fixedhead{q}{c} \defeq \{R_{1}(\underline{c},x_{2}), R_{2}(\underline{x_{2}},x_{3}), \dots, R_{k}(\underline{x_{k}},x_{k+1}) \}
$.
\qed
\end{definition}

\begin{lemma} \label{lemma:constant-head-fo}
For every nonempty path query $q$ and constant $c$,
the problem $\cqa{\fixedhead{q}{c}}$ is in \FO.
Moreover, it is possible to construct a first-order formula $\psi(x)$, with free variable $x$, such that for every constant~$c$, the sentence $\exists x\formula{\psi(x)\land x=c}$ is a first-order rewriting for $\fixedhead{q}{c}$.
\end{lemma}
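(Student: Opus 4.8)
The plan is to prove the claim by induction on the length $k$ of the path query $q = R_1 R_2 \dotsm R_k$. The base case $k=1$ is immediate: $\fixedhead{q}{c} = R_1(\underline{c},x_2)$ is a single atom whose primary key is fixed to the constant $c$, so $\cqa{\fixedhead{q}{c}}$ is true on $\db$ precisely when $\db$ contains some fact $R_1(\underline{c},d)$; this is expressed by $\psi(x) := \exists y\, R_1(\underline{x},y)$, and $\exists x(\psi(x) \land x=c)$ is the desired rewriting. For the inductive step, I would write $q = R_1 \cdot q'$ where $q' = R_2 \dotsm R_k$, and exploit the structure of repairs: fixing the head of the query to $c$ means we look at the block $R_1(\underline{c},*)$. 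Since a repair chooses exactly one outgoing $R_1$-edge from $c$, the query $\fixedhead{q}{c}$ holds in every repair if and only if, for \emph{every} fact $R_1(\underline{c},d) \in \db$, the remaining query $\fixedhead{q'}{d}$ holds in every repair. This is the key reduction, and it captures why fixing the head linearizes the problem.

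The core logical step is the equivalence
$$\db \text{ is a ``yes''-instance of } \cqa{\fixedhead{q}{c}} \iff \forall d\,\bigl(R_1(\underline{c},d)\in\db \rightarrow \db \text{ is a ``yes''-instance of } \cqa{\fixedhead{q'}{d}}\bigr),$$
together with the requirement that the block $R_1(\underline{c},*)$ is nonempty (otherwise no repair satisfies the query). The direction from right to left holds because any repair picks some $R_1(\underline{c},d)$ from the block, and if $\fixedhead{q'}{d}$ is certain, the chosen edge extends to a satisfying match. The subtle direction is left to right, where I must check that the \emph{same} repair that realizes $\fixedhead{q}{c}$ also witnesses $\fixedhead{q'}{d}$ for the particular $d$ it selected; here the fact that repairs are chosen block-by-block independently means I can freely vary the choices in blocks not involving the head, so certainty of $\fixedhead{q'}{d}$ over all repairs is exactly what is needed. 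I would then invoke the induction hypothesis to obtain a formula $\psi'(x)$ such that $\exists x(\psi'(x)\land x=d)$ rewrites $\fixedhead{q'}{d}$ for each $d$, and set
$$\psi(x) := \exists y\, R_1(\underline{x},y) \;\land\; \forall y\bigl(R_1(\underline{x},y) \rightarrow \psi'(y)\bigr).$$
This mirrors exactly the formula $\varphi$ displayed for $q_1 = RR$ in the introduction.

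The main obstacle I anticipate is the left-to-right direction of the equivalence in the presence of self-joins. When $q$ repeats relation names, a single database fact can serve multiple query atoms, so a path witnessing $\fixedhead{q'}{d}$ inside a given repair might \emph{reuse} the head fact $R_1(\underline{c},d)$ itself, or more generally the clean recursive decomposition could be complicated by the fact that the tail query $q'$ may again mention $R_1$. I would need to argue that certainty is still compositional: if some repair falsifies $\fixedhead{q'}{d}$, then that same repair (whose $R_1$-choice at $c$ I can force to be $d$ without affecting any downstream block choices, since the head value $c$ is fixed and distinct from the recursion's internal structure) falsifies $\fixedhead{q}{c}$. The cleanest way to handle this is to observe that the reduction only concerns which fact is chosen at the single block $R_1(\underline{c},*)$, and repairs of $\db$ that agree on all other blocks but choose $R_1(\underline{c},d)$ are in bijection with repairs relevant to $\fixedhead{q'}{d}$; this decoupling is what makes the induction go through despite self-joins. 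Finally, I would note that each $\psi(x)$ is manifestly first-order and constructed effectively from $q$, establishing both the membership in \FO\ and the moreover clause.
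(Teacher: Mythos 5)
Your plan is the same as the paper's own proof: the same induction on the length of $q$, the same formula $\psi(x)=\exists y\, R_1(\underline{x},y)\land\forall y\formula{R_1(\underline{x},y)\rightarrow\psi'(y)}$, and the same ``core logical step.'' Unfortunately, the left-to-right direction of that core equivalence --- which you correctly single out as the main obstacle --- is false, and the ``decoupling'' argument you offer for it does not hold. The block $R_1(\underline{c},*)$ is \emph{not} disjoint from the recursion's internal structure: when the tail $q'$ mentions $R_1$ again, the witness path for $\fixedhead{q'}{d}$ inside a repair may return to $c$ and traverse whatever $R_1$-fact \emph{that} repair chose at $c$. Concretely, take $q=RSRT$ and
$\db=\{R(\underline{c},a)$, $R(\underline{c},b)$, $S(\underline{a},c)$, $T(\underline{a},t_a)$, $S(\underline{b},u)$, $R(\underline{u},v)$, $T(\underline{v},t_v)\}$
with all constants distinct. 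The only conflicting block is $R(\underline{c},*)$, so there are two repairs: the one containing $R(\underline{c},a)$ satisfies $\fixedhead{q}{c}$ via the non-injective valuation $x_2\mapsto a$, $x_3\mapsto c$, $x_4\mapsto a$, $x_5\mapsto t_a$, and the one containing $R(\underline{c},b)$ satisfies it via $b,u,v,t_v$. So $\db$ is a ``yes''-instance of $\cqa{\fixedhead{q}{c}}$. Yet $\db$ is a ``no''-instance of $\cqa{\fixedhead{SRT}{a}}$: in the repair containing $R(\underline{c},b)$, the only candidate match is $a\to c\to b$, and $b$ has no outgoing $T$-fact. Hence the right-hand side of your equivalence fails for $d=a$ while the left-hand side holds, and your $\psi(c)$ evaluates to false on this ``yes''-instance. (The same example appears to defeat the corresponding step in the paper's own proof, namely ``since $\rep$ is an arbitrary repair, the induction hypothesis tells us that $\db$ satisfies $\exists z\formula{\varphi(z)\land z=d}$,'' which quietly lets $d$ vary with $\rep$.)

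The underlying issue is that certainty of the tail query is not compositional over the \emph{unmodified} database: once the head block is committed to $d$, the residual problem is $\cqa{\fixedhead{q'}{d}}$ on the instance in which $R_1(\underline{c},*)$ has been replaced by the single fact $R_1(\underline{c},d)$, not on $\db$ itself, and your bijection between repairs only relates repairs that already agree on that block. A correct first-order rewriting has to carry along the block commitments $x_1=c,x_2,\dots,x_i$ made so far and, before universally quantifying over the next block $R_{i+1}(\underline{x_i},*)$, branch on whether that block coincides with an earlier one (same relation name and same key value), in which case the successor is forced to equal the earlier choice. Since $|q|$ is fixed, there are finitely many such collision patterns, so the statement of the lemma is salvageable in \FO{} --- but the formula is genuinely different from the one you write down. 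As it stands, your argument establishes only soundness ($\db\models\psi(c)$ implies ``yes''-instance); completeness is where it breaks.
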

\begin{proof}
The proof inductively constructs a first-order rewriting for $\fixedhead{q}{c}$, where the induction is on the number~$n$ of atoms in~$q$.
For the basis of the induction, $n=1$, 
we have $\fixedhead{q}{c} = R(\underline{c},y)$. 
Then, the first-order formula 
$\psi(x)=\exists y R(\underline{x},y)$ 
obviously satisfies the statement of the lemma.

We next show the induction step, $n \rightarrow n+1$. 
Let $R(\underline{x_{1}},x_{2})$ be the left-most atom of~$q$,
and assume that $p\defeq q\setminus\{R(\underline{x_{1}},x_{2})\}$ is a path query with $n\geq 1$~atoms.
%Suppose that $q = R \cdot p$ where $p$ contains $n$ relations. 
By the induction hypothesis, 
it is possible to construct a first-order formula $\varphi(z)$, with free variable $z$, such that for every constant~$d$, 
\begin{equation}\label{eq:induction}
\begin{minipage}{0.8\columnwidth}
$\exists z\formula{\varphi(z)\land z=d}$ is a first-order rewriting for $\fixedhead{p}{d}$.
\end{minipage}
\end{equation}
We now  define $\psi(x)$ as follows:
\begin{equation}\label{eq:full}
\psi(x)=\exists y\formula{R(\underline{x},y)}\land\forall z\formula{R(\underline{x},z)\rightarrow\varphi(z)}.
\end{equation}
We will show that for every constant $c$, $\exists x\formula{\psi(x)\land x=c}$ is a first-order rewriting for~$\fixedhead{q}{c}$.
To this end, let $\db$ be a database instance.
It remains to be shown that $\db$ is a ``yes''-instance of $\cqa{\fixedhead{q}{c}}$ if and only if $\db$ satisfies~$\exists x\formula{\psi(x)\land x=c}$. 

\framebox{$\impliedby$}
Assume $\db$ satisfies $\exists x\formula{\psi(x)\land x=c}$.
Because of the conjunct $\exists y\formula{R(\underline{x},y)}$ in~\eqref{eq:full},
we have that $\db$ includes a block~$R(\underline{c},*)$.
Let $\rep$ be a repair of $\db$. 
We need to show that $\rep$ satisfies~$\fixedhead{q}{c}$.
Clearly, $\rep$ contains $R(\underline{c},d)$ for some constant~$d$.
Since $\db$ satisfies $\exists z\formula{\varphi(z)\land z=d}$,
the induction hypothesis~\eqref{eq:induction} tells us that $\rep$ satisfies $\fixedhead{p}{d}$.
It is then obvious that $\rep$ satisfies $\fixedhead{q}{c}$.

\framebox{$\implies$}
Assume $\db$ is a ``yes''-instance for $\cqa{\fixedhead{q}{c}}$. 
Then $\db$ must obviously satisfy $\exists y\formula{R(\underline{c},y)}$.
Therefore, $\db$ includes a block~$R(\underline{c},*)$.
Let $\rep$ be an arbitrary repair of $\db$.
There exists $d$ such that $R(\underline{c},d)\in\rep$.
Since $\rep$ satisfies $\fixedhead{q}{c}$, it follows that $\rep$ satisfies $\fixedhead{p}{d}$.
Since $\rep$ is an arbitrary repair,
the induction hypothesis~\eqref{eq:induction} tells us that $\db$ satisfies $\exists z\formula{\varphi(z)\land z=d}$.
It is then clear that $\db$ satisfies $\exists x\formula{\psi(x)\land x=c}$.
%The proof implies a recursive construction of a first-order rewriting.
\end{proof}

\begin{lemma} \label{lem:fo-algo}
For every path query $q$ that satisfies $\cone$, the problem $\cqa{q}$ is in \FO, and a first-order rewriting for $q$ can be effectively constructed.
\end{lemma}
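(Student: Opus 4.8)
The plan is to reduce $\cqa{q}$ for a query $q$ satisfying $\cone$ to a disjunction of the fixed-head problems $\cqa{\fixedhead{q}{c}}$ that were already handled in Lemma~\ref{lemma:constant-head-fo}. The key observation I would exploit is the special structure that $\cone$ gives us via Lemma~\ref{lem:pumpclosure}\eqref{it:lc1}: for a $\cone$-query, $q$ is a prefix of every word in $\pumpclosure{q}$. Combined with the Reification Lemma (Lemma~\ref{lem:min-start}, applicable since $\cone$ implies $\cthree$), this says that $\db$ is a ``yes''-instance of $\cqa{q}$ if and only if there is a single constant $c$ such that $c\in\starttwo{q}{\rep}$ for every repair $\rep$. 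The prefix property is what upgrades this from a \PTIME\ statement to an \FO\ one: because $q$ is a prefix of every trace accepted by $\nfa{q}$, having a path from $c$ accepted by $\nfa{q}$ in every repair is the same as having a path from $c$ whose trace \emph{starts with $q$} in every repair, which is exactly the condition that every repair satisfies $\fixedhead{q}{c}$.

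The main steps, in order, would be as follows. First, I would record that $\cone$ implies $\cthree$ (Proposition~1), so Lemma~\ref{lem:min-start} applies and gives the reification equivalence: $\db$ is a ``yes''-instance of $\cqa{q}$ iff there exists a constant $c$ with $c\in\starttwo{q}{\rep}$ for all repairs $\rep$. Second, I would prove the crucial claim that, under $\cone$, the condition ``$c\in\starttwo{q}{\rep}$ for every repair $\rep$'' is equivalent to ``$\db$ is a ``yes''-instance of $\cqa{\fixedhead{q}{c}}$''. The forward direction uses Lemma~\ref{lem:pumpclosure}\eqref{it:lc1}: any path from $c$ accepted by $\nfa{q}$ has trace in $\pumpclosure{q}$, which has $q$ as a prefix, so its length-$k$ prefix (where $k=|q|$) is a path witnessing $\fixedhead{q}{c}$. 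The backward direction is immediate, since a path witnessing $\fixedhead{q}{c}$ has trace $q\in\pumpclosure{q}$ and is therefore accepted by $\nfa{q}$. Third, I would invoke Lemma~\ref{lemma:constant-head-fo} to obtain the single first-order formula $\psi(x)$ with the property that $\exists x\formula{\psi(x)\land x=c}$ is a rewriting for $\fixedhead{q}{c}$. Finally, I would assemble the rewriting for $q$ itself as
$$
\Psi\defeq\exists x\,\psi(x),
$$
and argue that $\db\models\Psi$ iff some constant $c$ makes $\db$ a ``yes''-instance of $\cqa{\fixedhead{q}{c}}$ iff (by the claim and reification) $\db$ is a ``yes''-instance of $\cqa{q}$. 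Effectiveness is inherited directly from the effective construction of $\psi(x)$ in Lemma~\ref{lemma:constant-head-fo}.

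The step I expect to be the main obstacle, or at least the one requiring the most care, is the crucial claim equating the reification condition with the fixed-head problem. The subtlety is that $\starttwo{q}{\rep}$ is defined via acceptance by $\nfa{q}$, whose language is the full rewind-closure $\pumpclosure{q}$ rather than the single word $q$, so a priori a repair could certify $c\in\starttwo{q}{\rep}$ only through a genuinely ``rewound'' longer trace. The whole force of $\cone$ is to rule this out at the level of prefixes: Lemma~\ref{lem:pumpclosure}\eqref{it:lc1} guarantees that $q$ sits as a prefix of \emph{every} such longer trace, so the rewinding can never consume the initial segment that $\fixedhead{q}{c}$ needs. I would make sure to spell out that the length-$|q|$ prefix of the accepted path is itself a legitimate path in $\rep$ (a prefix of a path is a path) with trace exactly $q$, and hence starts with $R_1(\underline{c},\cdot)$ and witnesses $\fixedhead{q}{c}$. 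Once this claim is nailed down, the remaining assembly is routine bookkeeping, and the $\cqa{q}\in\mathbf{FO}$ conclusion follows because $\Psi$ is a fixed first-order sentence.
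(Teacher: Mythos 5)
Your proposal is correct and follows essentially the same route as the paper: combine the Reification Lemma (Lemma~\ref{lem:min-start}) with the prefix characterization of $\cone$ from Lemma~\ref{lem:pumpclosure}\eqref{it:lc1} to reduce $\cqa{q}$ to the existence of a constant $c$ for which $\cqa{\fixedhead{q}{c}}$ holds, then invoke Lemma~\ref{lemma:constant-head-fo} and take $\exists x\,\psi(x)$ as the rewriting. The paper's proof is terser, leaving implicit the equivalence you carefully spell out in your ``crucial claim,'' but the underlying argument is identical.
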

\begin{proof}
By Lemmas~\ref{lem:pumpclosure} and~\ref{lem:min-start}, a database instance $\db$ is a ``yes''-instance for $\cqa{q}$ if and only if there is a constant $c$ (which depends on $\db$) such that  $\db$ is a ``yes''-instance for $\cqa{\fixedhead{q}{c}}$. 
By Lemma~\ref{lemma:constant-head-fo}, it is possible to construct a first-order rewriting $\exists x\formula{\psi(x)\land x=c}$ for $\fixedhead{q}{c}$.
It is then clear that  $\exists x\formula{\psi(x)}$ is a first-order rewriting for~$q$.
\end{proof}

\subsection{An \NL\ Algorithm for $\ctwo$}\label{sec:datalog}
% \xiating{Should we move Section~\ref{sec:syntax} here? Since it is only used to the \NL~proof. \jef{I don't think so, because Section~\ref{sec:syntax} also gives interesting insights in $\cone$ and $\cthree$.}}

We show that $\cqa{q}$ is in \NL\ if $q$ satisfies $\ctwo$ by expressing it in linear Datalog with stratified negation. 
The proof will use the syntactic characterization of $\ctwo$ established in Lemma~\ref{lem:notctwo}.

\begin{lemma} \label{lem:nl-algo}
For every path query $q$ that satisfies $\ctwo$, the problem $\cqa{q}$ is in linear Datalog with stratified negation (and hence in \NL).
\end{lemma}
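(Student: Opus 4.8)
I want to prove Lemma~\ref{lem:nl-algo}: for every path query $q$ satisfying $\ctwo$, the problem $\cqa{q}$ is in linear Datalog with stratified negation, and hence in \NL.

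Let me think about the structure. We already have from Lemma~\ref{lem:lfp} a PTIME algorithm (the fixpoint algorithm of Figure~\ref{fig:algo}) for any $q$ satisfying $\cthree$. Since $\ctwo \subseteq \cthree$ (from the Proposition, $\ctwo \Rightarrow \cthree$), that algorithm is correct for $q$ satisfying $\ctwo$. The issue is complexity: the fixpoint formula $\varphi_q$ uses universal quantification (the $\forall y(R(\underline{x},y) \to N(y, \ldots))$ part), so it's LFP, not Datalog, and certainly not linear Datalog. I need to show that under the stronger assumption $\ctwo$, the computation can be recast in linear Datalog with stratified negation.

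**Linear Datalog.** Linear Datalog means each rule has at most one intensional (recursive) predicate in its body. The class of queries expressible in linear Datalog with stratified negation is in \NL. So the plan is: express $\proves{q}$ (or rather $\{\pair{c}{u} : \db\proves{q}\pair{c}{u}\}$) via linear Datalog rules.

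Now where does linearity fail in the naive encoding? The forward rule says: $\pair{c}{u}$ holds when there's a nonempty block $R(\underline{c},*)$ such that for *every* $R(\underml{c},y)\in\db$, $\pair{y}{uR}$ holds. The "for every $y$" is the universal quantification. In Datalog this is handled with stratified negation: $\pair{c}{u}$ fails if there *exists* $y$ with $R(\underline{c},y)$ and $\neg\pair{y}{uR}$. But "$\exists y\, R(\underline{c},y)\wedge \neg N(y,uR)$" references $N$ once — that's fine for linearity of the *negated* auxiliary predicate, and negation across strata is allowed. The real threat to linearity is the backward rule together with how strata interact; I need to check each recursive rule touches $N$ at most once in its (positive) body.

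**The role of $\ctwo$.** Here is where I expect the crux to lie. Condition $\cthree$ alone allows the rewinding structure of the automaton to branch in a way that forces genuinely nonlinear (PTIME-complete) recursion — indeed the theorem says violating $\ctwo$ gives \PTIME-hardness. The extra clause in $\ctwo$ ("whenever $q=uRv_1Rv_2Rw$ for consecutive occurrences, $v_1=v_2$ or $Rw$ is a prefix of $Rv_1$") is exactly what tames the backward transitions of $\nfa{q}$. My plan is to use Lemma~\ref{lem:notctwo}: since $q$ satisfies $\ctwo$, it satisfies $\btwoa$ or $\btwob$, giving $q$ a restricted form (a factor of $\kleene{u}{j}w\kleene{v}{k}$ or of $\kleene{uv}{k}wv$, with $uvw$ or $uvw$ self-join-free). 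I would exploit this regular-expression shape to argue that in the fixpoint computation, for each prefix $u$ of $q$, the pair $\pair{c}{u}$ depends on the already-computed relation through a single "current block" probe rather than a branching recursion — so that the dependency graph among $N$-facts, restricted to the recursive (non-negation) edges, is essentially a chain. Concretely I would argue that the self-join-free "core" $uvw$ means the distinct relation names within one period don't create multiple simultaneously-active recursive calls, so each derivation step advances along a single path.

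**Proof plan, step by step.** First I would recall that the algorithm of Figure~\ref{fig:algo} correctly computes $\proves{q}$ for $q$ satisfying $\cthree$ (Lemma~\ref{lem:correctness}), and that $\cqa{q}$ reduces to testing $\pair{c}{\emptyword}\in N$ for some $c$ (Lemma~\ref{lem:min-start}, via Corollary~\ref{cor:induction}). Second, I would write an explicit linear-Datalog-with-stratified-negation program. The program computes, in a lower stratum, an auxiliary predicate $\mathsf{Bad}(c,uR)$ meaning "the block $R(\underline c,*)$ has some edge $R(\underline c,y)$ with $\neg N(y,uR)$," and in the main stratum the forward step becomes $N(c,u) \leftarrow \mathsf{block}_R(c) \wedge \neg\mathsf{Bad}(c,uR)$, while $\mathsf{Bad}$ itself is defined with a single occurrence of $N$ under negation. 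The backward step $N(c,vSwS)\leftarrow N(c,vS)$ (for each backward transition of $\nfa{q}$) is trivially linear. The key verification is that each rule body contains at most one positive occurrence of a recursive IDB predicate; the universal quantifier has been pushed into the negated stratum, and I must confirm that stratification is respected — the $\mathsf{Bad}$ predicates in stratum $i$ only negate $N$-facts derivable in strictly lower strata. This is where $\ctwo$ does its work: I would prove that the condition guarantees an ordering of the prefixes of $q$ (essentially by length, refined using the $\btwoa$/$\btwob$ structure from Lemma~\ref{lem:notctwo}) along which the $N(\cdot,uR)$ needed to define $\mathsf{Bad}(\cdot,u)$ are always "earlier," so that a finite stratification indeed exists. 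The main obstacle is precisely establishing this finite, acyclic stratification from the syntactic form of $\ctwo$: I expect to have to rule out exactly the $\btwoa$/$\btwob$-violating factors identified in parts~\eqref{it:hercules} and~\eqref{it:tritan} of Lemma~\ref{lem:notctwo}, which are the patterns that would otherwise create a cyclic dependency between a forward (negation) edge and a backward (recursive) edge, collapsing linearity. Once stratification and linearity are checked against these finitely many prefix-pairs of the fixed query $q$, correctness follows from agreement with the Figure~\ref{fig:algo} fixpoint, and membership in \NL\ follows from the standard fact that linear Datalog with stratified negation is contained in \NL.
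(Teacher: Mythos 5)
There is a genuine gap. Your plan is to keep the fixpoint algorithm of Figure~\ref{fig:algo} and make it stratifiable by finding an ordering of the prefixes of $q$ along which each $N(\cdot,u)$ only (universally, hence negatively) depends on predicates $N(\cdot,uR)$ computed in earlier strata. No such ordering exists, even for queries satisfying $\ctwo$. Take $q=RXRY$, which satisfies $\ctwo$ and whose $\cqa{q}$ is \NL-complete, and write $N_u$ for the unary predicate $\{c\mid \pair{c}{u}\in N\}$. The forward rule gives: $N_{RX}(c)$ holds iff the block $R(\underline{c},*)$ is nonempty and \emph{every} $R$-successor $y$ of $c$ satisfies $N_{RXR}(y)$; encoding this universal quantifier requires $N_{RXR}$ to lie in a strictly lower stratum than $N_{RX}$. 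But the backward transition of $\nfa{q}$ from $RXR$ to $R$ means that $\pair{c}{RXR}$ is derived exactly when $\pair{c}{R}$ is, i.e., when \emph{every} $X$-successor of $c$ satisfies $N_{RX}$ --- so $N_{RXR}$ must also lie in a strictly lower stratum than... itself, via $N_{RX}$. The prefix-indexed dependency graph has a cycle through negation, and the program is not stratifiable. This is exactly the obstruction the paper flags after Lemma~\ref{lem:lfp} when it observes that the LFP formula, because of its universal quantifier, is not in existential least fixpoint logic (equivalently $\mathsf{DATALOG}_{\neg}$). So the mechanism you propose cannot be carried out, and $\ctwo$ does not rescue it.

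The paper's proof takes a structurally different route: it complements the problem. Via Lemmas~\ref{lem:min-start}, \ref{lem:minnfa}, and~\ref{lem:stutterb2}, together with the $\btwoa$/$\btwob$ normal form from Lemma~\ref{lem:notctwo}, deciding $\cqa{q}$ reduces to deciding, for each constant $c$, whether \emph{some} repair has \emph{no} path from $c$ with trace in the language of $s\kleene{uv}{k-1}\kleene{uv}{*}wv$. The existence of such a blocking repair is captured by a predicate $O(c)$ assembled from ``terminality'' tests, each of which is first-order definable (Lemmas~\ref{lemma:constant-head-fo} and~\ref{prop:check-terminal}) and hence lives in the non-recursive bottom stratum, plus a single reachability recursion along $uv$-paths through $wv$-terminal vertices (the predicate $P$ and Claims~\ref{cla:consistent}--\ref{cla:oc}). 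All negation is confined to the FO stratum and the only recursion is linear graph reachability; that is how linearity and stratification are obtained simultaneously. You correctly identified the target class, the role of the universal quantifier as the obstacle, and the relevance of the $\btwoa$/$\btwob$ shape, but the missing idea is to switch to the complement and replace the universally quantified fixpoint by FO terminality predicates feeding a plain reachability recursion.
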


In the remainder of this section, we develop the proof of Lemma~\ref{lem:nl-algo}.

\begin{definition}
Let $q$ be a path query. 
We define $\nfashortest{q}$ as the automaton that accepts $w$ if $w$ is accepted by $\nfa{q}$ and no proper prefix of~$w$ is accepted by $\nfa{q}$.
\qed
\end{definition}

It is well-known that such an automaton $\nfashortest{q}$ exists.

\begin{example}
Let $q=RXRYR$. 
Then,  $RXRYRYR$ is accepted by $\nfa{q}$, but not by $\nfashortest{q}$, because the proper prefix $RXRYR$ is also accepted by $\nfa{q}$.
\qed	
\end{example}

\begin{definition}
Let $q$ be a path query and $\rep$ be a consistent database instance.
We define $\startshortest{q}{\rep}$ as the set containing all (and only) constants $c\in\adom{\rep}$ such that there is a path in $\rep$ that starts in~$c$ and is accepted by $\nfashortest{q}$.
\qed
\end{definition}

\begin{lemma}\label{lem:minnfa}
Let $q$ be a path query.
For every consistent database instance $\rep$, we have that $\starttwo{q}{\rep}=\startshortest{q}{\rep}$.
\end{lemma}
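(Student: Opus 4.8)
The plan is to prove the two inclusions separately; both are short, and the content lies entirely in the direction $\starttwo{q}{\rep}\subseteq\startshortest{q}{\rep}$.

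For the inclusion $\startshortest{q}{\rep}\subseteq\starttwo{q}{\rep}$ I would simply invoke the defining property of $\nfashortest{q}$: every word accepted by $\nfashortest{q}$ is in particular accepted by $\nfa{q}$. Hence if $c\in\startshortest{q}{\rep}$, witnessed by a path in $\rep$ that starts in~$c$ and is accepted by $\nfashortest{q}$, then that very same path starts in~$c$ and is accepted by $\nfa{q}$, giving $c\in\starttwo{q}{\rep}$. There is nothing to compute here.

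For the converse, $\starttwo{q}{\rep}\subseteq\startshortest{q}{\rep}$, I would fix $c\in\starttwo{q}{\rep}$ and take a path $\pi = R_{1}(\underline{c_{0}},c_{1}),\dots,R_{n}(\underline{c_{n-1}},c_{n})$ in $\rep$ with $c_{0}=c$ whose trace $w=R_{1}\dotsm R_{n}$ is accepted by $\nfa{q}$. First I would consider the set of prefixes of $w$ that are accepted by $\nfa{q}$; this set is nonempty since $w$ itself belongs to it, so I can pick its shortest element $w'=R_{1}\dotsm R_{m}$. By minimality, no proper prefix of $w'$ is accepted by $\nfa{q}$, which is exactly the condition for $w'$ to be accepted by $\nfashortest{q}$. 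The truncated sequence $\pi'=R_{1}(\underline{c_{0}},c_{1}),\dots,R_{m}(\underline{c_{m-1}},c_{m})$ is again a path in $\rep$, it still starts in~$c$, and its trace is precisely $w'$; therefore $\pi'$ is accepted by $\nfashortest{q}$ and $c\in\startshortest{q}{\rep}$. (Here $w'$ is nonempty, because $q$ is a nonempty path query and hence $\emptyword\notin\pumpclosure{q}$, so $\pi'$ is a genuine path that starts in a vertex.)

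I do not expect a real obstacle in this argument. The only points that need a careful word are that any contiguous initial segment of a path in $\rep$ is itself a path in $\rep$ starting at the same vertex, and that its trace is the corresponding prefix of the original trace; both are immediate from the definitions. Conceptually, the lemma just records the standard fact that $\nfashortest{q}$ recognizes exactly the minimal-length accepted words among prefixes, so that ``some path from $c$ is accepted by $\nfa{q}$'' and ``some path from $c$ is accepted by $\nfashortest{q}$'' single out the same start vertices.
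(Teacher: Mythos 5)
Your proof is correct and follows essentially the same route as the paper's: the paper likewise dismisses $\startshortest{q}{\rep}\subseteq\starttwo{q}{\rep}$ as holding by construction, and for the converse takes the shortest prefix of the witnessing path that is accepted by $\nfa{q}$, observing that this prefix is then accepted by $\nfashortest{q}$. Your write-up merely spells out the minimality argument and the nonemptiness of the truncated path in slightly more detail.
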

\begin{proof}
By construction, $\startshortest{q}{\rep}\subseteq\starttwo{q}{\rep}$.
%\jef{I use Greek letter $\pi$ for a sequence of database facts; I prefer to reserve lowercase letters $u,v,w$ for words (over the alphabet of relation names).}
Next assume that $c\in\starttwo{q}{\rep}$ and let $\pi$ be the path that starts in~$c$ and is accepted by $\nfa{q}$. 
Let $\pi^{-}$ be the shortest prefix of $\pi$ that is accepted by $\nfa{q}$. 
Since $\pi^{-}$ starts in~$c$ and is accepted by $\nfashortest{q}$, it follows $c\in\startshortest{q}{\rep}$.
\end{proof}

\begin{lemma}\label{lem:stutterb2}
Let $u\cdot v\cdot w$ be a self-join-free word over the alphabet of relation names.
Let $s$ be a suffix of $uv$ that is distinct from $uv$.
For every integer $k\geq 0$, $\nfashortest{s\kleene{uv}{k}wv}$ 
accepts the language of the regular expression $s\kleene{uv}{k}\kleene{uv}{*}wv$.
\end{lemma}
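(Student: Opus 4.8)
The plan is to work with $\pumpclosure{q}$ rather than the automaton directly. By Lemma~\ref{lem:pumpingnfa}, $\nfa{q}$ accepts exactly $\pumpclosure{q}$, so $\nfashortest{q}$ accepts exactly the \emph{prefix-minimal} words of $\pumpclosure{q}$, namely those having no proper prefix in $\pumpclosure{q}$. Writing $L$ for the language $s\kleene{uv}{k}\kleene{uv}{*}wv=\{s\kleene{uv}{m}wv\mid m\geq k\}$, it therefore suffices to prove that the prefix-minimal elements of $\pumpclosure{q}$ are precisely the words in $L$. Since $uvw$ is self-join-free, every relation name occurs at most once across $u$, $v$, $w$; in particular $\first{u}$, $\first{v}$, $\first{w}$ are pairwise distinct whenever the corresponding words are nonempty, and this rigidity of first symbols is the structural fact I would lean on throughout.

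First I would show $L\subseteq\pumpclosure{q}$ by induction on $m\geq k$, the base case $m=k$ being $q\in\pumpclosure{q}$. For the inductive step I exhibit a single rewind turning $s\kleene{uv}{m}wv$ into $s\kleene{uv}{m+1}wv$. Assuming $s\neq\emptyword$, set $X\defeq\first{s}$; because $s$ is a proper suffix of the self-join-free word $uv$, the symbol $X$ occurs exactly once in $uv$, say $uv=pXs'$ with $s=Xs'$ and $p\neq\emptyword$. The first two occurrences of $X$ in $s\kleene{uv}{m}wv=\kleene{sp}{m}swv$ delimit the factor $X\cdot s'p\cdot X$, and rewinding it regroups $\kleene{sp}{m}s$ into $\kleene{sp}{m+1}s$, i.e.\ yields $s\kleene{uv}{m+1}wv$. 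When $s=\emptyword$ the same duplication is obtained by rewinding on $\first{uv}$ between two consecutive $uv$-blocks; the only point needing the self-join-free hypothesis is the existence of a repeated anchor symbol, and the degenerate situations (e.g.\ $s=\emptyword$ with few blocks, or $u$, $v$, or $w$ empty) are checked directly.

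Next I would verify that each word of $L$ is prefix-minimal in $\pumpclosure{q}$. Rewinding strictly increases length, so $q$ is the unique shortest member of $\pumpclosure{q}$; hence any proper prefix of $s\kleene{uv}{m}wv$ lying in $\pumpclosure{q}$ would have length between $\card{q}$ and $\card{s\kleene{uv}{m}wv}$. Using that $\first{w}\neq\first{uv}$, I argue that $wv$ cannot begin inside a $(uv)$-run, so the suffix $wv$ occupies a forced position in $s\kleene{uv}{m}wv$; combined with the fact that every element of $\pumpclosure{q}$ still contains the (unique) occurrences of the symbols of $w$, this shows that no such proper prefix can itself be a complete element of $\pumpclosure{q}$.

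The converse inclusion---that every prefix-minimal element of $\pumpclosure{q}$ lies in $L$---is the main obstacle, and I would prove it by analysing how rewinding acts on the self-join-free skeleton $uvw$. The key claim is that, because the first symbols of the $u$-, $v$-, and $w$-parts are pairwise distinct, every backward ($\emptyword$-)transition of $\nfa{q}$ that a prefix-minimal accepting run can exploit must jump between states at matching block boundaries, so that any input read beyond the skeleton can only insert whole copies of $uv$. Formalising this requires pinning down, via the uniqueness of symbol occurrences in $uvw$, exactly which prefixes of $q$ share a last symbol (these govern the backward transitions), and then showing that a prefix-minimal accepted trace must read $s$, then a run of $(uv)$'s, then $wv$, in that order. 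I expect the bookkeeping for the backward transitions, together with the degenerate cases where some of $u$, $v$, $w$, or $s$ is empty, to be the most delicate part; the self-join-freeness of $uvw$ is what ultimately makes the alignment rigid and forces the clean $s\kleene{uv}{*}wv$ shape.
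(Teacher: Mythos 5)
Your reduction of the lemma to characterizing the prefix-minimal words of $\pumpclosure{q}$ is the right frame (it is implicit in the paper's proof as well), and your one-step rewind for $m\geq 1$ with $s\neq\emptyword$ is correct. Nevertheless there are two genuine gaps. The smaller one: the base of your induction for $L\subseteq\pumpclosure{q}$ is not a routine check, because the inclusion actually \emph{fails} there. Take $u=A$, $v=B$, $w=C$, $s=B$ and $k=0$, so $q=BCB$. Rewinding only duplicates factors that are already present, so every word of $\pumpclosure{BCB}$ is over the alphabet $\{B,C\}$, whereas $BABCB$ lies in the language of $B\kleene{AB}{*}CB$. Hence no sequence of rewinds turns $s\cdot wv$ into $s\cdot uv\cdot wv$; a similar obstruction (the letter $\first{u}$ occurs at most once in $q$ and is never inside a rewindable factor $RxR$, so it can never be duplicated) kills the step $m=1\to 2$ when $s=\emptyword$ and $u\neq\emptyword$. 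These cases cannot be ``checked directly''; they show that the statement, and hence any proof of it, needs $k$ large enough relative to $s$. (The paper's own proof is equally silent on this point, so here you have in effect rediscovered a defect of the lemma rather than made an error the paper avoids; still, your claim that the degenerate cases go through is false as written.)

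The larger gap is that the direction you yourself identify as the main obstacle --- every prefix-minimal word of $\pumpclosure{q}$ has the form $s\kleene{uv}{m}wv$ --- is precisely the entire content of the paper's proof, and your proposal only describes what a proof would have to do. Moreover, the mechanism you sketch is not quite right: it is not true that a usable backward transition must ``jump between matching block boundaries'' so that only whole copies of $uv$ can be inserted. The paper's case analysis shows that a rewind whose factor contains $w$ produces accepted words that are genuinely \emph{not} of the form $s\kleene{uv}{m}wv$, namely words of the shapes $\swipe{\formula{s\kleene{uv}{\ell_{1}}uv_{1}}}{v_{2}\kleene{uv}{\ell_{2}}wv_{1}}{\formula{v_{2}}}$ and $\swipe{\formula{s_{1}}}{v_{2}\kleene{uv}{\ell}wv_{1}}{\formula{v_{2}}}$ with $v=v_{1}v_{2}$; the point is that such words still have a \emph{proper prefix} in $L$ and are therefore rejected by $\nfashortest{q}$, not that they are unreachable. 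So the argument has to be organized as an induction on the number of rewinds showing that every word of $\pumpclosure{q}$ has \emph{some} prefix in $L$, with exactly this case analysis as the inductive step. Until that analysis is carried out (and the small-$k$ issue above is resolved), the proposal is a plan rather than a proof.
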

\begin{proof}
Let $q=s\kleene{uv}{k}wv$. Since $u\cdot v\cdot w$ is self-join-free,
applying the rewinding operation, zero, one, or more times, in the part of $q$ that precedes~$w$ will repeat the factor $uv$. 
This gives words of the form
$s\kleene{uv}{\ell}wv$ with $\ell\geq k$.
The difficult case is where we rewind a factor of~$q$ that itself contains~$w$ as a factor. 
In this case, the rewinding operation will repeat a factor of the form
$v_{2}\kleene{uv}{\ell}wv_{1}$ such that $v=v_{1}v_{2}$ and $v_{2}\neq\emptyword$,
which results in words of one of the following forms ($s=s_{1}\cdot v_{2}$):
$$
\begin{array}{l}
\swipe
{\formula{s\kleene{uv}{\ell_{1}}uv_{1}}}
{v_{2}\kleene{uv}{\ell_{2}}wv_{1}}
{\formula{v_{2}}}; \mbox{\ or}\\
\swipe
{\formula{s_{1}}}
{v_{2}\kleene{uv}{\ell}wv_{1}}
{\formula{v_{2}}}.
\end{array}
$$
These words have a prefix belonging to the language of the regular expression $s\kleene{uv}{k}\kleene{uv}{*}wv$.
\end{proof}

%\jef{Note sure whether the distinction between $\step{q}$ and $\cqastep{q}$ is in the next definition is useful later on.}
%\xiating{I like the distinction since it shows that ``WLOG'', we can just use $\step{q}$ to obtain $\cqastep{q}$ by shortcutting the paths. }

\begin{definition}\label{def:terminal}
Let $\db$ be a database instance, and $q$ a path query.

For $a,b\in\adom{\db}$, we write $\db\models a\step{q}b$ if there exists a path in~$\db$ from~$a$ to~$b$ with trace~$q$.
Even more formally, $\db\models a\step{q}b$  if $\db$ contains facts $R_{1}(\underline{a_{1}},a_{2}), R_{2}(\underline{a_{2}},a_{3}), \ldots,  R_{\card{q}}(\underline{a_{\card{q}}},a_{\card{q}+1})$ such that $R_{1}R_{2}\dotsm R_{\card{q}}=q$.
We write $\db\models a\step{q_{1}}b\step{q_{2}}c$ as a shorthand for $\db\models a\step{q_{1}}b$ and $\db\models b\step{q_{2}}c$.

We write $\db\models a\cqastep{q}b$ if there exists a \emph{consistent path} in~$\db$ from~$a$ to~$b$ with trace~$q$, where a path is called consistent if it does not contain two distinct key-equal facts.

A constant $c\in\adom{\db}$ is called \emph{terminal for~$q$ in~$\db$} if for some (possibly empty) proper prefix~$p$ of~$q$, there is a consistent path in $\db$ with trace~$p$ that cannot be right extended to a consistent path in $\db$ with trace~$q$. 
\qed
\end{definition}

Note that for every $c\in\adom{\db}$, we have $c\cqastep{\emptyword}c$.
Clearly, if $q$ is self-join-free, then $c\step{q}d$ implies $c\cqastep{q}d$ (the converse implication holds vacuously true).

\begin{example}
Let $\db=\{R(\underline{c},d), S(\underline{d},c), R(\underline{c},e), T(\underline{e},f)\}$.
Then, $c$ is terminal for $RSRT$ in~$\db$ because the path $R(\underline{c},d), S(\underline{d},c)$ cannot be right extended to a consistent path with trace~$RSRT$, because $d$ has no outgoing $T$-edge. 
Note incidentally that $\db\models c\cqastep{RS}c\cqastep{RT}f$, but  $\db\not\models c\cqastep{RSRT}f$.
\qed
\end{example}

\begin{lemma}
\label{prop:check-terminal}
Let $\db$ be a database instance, and $c\in\adom{\db}$.
Let $q$ be a path query.
Then, $c$ is terminal for~$q$ in~$\db$ if and only if $\db$ is a ``no''-instance of $\cqa{\fixedhead{q}{c}}$, with $\fixedhead{q}{c}$ as defined by Definition~\ref{def:fixedhead}.
\end{lemma}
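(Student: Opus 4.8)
The plan is to prove the two directions of the biconditional separately, relying throughout on one simple but crucial fact: following a fixed trace from a fixed constant inside a \emph{consistent} instance is deterministic. Concretely, in any repair $\rep$ and from any constant $c$, there is at most one path that starts in $c$ and has a given trace, because each nonempty block contributes exactly one fact, so the next vertex is forced at every step. This uniqueness is the engine behind both directions. Before starting I would also make explicit the (implicit, but clear from the worked example) reading that the witnessing path in the definition of terminality is anchored at $c$, i.e.\ it starts in $c$; otherwise $c$ would play no role.

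For the direction ``$c$ terminal $\implies$ $\db$ is a no-instance of $\cqa{\fixedhead{q}{c}}$'', I would start from a witnessing consistent path $\pi$ from $c$ whose trace is a proper prefix $p$ of $q$ and which cannot be right-extended in $\db$ to a consistent path with trace $q$. Since $\pi$ is consistent, its facts are pairwise non-key-equal, so it uses at most one fact per block, and I can build a repair $\rep$ of $\db$ that contains all facts of $\pi$ (picking arbitrary facts in the untouched blocks). I then claim $\rep$ has no path from $c$ with trace $q$: if some such path $\sigma$ existed, its length-$\card{p}$ prefix would be a path from $c$ with trace $p$ lying inside $\rep$, and by the uniqueness above it would coincide with $\pi$; but then $\sigma$ would be a consistent right-extension of $\pi$ to trace $q$ inside $\rep\subseteq\db$, contradicting that $\pi$ is stuck. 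Hence $\rep$ falsifies $\fixedhead{q}{c}$.

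For the converse I would argue by contraposition: assuming $c$ is not terminal, I show every repair $\rep$ satisfies $\fixedhead{q}{c}$. Fixing $\rep$, I follow the trace of $q=R_{1}\dotsm R_{k}$ from $c$ as far as the determinism of $\rep$ allows, obtaining a unique consistent path $\pi$ with trace $R_{1}\dotsm R_{m}$. If $m=k$ we are done. If $m<k$, then $\pi$ is a consistent path from $c$ in $\db$ whose trace is a proper prefix of $q$, so the hypothesis ``not terminal'' forces $\pi$ to extend in $\db$ to a consistent $q$-path; in particular $\db$ contains an $R_{m+1}$-edge out of the endpoint of $\pi$, hence a nonempty block there, hence $\rep$ also contains an $R_{m+1}$-edge out of that endpoint, contradicting the maximality of $m$. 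Thus $m=k$ and $\rep$ has a $q$-path from $c$.

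The step I expect to be the main obstacle is the mismatch between ``stuck in a repair'' and ``stuck in the whole instance $\db$''. In the converse direction the maximal path $\pi$ is stuck in $\rep$ for free, but terminality demands being stuck in $\db$, and $\db$ carries more facts than $\rep$. The resolution is exactly the block argument above: any outgoing edge that $\db$ offers at the endpoint of $\pi$ forces a (possibly different) outgoing edge of the same relation name in $\rep$, so a genuine dead end in $\rep$ is already a dead end in $\db$. Getting this equivalence right, together with the deterministic-uniqueness lemma used in the first direction, is the crux; the remaining bookkeeping (forming the repair, aligning prefixes) is routine.
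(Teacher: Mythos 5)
Your proof is correct and follows essentially the same route as the paper: the forward direction is the standard construction of a repair around the stuck consistent path (which the paper dismisses as straightforward), and your contrapositive argument for the converse is the mirror image of the paper's direct argument, both resting on the determinism of trace-following in a consistent instance and on the fact that a dead end in a repair is a dead end in $\db$ because repairs pick one fact from every nonempty block. Your explicit remark that the witnessing path in Definition~\ref{def:terminal} is anchored at $c$ matches the paper's intended reading.
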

\begin{proof}
\framebox{$\implies$}
Straightforward.
\framebox{$\impliedby$}
Assume $\db$ is a ``no''-instance of $\cqa{\fixedhead{q}{c}}$.
Then, there is a repair~$\rep$ of $\db$ such that $\rep\not\models\fixedhead{q}{c}$.
The empty path is a path in~$\rep$ that starts in~$c$ and has trace $\emptyword$, which is a prefix of~$q$. 
We can therefore assume a longest prefix~$p$ of~$q$ such there exists a path~$\pi$ in~$\rep$ that starts in~$c$ and has trace~$p$.
Since $\rep$ is consistent, $\pi$ is consistent.
From $\rep\not\models\fixedhead{q}{c}$, it follows that $p$ is a proper prefix of~$q$.
By Definition~\ref{def:terminal}, $c$ is terminal for~$q$ in~$\db$.
\end{proof}

%\jef{
%\begin{proposition}
%Let $\db$ be a database instance, and $c\in\adom{\db}$.
%Let $p,q$ be path queries such that~$p$ is a prefix of~$q$.
%If $c$ is terminal for~$p$ in~$\db$,
%then $c$ is terminal for~$q$ in~$\db$.
%\end{proposition}
%\begin{proof}
%Straightforward.
%\end{proof}
%}

We can now give the proof of Lemma~\ref{lem:nl-algo}.

%\begin{lemma}\label{lem:nlproof}
%If $q$ satisfies $\ctwo$, then $\cqa{q}$ is in \NL.
%\end{lemma}
\begin{proof}[Proof of Lemma~\ref{lem:nl-algo}]
Assume $q$ satisfies $\ctwo$.
By Lemma~\ref{lem:notctwo}, $q$ satisfies $\btwoa$ or $\btwob$.
We treat the case that $q$ satisfies~$\btwob$ (the case that $q$ satisfies $\btwoa$ is even easier).
We have that $q$ is a factor of $\kleene{uv}{k}wv$, where $k$ is chosen as small as possible, and $uvw$ is self-join-free.
The proof is straightforward if $k=0$; we assume $k\geq 1$ from here on. 
To simplify notation, we will show the case where $q$ is a suffix of $\kleene{uv}{k}wv$; our proof can be easily extended to the case where~$q$ is not a suffix, at the price of some extra notation.
There is a suffix $s$ of $uv$ such that $q=s\kleene{uv}{k-1}wv$. 

We first define a unary predicate~$P$ (which depends on~$q$) such that $\db\models P(d)$ if for some $\ell\geq 0$, there are constants $d_{0},d_{1},\ldots,d_{\ell}\in\adom{\db}$ with $d_{0}=d$ such that:
\begin{enumerate}[label={(\roman*)}] 
\item\label{it:dodl}
$\db\models d_{0}\step{uv}d_{1}\step{uv}d_{2}\step{uv}\dotsm\step{uv}d_{\ell}$;
\item\label{it:dod2}
for every $i\in\{0,1,\ldots,\ell\}$, $d_{i}$ is terminal for~$wv$ in~$\db$; and
\item\label{it:dod3}
either $d_{\ell}$ is terminal for~$uv$ in~$\db$, or $d_{\ell}\in\{d_{0},\ldots,d_{\ell-1}\}$.
\end{enumerate}

\begin{claim}\label{cla:consistent}
The definition of the predicate~$P$ does not change if we replace item~\ref{it:dodl} by the stronger requirement that for every $i\in\{0,1,\ldots,\ell-1\}$, there exists a path $\pi_{i}$ from $d_{i}$ to $d_{i+1}$ with trace~$uv$ such that the composed path $\pi_{0}\cdot\pi_{1}\cdots\pi_{\ell-1}$ is consistent. 
\end{claim}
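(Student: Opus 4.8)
The forward implication is immediate: a consistent composite path in particular realizes conditions~\ref{it:dodl}--\ref{it:dod3}, so the strengthened version of item~\ref{it:dodl} can only shrink the set of $d$ with $\db\models P(d)$. The substance is the converse, and the plan is to fix a witness $d_{0},\ldots,d_{\ell}$ for the weak definition and manufacture a \emph{consistent} witness keeping $d_{0}=d$. I would do this by passing to a \emph{configuration graph} $H$ whose vertices are pairs $\pair{r}{a}$ with $0\le r<m$ (writing $m=\card{uv}$ and $R_{1}\dotsm R_{m}=uv$) and $a\in\adom{\db}$, with an edge $\pair{r}{a}\to\pair{r+1\bmod m}{b}$ exactly when $R_{r+1}(\underline{a},b)\in\db$. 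A walk in $H$ from $\pair{0}{d}$ then corresponds to a $\kleene{uv}{*}$-walk in $\db$, whose waypoints $d_{0},d_{1},\ldots$ are the vertices met at the residue-$0$ configurations.

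The point of this encoding is a clean reformulation of consistency. Since $uv$ is self-join-free, two facts of the composite path are key-equal and distinct precisely when the walk leaves a single configuration $\pair{r}{a}$ by two different edges; hence the composite path is consistent if and only if the walk is \emph{functional}, i.e.\ it never uses two distinct out-edges of the same configuration. Under this dictionary, a weak witness is a walk $W$ from $\pair{0}{d}$ all of whose residue-$0$ configurations have their $\adom{\db}$-component terminal for $wv$ (call this set $T$), and which ends at a residue-$0$ configuration that is either terminal for $uv$ (the first disjunct of~\ref{it:dod3}) or repeats an earlier residue-$0$ configuration (the second disjunct).

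I would then treat the two disjuncts of item~\ref{it:dod3} separately. If $W$ ends at a configuration terminal for $uv$, I extract from $W$ a \emph{simple} path to that endpoint by deleting loops; a simple path repeats no configuration, hence is vacuously functional, its residue-$0$ vertices still lie in $T$, and its endpoint is unchanged, so it yields a consistent witness. The delicate case is the cyclic disjunct, and it hinges on the arithmetic observation that \emph{every cycle of $H$ has length divisible by $m$} (each edge advances the residue by one, so a closed walk makes a whole number of turns), and therefore meets a residue-$0$ configuration. Concretely, the repeated residue-$0$ configuration of $W$ lies on a closed sub-walk, which contains a simple cycle $C$; by the divisibility fact, $C$ passes through some $\pair{0}{c'}$ with $c'\in T$. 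I would prepend a simple path $P$ from $\pair{0}{d}$, truncated at its first contact $z^{*}$ with $C$, and then run around $C$ (reusing the same out-edge whenever a configuration of $C$ recurs) until $\pair{0}{c'}$ is reached for the second time. Traversing a simple cycle repeatedly is functional because each of its configurations has a unique out-edge inside $C$, and $P$ up to $z^{*}$ is disjoint from $C$, so the spliced walk is functional overall; a residue count shows it has length a multiple of $m$ and ends at the already-visited $\pair{0}{c'}$, so it realizes the cyclic disjunct consistently with all waypoints in $T$.

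The main obstacle I anticipate is exactly this cyclic case: naive loop-deletion destroys the very repetition that item~\ref{it:dod3} demands, so the construction must simultaneously (i) guarantee that the surviving cycle is anchored at a residue-$0$ configuration—this is what divisibility by $m$ buys—and (ii) splice the approach path into the cycle without creating a configuration from which two different edges are taken. Once the functional-walk reformulation and the divisibility fact are in place, the remaining bookkeeping (that membership of residue-$0$ vertices in $T$ is inherited by sub-walks, and that the spliced walk has length divisible by $m$) is routine, and Claim~\ref{cla:consistent} follows.
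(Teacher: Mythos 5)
Your proof is correct, but it takes a genuinely different route from the paper's. The paper proves the claim by induction on the number $\ell$ of $uv$-steps: if two of the paths $\pi_i$ and $\pi_j$ ($i<j$) conflict on a block $R(\underline{a},*)$, it reroutes at that block (follow $\pi_i$ to the conflicting fact, then continue along $\pi_j$'s tail), obtaining a witness with strictly fewer $uv$-steps, and invokes the induction hypothesis. In your configuration-graph language, that rerouting is exactly deletion of the closed sub-walk between two visits to the configuration $\pair{r}{a}$, so the underlying combinatorial idea (inconsistency $=$ a revisited configuration, which can be excised) is the same; what differs is the packaging. Your version buys two things. First, the reduction of consistency to \emph{functionality} of a walk in the product of $\adom{\db}$ with a length-$\card{uv}$ cyclic counter, together with the observation that every cycle there has length divisible by $\card{uv}$ and hence carries a residue-$0$ waypoint, makes the bookkeeping (waypoints stay in $T$, lengths stay multiples of $\card{uv}$) completely mechanical. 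Second, and more substantively, you treat the two disjuncts of item~(iii) separately, and in the cyclic case you rebuild the witness as a simple approach path spliced into a simple cycle anchored at a residue-$0$ configuration; this explicitly preserves the repetition $d_{\ell}\in\{d_0,\dots,d_{\ell-1}\}$ demanded by~(iii). The paper's rerouting step, by contrast, deletes the waypoints $d_{i+1},\dots,d_j$ and does not explicitly verify that condition~(iii) survives when the repeated endpoint happened to be one of the deleted waypoints, so your argument is, on this point, the more careful of the two. The price is a longer setup; the paper's induction is shorter to state.
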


\begin{proof}
% [\xiating{Attempted inductive proof.}] \jef{I am fine with your proof. Now it seems that the proof is essentially the same as a proof of the following: if there is a path in a directed graph from $s$ to $t$, then there is a path in which every vertex occurs at most once. So I think the proof is really ``straightforward'' (but it is fine having it).}
% \xiating{Yes indeed - the essential arguments are really simple. It is perhaps important to mention that the ``consistent'' path may visit different constants than the ``original path''. \jef{Why is that? The inductive step in your proof replaces $\pi_{1}\cdot R(\underline{a},b_1)\cdot \pi_{2}\cdot R(\underline{a},b_2)\cdot \pi_{3}$ with $\pi_{1}\cdot R(\underline{a},b_2)\cdot \pi_{3}$; all edges on the new path were already on the old path. Am I missing something?}}
% \xiating{You are right, and my previous comment was imprecise about the word ``different''. I meant the following: we can always construct a ``consistent path'' that visits a \emph{subset} of constants in the ``original path''.}
It suffices to show the following statement by induction on increasing~$l$:
\begin{quote}
whenever there exist $l \geq 1$ and constants $d_0, d_1, \dots, d_l$ with $d_0 = d$ such that conditions~\ref{it:dodl}, \ref{it:dod2}, and~\ref{it:dod3} hold, there exist another constant $k \geq 1$ and constants $c_0, c_1, \dots, c_k$ with $c_0 = d$ such that conditions~\ref{it:dodl}, \ref{it:dod2}, and~\ref{it:dod3} hold, and, moreover, for each $i \in \{0, 1, \dots, k-1\}$, there exists a path $\pi_i$ from $c_i$ to $c_{i+1}$ such that the composed path $\pi_0 \cdot \pi_1 \cdots \pi_{k-1}$ is consistent.
\end{quote}

\begin{description}
\item[Basis $l = 1$.] Then we have $\db \models d_0 \step{uv} d_1$, witnessed by a path $\pi_0$. Since $uv$ is self-join-free, the path $\pi_0$ is consistent. The claim thus follows with $k = l = 1$, $c_0 = d_0$ and $c_1 = d_1$.
\item[Inductive step $l \rightarrow l+1$.] 
Assume that the statement holds for any integer in $\{1, 2, \ldots, l\}$. Suppose that there exist $l\geq 2$ and constants $d_0, d_1, \dots, d_{l+1}$ with $d_0 = d$ such that  conditions~\ref{it:dodl}, \ref{it:dod2}, and~\ref{it:dod3} hold.

For $i\in\{0,\ldots,l\}$, let $\pi_i$ be a path with trace~$uv$ from $d_i$ to $d_{i+1}$ in $\db$. The claim holds if the composed path $\pi_0 \cdot \pi_1 \cdots \pi_{l}$ is consistent, with $k = l+1$ and $c_i = d_i$ for $i \in \{0,1,\dots,l+1\}$.

Now, assume that for some $i<j$, the paths that show $\db\models d_{i}\step{uv}d_{i+1}$ and $\db\models d_{j}\step{uv}d_{j+1}$ contain, respectively, $R(\underline{a},b_{1})$ and $R(\underline{a},b_{2})$ with $b_{1}\neq b_{2}$.
It is easily verified that 
$$\db\models d_{0}\step{uv}d_{1}\step{uv}d_{2}\step{uv}\dotsm\step{uv}d_{i}\step{uv}d_{j+1} \step{uv} \dotsm \step{uv} d_{l+1},$$
where the number of $uv$-steps is strictly less than $l+1$.
Informally, we follow the original path until we reach $R(\underline{a},b_{1})$, but then follow $R(\underline{a},b_{2})$ instead of $R(\underline{a},b_{1})$, and continue on the path that proves $\db\models d_{j}\step{uv}d_{j+1}$.
Then the claim holds by applying the inductive hypothesis on constants $d_0, d_1, \ldots, d_i, d_{j+1}, \ldots, d_{l+1}$.
\end{description}
The proof is now complete.
\end{proof}

Since we care about the expressibility of the predicate~$P$ in Datalog, Claim~\ref{cla:consistent} is not cooked into the definition of~$P$. The idea is the same as in an \NL-algorithm for reachability: if there exists a directed path from~$s$ to~$t$, then there is such a path without repeated vertices; but we do not care for repeated vertices when computing reachability.

\begin{claim}\label{cla:terminal}
The definition of predicate~$P$ does not change if we require that for $i\in\{0,1,\ldots,\ell-1\}$, $d_{i}$ is not terminal for~$uv$ in~$\db$. 
\end{claim}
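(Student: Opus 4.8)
The plan is to prove that the two versions of $P$ define the same predicate by establishing both inclusions, the harder of which reduces to a short truncation argument on the witnessing sequence $d_{0},d_{1},\ldots,d_{\ell}$. One inclusion is immediate: every sequence that witnesses $\db\models P(d)$ under the strengthened definition (which merely adds the demand that $d_{0},\ldots,d_{\ell-1}$ are not terminal for~$uv$) is \emph{a fortiori} a witness under the original definition, since the strengthening only removes witnesses. So the entire content lies in the converse, namely in transforming an arbitrary original witness into one that also meets the extra requirement while keeping $d_{0}=d$.

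For the converse, I would start from a sequence $d_{0},\ldots,d_{\ell}$ satisfying conditions~\ref{it:dodl}, \ref{it:dod2}, and~\ref{it:dod3}, and split on whether some $d_{i}$ is terminal for~$uv$ in~$\db$. If some index does carry a vertex terminal for~$uv$, let $i$ be the \emph{least} such index and pass to the prefix $d_{0},\ldots,d_{i}$. Conditions~\ref{it:dodl} and~\ref{it:dod2} are inherited by any prefix; condition~\ref{it:dod3} holds for the truncated sequence through its first disjunct, precisely because its new endpoint $d_{i}$ is terminal for~$uv$; and by minimality of~$i$ none of $d_{0},\ldots,d_{i-1}$ is terminal for~$uv$, which is exactly the extra requirement. (The degenerate case $\ell=0$ necessarily lands here, since for a length-zero sequence the second disjunct of~\ref{it:dod3} is unavailable, so $d_{0}$ must be terminal for~$uv$.) In the remaining case no $d_{i}$ is terminal for~$uv$, so condition~\ref{it:dod3} can hold only via its second disjunct, giving $d_{\ell}\in\{d_{0},\ldots,d_{\ell-1}\}$; then the original sequence already satisfies the extra requirement vacuously, because none of $d_{0},\ldots,d_{\ell-1}$ is terminal for~$uv$.

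The argument is elementary, and I do not expect a genuine obstacle beyond careful bookkeeping of the two disjuncts of condition~\ref{it:dod3}. The one point that needs real attention is that truncation must preserve condition~\ref{it:dod3}; this is the reason for cutting at the \emph{first} vertex that is terminal for~$uv$, so that this vertex becomes the new endpoint and validates~\ref{it:dod3} directly. As with Claim~\ref{cla:consistent}, this strengthening is recorded separately rather than folded into the definition of~$P$, so that it may be invoked alongside Claim~\ref{cla:consistent} when~$P$ is ultimately expressed in linear Datalog.
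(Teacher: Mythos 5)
Your proof is correct and follows essentially the same route as the paper's one-line argument: truncate the witnessing sequence at a vertex that is terminal for~$uv$, so that condition~(iii) holds via its first disjunct. Your version is in fact more careful than the paper's (which merely says to reset $\ell$ to the offending index, with an apparent typo), since you explicitly cut at the \emph{least} such index to guarantee the extra non-terminality requirement on the truncated prefix.
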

\begin{proof}
Assume that for some $0\leq i<\ell$, $d_{i}$ is terminal for~$uv$ in~$\db$.
Then, all conditions in the definition are satisfied by choosing $\ell$ equal to~$j$.
\end{proof}

Claim~\ref{cla:terminal} is not cooked into the definition of~$P$ to simplify the the encoding of~$P$ in Datalog.

Next, we define a unary predicate $O$ such that $\db\models O(c)$ for a constant~$c$ if $c\in\adom{\db}$ and one of the following holds true:
\begin{enumerate}
\item
$c$ is terminal for~$s\kleene{uv}{k-1}$ in~$\db$; or
\item
there is a constant $d\in\adom{\db}$ such that both $\db\models c\longcqastep{s\kleene{uv}{k-1}}d$ and $\db\models P(d)$.
%$\db$ contains consistent path with trace $s\kleene{uv}{k-1}$ from~$c$ to some constant in $\{d\in\adom{\db}\mid\db\models P(d)\}$.
\end{enumerate}

\begin{claim}\label{cla:oc}
Let $c\in\adom{\db}$.
The following are equivalent:
\begin{enumerate}[label={(\Roman*)}] 
\item\label{it:to}
there is a repair $\rep$ of $\db$ that contains no path that starts in~$c$ and whose trace is in the language of the regular expression $s\kleene{uv}{k-1}\kleene{uv}{*}wv$; and
\item\label{it:ot}
$\db\models O(c)$.
\end{enumerate}
\end{claim}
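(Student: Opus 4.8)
The plan is to read condition~\ref{it:to} through the language-theoretic lens supplied by Lemmas~\ref{lem:minnfa} and~\ref{lem:stutterb2}. Since $q=s\kleene{uv}{k-1}wv$, Lemma~\ref{lem:stutterb2} gives that $\nfashortest{q}$ accepts exactly $s\kleene{uv}{k-1}\kleene{uv}{*}wv$, and Lemma~\ref{lem:minnfa} identifies $\starttwo{q}{\rep}$ with $\startshortest{q}{\rep}$. Hence for any repair $\rep$, we have $c\in\starttwo{q}{\rep}$ if and only if $\rep$ contains a path that starts in~$c$ and whose trace is of the form $s\kleene{uv}{m}wv$ with $m\geq k-1$. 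Thus item~\ref{it:to} says precisely that some repair~$\rep$ has $c\notin\starttwo{q}{\rep}$. Throughout I will use that in a consistent instance each relation name has at most one outgoing fact at each constant, so reading a fixed trace from~$c$ is a deterministic walk; in particular the walk reading $s\kleene{uv}{k-1}$ and then $uv$ repeatedly is well defined.

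For the direction \ref{it:to}$\Rightarrow$\ref{it:ot}, I fix a repair~$\rep$ witnessing~\ref{it:to} and follow this deterministic walk. If reading $s\kleene{uv}{k-1}$ from~$c$ gets stuck, then $\rep\not\models\fixedhead{s\kleene{uv}{k-1}}{c}$, so by Lemma~\ref{prop:check-terminal} $c$ is terminal for $s\kleene{uv}{k-1}$, giving the first disjunct of~$O$. Otherwise the walk reaches some~$d$ along a path of~$\rep$, which is consistent, so $\db\models c\longcqastep{s\kleene{uv}{k-1}}d$; I then keep reading $uv$, producing $d=f_0,f_1,f_2,\dots$ Because $\rep$ avoids~$c$, no $f_j$ can start a $wv$-path in~$\rep$ (such a path would complete a trace $s\kleene{uv}{j}wv$), so each $f_j$ is terminal for $wv$ by Lemma~\ref{prop:check-terminal}. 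As the domain is finite, this walk either gets stuck at some $f_m$ (whence $f_m$ is terminal for $uv$) or revisits a previous vertex; either way the prefix $f_0,\dots,f_m$ satisfies the three requirements in the definition of~$P$, so $\db\models P(d)$ and the second disjunct of~$O$ holds.

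For the direction \ref{it:ot}$\Rightarrow$\ref{it:to} I construct a repair avoiding~$c$. If $c$ is terminal for $s\kleene{uv}{k-1}$, Lemma~\ref{prop:check-terminal} yields a repair with no $s\kleene{uv}{k-1}$-path from~$c$, which a fortiori has no $s\kleene{uv}{m}wv$-path, and we are done. In the second case I start from the witnesses of $P(d)$: by Claim~\ref{cla:consistent} the chain $d=d_0\step{uv}\dots\step{uv}d_\ell$ may be taken to be one consistent path, and by Claim~\ref{cla:terminal} I may assume no intermediate $d_i$ is terminal for $uv$. I build a repair containing a consistent $c$-to-$d_0$ path of trace $s\kleene{uv}{k-1}$, the consistent chain, a $wv$-blocking choice at every $d_i$, and (if $d_\ell$ is terminal for $uv$) a $uv$-blocking choice at~$d_\ell$, filling all remaining blocks arbitrarily. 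The deterministic walk from~$c$ should then land, from block $k-1$ onward, only on chain vertices---either halting at $d_\ell$ or cycling back through $d_0,\dots,d_\ell$---none of which starts a $wv$-path, so $c\notin\starttwo{q}{\rep}$.

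The main obstacle is the consistency of this construction. The routing path and the chain both read $u$- and $v$-relations and may revisit constants, so their union need not be consistent; I expect to repair this by a short-circuiting argument in the spirit of Claim~\ref{cla:consistent}, using that every occurrence of a fixed relation name in the periodic trace lies at positions congruent modulo~$\card{uv}$, so excising a loop preserves both the $s\kleene{uv}{m}$ shape and the endpoint, and keeps all high-index vertices among the safe chain vertices. The subtler point is that $wv$-blocking at a chain vertex~$d_i$ touches $v$-relations, which also occur along the chain: since $uvw$ is self-join-free, $w$ shares no relation with~$u$ or~$v$, so the only coupling is through~$v$, and I must argue that the forced chain edges cannot themselves complete a $wv$-path out of some~$d_i$. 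This is exactly where self-join-freeness of $uvw$ together with the structural guarantees behind $\ctwo$ (Lemma~\ref{lem:notctwo}) are needed, and checking that the blocking witnesses can be chosen compatibly with the chain is the crux of the proof.
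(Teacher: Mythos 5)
Your forward direction (\ref{it:to}$\implies$\ref{it:ot}) is correct and is essentially the paper's argument: follow the deterministic periodic walk in the witnessing repair, use finiteness to end in a dead end or a loop, and derive terminality for $wv$ at each chain vertex from the absence of an accepting path out of~$c$. The reduction of~\ref{it:to} to ``$c\notin\starttwo{q}{\rep}$ for some repair'' via Lemmas~\ref{lem:minnfa} and~\ref{lem:stutterb2} also matches how the claim is used afterwards.

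The backward direction, however, contains a genuine gap, and you have named it yourself: the consistency of the repair you build, and the compatibility of the $wv$-blocking choices with the chain, are announced as ``the crux of the proof'' but never carried out. A plan whose hardest step is deferred is not a proof. Moreover, the repair strategy you sketch for the first difficulty does not match what is actually needed. You propose to \emph{short-circuit the routing path} $\alpha$ when it conflicts with the chain $\pi$; but excising a loop from a path with trace $s\kleene{uv}{k-1}$ yields a path with trace $s\kleene{uv}{k'}$ for some $k'<k-1$, so the ``$s\kleene{uv}{m}$ shape'' is \emph{not} preserved in the sense you need, and in any case the conflicts to be resolved are between $\alpha$ and $\pi$, not within $\alpha$. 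The paper resolves this differently: it inserts all of $\pi$ first, then only those facts of $\alpha$ not key-equal to facts already present, and then \emph{abandons the claim that $\alpha$ survives}. Instead it argues that whatever deterministic walk $\delta$ with trace $s\kleene{uv}{k-1}$ the resulting repair admits from~$c$, any deviation of $\delta$ from $\alpha$ must step onto a fact of $\pi$, and since $uv$ is self-join-free each relation name occupies a unique phase in the period, so $\delta$ is thereafter locked onto $\pi$ and its endpoint lands in $\{d_{0},\dots,d_{\ell}\}$ --- after which the blocking choices at the $d_{i}$'s and at $d_{\ell}$ prevent any extension to a trace in $s\kleene{uv}{k-1}\kleene{uv}{*}wv$. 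That phase-alignment argument is the missing idea; your appeal to ``the structural guarantees behind $\ctwo$ (Lemma~\ref{lem:notctwo})'' points in a different direction and is not how the compatibility of the blocking witnesses is established. To complete your proof you would need to (i) replace short-circuiting by the priority-to-$\pi$ construction together with the phase argument, and (ii) actually exhibit, for each $d_{i}$, a choice of blocking facts for $wv$ (and for $uv$ at $d_{\ell}$, if needed) that does not contradict the facts of $\pi$ already fixed.
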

\begin{proof}
Let $wv=S_{0}S_{1}\dotsm S_{m-1}$ and $uv=R_{0}R_{1}\dotsm R_{n-1}$.

\framebox{\ref{it:to}$\implies$\ref{it:ot}}
Assume that item~\ref{it:to} holds true.
Let the first relation name of~$s$ be $R_{i}$.
Starting from~$c$, let $\pi$ be a maximal (possibly infinite) path in~$\rep$ that starts in~$c$ and has trace $R_{i}R_{i+1}R_{i+2}\dotsm$, where addition is modulo~$n$.  
Since $\rep$ is consistent, $\pi$ is deterministic.
Since $\rep$ is finite, $\pi$ contains only finitely many distinct edges.
Therefore, $\pi$ ends either in a loop or in an edge $R_{j}(\underline{d},e)$ such that $\db\models\neg\exists y R_{j+1}(\underline{e},y)$ (recall that $\rep$ contains a fact from every block of $\db$).
Assume that $\pi$ has a prefix $\pi'$ with trace $s\kleene{uv}{k-1}$;
if~$e$ occurs at the non-primary key position of the last $R_{n-1}$-fact of $\pi'$ or of any $R_{n-1}$-fact occurring afterwards in~$\pi$, then it follows from item~\ref{it:to} that there exist a (possibly empty) prefix~$pS_{j}$ of~$wv$ and a constant~$f\in\adom{\rep}$ such that $\rep\models e\step{p}f$ and $\db\models\neg\exists y S_{j}(\underline{f},y)$.   
It is now easily verified that $\db\models O(c)$.

\framebox{\ref{it:ot}$\implies$\ref{it:to}}
Assume $\db\models O(c)$.
It is easily verified that the desired result holds true if $c$ is terminal for~$s\kleene{uv}{k-1}$ in~$\db$.
Assume from here on that $c$ is not terminal for~$s\kleene{uv}{k-1}$ in~$\db$.
That is, for every repair~$\rep$ of~$\db$, there is a constant~$d$ such that $\rep\models c\longstep{s\kleene{uv}{k-1}}d$.
Then, there is a consistent path~$\alpha$ with trace~$s\kleene{uv}{k-1}$ from~$c$ to some constant $d\in\adom{\db}$ such that $\db\models P(d)$, using the stronger definition of~$P$ implied by Claims~\ref{cla:consistent} and~\ref{cla:terminal}.
Let $d_{0},\ldots,d_{\ell}$ be as in our (stronger) definition of~$P(d)$, that is, first, $d_{1},\ldots,d_{\ell-1}$ are not terminal for~$uv$ in~$\db$ (cf.~Claim~\ref{cla:terminal}), and second, there is a $\subseteq$-minimal consistent subset $\pi$ of $\db$ such that $\pi\models d_{0}\step{uv}d_{1}\step{uv}d_{2}\step{uv}\dotsm\step{uv}d_{\ell}$ (cf.~Claim~\ref{cla:consistent}).
We construct a repair~$\rep$ as follows:
\begin{enumerate}
\item\label{it:pipi}
insert into $\rep$ all facts of~$\pi$; %\xiating{what is $\pi$ in this direction?\jef{Added its definition in the preceding paragraph.}}
\item
for every $i\in\{0,\ldots,\ell\}$, $d_{i}$ is terminal for $wv$ in~$\db$.
We ensure that $\rep\models d_{i}\longstep{S_{0}S_{1}\dotsm S_{j_{i}}}e_{i}$ for some $j_{i}\in\{0,\ldots,m-2\}$ and some constant~$e_{i}$ such that $\db\models\neg\exists y S_{j_{i}+1}(\underline{e_{i}},y)$;
\item
if $d_{\ell}$ is terminal for $uv$ in~$\db$, then 
we ensure that $\rep\models d_{\ell}\longstep{R_{0}R_{1}\dotsm R_{j}}e$ for some $j\in\{0,\ldots,n-2\}$ and some constant~$e$ such that $\db\models\neg\exists y S_{j+1}(\underline{e},y)$; 
\item
insert into $\rep$ the facts of~$\alpha$ that are not key-equal to a fact already in~$\rep$; and
\item
complete $\rep$ into a $\subseteq$-maximal consistent subset of~$\db$.
\end{enumerate}
%\jef{We may argue here in more depth that the insertion of ``dangling'' paths in the second and third steps is possible.}
%By our hypothesis that $c$ is not terminal for~$s\kleene{uv}{k-1}$ in~$\db$,
Since $\rep$ is a repair of~$\db$,
there exists a path~$\delta$ with trace $s\kleene{uv}{k-1}$ in~$\rep$ that starts from~$c$.
If $\delta\neq\alpha$, then $\delta$ must contain a fact of~$\pi$ that was inserted in step~\ref{it:pipi}. 
Consequently, no matter whether $\delta=\alpha$ or $\delta\neq\alpha$, the endpoint of~$\delta$ belongs to $\{d_{0},\ldots,d_{\ell}\}$. 
It follows that there is a (possibly empty) path from $\delta$'s endpoint to~$d_{\ell}$ whose trace is of the  form $\kleene{uv}{*}$.
Two cases can occur:
\begin{itemize}
\item
$d_{\ell}$ is terminal for $uv$ in~$\db$.
\item
$d_{\ell}$ is not terminal for $uv$ in~$\db$.
Then there is $j\in\{0,\ldots,\ell-1\}$ such that $d_{j}=d_{\ell}$.
Then, there is a path of the form~$\kleene{uv}{*}$ that starts from $\delta$'s endpoint and eventually loops.
\end{itemize}
Since, by construction, each $d_{i}$ is terminal for~$wv$ in $\rep$, it will be the case that $\delta$ cannot be extended to a path in~$\rep$ whose trace is of the form $s\kleene{uv}{k}\kleene{uv}{*}wv$.
\end{proof}

\begin{claim}\label{cla:datalog}
The unary predicate $O$ is expressible in linear Datalog with stratified negation.
\end{claim}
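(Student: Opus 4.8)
The plan is to express the predicate $O$ by combining the ingredients already built into its definition, each of which I must show is itself expressible in linear Datalog with stratified negation. Recall $\db\models O(c)$ holds when either (1) $c$ is terminal for $s\kleene{uv}{k-1}$ in $\db$, or (2) there is a constant $d$ with $\db\models c\longcqastep{s\kleene{uv}{k-1}}d$ and $\db\models P(d)$. So I must (i) express the ternary-style relation $\db\models a\cqastep{p}b$ for the fixed trace $p = s\kleene{uv}{k-1}$, (ii) express terminality for a fixed trace, and (iii) express the predicate $P$. Once each is in linear Datalog with stratified negation, a single additional stratum combining them via the disjunction above yields $O$.

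First I would handle the consistent-reachability relation $\db\models a\cqastep{p}b$ for a \emph{fixed} self-join-free trace such as $uv$. Because the query is fixed and $uv$ is self-join-free (so any path with trace $uv$ is automatically consistent by the remark following Definition~\ref{def:terminal}), this is just a bounded-length reachability check: a non-recursive conjunction of $|uv|$ atoms threading the intermediate constants, which is plainly first-order and hence linear Datalog. The longer consistent walk $c\longcqastep{s\kleene{uv}{k-1}}d$ is then a fixed-length concatenation of $k$ such blocks (for the fixed $k$), again first-order, because $s(uv)^{k-1}$ itself has fixed length once $k$ is fixed. For terminality (item (ii) and clause (1) of $O$), Lemma~\ref{prop:check-terminal} equates ``$c$ is terminal for $p$ in $\db$'' with ``$\db$ is a no-instance of $\cqa{\fixedhead{p}{c}}$''; by Lemma~\ref{lemma:constant-head-fo} the latter is first-order (uniformly in $c$ via the formula $\psi(x)$), so terminality for any fixed prefix trace is expressible by a first-order formula, which sits in linear Datalog with a single negation stratum.

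The main obstacle is expressing the recursive predicate~$P$, since $P$ quantifies over an \emph{unbounded} number $\ell$ of $uv$-steps and so genuinely needs recursion; the challenge is to keep that recursion \emph{linear} (at most one intensional atom per rule body) while respecting stratified negation. Here I would exploit Claims~\ref{cla:consistent} and~\ref{cla:terminal}: by Claim~\ref{cla:terminal} I may assume every intermediate $d_i$ (for $i<\ell$) is \emph{not} terminal for $uv$, and by Claim~\ref{cla:consistent} I need not track consistency of the composed $uv$-walk inside Datalog (just as \NL-reachability ignores repeated vertices). This lets me define $P$ by a backward recursion: a base rule firing when $d_\ell$ is terminal for $uv$ (or when a loop $d_\ell\in\{d_0,\dots,d_{\ell-1}\}$ closes, detectable because the walk then returns to an already-visited, non-$uv$-terminal node), conjoined with the terminal-for-$wv$ condition at each node; and a single linear recursive rule $P(x)\leftarrow (x\step{uv}y)\land (x\text{ terminal for }wv)\land \lnot(x\text{ terminal for }uv)\land P(y)$. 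Each body uses only one $P$-atom, the terminality conditions come from lower strata (where negation is applied to already-computed first-order/Datalog predicates), and the $uv$-step is the fixed-length first-order relation from the previous paragraph, so the whole definition is linear with negation confined to a lower stratum.

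Putting the pieces together, I stratify as follows: stratum~0 computes the fixed-length step and consistent-reachability relations and, via Lemma~\ref{lemma:constant-head-fo}, the terminality predicates for the relevant fixed traces $uv$, $wv$, and $s\kleene{uv}{k-1}$ (each using negation only over first-order-computable predicates); stratum~1 computes $P$ by the linear recursion above, using the stratum-0 terminality predicates both positively (for $wv$) and negatively (for $uv$); and stratum~2 defines $O(c)$ by the disjunction of ``$c$ terminal for $s\kleene{uv}{k-1}$'' and ``$\exists d\,(c\longcqastep{s\kleene{uv}{k-1}}d \land P(d))$,'' the latter being a linear rule with the single intensional atom $P(d)$. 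Every recursive rule contains at most one intensional predicate in its body and all negations are applied across stratum boundaries, so the resulting program is linear Datalog with stratified negation, which establishes Claim~\ref{cla:datalog}.
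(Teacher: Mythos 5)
Your proposal is correct and follows essentially the same route as the paper: terminality for the fixed traces is pushed into a lower first-order stratum via Lemmas~\ref{lemma:constant-head-fo} and~\ref{prop:check-terminal}, the only recursion is the reachability-style computation for condition~(i) of $P$ justified by Claims~\ref{cla:consistent} and~\ref{cla:terminal}, and a final stratum assembles $O$. The one point you gloss over is the case $d_{\ell}\in\{d_{0},\dots,d_{\ell-1}\}$: a purely unary backward recursion for $P$ cannot detect that the walk closes a cycle, so you need an auxiliary \emph{binary} linearly-recursive reachability predicate (the paper's $\mathtt{uvpath}$) and a rule of the form $P(x)\leftarrow\mathtt{uvpath}(x,y)\land\mathtt{uvpath}(y,y)$ — a routine addition that stays within linear Datalog with stratified negation.
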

\begin{proof}
% Easy.
The construction of the linear Datalog program is straightforward.
Concerning the computation of predicates~$P$ and~$O$, note that it can be checked in \FO\ whether or not a constant $c$ is terminal for some path query $q$, by Lemmas~\ref{lemma:constant-head-fo} and~\ref{prop:check-terminal}.
The only need for recursion comes from condition~\ref{it:dodl} in the definition of the predicate~$P$, which searches for a directed path of a particular form.
We give a program for $q=UVUVWV$, where $\mathtt{c(X)}$ states that~$\mathtt{X}$ is a constant, and $\mathtt{ukey(X)}$ states that $\mathtt{X}$ is the primary key of some $U$-fact. 
$\mathtt{consistent(X1,X2,X3,X4)}$ is true if either $\mathtt{X1\neq X3}$ or $\mathtt{X2=X4}$ (or both).
%\xiating{$\mathtt{X2=X4}$?}
\begin{quote}\footnotesize
\begin{verbatim}
uvterminal(X) :- c(X), not ukey(X).
uvterminal(X) :- u(X,Y), not vkey(Y).
wvterminal(X) :- c(X), not wkey(X).
wvterminal(X) :- w(X,Y), not vkey(Y).

uv2terminal(X) :- uvterminal(X).
uv2terminal(X1) :- u(X1,X2), v(X2,X3), uvterminal(X3).

uvpath(X1,X3) :- u(X1,X2), v(X2,X3), wvterminal(X1), wvterminal(X2), wvterminal(X3).
uvpath(X1,X4) :- uvpath(X1,X2), u(X2,X3), v(X3,X4), wvterminal(X3), wvterminal(X4).

p(X) :- uvterminal(X), wvterminal(X). %%% the empty path.
p(X) :- uvpath(X,Y), uvterminal(Y).
p(X) :- uvpath(X,Y), uvpath(Y,Y). %%% p and uvpath are not mutually recursive.

o(X) :- uv2terminal(X).
o(X1) :- u(X1,X2), v(X2,X3), u(X3,X4), v(X4,X5), consistent(X1,X2,X3,X4), consistent(X2,X3,X4,X5), p(X5).
\end{verbatim}
\end{quote}
The above program is in linear Datalog with stratified negation.
It is easily seen that any path query satisfying~$\btwob$ admits such a program for the predicate~$O$.
%\jef{It can be shown that the program remains correct if we omit the $\mathtt{consistent}$-predicates; however, such a shorter program may not be more efficient as it may consider more possibilities.}
\end{proof}
By Lemmas~\ref{lem:min-start}, \ref{lem:minnfa}, and~\ref{lem:stutterb2}, the following are equivalent:
\begin{enumerate}[label=(\alph*)]
\item
$\db$ is a ``no''-instance of $\cqa{q}$; and
\item\label{it:nltrace}
for every constant $c_{i}\in\adom{q}$,
there is a repair $\rep$ of $\db$ that contains no path that starts in~$c_{i}$ and whose trace is in the language of the regular expression $s\kleene{uv}{k-1}\kleene{uv}{*}wv$.
%\jef{
%We can equivalently write: for every constant $c_{i}\in\adom{q}$,
%there is a repair $\rep$ of $\db$ such that for every constant $d\in\adom{\rep}$, for every $k'\geq k$, we have $\rep\not\models c_{i}\longcqastep{s\kleene{uv}{k'}wv}d$.
%}
\end{enumerate}
By Claim~\ref{cla:oc}, item~\ref{it:nltrace} holds true if and only if for every $c\in\adom{\db}$, $\db\models\neg O(c)$.
It follows from Claim~\ref{cla:datalog} that the latter test is in linear Datalog with stratified negation, which concludes the proof of Lemma~\ref{lem:nl-algo}.
\end{proof}

	\section{Complexity Lower Bounds}
\label{sec:hardness}

In this section, we show the complexity lower bounds of Theorem~\ref{thm:main}.
For a path query $q=\{R_1(x_1, x_2)$, \dots, $R_k(x_k, x_{k+1})\}$ and constants $a,b$, we define the following database instances:
\begin{eqnarray*}
\phi_a^b[q] & \defeq & \{R_1(a, \Box_2), R_2(\Box_2, \Box_3), \dots, R_k(\Box_k,b) \}\\
\phi_a^\bot[q] & \defeq & \{R_1(a, \Box_2), R_2(\Box_2, \Box_3), \dots, R_k(\Box_k,\Box_{k+1}) \} \\
\phi_\bot^b[q] & \defeq & \{R_1(\Box_1, \Box_2), R_2(\Box_2, \Box_3), \dots, R_k(\Box_k,b) \} 
\end{eqnarray*}
where the symbols $\Box_i$ denoted fresh constants not occurring elsewhere.
Significantly, two occurrences of $\Box_{i}$ will represent different constants.

\subsection{\NL-Hardness}\label{sec:nlhard}

We first show that if a path query violates $\cone$, then $\cqa{q}$ is \NL-hard, and therefore not in \FO.

\begin{lemma}
\label{lemma:nl-hard}
If a path query $q$ violates $\cone$, then $\cqa{q}$ is \NL-hard.
\end{lemma}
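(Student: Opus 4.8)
The claim is that if a path query $q$ violates $\cone$, then $\cqa{q}$ is $\NL$-hard.\textbf{The goal and the reduction target.}
The plan is to reduce a known $\NL$-complete problem to $\cqa{q}$. The natural choice is directed reachability (the \textsc{stConn} problem): given a directed graph $G$ with designated vertices $s,t$, decide whether $t$ is reachable from $s$. Equivalently, one often reduces from the complement, since $\NL = \mathsf{coNL}$ by Immerman--Szelepcs\'enyi; because $\cqa{q}$ asks whether \emph{every} repair satisfies $q$, a ``no''-instance is witnessed by a single bad repair, and this existential flavor of the complement tends to match reachability-style reductions cleanly. So I would aim to build, from an instance of (non-)reachability, a database instance $\db$ that is a ``yes''-instance of $\cqa{q}$ iff the reachability instance is positive.

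\textbf{Exploiting the violation of $\cone$.}
The key structural input is what it means to violate $\cone$. By the definition in Section~\ref{sec:classification}, there are words $u,v,w$ with $q = uRvRw$ such that $q$ is \emph{not} a prefix of $uRvRvRw$. Concretely, writing $q = uRvRt$ with $t$ the remaining suffix, the failure means $Rt$ is not a prefix of $Rv$, i.e.\ there is a genuine ``bifurcation'': rewinding the factor $RvR$ produces a word that diverges from $q$. This is exactly the phenomenon illustrated by the $ARRX$ example in Figure~\ref{fig:example-arrx}, where the two $R$-branches out of a vertex lead to incompatible continuations. The first step is therefore to isolate this minimal bifurcating witness from the word $q$ and extract the specific positions where a repair choice forces a commitment that cannot be undone by rewinding. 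I expect that the violation of $\cone$ guarantees a factor of $q$ of the form $R v R t$ with $t$ nonempty and $Rt \not\le Rv$, and this factor is the ``gadget seed'' of the reduction.

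\textbf{Building the gadget and the reduction.}
Using the gadget templates $\phi_a^b[q]$, $\phi_a^\bot[q]$, $\phi_\bot^b[q]$ defined at the start of Section~\ref{sec:hardness} (self-join-free chains of fresh constants realizing a prefix/factor of $q$ between named endpoints), I would chain copies of the bifurcation gadget along the edges of the graph $G$. Each vertex of $G$ becomes a choice point where a repair must commit to one outgoing edge of the conflicting block; the $\cone$-violation ensures that committing ``wrongly'' at a block destroys the ability to complete a $q$-trace, so that completing $q$ in every repair corresponds precisely to the existence (or nonexistence, depending on orientation) of a consistent path through the graph. The source $s$ and sink $t$ of the reachability instance are wired to the first and last atoms of $q$ via the $\phi_a^b$-style gadgets so that a surviving $q$-witness threads exactly through a path from $s$ to $t$. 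The reduction must be shown to be computable in logarithmic space, which is routine since the gadgets have size bounded by $\card{q}$ (a constant) and are laid out locally per edge.

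\textbf{Main obstacle.}
The hard part will be the correctness argument: proving that the repairs of $\db$ are in faithful correspondence with the relevant path structures in $G$, and that the $\cone$-violation is \emph{necessary and sufficient} for the gadget to propagate a ``commitment'' that cannot be healed by rewinding. One must argue carefully that an adversarial repair cannot create a spurious $q$-trace by exploiting self-joins and the backward (rewinding) transitions of $\nfa{q}$ — that is, one must rule out that a ``wrong'' branch accidentally still admits some word in $\pumpclosure{q}$ that contains $q$ as a factor in an unintended place. Controlling these unintended traces, and showing the bifurcation genuinely blocks them, is where the combinatorics of the word $q$ (the precise sense in which $Rt \not\le Rv$) does the real work; everything else is bookkeeping.
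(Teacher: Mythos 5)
Your overall plan coincides with the paper's: reduce $\problem{REACHABILITY}$ (in fact, reduce reachability to the \emph{complement} of $\cqa{q}$, using that \NL\ is closed under complement) by laying the gadgets $\phi_a^b[\cdot]$ along the input graph, with the bifurcation coming from a factorization $q=uRvRw$ witnessing the failure of $\cone$. So the approach is the right one. However, the proposal stops essentially where the proof has to start. The gadget layout is never pinned down; the one that works is to attach a fresh $u$-path \emph{into} every vertex ($\phi_\bot^x[u]$), an $Rv$-path along every edge ($\phi_x^y[Rv]$), and an $Rw$-path \emph{out of} every vertex ($\phi_x^\bot[Rw]$), after padding the graph with extra edges $s'\to s$ and $t\to t'$; then the only nontrivial blocks are the $R$-blocks at the vertices, where a repair chooses between continuing along an edge and exiting via $Rw$.

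More importantly, the correctness argument — which you explicitly identify as the hard part and then defer — rests on two concrete consequences of violating $\cone$ that you never extract. Since $q=uRvRw$ is not a prefix of $uRvRvRw$, the word $Rw$ is not a prefix of $RvRw$; note that your version of the condition, ``$Rt$ is not a prefix of $Rv$'', is not equivalent when $\card{Rw}>\card{Rv}$. From this it follows that there is no homomorphism from $q$ to $u\kleene{Rv}{k}$ for any $k$; and since $Rv\neq\emptyword$, there is also no homomorphism from $q$ to $uRw$. These two non-homomorphism facts are exactly what rule out the ``spurious traces'' you worry about: in the repair built from a witnessing $s$--$t$ path, any satisfying valuation would have to land entirely inside a path traced $u\kleene{Rv}{k}$ or inside one traced $uRw$, both impossible. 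For that step one also needs the constants along these paths not to repeat, which is why the paper reduces from reachability restricted to \emph{acyclic} graphs (still \NL-complete); your sketch does not address this. None of these points is insurmountable, but without them the argument is a plan rather than a proof.
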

\begin{proof}
Assume that $q$ does not satisfy $\cone$. 
Then, there exists a relation name~$R$ such that $q=uRvRw$ and $q$ is not a prefix of $uRvRvRw$.
It follows that $Rw$ is not a prefix of $RvRw$.
Since $Rv\neq\emptyword$, there exists no (conjunctive query) homomorphism from $q$ to $uRw$.

The problem $\problem{REACHABILITY}$ takes as input a directed graph $G(V,E)$ and two vertices $s,t\in V$, and asks whether $G$ has a directed path from~$s$ to~$t$. This problem is \NL-complete and remains \NL-complete when the inputs are acyclic graphs.
Recall that \NL\ is closed under complement.
We present a first-order reduction from $\problem{REACHABILITY}$ to the complement of $\cqa{q}$, for acyclic directed graphs.

Let $G =(V,E)$ be an acyclic directed graph and $s,t\in V$. 
Let $G'=(V \cup \{s',t'\}, E \cup \{(s',s),(t,t')\})$, where $s',t'$ are fresh vertices. 
We construct an input instance $\db$ for $\cqa{q}$ as follows:
\begin{itemize}
\item for each vertex $x \in V \cup \{s'\}$, we add $\phi_\bot^x[u]$; 
\item for each edge $(x, y) \in E \cup \{(s',s),(t,t')\}$, we add $\phi_x^y[Rv]$; and
\item for each vertex $x \in V$, we add $\phi^\bot_x[Rw]$.
\end{itemize}
This construction can be  executed in \FO.
Figure~\ref{fig:NL} shows an example of the above construction.
Observe that the only conflicts in $\db$ occur in $R$-facts outgoing from a same vertex.

\begin{figure}[h]\centering
    \begin{tikzpicture}[->,>=stealth,auto=left, scale=1.8,vnode/.style={circle,black,inner sep=1pt,scale=1},el/.style = {inner sep=3/2pt}]
      \node[vnode] (s') at (0,0) {$s'$};
      \node[vnode] (s) at (1,0) {$s$};
      \node[vnode] (a) at (2,0) {$a$};
      \node[vnode] (t) at (3,0) {$t$};
      \node[vnode] (t') at (4,0) {$t'$};
      
      \node[vnode] (ps') at (0,2/3) {};
      \node[vnode] (ps) at (1,2/3) {};
      \node[vnode] (pa) at (2,2/3) {};
      \node[vnode] (pt) at (3,2/3) {};
      
       \node[vnode] (ts) at (1,-2/3) {};
      \node[vnode] (ta) at (2,-2/3) {};
      \node[vnode] (tt) at (3,-2/3) {};
      
       \path[->] (ps') edge node[el] {$u$} (s'); 
       \path[->] (ps) edge node[el] {$u$} (s); 
       \path[->] (pa) edge node[el] {$u$} (a); 
       \path[->] (pt) edge node[el] {$u$} (t); 
       
       \path[->] (s) edge node[el] {$Rw$} (ts); 
       \path[->] (a) edge node[el] {$Rw$} (ta); 
       \path[->] (t) edge node[el] {$Rw$} (tt);

       \path[->] (s') edge node[el] {$Rv$} (s); 
       \path[->] (s) edge node[el] {$Rv$} (a); 
       \path[->] (a) edge node[el] {$Rv$} (t); 
       \path[->] (t) edge node[el] {$Rv$} (t'); 
    \end{tikzpicture}
    \caption{Database instance for the \NL-hardness reduction from the graph $G$ with  $V=\{s,a,t\}$ and $E=\{(s,a),(a,t)\}$.}
    \label{fig:NL}
\end{figure}
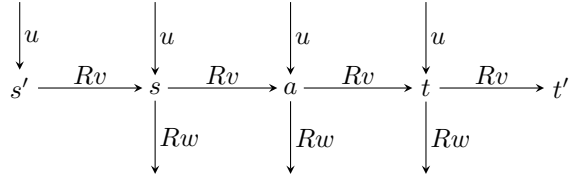

We now show that there exists a directed path from $s$ to $t$ in $G$ if and only if there exists a repair of $\db$ that does not satisfy $q$. 

\framebox{$\implies$}
Suppose that there is a directed path from $s$ to $t$ in $G$. 
Then, $G'$ has a directed path $P = s, x_0, x_1, \dots,t, t'$. 
Then, consider the repair $\rep$ that chooses the first $R$-fact from 
$\phi_x^y[Rv]$ for each edge $(x, y)$ on the path $P$, and the first $R$-fact from $\phi_y^\bot[Rw]$ for each $y$ not on the path $P$. 
We show that $\rep$ falsifies~$q$. 
Assume for the sake of contradiction that $\rep$ satisfies~$q$.
Then, there exists a valuation~$\theta$ for the variables in~$q$ such that $\theta(q)\subseteq\rep$.
Since, as argued in the beginning of this proof, there exists no (conjunctive query) homomorphism from $q$ to $uRw$,
it must be that all facts in $\theta(q)$ belong to a path in~$\rep$ with trace $u\kleene{Rv}{k}$, for some $k\geq 0$.
Since, by construction, no constants are repeated on such paths,
there exists a (conjunctive query) homomorphism from $q$ to $u\kleene{Rv}{k}$, which implies that $Rw$ is a prefix of $RvRw$, a contradiction.
We conclude by contradiction that $\rep$ falsifies~$q$.

\framebox{$\impliedby$}
Proof by contradiction.
Suppose that there is no directed path from $s$ to~$t$ in~$G$. Let $\rep$ be any repair of $\db$; we will show that $\rep$ satisfies $q$. Indeed, there exists a maximal path $P = x_0, x_1, \dots, x_n$ such that $x_0 = s'$, $x_1 = s$, and $\phi_{x_i}^{x_{i+1}}[Rv] \subseteq \rep$. By construction, $s'$ cannot reach $t'$ in $G'$, and thus $x_n \neq t'$. Since $P$ is maximal, we must have $\phi_{x_n}^\bot[Rw] \subseteq \rep$. 
Then $\phi_\bot^{x_{n-1}}[u] \cup \phi_{x_{n-1}}^{x_{n}}[Rv] \cup  \phi_{x_n}^\bot[Rw] $ satisfies~$q$.
\end{proof}

\subsection{\coNP-Hardness}~\label{sec:conphard}
Next, we show the \coNP-hard lower bound.

\begin{lemma}
\label{lemma:conp-hard}
	If a path query $q$ violates $\cthree$, then $\cqa{q}$ is \coNP-hard.
\end{lemma}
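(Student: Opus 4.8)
The plan is to prove \coNP-hardness by reducing a suitable \NP-complete satisfiability problem to the \emph{complement} of $\cqa{q}$, i.e.\ to the problem of deciding whether \emph{some} repair of the input falsifies $q$. Concretely, I would reduce from $3\problem{SAT}$ (a monotone variant would also do), mapping a CNF formula $\phi$ to a database instance $\db_{\phi}$ so that $\phi$ is satisfiable if and only if $\db_{\phi}$ has a repair in which no path has a trace containing $q$ as a factor. The leverage for the whole construction is exactly the hypothesis that $q$ violates $\cthree$: by Lemma~\ref{lem:pumpclosure}\eqref{it:lc3}, there is a word $p\in\pumpclosure{q}$ such that $q$ is \emph{not} a factor of $p$. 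Fixing a single rewind that already breaks factorhood, I obtain a decomposition $q=uRvRw$ for which $q$ is not a factor of $p_{0}\defeq uRvRvRw$.

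First I would isolate the atomic \emph{bifurcation gadget}, generalizing Figure~\ref{fig:example-arrx}. A block $R(\underline{a},*)$ with two outgoing facts offers a binary choice: one branch completes the surrounding path so that its trace equals $q$ (the query is \emph{present}), while the other branch inserts an extra copy of the factor $Rv$ so that the whole path has trace $p_{0}$ (the query is \emph{absent}, since $q$ is not a factor of $p_{0}$). I would build these branches from the path templates $\phi_a^b[\cdot]$, $\phi_a^\bot[\cdot]$, $\phi_\bot^b[\cdot]$ introduced at the start of this section, using the fresh constants $\Box_i$ so that distinct occurrences never coincide. Selecting the ``detour'' branch encodes a true literal (it locally destroys $q$); selecting the ``short'' branch encodes a false literal (it locally realizes $q$).

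Next I would wire these gadgets into a clause-and-variable assembly. Each clause $C=\ell_1\vee\ell_2\vee\ell_3$ becomes a segment that, read forward, realizes $q$ unless at least one of its three literal-branches takes the detour; thus the segment can avoid a $q$-factor exactly when the clause is satisfied. Occurrences of a variable across clauses are forced to agree by routing them through a shared block, so a single repair choice fixes the literal's value everywhere. A repair of $\db_{\phi}$ then corresponds to a truth assignment, and the repair contains no $q$-factor path if and only if every clause segment is broken, i.e.\ the assignment satisfies $\phi$. Correctness is then two implications: a satisfying assignment yields a falsifying repair, and conversely any falsifying repair induces a satisfying assignment; throughout I use only that a repair satisfies $q$ precisely when some path in it has $q$ as a factor of its trace, together with the fact (Lemma~\ref{lem:pumpclosure}) that every detour trace lies in $\pumpclosure{q}$ yet omits $q$ as a factor.

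The main obstacle is ruling out \emph{spurious} copies of $q$. Because $q$ has self-joins, a repair path could in principle realize $q$ by spanning gadget boundaries, or by exploiting the periodic overlap that underlies rewinding, rather than by the intended local completion. I must therefore show that the only conjunctive-query homomorphisms from $q$ into any repair of $\db_{\phi}$ are the intended ones. This rigidity is where the full strength of $q\notin\cthree$ is needed: the two interfering periods witnessed by the simultaneous failure of $\btwoa$, $\btwob$, and $\bthree$ (Lemma~\ref{lem:forms}) cannot realign, so no detour-branch trace can be ``patched'' back into one containing $q$. To make this uniform over all queries violating $\cthree$, I expect to first prove a normal-form lemma extracting a canonical shortest hard factor (the shortest cases being $RXRXRYRY$-like words), build the gadget around that factor, and then argue that the fresh-constant templates forbid every unintended homomorphism. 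Verifying this last point is the delicate, calculation-heavy core of the argument.
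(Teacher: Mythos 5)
Your overall strategy coincides with the paper's: reduce $\problem{SAT}$ to the complement of $\cqa{q}$ via a decomposition $q=uRvRw$ for which $q$ is not a factor of $uRvRvRw$, with variable gadgets choosing between a branch that completes $q$ and a branch that realizes the rewound word, and clause gadgets choosing a literal. But the proposal has two genuine gaps. First, your bifurcation gadget as described handles only one polarity: the branch without the extra $Rv$ realizes $q$, the branch with it realizes $uRvRvRw$. To encode a variable occurring both positively and negatively with a \emph{single} shared block $R(\underline{z},*)$, the same binary choice (between the first facts of $\phi_z^\bot[Rw]$ and $\phi_z^\bot[RvRw]$) must have opposite effects for the two polarities. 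The paper achieves this by giving positive occurrences an incoming $u$-path and negative occurrences an incoming $uRv$-path from the clause vertex, so the four combinations yield traces $uRw$, $uRvRw$, $uRvRw$, $uRvRvRw$, of which exactly the middle two contain $q$. This incoming-prefix trick is the heart of the reduction and is absent from your plan; merely ``routing occurrences through a shared block'' does not make the polarities come out right. (Relatedly, one must note that $u\neq\emptyword$ --- otherwise $q=RvRw$ would be a suffix of $RvRvRw$ and $\cthree$ would not be violated by this decomposition --- so that the clause's disjunctive choice lives in a block over $\first{u}$, disjoint from the variable's $R$-block.)

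Second, you defer what you yourself identify as the ``delicate, calculation-heavy core'' --- excluding spurious copies of $q$ --- and propose to attack it via Lemma~\ref{lem:forms} and a normal-form analysis of the failing $\btwoa$/$\btwob$/$\bthree$ periods. That route is both unexecuted in your write-up and unnecessary. Because every gadget is built from the templates $\phi_a^b[\cdot]$ with fresh constants $\Box_i$, the instance is a union of simple clause-to-sink paths meeting only at clause and variable vertices, and in a repair each vertex has at most one outgoing fact per relation name; hence every homomorphic image of $q$ lies on one such path, and its trace is a factor of $uRw$, of $uRvRw$, or of $uRvRvRw$. The first is impossible by length, the third is exactly the violated $\cthree$ condition, and the second forces the image to be an intended clause-literal-variable path realizing $q$. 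No appeal to Lemma~\ref{lem:forms}, to the failure of $\btwoa$, $\btwob$, $\bthree$, or to a canonical shortest hard factor is needed; insisting on that machinery risks turning a short argument into one you cannot finish.
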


\begin{proof}
	If $q$ does not satisfy $\cthree$, then there exists a relation $R$ such that $q = u Rv  Rw$ and $q$ is not a factor of $u  Rv  Rv Rw$. Note that this means that there is no homomorphism from $q$ to $u  Rv  Rv Rw$. Also, $u$ must be nonempty (otherwise, $q = RvRw$ is trivially a suffix of $RvRvRw$). Let $S$ be the first relation of $u$.

	The proof is a first-order reduction from $\problem{SAT}$ to the complement of $\cqa{q}$. 
	The problem $\problem{SAT}$ asks whether a given propositional formula in CNF has a satisfying truth assignment.

	Given any formula $\psi$ for $\problem{SAT}$, we construct an input instance $\db$ for $\cqa{q}$ as follows:
	\begin{itemize}
	\item for each variable $z$, we add $\phi_{z}^\bot[Rw]$ and $\phi_z^\bot[RvRw]$; 
	\item for each clause $C$ and positive literal $z$ of $C$, we add $\phi_{C}^z[u]$;
	\item for each clause $C$ and variable $z$ that occurs in a negative literal of $C$, we add $\phi_{C}^z[uRv]$.
	\end{itemize}
This construction can be  executed in \FO.
Figure~\ref{fig:coNP} depicts an example of the above construction. Intuitively, $\phi_{z}^\bot[Rw]$ corresponds to setting the variable $z$ to true, and $\phi_z^\bot[RvRw]$ to false. 
There are two types of conflicts that occur in~$\db$. First, we have conflicting facts of the form $S(\underline{C},*)$; resolving this conflict corresponds to the clause $C$ choosing one of its literals. Moreover, for each variable~$z$, we have conflicting facts of the form $R(\underline{z},*)$; resolving this conflict corresponds to the variable $z$ choosing a truth assignment.

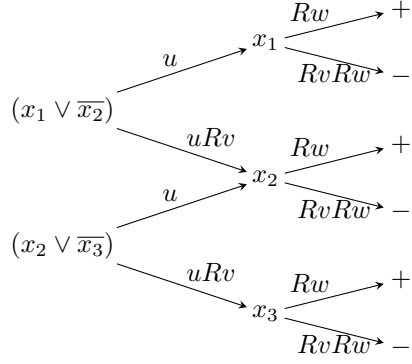
\begin{figure}[h]\centering
    \begin{tikzpicture}[->,>=stealth,auto=left, scale=1.8,vnode/.style={circle,black,inner sep=1pt,scale=1},el/.style = {inner sep=3/2pt}]
      \node[vnode] (c2) at (-1/2,0) {$(x_2 \vee \overline{x_3})$};
       \node[vnode] (c1) at (-1/2,1) {$(x_1 \vee \overline{x_2})$};
      
      \node[vnode] (x3) at (1,-1/2) {$x_3$};
      \node[vnode] (x2) at (1,1/2) {$x_2$};
      \node[vnode] (x1) at (1,3/2) {$x_1$};
      
      \node[vnode] (x3f) at (2,-1+1/4) {$-$};
      \node[vnode] (x3t) at (2,-1/2+1/4) {$+$};
      \node[vnode] (x2f) at (2,1/4) {$-$};
      \node[vnode] (x2t) at (2,1/2+1/4) {$+$};
      \node[vnode] (x1f) at (2,1+1/4) {$-$};
      \node[vnode] (x1t) at (2,3/2+1/4) {$+$};

      \path[->] (c1) edge node[el] {$u$} (x1); 
      \path[->] (c1) edge node[el] {$uRv$} (x2); 
      \path[->] (c2) edge node[el] {$u$} (x2); 
      \path[->] (c2) edge node[el] {$uRv$} (x3); 
      
      \path[->] (x1) edge node[el] {$Rw$} (x1t); 
       \path[->] (x1) edge node[el,below] {$RvRw$} (x1f); 
       \path[->] (x2) edge node[el] {$Rw$} (x2t); 
       \path[->] (x2) edge node[el,below] {$RvRw$} (x2f); 
       \path[->] (x3) edge node[el] {$Rw$} (x3t); 
       \path[->] (x3) edge node[el,below] {$RvRw$} (x3f); 
    \end{tikzpicture}
    \caption{Database instance for the \coNP-hardness reduction from the formula $\psi = (x_1 \vee \overline{x_2}) \wedge (x_2 \vee \overline{x_3})$.}
    \label{fig:coNP}
\end{figure}

	We show now that $\psi$ has a satisfying truth assignment if and only if there exists a repair of $\db$ that does not satisfy $q$.

\framebox{$\implies$}
	Assume that there exists a satisfying truth assignment $\sigma$ for~$\psi$. Then for any clause $C$, there exists a variable $z_C \in C$ whose corresponding literal is true in $C$ under $\sigma$. 
	Consider the repair $\rep$~that:
	\begin{itemize}
	\item for each variable $z$, it chooses the first $R$-fact of $\phi_{z}^\bot[Rw]$ if $\sigma(z)$ is true, otherwise the first $R$-fact of $\phi_z^\bot[RvRw]$;
	\item for each clause $C$, it chooses the first $S$-fact of  $\phi_{C}^z[u]$ if $z_C$ is positive in $C$, or  the first $S$-fact of $\phi_{C}^z[uRv]$ if $z_C$ is negative in $C$.
	\end{itemize} 
Assume for the sake of contradiction that $\rep$ satisfies~$q$. 
Then we must have a homomorphism from $q$ to either $uRw$ or $uRvRvRw$. But the former is not possible, while the latter contradicts $\cthree$.
We conclude by contradiction that $\rep$ falsifies~$q$.
	
\framebox{$\impliedby$}
Suppose that there exists a repair $\rep$ of $\db$ that falsifies~$q$. Consider the assignment $\sigma$:
$$
\sigma(z) = 
\begin{cases} 
\mbox{true} & \mbox{if $\phi_{z}^\bot[Rw] \subseteq \rep$}\\ 
\mbox{false} & \mbox{if $\phi_{z}^\bot[RvRw] \subseteq \rep$}
\end{cases}
$$ 
We claim that $\sigma$ is a satisfying truth assignment for $\psi$. 
Indeed, for each clause $C$, the repair must have chosen a variable $z$ in $C$. If $z$ appears as a positive literal in $C$, then 
	$\phi_C^z[u] \subseteq\rep$.
	% $\psi_{z,C}^+(p^+_{cl}) \subseteq r$. 
	Since $\rep$ falsifies $q$, we must have $\phi_z^\bot[Rw] \subseteq \rep$. Thus, $\sigma(z)$ is true and $C$ is satisfied.
	 If~$z$ appears in a negative literal, then $\phi_C^z[uRv] \subseteq \rep$. 
	 Since $\rep$ falsifies~$q$, we must have $\phi_z^\bot[RvRw] \subseteq \rep$. Thus, $\sigma(z)$ is false and $C$ is again~satisfied.
\end{proof}

\subsection{\PTIME-Hardness}\label{sec:phard}

Finally, we show the \PTIME-hard lower bound.

\begin{lemma}
\label{lemma:p-hard}
If a path query $q$ violates $\ctwo$, then $\cqa{p}$ is \PTIME-hard.
\end{lemma}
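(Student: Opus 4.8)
The plan is to prove \PTIME-hardness by a first-order reduction from the Monotone Circuit Value Problem $\problem{MCVP}$ (equivalently, the AND/OR-graph alternating reachability problem), a standard \PTIME-complete problem, targeting the complement of $\cqa{q}$ (``some repair falsifies $q$''), exactly as the \NL- and \coNP-hardness reductions of Lemmas~\ref{lemma:nl-hard} and~\ref{lemma:conp-hard} did. Since those reductions are first-order and \PTIME is closed under complement, reducing $\problem{MCVP}$ (or its complement) to the complement of $\cqa{q}$ suffices. First I would dispose of the easy case: if $q$ violates $\cthree$, then by Lemma~\ref{lemma:conp-hard} the problem $\cqa{q}$ is already \coNP-hard, and since $\PTIME\subseteq\coNP$ this yields \PTIME-hardness. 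Hence I may assume that $q$ satisfies $\cthree$ but violates $\ctwo$. Now Lemma~\ref{lem:notctwo} supplies words $u,v,w$ with $u\neq\emptyword$ and $uvw$ self-join-free such that either (a)~$v\neq\emptyword$ and $\last{u}\cdot wuvu\cdot\first{v}$ is a factor of $q$, or (b)~$v=\emptyword$, $w\neq\emptyword$, and $\last{u}\cdot w\kleene{u}{2}\cdot\first{u}$ is a factor of $q$. Both factors exhibit \emph{three} consecutive occurrences of the repeated relation name $\last{u}$, and this third occurrence (together with the inequality $v_1\neq v_2$ and the failure of ``$Rw$ is a prefix of $Rv_1$'' in the definition of $\ctwo$) is the resource I will exploit to build a conjunction (``AND'') gadget, the feature missing from the single-path routing of the \NL-hardness reduction.

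Next I would build, for a given monotone circuit, a database instance $\db$ from three kinds of gadgets, reusing the building blocks $\phi_a^b[\cdot]$, $\phi_a^\bot[\cdot]$, and $\phi_\bot^b[\cdot]$ from the start of Section~\ref{sec:hardness}. Each gate $g$ is represented by a vertex whose local block of conflicting facts encodes a Boolean value, the intended reading being whether a path carrying an appropriate prefix-trace of $q$ can be completed to a full $q$-path through $g$. Input gates are realized by including only the single fact that forces their prescribed value. An OR-gate is realized in the spirit of the disjunctive clause/literal choice of the \coNP-hardness reduction (and of the reachability routing of the \NL-hardness reduction): the gate can start a $q$-completing path as soon as one of its inputs does. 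The crucial AND-gate gadget uses the doubled factor $uvu$ in case~(a) or $\kleene{u}{2}$ in case~(b) around the third occurrence of $\last{u}$: the two incoming wires are attached to the two distinct rewound copies produced at that repeated symbol, so that avoiding the completion of $q$ at $g$ requires \emph{both} independent local conditions to fail, i.e., the gate is forced to $\mathrm{true}$ precisely when both inputs are $\mathrm{true}$. Finally I would prepend a fixed initial segment so that every $q$-path must originate at a single designated source attached to the output gate, funneling the whole computation through the output, just as the fresh vertices $s',t'$ and the clause-sources did in the earlier reductions.

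I would then establish correctness by induction over the gates, from inputs toward the output, showing that the value each gadget is forced to carry in a repair that avoids $q$ coincides with the gate's value in the circuit. This step relies on $q$ satisfying $\cthree$: by Lemma~\ref{lem:pumpclosure}, every trace obtained by rewinding still contains $q$ as a factor, which is exactly what forbids the ``cheating'' completions that would otherwise collapse the AND-gadget (and which is why the plain SAT gadget of the \coNP case no longer applies once $\cthree$ holds). The equivalence to prove is that some repair of $\db$ falsifies $q$ if and only if the output gate evaluates to the designated value; together with \PTIME-completeness of $\problem{MCVP}$ and closure of \PTIME under complement, this gives \PTIME-hardness of $\cqa{q}$, and the construction is plainly first-order in the circuit.

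The main obstacle is the precise design and verification of the AND-gate gadget. One must align the two input wires with the two rewound copies of the repeated factor $uvu$ (resp.\ $\kleene{u}{2}$) identified by Lemma~\ref{lem:notctwo}, and then check, using that exact factor, that no \emph{unintended} $q$-path arises, either from interactions between adjacent gadgets or from the self-join-free segments $u$, $v$, $w$ being matched out of place. Carrying this out uniformly for cases~(a) and~(b), and verifying that the gadgets compose into a single instance without spurious $q$-paths, is where essentially all of the careful combinatorial work lies.
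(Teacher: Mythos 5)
Your overall skeleton matches the paper's: first dispose of the case where $q$ also violates $\cthree$ via Lemma~\ref{lemma:conp-hard}, then reduce $\problem{MCVP}$ for the remaining case, exploiting the three consecutive occurrences of the repeated relation name that the violation of the second clause of $\ctwo$ provides (the paper works directly with $q=uRv_1Rv_2Rw$, $v_1\neq v_2$, $Rw$ not a prefix of $Rv_1$, rather than routing through Lemma~\ref{lem:notctwo}, but that difference is cosmetic). The genuine gap is in the gadget design, where you have the two mechanisms swapped. Under the $\forall$-repair semantics, a conflicting block is a choice made by the adversary (the repair), so attaching both children of a gate to conflicting facts with the same trace forces $q$ exactly when \emph{both} children force it --- this is how the paper realizes the \textsf{AND} gate, with $\phi_{g}^{g_1}[Rv_1]\cup\phi_{g}^{g_2}[Rv_1]$ conflicting at $g$; it is the same mechanism as the clause gadget of the \coNP\ reduction. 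Dually, the \textsf{OR} gate is the one that genuinely needs the $v_1\neq v_2$ structure: writing $v_1=vv_1^+$ and $v_2=vv_2^+$ with $\first{v_1^+}\neq\first{v_2^+}$, the paper creates a \emph{non-conflicting} branch point $c_1$ whose two outgoing continuations (traces $v_1^+$ and $v_2^+$) both survive in every repair, so that $q$ is forced as soon as \emph{either} child is forced. Your proposal asserts the opposite assignment: you realize \textsf{OR} ``in the spirit of the disjunctive clause/literal choice'' (a conflict block, which yields \textsf{AND} semantics), and you build \textsf{AND} from ``two distinct rewound copies'' attached to the two inputs so that ``avoiding the completion of $q$ at $g$ requires both independent local conditions to fail'' --- but that sentence describes a gate forced to true when \emph{at least one} input is true, i.e., an \textsf{OR} gate, contradicting your own conclusion in the same sentence. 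As written, the construction would compute the wrong Boolean function.

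A second, smaller gap: the correctness argument for the backward direction (output evaluates to $0$ implies some repair falsifies $q$) is not just ``careful combinatorial work.'' In the paper it requires a case analysis on how a hypothetical homomorphic image of $Rv_2Rw$ could thread through the \textsf{OR} gadget, and one sub-case ($Rv_1$ a prefix of $Rw$) cannot be refuted locally at all: the paper instead shows that in that sub-case $q$ rewinds to $uRv_1\kleene{Rv_2}{k}Rw$, which for large $k$ does not contain $q$ as a factor, so $q$ in fact violates $\cthree$ and one falls back to \coNP-hardness via Lemmas~\ref{lem:pumpclosure} and~\ref{lemma:conp-hard}. This fallback is an essential part of the proof that your plan does not anticipate; your appeal to $\cthree$ (``every rewound trace still contains $q$ as a factor'') is used in the paper's upper-bound arguments, not to validate these gadgets.
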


\begin{proof}
	Suppose $q$ violates $\ctwo$. If $q$ also violates  $\cthree$ , then the problem $\cqa{q}$ is \PTIME-hard since it is \coNP-hard by Lemma~\ref{lemma:conp-hard}. 
	Otherwise, it is possible to write $q = u  Rv_1  Rv_2 Rw$, with three consecutive occurrences of $R$ such that $v_1 \neq v_2$ and $Rw$ is not a prefix of $Rv_1$. 
	Let $v$ be the maximal path query such that $v_1 = v v_1^+$ and $v_2 = v v_2^+$. Thus $v_1^+ \neq v_2^+$ and the first relation names of $v_1^+$ and $v_2^+$ are different.

Our proof is a reduction from the \textsf{Monotone Circuit Value Problem (MCVP)} known to be \PTIME-complete \cite{10.1145/1008354.1008356}:

\begin{description}
\item[Problem:] $\problem{MCVP}$
	\item[Input:] A monotone Boolean circuit $C$ on inputs $x_1$, $x_2$, $\dots$, $x_n$ and output gate $o$; an assignment $\sigma: \{x_i \mid 1 \leq i \leq n\} \rightarrow \{0, 1\}$.
	\item[Question:] What is the value of the output $o$ under $\sigma$?
\end{description}
We construct an instance $\db$ for $\cqa{q}$ as follows:
\begin{itemize}
\item for the output gate $o$, we add $\phi_\bot^{o}[uRv_1]$;
% \item for each input variable $x$ with $\sigma(x)=0$, we add $\phi_\bot^{x}[u]$;
\item for each input variable $x$ with $\sigma(x)=1$, we add % $\phi_\bot^{x}[u]$ and 
$\phi_x^\bot[Rv_2Rw]$;
\item for each gate $g$, we add $\phi_\bot^g[u] $ and $\phi_g^\bot[Rv_2Rw]$;
\item for each \textsf{AND} gate $g = g_1 \land g_2$, we add 
 	$$\phi_{g}^{g_1}[Rv_1] \cup \phi_{g}^{g_2}[Rv_1].$$
Here, $g_{1}$ and $g_{2}$ can be gates or input variables; and	
\item for each \textsf{OR} gate $g = g_1 \lor g_2$, we add 
$$\setlength{\arraycolsep}{2pt}
 	\begin{array}{*{6}{l}}
 	  & \phi_g^{c_1}[Rv] & \cup & \phi_{c_1}^{g_1}[v_1^+] & \cup & \phi_{c_1}^{c_2}[v_2^+] \\[0.5ex]
	  \cup & \phi_\bot^{c_2}[u]& \cup & \phi_{c_2}^{g_2}[Rv_1] &\cup & \phi_{c_2}^\bot[Rw]
 	\end{array} 
$$ 	
	where $c_1,c_2$ are fresh constants.
\end{itemize}
This construction can be  executed in \FO.
An example of the gadget constructions is shown in Figure~\ref{fig:gadgets-reduction-generic}.
We next show that the output gate $o$ is evaluated to~$1$ under $\sigma$ if and only if each repair of $\db$ satisfies $q$. 

	\begin{figure}[!ht]
	    \centering
	    \subfloat[\textsf{AND} gate]{ 
    \begin{tikzpicture}[->,>=stealth,auto=left, scale=1.8,vnode/.style={black,inner sep=2pt,scale=1},el/.style = {inner sep=3/2pt}]
      \node[vnode] (v1) at (1/2,0) {};
      \node[vnode,draw] (g) at (1,0) {$g$};
      \node[vnode,draw] (g1) at (1,-2/3) {$g_2$};
      \node[vnode,draw] (g2) at (1,2/3) {$g_1$};
      \node[vnode] (e) at (7/4,0) {};

       \path[->] (v1) edge node[el] {$u$} (g); 
       \path[->] (g) edge node[el] {$Rv_1$} (g1); 
       \path[->] (g) edge node[el] {$Rv_1$} (g2);
       \path[->] (g) edge node[el] {$Rv_2Rw$} (e); 
 
    \end{tikzpicture}
	    }%
	    \qquad
	    \subfloat[\textsf{OR} gate]{
    \begin{tikzpicture}[->,>=stealth,auto=left, scale=1.8,vnode/.style={black,inner sep=2pt,scale=1},el/.style = {inner sep=3/2pt}]
      \node[vnode] (v1) at (1/2,0) {};
      \node[vnode] (v2) at (2-1/2,-2/3) {};
      \node[vnode] (e2) at (2+1/2,-2/3) {};
      \node[vnode,draw] (g) at (1,0) {$g$};
      \node[vnode] (e) at (1,-2/3) {};
      \node[vnode,draw] (g2) at (2,1/2) {$g_1$};
      \node[vnode,draw] (g1) at (2,-1-1/4) {$g_2$};
      \node[vnode] (c1) at (2,0) {$c_1$};
       \node[vnode] (c2) at (2,-2/3) {$c_2$};

       \path[->] (v1) edge node[el] {$u$} (g); 
       \path[->] (g) edge node[el,left] {$Rv_2Rw$} (e); 
       \path[->] (c1) edge node[el] {$v_1^+$} (g2);
       \path[->] (g) edge node[el] {$Rv$} (c1); 
       \path[->] (c1) edge node[el] {$v_2^+$} (c2);  
       \path[->] (v2) edge node[el] {$u$} (c2); 
       \path[->] (c2) edge node[el] {$Rw$} (e2); 
       \path[->] (c2) edge node[el] {$Rv_1$} (g1); 

    \end{tikzpicture}	    }%
	    \caption{Gadgets for the \PTIME-hardness reduction.}
	    \label{fig:gadgets-reduction-generic}
	\end{figure}
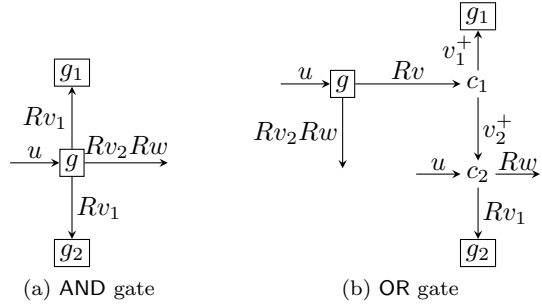

\framebox{$\implies$}
 Suppose the output gate $o$ is evaluated to $1$ under $\sigma$. Consider any repair $\rep$. We construct a sequence of gates starting from $o$, with the invariant that every gate $g$  evaluates to~$1$, and there is a path of the form $uRv_1$ in~$\rep$ that ends in $g$. The output gate $o$ evaluates to~$1$, and also we have that $\phi_\bot^o[uRv_1] \subseteq \rep$ by construction. Suppose that we are at gate $g$. If there is a $Rv_2Rw$ path in $\rep$ that starts in~$g$, the sequence ends and the query $q$ is satisfied. Otherwise, we distinguish two cases:
  	\begin{enumerate}
 		\item $g=g_1 \land g_2$. Then, we choose the gate with $\phi_{g}^{g_{i}}[Rv_1] \subseteq \rep$. Since both gates evaluate to 1 and $ \phi_\bot^{g}[u] \subseteq \rep$, the invariant holds for the chosen gate. 
		\item $g  = g_1 \lor g_2$.  If $g_1$ evaluates to~$1$, we choose $g_1$. Observe that  $ \phi_\bot^g[u]  \cup \phi_g^{c_1}[Rv] \cup \phi_{c_1}^{g_1}[v_1^+] $ creates the desired $uRv_1$ path.
Otherwise $g_2$ evaluates to~$1$. 
If $\phi_{c_2}^\bot[Rw]\subseteq\rep$, then there is a path with trace $uRv_{1}$ ending in~$g$, and a path with trace $Rv_{2}Rw$ starting in~$g$, and therefore $\rep$ satisfies~$q$.
If $\phi_{c_2}^\bot[Rw]\nsubseteq \rep$, we choose $g_2$ and the invariant holds.  
 	\end{enumerate}		
If the query is not satisfied at any point in the sequence, we will reach an input variable~$x$ evaluated at~$1$. But then there is an outgoing $Rv_2Rw$ path from $x$, which means that~$q$ must be satisfied.
	
\framebox{$\impliedby$}
Proof by contraposition.
Assume that $o$ is evaluated to $0$ under $\sigma$. We construct a repair $\rep$ as follows, for each gate $g$:
 \begin{itemize}
 \item if $g$ is evaluated to 1, we choose the first $R$-fact in $\phi_g^\bot[Rv_2Rw]$;
 \item if $g = g_1 \land g_2$ and $g$ is evaluated to~$0$, let $g_i$ be the gate or input variable evaluated to~$0$. We then choose $\phi_g^{g_i}[Rv_1]$;
  \item if $g = g_1 \lor g_2$ and $g$ is evaluated to~$0$, we choose $\phi_g^{c_1}[Rv]$; and
  \item if $g = g_1 \lor g_2$, we choose $\phi_{c_2}^{g_{2}}[Rv_{1}]$.
 \end{itemize}
For a path query $p$, we write $\head{p}$ for the variable at the key-position of the first atom, and $\rear{p}$ for the variable at the non-key position of the last atom.

Assume for the sake of contradiction that $\rep$ satisfies $q$. 
  Then, there exists some valuation $\theta$ such that $\theta(uRv_1Rv_2Rw) \subseteq \rep$. Then the gate $g^*\defeq\theta(\head{Rv_1})$ is evaluated to~$0$ by construction. 
  Let $g_1\defeq\theta(\rear{Rv_1})$. By construction, for $g^* = g_1 \land g_2$ or $g^* = g_1 \lor g_2$, we must have $\phi_{g}^{g_1}[Rv_1] \subseteq \rep$ and $g_1$ is a gate or an input variable also evaluated to~$0$. 
By our construction of $\rep$, there is no path with trace $Rv_2Rw$ outgoing from $g_1$.
However, $\theta(Rv_2Rw)\subseteq\rep$, this can only happen when $g_1$ is an \textsf{OR} gate, 
and one of the following occurs:
\begin{itemize}
\item 
Case that $\card{Rw}\leq\card{Rv_{1}}$, and the trace of $\theta(Rv_{2}Rw)$ is a prefix of $Rvv_{2}^{+}Rv_{1}$.
Then $Rw$ is a prefix of $Rv_1$, a contradiction.
\item 
Case that $\card{Rw}>\card{Rv_{1}}$, and $Rvv_{2}^{+}Rv_{1}$ is a prefix of the trace of $\theta(Rv_{2}Rw)$.
Consequently, $Rv_{1}$ is a prefix of $Rw$.
Then, for every $k\geq 1$, $\pumpclosure{q}$ contains $uRv_{1}\kleene{Rv_{2}}{k}Rw$.
It is now easily verified that for large enough values of~$k$, $uRv_{1}Rv_{2}w$ is not a factor of $uRv_{1}\kleene{Rv_{2}}{k}Rw$.
By Lemmas~\ref{lem:pumpclosure} and~\ref{lemma:conp-hard},
$\cqa{q}$ is \coNP-hard.
\qedhere
\end{itemize}
\end{proof}

	\section{Path Queries with Constants}
\label{sec:constant}

We now extend our complexity classification of $\cqa{q}$ to path queries in which constants can occur. 

\begin{definition}[Generalized path queries]\label{def:gpq}
 A {\em generalized path query} is a Boolean conjunctive query of the following form:
\begin{equation}\label{eq:gpq}
q = \{R_{1}(\underline{s_{1}},s_{2}), R_{2}(\underline{s_{2}},s_{3}), \dots, R_{k}(\underline{s_{k}},s_{k+1})\},
\end{equation}
where $s_{1}$, $s_{2}$,\dots, $s_{k+1}$ are constants or variables, all distinct, and $R_{1}$, $R_{2}$,\dots, $R_{k}$ are (not necessarily distinct) relation names.
Significantly, every constant can occur at most twice: at a non-primary-key position and the next primary-key-position.  

The \emph{characteristic prefix} of $q$, denoted by $\chr{q}$, is the longest prefix
\begin{equation*}
\{R_{1}(\underline{s_{1}},s_{2}), R_{2}(\underline{s_{2}},s_{3}), \dots, R_{\ell}(\underline{s_{\ell}},s_{\ell+1})\}, 0\leq\ell\leq k
\end{equation*}
such that no constant occurs among $s_{1}$, $s_{2}$, \dots, $s_{\ell}$ (but $s_{\ell+1}$ can be a constant).
Clearly, if $q$ is constant-free, then $\chr{q}=q$.
%unique maximal prefix query of $q$ that contains no constant or the only constant is at the non-key position of the last atom. If $q$ contains no constants, $\chr{q} = q$.
\qed
\end{definition}

\begin{example} \label{ex:char}
If $q =\{R(\underline{x},y)$, $S(\underline{y},0)$, $T(\underline{0},1)$, $R(\underline{1},w)\}$, where $0$ and~$1$ are constants, then $\chr{q}=\{R(\underline{x},y)$, $S(\underline{y},0)\}$. 
\qed
\end{example}

The following lemma implies that if a generalized path query~$q$ starts with a constant, then $\cqa{q}$ is in \FO. This explains why the complexity classification in the remainder of this section will only depend on $\chr{q}$.

\begin{lemma} \label{lemma:tail-part-fo}
For any generalized path query $q$, $\cqa{p}$ is in \FO, where $p\defeq q\setminus\chr{q}$.
\end{lemma}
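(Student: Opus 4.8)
The plan is to reduce $\cqa{p}$ to a finite conjunction of consistent-query-answering problems for ordinary (constant-free) path queries with a pinned head constant, each of which is in \FO{} by Lemma~\ref{lemma:constant-head-fo}. I would first dispose of the trivial case: by the definition of $\chr{q}$ (Definition~\ref{def:gpq}), $\chr{q}$ is the longest prefix whose primary-key positions carry only variables, so either $p=\emptyset$ (whence $\cqa{p}$ is vacuously true, trivially in \FO), or the first atom of $p$ has a \emph{constant} in its primary-key position. Assume $p\neq\emptyset$ and call this constant $c_0$.

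Next I would decompose $p$ at its constants. Since in a generalized path query every constant occurs at most twice — once as a non-key value and once as the very next primary key — the constants that occur as primary keys inside $p$, say $c_0,c_1,\dots,c_{m-1}$, cut $p$ into maximal \emph{segments} $p_0,p_1,\dots,p_{m-1}$. Each $p_j$ is a constant-free path query anchored at the head constant $c_j$; for $j<m-1$ its tail is pinned to $c_{j+1}$, while the last segment $p_{m-1}$ either has a free tail or ends in a final non-key constant. The structural fact driving the argument is that distinct segments share only the anchoring constants and use pairwise disjoint sets of variables. I would then establish the decomposition identity
\[
\cqa{p}\quad\Longleftrightarrow\quad\bigwedge_{j=0}^{m-1}\cqa{p_j}.
\]
For a fixed repair $\rep$, any homomorphism $p\to\rep$ must fix every $c_j$, and its restriction to the atoms of $p_j$ is a homomorphism $p_j\to\rep$ fixing the relevant anchoring constants; conversely, homomorphisms $p_j\to\rep$ that fix the anchors combine — because the variable sets are disjoint and they already agree on the shared constants — into a single homomorphism $p\to\rep$. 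Hence $\rep\models p$ iff $\rep\models p_j$ for every $j$, and quantifying over all repairs (swapping the two universal quantifiers) yields the displayed equivalence. Since \FO{} is closed under finite conjunction, it remains to show that each $\cqa{p_j}$ is in \FO.

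A segment with a free tail (the last segment, when its tail variable is genuinely free) is literally of the form $\fixedhead{w_j}{c_j}$ for the constant-free path query $w_j$ given by its trace (Definition~\ref{def:fixedhead}), so $\cqa{p_j}$ is in \FO{} directly by Lemma~\ref{lemma:constant-head-fo}. The remaining case — a segment $p_j$ whose tail is also pinned to a constant $c_{j+1}$ — is the main obstacle, since Lemma~\ref{lemma:constant-head-fo} only fixes the head. Here I would exploit that $p_j$ has a \emph{fixed length} $n=\card{w_j}$: in any repair the trace $w_j$ starting at $c_j$ is forced, because each primary key selects a unique fact, so $\rep\models p_j$ amounts to the deterministic forward scan of length $n$ completing \emph{and} landing exactly on $c_{j+1}$. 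Whether some repair can instead break this scan or divert it to a vertex $\neq c_{j+1}$ is an adversarial reachability question of depth at most $n$; because $n$ is a constant and the only coupling of block choices arises from the finitely many coincidence patterns among the at most $n+1$ visited vertices (these matter precisely when $w_j$ has a self-join that revisits a block, so that one choice at a block must be reused consistently), this condition can be written as a first-order formula.

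I expect this bounded-length bookkeeping for pinned-tail segments — and not the conceptually transparent decomposition — to be the delicate part of the write-up. A tempting shortcut would be to combine the free-tail rewriting of Lemma~\ref{lemma:constant-head-fo} with a terminality test via Lemma~\ref{prop:check-terminal}, but the ``lands on $c_{j+1}$'' requirement couples the choices made all along the forced scan, so the naive nested rewriting that ignores revisits is unsound; relying instead on the fixed segment length keeps the coincidence handling finite and hence first-order, which is what makes the whole reduction go through.
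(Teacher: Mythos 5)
Your proof is correct, and its skeleton --- dispose of the empty case, observe that the first atom of $p$ must carry a constant in its key position, cut $p$ at the primary-key constants into variable-disjoint segments, and combine via $\cqa{p}\Leftrightarrow\bigwedge_{j}\cqa{p_j}$ --- is exactly the paper's: the combination step is Lemma~\ref{lemma:connected-components}, and the whole argument is packaged in the appendix as Lemma~\ref{lemma:fo-constant-head}, which the paper's two-line proof of Lemma~\ref{lemma:tail-part-fo} simply invokes. Where you genuinely diverge is the treatment of a segment pinned at both ends. The paper handles it by a first-order \emph{reduction} (Lemma~\ref{lemma:constant-end-reduction}): replace the tail constant $c_{j+1}$ by a fresh variable followed by an atom over a fresh relation name $N$, add the single fact $N(\underline{c_{j+1}},d)$ to the instance, and then apply the free-tail rewriting of Lemma~\ref{lemma:constant-head-fo} to the resulting constant-tail-free query. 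You instead write a direct first-order sentence: a repair violates $p_j$ iff there is a consistent partial choice $v_2,\dots,v_i$ along the forced scan (same block forces the same successor) that either reaches a vertex whose next block is empty in $\db$ or completes on a vertex other than $c_{j+1}$; since the scan has fixed length, this needs only a bounded number of quantifiers plus finitely many coincidence constraints, and any such consistent partial choice extends to a repair. Both arguments are sound; yours is self-contained but ad hoc, while the paper's buys uniformity (every segment funnels through the single inductive rewriting of Lemma~\ref{lemma:constant-head-fo}) at the price of a schema-extending reduction. One remark on your closing paragraph: the shortcut you dismiss is essentially the paper's actual route, and it is sound --- because the fresh $N$-block is a singleton that conflicts with nothing, the ``lands on $c_{j+1}$'' requirement becomes the purely local condition that the scan's last vertex has an outgoing $N$-edge, so no coupling of choices across revisited blocks ever enters the nested rewriting.
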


We now introduce some definitions and notations used in our complexity classification.
The following definition introduces a convenient syntactic shorthand for characteristic prefixes previously defined in~Definition~\ref{def:gpq}.

\begin{definition}\label{def:enhanced}
Let $q = \{R_{1}(\underline{x_{1}},x_{2})$, $R_{2}(\underline{x_{2}},x_{3})$, \dots, $R_{k}(\underline{x_{k}},x_{k+1}) \}$ be a path query.
We write $\pathcons{q}{c}$ for the generalized path query obtained from $q$ by replacing $x_{k+1}$ with the constant~$c$.
The constant-free path query~$q$ will be denoted by $\pathcons{q}{\top}$, where $\top$ is a distinguished special symbol.
\qed
\end{definition}

\begin{definition}[Prefix homomorphism] \label{defn:prefix-homo}
Let
\begin{eqnarray*}
q & = & \{R_{1}(\underline{s_{1}},s_{2}), R_{2}(\underline{s_{2}},s_{3}), \dots, R_{k}(\underline{s_{k}},s_{k+1})\}\\
p & = & \{S_{1}(\underline{t_{1}},t_{2}), S_{2}(\underline{t_{2}},t_{3}), \dots, R_\ell(\underline{s_{\ell}},s_{\ell+1})\}
\end{eqnarray*}
be generalized path queries.
A \emph{homomorphism from $q$ to $p$} is a substitution~$\theta$ for the variables in~$q$, extended to be the identity on constants, such that for every $i\in\{1,\dots,k\}$, $R_{i}(\underline{\theta(s_{i})},\theta(s_{i+1}))\in p$.
Such a homomorphism is a \emph{prefix homomorphism} if $\theta(s_{1})=t_{1}$.
%A \emph{prefix homomorphism} from a generalized path query $q$ to $p$ is a homomorphism from $q$ to $p$ that maps the key position of the first relation in $q$ to that in $p$. Prefix homomorphisms generalize the notion of prefix in $\cone$ and $\ctwo$.
\qed
\end{definition}

\begin{example}
Let 
$q = \{R(\underline{x}, y)$, $R(\underline{y}, 1)$, $S(\underline{1}, z)\}$, and
$p = \{R(\underline{x}, y)$, $R(\underline{y}, z)$, $R(\underline{y}, 1)\}$. 
Then $\chr{q} = \{R(\underline{x}, y), R(\underline{y}, 1)\} = \pathcons{RR}{1}$ and $p =\pathcons{RRR}{1}$.
There is a homomorphism from $\chr{q}$ to $p$, but there is no prefix homomorphism from $\chr{q}$ to $p$.
\qed
\end{example}

The following conditions generalize $\cone$, $\ctwo$, and $\cthree$ from constant-free path queries to generalized path queries. Let $\gamma$ be either a constant or the distinguished symbol $\top$.
\begin{description}
\item[$\done$:] 
Whenever $\chr{q} = \pathcons{uRvRw}{\gamma}$, there is a prefix homomorphism from $\chr{q}$ to $\pathcons{uRvRvRw}{\gamma}$.
\item[$\dtwo$:]
Whenever $\chr{q} = \pathcons{uRvRw}{\gamma}$, there is a homomorphism from $\chr{q}$ to $\pathcons{uRvRvRw}{\gamma}$; and whenever $\chr{q}= \pathcons{u Rv_{1} Rv_{2} Rw}{\gamma}$ for consecutive occurrences of $R$,  $v_{1} = v_{2}$ or there is a prefix homomorphism from $\pathcons{Rw}{\gamma}$ to $\pathcons{Rv_{1}}{\gamma}$.
\item[$\dthree$:]
Whenever $\chr{q} = \pathcons{uRvRw}{\gamma}$, there is a homomorphism from $\chr{q}$ to $\pathcons{uRvRvRw}{\gamma}$.
\end{description} 
It is easily verified that if $\gamma=\top$, then $\done$, $\dtwo$, and $\dthree$ are equivalent to, respectively, $\cone$, $\ctwo$, and $\cthree$.
Likewise, the following theorem degenerates to Theorem~\ref{thm:main} for path queries without constants.

\begin{theorem}
  \label{thm:main-constant}
 For every generalized path query~$q$, the following complexity upper bounds obtain:
  \begin{itemize}
    \item if $q$ satisfies $\done$, then $\cqa{q}$ is in \FO;
    \item if $q$ satisfies $\dtwo$, then $\cqa{q}$ is in \NL; and
    \item if $q$ satisfies $\dthree$, then $\cqa{q}$ is in \PTIME.
  \end{itemize}
The following complexity lower bounds obtain:
  \begin{itemize}
    \item if $q$ violates $\done$, then $\cqa{q}$ is \NL-hard;
    \item if $q$ violates $\dtwo$, then $\cqa{q}$ is \PTIME-hard; and
    \item if $q$ violates $\dthree$, then $\cqa{q}$ is \coNP-complete.
  \end{itemize}
\end{theorem}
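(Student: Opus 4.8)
\emph{Overall strategy.} The plan is to reduce the classification for a generalized path query $q$ to the already-proven Theorem~\ref{thm:main} by isolating the characteristic prefix and then anchoring its constant endpoint. Write $p \defeq q \setminus \chr{q}$. If $q$ is constant-free then $\chr{q}=q$ and, as noted before the theorem, $\done,\dtwo,\dthree$ coincide with $\cone,\ctwo,\cthree$, so the claim is exactly Theorem~\ref{thm:main}. Otherwise $\chr{q}$ ends in the first constant $c$ of $q$, and $p$ is a generalized path query whose first primary-key value is $c$. Since $\chr{q}$ and $p$ share only the constant $c$, a repair satisfies $q$ iff it satisfies both $\chr{q}$ and $p$; as $\forall$ distributes over $\land$, $\db$ is a ``yes''-instance of $\cqa{q}$ iff it is a ``yes''-instance of both $\cqa{\chr{q}}$ and $\cqa{p}$. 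By Lemma~\ref{lemma:tail-part-fo}, $\cqa{p}$ is in \FO, and since \FO, \NL, \PTIME\ are closed under intersection with \FO, the upper bounds for $\cqa{q}$ follow from those for $\cqa{\chr{q}}$; for the lower bounds I would attach at $c$ a conflict-free realization of $p$ through fresh constants, forcing $p$ into every repair without changing the $\cqa{\chr{q}}$-status, so that hardness lifts. It thus suffices to classify $\cqa{\chr{q}}$ for $\chr{q}=\pathcons{q'}{c}$.

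\emph{Anchoring and the translation lemma.} For the remaining case I would introduce a fresh relation name $E$ and set $q'' \defeq q'E$, a constant-free path query. For a database $\db$, put $\db^{+}\defeq\db\cup\{E(\underline{c},\bot)\}$ with $\bot$ fresh; this is an \FO\ transformation, and every repair of $\db^{+}$ is a repair of $\db$ together with the unique fact $E(\underline{c},\bot)$. A repair contains a path with trace $q''$ iff it contains a path with trace $q'$ ending at $c$, i.e.\ iff it satisfies $\chr{q}$, so $\db$ is a ``yes''-instance of $\cqa{\chr{q}}$ iff $\db^{+}$ is a ``yes''-instance of $\cqa{q''}$. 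The core step is a \emph{translation lemma}: $\chr{q}$ satisfies $\done$ (resp.\ $\dtwo$, $\dthree$) iff $q''$ satisfies $\cone$ (resp.\ $\ctwo$, $\cthree$). The key observation is that a homomorphism into a query of the form $\pathcons{\cdot}{c}$ must be a contiguous embedding—the target is a simple directed path, so no folding is possible—whose right end is pinned to the unique occurrence of $c$; hence such a homomorphism exists iff $q'$ is a right-anchored factor (a suffix) of the rewound word, which is exactly the statement that $q''=q'E$ is a factor of the $E$-terminated rewinding. This gives $\dthree\Leftrightarrow\cthree(q'')$ at once, and additionally left-pinning the embedding gives $\done\Leftrightarrow\cone(q'')$, both reducing to $q'$ being self-join-free.

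\emph{Upper bounds.} Composing the \FO\ reduction from $\cqa{\chr{q}}$ to $\cqa{q''}$ with Theorem~\ref{thm:main} applied to $q''$ and the translation lemma, the \FO, \NL, and \PTIME\ upper bounds for $\cqa{\chr{q}}$, and hence for $\cqa{q}$, follow. The delicate point is $\dtwo$: with a genuine constant endpoint its prefix-homomorphism clause is strictly stronger than that of $\ctwo$ (a prefix homomorphism $\pathcons{Rw}{c}\to\pathcons{Rv_{1}}{c}$ forces $w=v_{1}$, not merely $Rw$ a prefix of $Rv_{1}$), so I must show the apparent gap between $\dtwo$ and $\ctwo(q'')$ is always closed by the factor condition $\dthree=\cthree(q'')$. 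Concretely, I would prove that a triple $uRv_{1}Rv_{2}Rw$ with $v_{1}\neq v_{2}$ and $w=v_{1}$ already destroys the suffix-factor condition for the $v_{2}$-rewinding, so it cannot coexist with $\dthree$; this is a word-combinatorics argument in the spirit of Section~\ref{sec:syntax}, using that $v_{1}$ and $v_{2}$ cannot realign after inserting a copy of $v_{2}R$.

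\emph{Lower bounds and the main obstacle.} Here the anchoring reduction runs the wrong way, so I would adapt the three hardness constructions of Section~\ref{sec:hardness} directly, routing every query path into the distinguished endpoint $c$ in place of a fresh sink. From a violation of $\done$, $\dtwo$, or $\dthree$ I would extract, via the translation lemma, the missing prefix/prefix-only/factor homomorphism for $q''$ and feed it into the $\problem{REACHABILITY}$-, $\problem{MCVP}$-, and $\problem{SAT}$-gadgets, respectively; pinning the endpoint to $c$ is exactly what makes the ``bifurcation'' gadget of Figure~\ref{fig:example-arrx} correct. I expect the hard part to live entirely in the middle (\NL/\PTIME) layer and to be twofold: the word-combinatorics step reconciling $\dtwo$ with $\ctwo(q'')$, and re-establishing the \PTIME-hardness gadget when the endpoint is a real constant, since there the identity of $c$—acting at once as a path terminus and as the common target of several rewound branches—is precisely what the homomorphism formulation of $\dtwo$ encodes, and a naive reuse of the constant-free gadget does not suffice.
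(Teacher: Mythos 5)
Your overall route is the same as the paper's: split $q$ into $\chr{q}$ and the tail $p=q\setminus\chr{q}$, dispose of $p$ in \FO\ (the paper's Lemma~\ref{lemma:tail-part-fo} via Lemmas~\ref{lemma:connected-components} and~\ref{lemma:fo-constant-head}), replace the terminal constant by a fresh variable followed by a fresh relation name (your $q''=q'E$ is exactly the paper's $\extend{q}$ with $N$ in place of $E$, and your $\db^{+}$ is the reduction of Lemma~\ref{lemma:constant-end-reduction}), translate $\done,\dtwo,\dthree$ into $\cone,\ctwo,\cthree$ for the anchored query (the paper's Lemma~\ref{lem:never-higher}, which only needs the forward implications for the upper bounds), and then invoke Theorem~\ref{thm:main}. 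Your translation-lemma reasoning (a homomorphism into a simple path with a pinned constant endpoint is a suffix embedding; a prefix homomorphism pins both ends and forces self-join-freeness) is precisely the paper's argument, and your observation that $\dthree$ forces $v_1=v_2$ in any consecutive triple $uRv_1Rv_2Rw$ is the content of the paper's Lemma~\ref{lem:ptime-degenerate}. The adaptation of the \NL\ and \coNP\ gadgets by routing all $Rw$-paths into the single constant $c$ and attaching a canonical copy of $p$ at $c$ also matches the appendix.

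The one place where you have not closed the loop is the \PTIME\ lower bound, which you single out as a main obstacle (``re-establishing the \PTIME-hardness gadget when the endpoint is a real constant''). That obstacle does not exist: push your own observation to its conclusion. You already note that for a constant-terminated $\chr{q}$ the factor condition $\dthree$ forces $v_1=v_2$ for every consecutive triple, which means the second clause of $\dtwo$ is automatically satisfied whenever $\dthree$ holds; hence for a query with at least one constant, $\dthree$ implies $\dtwo$, so a violation of $\dtwo$ implies a violation of $\dthree$, and \coNP-hardness (which you do establish) subsumes the required \PTIME-hardness. This is exactly why the paper's Theorem~\ref{lem:trichotomy} records a trichotomy rather than a tetrachotomy for constant-containing queries, and why its appendix states that the \problem{MCVP} reduction ``needs no revisiting.'' Attempting to pin all the $Rw$-sinks of the \textsf{OR}-gate gadget to a single constant $c$ would indeed break that construction, but you never need it. Once you replace that paragraph with the collapse argument, your proof is complete and coincides with the paper's.
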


Finally, the proof of Theorem~\ref{thm:main-constant} reveals that for generalized path queries~$q$ containing at least one constant, the complexity of $\cqa{q}$ exhibits a trichotomy (instead of a tetrachotomy as in Theorem~\ref{thm:main-constant}).

\begin{theorem}\label{lem:trichotomy}
For any generalized path query $q$ containing at least one constant, the problem $\cqa{q}$ is either in \FO, \NL-complete, or \coNP-complete.  
\end{theorem}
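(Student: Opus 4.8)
The plan is to derive Theorem~\ref{lem:trichotomy} from the tetrachotomy of Theorem~\ref{thm:main-constant} by showing that its \PTIME-complete stratum becomes empty once $q$ carries a constant. Since the classification depends only on $\chr{q}$ and on the implication chain $\done\Rightarrow\dtwo\Rightarrow\dthree$ (so that $\dtwo$ is literally $\dthree$ together with a second clause), I first dispose of the case where $q$ starts with a constant: then $\chr{q}$ is empty and $\cqa{q}$ is in \FO\ by Lemma~\ref{lemma:tail-part-fo}. In the remaining case $q$ contains a constant but does not start with one, so the first constant occurs at a non-key position and $\chr{q}=\pathcons{W}{c}$ for a nonempty constant-free word $W$ and a genuine constant $c$. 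Everything then reduces to a single claim: \emph{for such $\chr{q}$, condition $\dthree$ implies condition $\dtwo$}. Granting this, the stratum ``$\dthree$ but not $\dtwo$'' (the one carrying \PTIME-completeness) is vacuous, and Theorem~\ref{thm:main-constant} collapses to three cases — $\done$ (\FO), $\dtwo\wedge\neg\done$ (\NL-complete, via the \NL\ upper bound from $\dtwo$ and \NL-hardness from $\neg\done$), and $\neg\dthree$ (\coNP-complete) — using that the contrapositive of the claim, $\neg\dtwo\Rightarrow\neg\dthree$, sends every $\dtwo$-violating query into the \coNP-complete case.

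The observation that powers the claim is that a genuine constant pins down the endpoint of every witnessing homomorphism. Fix a decomposition $\chr{q}=\pathcons{uRvRw}{c}$ occurring in $\dthree$ and the rewind target $\pathcons{uRvRvRw}{c}$. Both the source and the target contain $c$ only in their last atom; since a homomorphism between path queries realizes the source as a \emph{contiguous factor} of the target, the unique occurrences of $c$ force the last atom of the source onto the last atom of the target. Hence the witnessing factor must be a \emph{suffix}, and so $\dthree$ (for queries ending in a genuine constant) is equivalent to the purely combinatorial statement: for every decomposition $W=uRvRw$, the word $W$ is a suffix of $uRvRvRw$. This is exactly where the constant case parts ways with the constant-free case, in which $\cthree$ only requires $W$ to be a factor (cf.\ Lemma~\ref{lem:forms}) and genuinely admits \PTIME-complete queries.

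Next I would convert the suffix condition into a periodicity statement. For a decomposition $W=uRvRw$, setting $X=uRvRw$ and $Y=uRvRvRw$ and cancelling the common suffix $vRw$ together with one further trailing $R$, the requirement ``$X$ is a suffix of $Y$'' reduces to the word equation $z\,u=u\,(Rv)$ with $|z|=|Rv|$; by the conjugacy theorem for words this holds iff $u$ is a suffix of the periodic word $\kleene{Rv}{\omega}$. I would instantiate this at a three-consecutive decomposition $W=uRv_1Rv_2Rw$ (so that $v_1,v_2$ contain no $R$) by choosing the marked pair of occurrences to be the second and third $R$; this plays $uRv_1$ in the role of $u$ and $v_2$ in the role of $v$, yielding that $uRv_1$ is a suffix of $\kleene{Rv_2}{\omega}$. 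Now count from the right end: the marked $R$ immediately preceding $v_1$ sits at position $|v_1|+1$, whereas in $\kleene{Rv_2}{\omega}$ the occurrences of $R$ lie precisely at the positions that are multiples of $|Rv_2|$. Because $v_1$ is $R$-free it cannot straddle an occurrence of $R$, which forces $|v_1|\le|v_2|$ and then $|v_1|+1\equiv 0 \pmod{|Rv_2|}$; for $1\le|v_1|+1\le|Rv_2|$ this gives $|v_1|=|v_2|$, and aligning the last $|v_1|$ letters yields $v_1=v_2$ (the degenerate subcase $v_1$ empty falls out of the same inequalities, forcing $v_2$ empty as well). Thus $\dthree$ forces $v_1=v_2$ for every consecutive triple, so the second clause of $\dtwo$ holds through its first disjunct — the prefix-homomorphism disjunct is never needed — and since the first clause of $\dtwo$ coincides with $\dthree$, condition $\dtwo$ holds.

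The main obstacle is the combinatorial heart of the previous paragraph: faithfully turning the ``homomorphism into the rewind, endpoint pinned to $c$'' condition into an alignment against the single periodic word $\kleene{Rv_2}{\omega}$, and then arguing that an $R$-free block $v_1$ cannot be positioned as a suffix against the period $Rv_2$ unless $v_1=v_2$. Two subtle points must be handled with care: (i) that ``consecutive occurrences'' in $\dtwo$ means $v_1$ and $v_2$ contain no $R$, which is exactly what makes the $R$-position counting argument go through; and (ii) that the very same argument fails for the symbol $\top$ (the constant-free case) precisely because, absent a terminal constant, a homomorphism need only realize $W$ as a factor rather than a suffix — this is the structural reason \PTIME-complete path queries survive without constants but vanish as soon as a constant is present. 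Once the claim is established, assembling the trichotomy from Theorem~\ref{thm:main-constant} is routine.
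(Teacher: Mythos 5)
Your proposal takes essentially the same route as the paper: the paper obtains the trichotomy as an immediate consequence of Theorem~\ref{thm:main-constant} together with Lemma~\ref{lem:ptime-degenerate}, which states exactly your central claim that a generalized path query with at least one constant satisfying $\dthree$ must satisfy $\dtwo$, so the \PTIME-complete stratum is empty. Your conjugacy and period-counting argument is a correct (and more detailed) justification of the step the paper dispatches in one line in the proof of that lemma, namely that the constant pins the witnessing homomorphism to a suffix embedding and thereby forces $v_1=v_2$ for every three consecutive occurrences of~$R$.
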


% Corrollary~\ref{coro:nl-dichotomy} shows that the \PTIME-complete case in Theorem~\ref{thm:main} only occurs for path queries without constants. The proofs of Theorem~\ref{thm:main-constant} and Corollary~\ref{coro:nl-dichotomy} are in Appendix~\ref{appx:constant}.

	\section{Related Work}
\label{sec:related-work}

Inconsistencies in databases have been studied in different contexts~\cite{10.5555/1709465.1709573,kahale2020meta,katsis2010inconsistency}. Consistent query answering (CQA) was initiated by the seminal work
by Arenas, Bertossi, and Chomicki \cite{10.1145/303976.303983}. After twenty years, their contribution was acknowledged in a \emph{Gems of PODS session}~\cite{Bertossi19}. An overview of  complexity classification results in CQA appeared recently in the \emph{Database Principles} column of SIGMOD Record \cite{10.1145/3377391.3377393}.

The term $\cqa{q}$ was coined in~\cite{wijsen2010first} to refer to CQA for Boolean queries $q$ on databases that violate primary keys, one per relation, which are fixed by $q$'s schema. The complexity classification of $\cqa{q}$ for the class of self-join-free
Boolean conjunctive queries started with the work by Fuxman and Miller~\cite{FUXMAN2007610}, and was further pursued in~\cite{KOLAITIS201277,KoutrisS14,KoutrisW15,KoutrisW17,DBLP:conf/icdt/KoutrisW19,KoutrisWTOCS20},
which eventually revealed that the complexity of $\cqa{q}$ for self-join-free conjunctive queries displays a trichotomy between \FO,  \LSPACE-complete, and \coNP-complete. 
%Moreover, if $\cqa{q}$ is in \FO, a first-order rewriting of~$q$ (i.e., a first-order query that solves $\cqa{q}$) can be effectively constructed. 
A few extensions beyond this trichotomy result are known. 
It remains decidable whether or not $\cqa{q}$ is in \FO~for self-join-free Boolean conjunctive queries with negated atoms~\cite{KoutrisW18}, with respect to multiple keys~\cite{KoutrisW20}, and with unary foreign keys~\cite{DBLP:conf/pods/HannulaW22}, all assuming that $q$ is self-join-free.
%The complexity of $\cqa{q}$ for self-join-free Boolean conjunctive queries with negated atoms was studied in~\cite{KoutrisW18}. 
%For self-join-free Boolean conjunctive queries with respect to multiple keys, it remains decidable whether or not $\cqa{q}$ is in \FO~\cite{KoutrisW20}.

Little is known about $\cqa{q}$ beyond self-join-free conjunctive queries. Fontaine \cite{10.1145/2699912} showed that if we strengthen Conjecture~\ref{conj:dichotomy} from conjunctive queries to unions of conjunctive queries, then it implies Bulatov's dichotomy theorem
for conservative CSP~\cite{10.1145/1970398.1970400}.
This relationship between CQA and CSP was further explored in~\cite{lutz_et_al:LIPIcs:2015:4995}.
In~\cite{DBLP:journals/corr/AfratiKV15}, the authors show the \FO\ boundary for $\cqa{q}$ for constant-free Boolean conjunctive queries $q$ using a single binary relation name with a singleton primary key. 
Figueira et~al.~\cite{DBLP:conf/icdt/FigueiraPSS23} have recently discovered a simple fixpoint algorithm that solves $\cqa{q}$ when $q$ is a self-join free conjunctive query or a path query such that $\cqa{q}$  is in \PTIME.

The counting variant of the problem $\cqa{q}$, denoted
$\sharp\cqa{q}$, asks to count the number of repairs that satisfy some Boolean query $q$.
For self-join-free Boolean conjunctive queries, $\sharp\cqa{q}$ exhibits a dichotomy between  \textbf{FP} and \mbox{$\sharp$\PTIME}-complete~\cite{Maslowski2013ADI}. This dichotomy has been shown to extend to self-joins if primary keys are singletons~\cite{MaslowskiW14}, and to functional dependencies~\cite{DBLP:conf/pods/CalauttiLPS22a}.

In practice, systems supporting CQA have often used efficient solvers for Disjunctive Logic Programming, Answer Set Programming (ASP) or Binary Integer Programming (BIP), regardless of whether the CQA problem admits a first-order rewriting \cite{DBLP:conf/cikm/KhalfiouiJLSW20, chomicki2004hippo, DBLP:conf/sat/DixitK19, DBLP:conf/icde/DixitK22, DBLP:journals/pacmmod/FanKOW23, GrecoGZ03, 10.14778/2536336.2536341, manna_ricca_terracina_2015, DBLP:journals/dke/MarileoB10}.

\section{Conclusion}
\label{sec:conclusion}

We established a complexity classification in consistent query answering relative to primary keys, for path queries that can have self-joins: for every path query $q$, the problem $\cqa{q}$ is in \FO, \NL-complete, \PTIME-complete, or \coNP-complete, and it is decidable in polynomial time in the size of $q$ which of the four cases applies. If $\cqa{q}$ is in \FO\ or in \PTIME, rewritings of $q$ can be effectively constructed in, respectively, first-order logic and Least Fixpoint Logic .  %\xiating{Should we change from \NL\ to \PTIME?}
%This paper extends the existing classification of $\cqa{q}$ beyond self-join-free Boolean conjunctive queries.

%Some open problems are as follows. 
%An interesting extension to this paper is to provide a complexity classification of $\cqa{q}$ on path queries in which certain relations are known to be consistent. 
For binary relation names and singleton primary keys, an intriguing open problem is to generalize the form of the queries, from paths to directed rooted trees, DAGs, or general digraphs. 
The ultimate open problem is Conjecture~\ref{conj:dichotomy}, which conjectures that
for every Boolean conjunctive query $q$, $\cqa{q}$ is either in \PTIME\ or \coNP-complete. 
%Another open problem is to decide whether $\cqa{q}$ is in \FO, and if it is, construct a consistent first-order rewriting of $q$. Both problems have been resolved for self-join-free Boolean conjunctive queries on primary constraints and remain open for Boolean conjunctive queries possibly with self-joins \cite{10.1145/2745754.2745769,10.1145/3068334}.

\smallskip
\noindent \textbf{Acknowledgements.} This work is supported by the National Science Foundation under grant IIS-1910014.

	\bibliographystyle{abbrv}
	\bibliography{reference}
	\appendix

\section{Proofs for Section~\ref{sec:syntax}}

\subsection{Preliminary Results}

We define $\kleene{q}{k}=\emptyword$ if $k=0$.
The following lemma concerns words having a proper suffix that is also a prefix.

\begin{lemma}\label{lem:repeat}
If $w$ is a prefix of the word $uw$ with $u\neq\emptyword$,
then $w$ is a prefix of $\kleene{u}{\card{w}}$. 
%%%
Symmetrically, if $u$ is a suffix $uw$ with $w\neq\emptyword$, then $u$ is a suffix of $\kleene{w}{\card{u}}$. 
\end{lemma}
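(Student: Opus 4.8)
The plan is to prove the first statement by induction on $\card{w}$, and then to obtain the symmetric statement for free by a mirror-image argument.

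For the first statement, the base case $w=\emptyword$ is immediate, since the empty word is a prefix of every word. For the inductive step, I would assume that $w$ is a prefix of $uw$ with $u\neq\emptyword$, and distinguish two cases according to the relative lengths of $w$ and $u$. If $\card{w}\leq\card{u}$, then the first $\card{w}$ symbols of $uw$ are exactly the first $\card{w}$ symbols of~$u$; since $w$ is a prefix of $uw$, it follows that $w$ is a prefix of~$u$, and hence of $\kleene{u}{\card{w}}$ (using $\card{w}\geq 1$, so that $\kleene{u}{\card{w}}$ has $u$ as a prefix). If $\card{w}>\card{u}$, then, because the first $\card{u}$ symbols of $uw$ form~$u$ and $w$ is a strictly longer prefix, I may write $w=uw'$ for some word $w'$ with $\card{w'}=\card{w}-\card{u}<\card{w}$. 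Substituting, $w=uw'$ is a prefix of $uw=u(uw')$; cancelling the common leading factor~$u$ shows that $w'$ is a prefix of $uw'$. The induction hypothesis then yields that $w'$ is a prefix of $\kleene{u}{\card{w'}}$, whence $w=uw'$ is a prefix of $u\cdot\kleene{u}{\card{w'}}=\kleene{u}{\card{w'}+1}$. Since $\card{w'}+1\leq\card{w}$, the word $\kleene{u}{\card{w'}+1}$ is itself a prefix of $\kleene{u}{\card{w}}$, and the claim follows.

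For the symmetric statement, I would apply the already-proven first statement to the reversed words. Writing $\overline{x}$ for the reversal of a word~$x$, the hypothesis that $u$ is a suffix of $uw$ is equivalent to $\overline{u}$ being a prefix of $\overline{uw}=\overline{w}\,\overline{u}$, where $\overline{w}\neq\emptyword$ because $w\neq\emptyword$. The first statement, applied with $\overline{w}$ in the role of the prepended factor and $\overline{u}$ in the role of the repeated factor, gives that $\overline{u}$ is a prefix of $\kleene{\overline{w}}{\card{u}}$. Reversing once more, and using that the reversal of $\kleene{\overline{w}}{\card{u}}$ is $\kleene{w}{\card{u}}$, I conclude that $u$ is a suffix of $\kleene{w}{\card{u}}$, as desired.

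The argument is essentially routine; the only points requiring care are the length bookkeeping in the inductive step---specifically, justifying the decomposition $w=uw'$ when $\card{w}>\card{u}$, and verifying the inequality $\card{w'}+1\leq\card{w}$ that permits passing from $\kleene{u}{\card{w'}+1}$ to $\kleene{u}{\card{w}}$. I expect no genuine obstacle beyond keeping these indices straight, together with stating precisely which word plays which role when invoking the first statement inside the reversal argument.
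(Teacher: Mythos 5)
Your proof is correct and takes essentially the same approach as the paper's: the induction on $\card{w}$ that peels one copy of $u$ off the front of $w$ is just a rigorous rendering of the paper's iterative picture (its ``and so on'' in the figure-based argument), and handling the second statement by reversing words matches the paper's appeal to symmetry. The length bookkeeping you flag ($w=uw'$ when $\card{w}>\card{u}$, and $\card{w'}+1\leq\card{w}$) checks out, so there is no gap.
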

\begin{proof}%[Proof of Lemma~\ref{lem:repeat}]
Assume $w$ is a prefix of $uw$ with $u\neq\emptyword$.
The desired result is obvious if $\card{w}\leq\card{u}$, in which case $w$ is a prefix of $u$.
In the remainder of the proof, assume $\card{w}>\card{u}$.
The desired result becomes clear from the following construction:
\begin{center}
\includegraphics[scale=0.5]{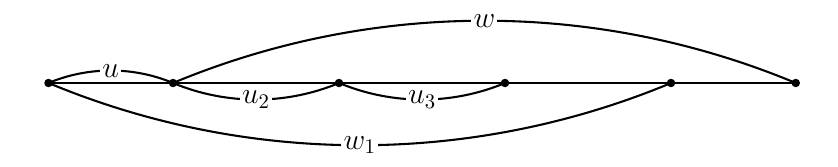}
\end{center}
The word $w_{1}$ is the occurrence of $w$ that is a prefix of~$uw$.
The word $u_{2}$ is the length-$\card{u}$ prefix of~$w$.
Obviously, $u_{2}=u$.
The word $u_{2}u_{3}$ is the length-$2\card{u}$ prefix of~$w$.
Obviously, $u_{3}=u_{2}$.
And so on.
It is now clear that $w$ is a prefix of $\kleene{u}{\card{w}}$.
Note that this construction requires $u\neq\emptyword$.
This concludes the proof.
\end{proof}

\begin{definition}[Episode]
An \emph{episode} of $q$ is a factor of $q$ of the form $RuR$ such that $R$ does not occur in~$u$.
Let $q=\ell RuRr$ where $RuR$ is an episode.
We say that this episode is \emph{right-repeating (within $q$)} if $r$ is a prefix of $\kleene{uR}{\card{r}}$.
Symmetrically, we say that this episode is \emph{left-repeating} if~$\ell$ is a suffix of $\kleene{Ru}{\card{\ell}}$.
\qed
\end{definition}

For example, let
$q= AMAA\overbrace{MAAM}^{e_{1}}A\overbrace{MAAM}^{e_{2}}AAMAB$.
Then the episode called $e_{1}$ is left-repeating, while the episode $e_{2}$ is neither left-repeating nor right-repeating.

\begin{definition}[Offset]
Let $u$ and $w$ be words.
We say that $u$ has \emph{offset} $n$ in $w$ if there exists words $p$, $s$ such that $\card{p}=n$ and $w=pus$. 
%We will use negative numbers for offset from the right, using $-0$, $-1$, $-2$, \dots
\qed
\end{definition}

\begin{lemma}[Repeating lemma]\label{lem:episode}
Let $q$ be a word that satisfies $\cthree$.
Then, every episode of $q$ is either left-repeating or right-repeating (or both).
\end{lemma}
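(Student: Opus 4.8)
The plan is to apply $\cthree$ once, to the factorization given by the episode, and then to show that the resulting factor occurrence can sit in essentially only two places. Write the episode as $q = \ell RuRr$, so that $R$ does not occur in $u$, and set $m \defeq \card{uR} = \card{Ru} = \card{u}+1$ and $a \defeq \card{\ell}$. Since $RuR$ has the form $RvR$, condition $\cthree$ guarantees that $q$ is a factor of $q' \defeq \ell RuRuRr$. Because $\card{q'} = \card{q} + m$, this occurrence sits at some offset $d$ with $0 \le d \le m$, i.e. $q$ matches the length-$\card{q}$ window of $q'$ starting at position $d+1$. I would first dispose of the two extreme offsets. If $d = 0$, then $q$ is a prefix of $q'$; cancelling the common prefix $\ell RuR$ leaves $r$ a prefix of $(uR)r$, and since $uR \neq \emptyword$, Lemma~\ref{lem:repeat} gives that $r$ is a prefix of $\kleene{uR}{\card{r}}$, i.e. the episode is right-repeating. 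Symmetrically, if $d = m$, then $q$ is a suffix of $q'$; cancelling the common suffix $RuRr$ leaves $\ell$ a suffix of $\ell(Ru)$, and the symmetric half of Lemma~\ref{lem:repeat} gives that $\ell$ is a suffix of $\kleene{Ru}{\card{\ell}}$, i.e. left-repeating.

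The heart of the argument, and the step I expect to be the main obstacle, is to exclude every intermediate offset $0 < d < m$; this is precisely where the defining property $R \notin u$ of an episode is essential. I would split on the location of the occurrence relative to $\ell$. Suppose first $1 \le d \le a$. Reading off which characters of $q$ are forced to align with the inserted block $q'[a+1\,..\,a+m] = Ru$ shows that the window $q[a+1-d\,..\,a+m-d]$ equals $Ru$. But position $a+1$, which carries the episode's first $R$, also lies in this window, at index $d+1 \ge 2$ of the copy $Ru$; since $R \notin u$, that index falls inside $u$ and cannot be $R$, contradicting $q[a+1] = R$. Suppose instead $a < d < m$. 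Now I would track where the episode's first $R$ (position $a+1$ of $q$) is sent: it lands at position $d+a+1$ of $q'$. From $a < d < m$ one checks $a+2 \le d+a+1 \le a+m$, so this position lies inside the first copy of $u$ in $q'$, a block free of $R$; this contradicts $q'[d+a+1] = q[a+1] = R$. Either way, no intermediate offset is possible.

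Combining the two steps, the factor occurrence guaranteed by $\cthree$ must occur at $d = 0$ or $d = m$, whence the episode is right-repeating or left-repeating, as claimed. The only routine bookkeeping is the position arithmetic, together with the degenerate cases $u = \emptyword$ (where the interval $0 < d < m$ is empty, so the two extreme offsets are the only options) and $\ell = \emptyword$ (subsumed by the same counting, since both position contradictions go through when $a = 0$). The single genuinely delicate point is the window/position analysis ruling out intermediate offsets, which is driven entirely by the hypothesis $R \notin u$.
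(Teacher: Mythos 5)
Your proposal is correct and follows essentially the same route as the paper's proof: apply $\cthree$ once to write $q$ as a factor of $\ell RuRuRr$, observe that the offset is at most $\card{u}+1$, use $R\notin\symbols{u}$ to force the occurrence to be a prefix or a suffix, and finish with Lemma~\ref{lem:repeat}. The only difference is that you spell out the position arithmetic excluding intermediate offsets, which the paper asserts in one sentence; your accounting there is sound.
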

\begin{proof}%[Proof of Lemma~\ref{lem:episode}]
Let $RuR$ be an episode in $q=\ell RuRr$.
By the hypothesis of the lemma, $q$ is a factor of $p\defeq\swipe{\ell}{Ru}{Rr}$.
Since $\card{q}-\card{p}=\card{u}+1$,
the offset of $q$ in $p$ is $\leq\card{u}+1$. 
Since $R$ does not occur in $u$, it must be that $q$ is either a prefix or a suffix of $p$.
We distinguish two cases:
\begin{description}
\item[Case that $q$ is a suffix of $p$.]
Then, it is easily verified that $\ell$ is a suffix of $\ell Ru$.
By Lemma~\ref{lem:repeat}, $\ell$ is a suffix of $\kleene{Ru}{\card{\ell}}$,
which means that $RuR$ is left-repeating within $q$.
\item[Case that $q$ is a prefix of $p$.]
We have that $r$ is a prefix of $uRr$.
By Lemma~\ref{lem:repeat}, $r$ is a prefix of $\kleene{uR}{\card{r}}$,
which means that $RuR$ is right-repeating.
\end{description}
This concludes the proof.
\end{proof}

\begin{definition}
If $q$ is a word over an alphabet $\Sigma$, then $\symbols{q}$ is the set that contains all (and only) the symbols that occur in~$q$.
\end{definition}

\begin{lemma}[Self-join-free episodes]\label{lem:sjfepisode}
Let $q$ be a word that satisfies $\cthree$.
Let $L\ell L$ be the right-most occurrence of an episode  that is left-repeating in $q$. 
Then, $L\ell$ is self-join-free.   
\end{lemma}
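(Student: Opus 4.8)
The plan is to argue by contradiction, first reducing the self-join-freeness of $L\ell$ to that of $\ell$, and then contradicting the Repeating Lemma (Lemma~\ref{lem:episode}). Since $L\ell L$ is an episode, the symbol $L$ does not occur in $\ell$; hence $L\ell$ is self-join-free if and only if $\ell$ is, and it suffices to show that no symbol is repeated inside $\ell$.

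So suppose, for contradiction, that some symbol $S$ occurs at least twice in $\ell$. Taking two \emph{consecutive} such occurrences yields a factor $S\mu S$ of $\ell$ with $S\notin\mu$, i.e.\ an episode of $q$. Writing $\ell=\ell_1\cdot S\mu S\cdot\ell_2$, the given occurrence of $L\ell L$ decomposes $q$ as $q=\alpha\cdot L\ell_1\cdot S\mu S\cdot\ell_2 L\cdot\beta$. The crucial observation I would record here is that the two one-sided contexts of the inner episode $S\mu S$ each straddle a copy of $L$: its left context $\alpha L\ell_1$ contains the \emph{first} $L$ of $L\ell L$, and its right context $\ell_2 L\beta$ contains the \emph{second} $L$.

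Next I would apply Lemma~\ref{lem:episode} to the episode $S\mu S$, which forces it to be left-repeating or right-repeating. But $S\neq L$ (as $S$ occurs in $\ell$) and $\mu$ is a factor of $\ell$, so neither $S\mu$ nor $\mu S$ contains $L$, and hence no power of either contains $L$. If $S\mu S$ were left-repeating, then $\alpha L\ell_1$ — which contains $L$ — would be a suffix of $(S\mu)^{|\alpha L\ell_1|}$, which is impossible. If $S\mu S$ were right-repeating, then $\ell_2 L\beta$ — which contains $L$ — would be a prefix of $(\mu S)^{|\ell_2 L\beta|}$, again impossible. Both alternatives contradict Lemma~\ref{lem:episode}, so $\ell$ has no repeated symbol, and therefore $L\ell$ is self-join-free.

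I expect the only real subtlety to be the bookkeeping of the decomposition, so that the two distinguished copies of $L$ are correctly seen to lie on opposite sides of the inner episode $S\mu S$; once that placement is pinned down, the contradiction is immediate and requires no case analysis beyond the two repeating alternatives. It is worth noting that this argument uses only that $L\ell L$ is an episode of a word satisfying $\cthree$, and not that it is left-repeating or right-most; it thus in fact establishes the stronger statement that $\ell$ is self-join-free for \emph{every} episode $L\ell L$ of $q$, the lemma's extra hypotheses merely singling out the particular episode needed in the sequel.
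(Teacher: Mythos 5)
Your proof is correct and follows essentially the same route as the paper's: assume $\ell$ has a repeated symbol, extract an inner episode $S\mu S$ from $\ell$, and use the Repeating Lemma (Lemma~\ref{lem:episode}) to force $L$ into the period word $S\mu$ (resp.\ $\mu S$), contradicting $L\notin\symbols{\ell}$. Your closing observation is also accurate: the paper's own (terser) argument likewise never uses that $L\ell L$ is left-repeating or right-most, so the stronger statement holds.
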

\begin{proof}%[Proof of Lemma~\ref{lem:sjfepisode}]
Consider for the sake of contradiction that $L\ell$ is not self-join-free.
Since $L\not\in\symbols{\ell}$, it must be that $\ell$ has a factor $MmM$ such that $Mm$ is self-join-free.
By Lemma~\ref{lem:episode}, $MmM$ must be left-repeating or right-repeating, which requires $L\in\symbols{Mm}$, a contradiction.
\end{proof}

%We give some intuition for the following lemma.
%Consider $ABBA$.
%The only self-join-free factors are $\emptyword$, $A$, $B$, $AB$, and $BA$.
%These are thus the only candidates for $u$, $v$, $w$.
%$ABBA$ is not of the forms~$\btwoa$ or~$\bthree$. 
%For instance, $ABBA=\kleene{AB}{1}\cdot\emptyword\cdot BA$ is not of the form~$\bthree$ because of the last occurrence of~$A$.
%Note that $ABBA$ is not a factor of its pumped version $ABBBA$.

\subsection{Proof of Lemma~\ref{lem:fofo}}

\begin{proof}[Proof of Lemma~\ref{lem:fofo}]
The implication \ref{it:fobone}$\implies$\ref{it:focone} is obvious.
To show \ref{it:focone}$\implies$\ref{it:fobone}, assume that $q$ satisfies~$\cone$.
The desired result is obvious if $q$ is self-join-free.
Assume from here on that $q$ is not self-join-free.
Then, we can write $q=\ell RmRr$, such that $\ell Rm$ is self-join-free.
That is, the second occurrence of $R$ is the left-most symbol that occurs a second time in~$q$.
By $\cone$, $q$ is a prefix of $\ell RmRmRr$.
It follows that $Rr$ is a prefix of $RmRr$.
By Lemma~\ref{lem:repeat}, $Rr$ is a prefix of $\kleene{Rm}{\card{r}+1}$.
It follows that there is a $k$ such that $q$ is a prefix of~$\ell\kleene{Rm}{k}$.
\end{proof}

\subsection{Proof of Lemma~\ref{lem:forms}}

\begin{proof}[Proof of Lemma~\ref{lem:forms}]
The proof of \ref{it:forms}$\implies$\ref{it:pumping} is straightforward.
We show next the direction \ref{it:pumping}$\implies$\ref{it:forms}.
To this end, assume that $q$ satisfies $\cthree$.
The desired result is obvious if $q$ is self-join-free (let $j=k=0$ in~$\btwoa$).
Assume that $q$ has a factor $L\ell L\cdot m\cdot Rr R$ where $L\ell L$ and $Rr R$ are episodes such that $\symbols{L\ell L}$, $\symbols{RrR}$, and $\symbols{m}$ are pairwise disjoint. 
Then, by Lemma~\ref{lem:episode}, $L\ell L$ must be left-repeating, and $RrR$ right-repeating.
By Lemma~\ref{lem:sjfepisode}, $L\ell$ and $rR$ are self-join-free.
Then $q$ is of the form~$\btwoa$. By letting $j=0$ or $k=0$, we obtain the situation where the number of episodes that are factors of $q$ is zero or one. 

The only difficult case is where two episodes overlap.
Assume that $q$ has an episode that is left-repeating (the case of a right-repeating episode is symmetrical).
Assume that this left-repeating episode is $e_{1}\defeq L\ell RoL$ in 
$q\defeq\dotsm\begin{array}{l}
\overbrace{L\ell Ro L}^{e_{1}}r R\\[-2.8ex]
\phantom{L\ell}\underbrace{\phantom{Ro Lr R}}_{e_{2}}
\end{array}\dotsm$,
where it can be assumed that $e_{1}$ is the right-most episode that is left-repeating.
Then, $\ell\neq\emptyword\neq r$ implies $\first{\ell}\neq\first{r}$ (or else $e_{1}$ would not be right-most, a contradiction).
By a similar reasoning, $\ell=\emptyword$ implies $r\neq\emptyword$.
Therefore, it is correct to conclude $\ell\neq r$.
It can also be assumed without loss of generality that~$r$ shares no symbols with~$e_{1}$, by choosing $R$ as the first symbol after~$e_{1}$ that also occurs in~$e_{1}$.
Now assume $e_{2}$ is right-repeating, over a length $>\card{oL}$.
Then $q$ contains a factor 
$\begin{array}{l}
\overbrace{L\ell Ro L}^{e_{1}}r R\\[-2.8ex]
\phantom{L\ell}\underbrace{\phantom{Ro Lr R}}_{e_{2}}
\end{array}\cdot oL$.
Then, $q$ rewinds to a word $p$ with factor:
$$\begin{array}{l}
\overbrace{L\ell Ro L}^{e_{1}}r R\\[-2.8ex]
\phantom{L\ell}\underbrace{\phantom{Ro Lr R}}_{e_{2}}
\end{array}\cdot o
\begin{array}{l}
\overbrace{L|\ell Ro L}^{e_{1}}r R\\[-2.8ex]
\phantom{L\ell}\underbrace{\phantom{Ro Lr R}}_{e_{2}}
\end{array}\cdot oL,$$
where the vertical bar $|$ is added to indicate a distinguished position.
It can now be verified that~$q$ is not a factor of~$p$, because of the alternation of $e_{1}$ and $e_{2}$ which does not occur in~$q$.
This contradicts the hypothesis of the lemma.
In particular, the words that start at position~$|$ are $r$ and $\ell$ in, respectively, $q$ and~$p$.
We conclude by contradiction that $e_{2}$ cannot be right-repeating over a length $>\card{oL}$.
Thus, following the right-most occurrence of~$e_{1}$, the word $q$~can contain fresh word~$r$, followed by~$RoL$, which is a suffix of~$e_{1}$.
This is exactly the form~$\btwob$. 

A remaining, and simpler,  case is where two episodes overlap by a single symbol $R=L$, giving 
$q\defeq\dotsm\begin{array}{l}
\overbrace{L\ell L}^{e_{1}}rL\\[-2.8ex]
\phantom{L\ell}\underbrace{\phantom{LrR}}_{e_{2}}
\end{array}\dotsm$,
where $e_{1}$ is the right-most episode that is left-repeating,
and $L$ is the first symbol after $e_{1}$ that also occurs in~$e_{1}$.
Therefore, $L$~does not occur in~$\ell\cdot r$, and $\ell\neq r$.
Indeed, 
if $\ell=\emptyword=r$, then $e_{1}$ is not right-most; and if $\ell\neq\emptyword\neq r$, then $\first{\ell}\neq\first{r}$, or else $e_{1}$ would not be right-most, a contradiction.
The word~$q$ rewinds to a word~$p$ with factor
$
\begin{array}{l}
\overbrace{L\ell L}^{e_{1}}r\overbrace{L\ell L}^{e_{1}}r L
,%%% punctuation symbol following the expression
\\[-2.8ex]
\phantom{L\ell}\underbrace{\phantom{LrL}}_{e_{2}}    
\phantom{\ell}\underbrace{\phantom{LrL}}_{e_{2}}   
\end{array}
$ 
and $\card{p}-\card{q}=\card{\ell}+\card{r}+2$. 
It is easily verified that $e_{2}$ cannot be right-repeating for~$>0$ symbols.
For instance, consider the case where $r\neq\emptyword$ and $e_{2}$ is right-repeating for~$1$ symbol, meaning that $q$ has suffix 
$\begin{array}{l}
\overbrace{L\ell L}^{e_{1}}rL\cdot\first{r}
,%%% punctuation symbol following the expression
\\[-2.8ex]
\phantom{L\ell}\underbrace{\phantom{LrR}}_{e_{2}}
\end{array}$
and $p$ has suffix
$
\begin{array}{l}
\overbrace{L\ell L}^{e_{1}}r\overbrace{L\ell L}^{e_{1}}r L\cdot\first{r}
.%%% punctuation symbol following the expression
\\[-2.8ex]
\phantom{L\ell}\underbrace{\phantom{LrL}}_{e_{2}}    
\phantom{\ell}\underbrace{\phantom{LrL}}_{e_{2}}   
\end{array}
$ 
If we left-align these suffixes, then there is a mismatch between $\first{r}$ and the leftmost symbol of $L\ell$.
The other possibility is to right-align these suffixes, but then $e_{1}$ cannot be genuinely left-repeating within~$q$.
\end{proof}

\subsection{Proof of Lemma~\ref{lem:notctwo}}

\begin{proof}[Proof of Lemma~\ref{lem:notctwo}]
Assume that $q$ satisfies $\cthree$.
By Lemma~\ref{lem:forms}, $q$ satisfies $\btwoa$, $\btwob$, or $\bthree$.

\framebox{\ref{it:notctwo}$\implies$\ref{it:falsifies}}
By contraposition.
Assume that \eqref{it:falsifies} does not hold.
Then, either $q$ satisfies~$\btwoa$ or $q$ satisfies~$\btwob$.
Assume that $q=aRb_{1}Rb_{2}Rc$ for three consecutive occurrences of~$R$ such that $b_{1}\neq b_{2}$.
It suffices to show that $Rc$ is a prefix of $Rb_{1}$.
It is easily verified that $b_{1}\neq b_{2}$ cannot happen if $q$ satisfies $\btwoa$.
Therefore, $q$ satisfies~$\btwob$.
The word in $\kleene{uv}{k}wv$ in $\btwob$ indeed allows for suffix $vu\cdot vw\cdot v$ where the first and second occurrence of $v$ are followed, respectively, by $u$ and~$w$.
Then, in~$q$, we have that $w$ is followed by a prefix of~$v$, and therefore $\ctwo$ is satisfied.

\framebox{\ref{it:falsifies}$\implies$\ref{it:factors}}
The hypothesis is that $q$ satisfies $\bthree$, but falsifies both $\btwoa$ and $\btwob$. 
We can assume $k\geq 0$ and self-join-free word $uvw$ such that $q$ is a factor of $uw\kleene{uv}{k}$, but $q$ falsifies~$\btwoa$ and~$\btwob$.
It must be that $u\neq\emptyword$ and the offset of $q$ in $uw\kleene{uv}{k}$ is $<\card{u}$, for otherwise $q$ is a factor of $w\kleene{uv}{k}$ and therefore satisfies~$\btwoa$, a contradiction. 
Also, one of $v$ or $w$ must not be the empty word, or else $q$ is a factor of $u\kleene{u}{k}$,   and therefore satisfies~$\btwoa$ (and also satisfies~$\btwob$).
%%%
We now consider the length of~$q$.
The word $uwuvu$ is a factor of $\kleene{wu}{2}vu$, and thus satisfies~$\btwob$.
If $v=\emptyset$, then the word $uwuu$ is a factor of $\kleene{wu}{2}u$, and thus satisfies~$\btwob$.
It is now correct to conclude that one of the following must occur:
\begin{itemize}
\item
$v\neq\emptyset$ and $\last{u}\cdot wuvu\cdot\first{v}$ is a factor of $q$; or
\item
$v=\emptyset$, $w\neq\emptyset$ and $\last{u}\cdot w\kleene{u}{2}\cdot\first{u}$ is a factor of $q$. 
\end{itemize}

\framebox{\ref{it:factors}$\implies$\ref{it:notctwo}}
Assume~\eqref{it:factors}.
Consider first the case $v\neq\emptyword$.
Let $u=\hat{u}R$ and $v=S\hat{v}$.
We have $R\neq S$, since $uv$ is self-join-free.
By item~\eqref{it:hercules}, $q$ has a factor $R\cdot w\hat{u}RS\hat{v}\hat{u}R\cdot S$, with three consecutive occurrences of~$R$.
It is easily verified that $w\hat{u}\neq S\hat{v}\hat{u}$, and that $RS$ is not a prefix of $Rw\hat{u}$.
Therefore $q$ falsifies $\ctwo$.

Consider next the case $v=\emptyword$ (whence $w\neq\emptyword$). 
Let $u=\hat{u}R$.
By item~\eqref{it:tritan}, $q$ has a factor $R\cdot w\hat{u}R\hat{u}R\cdot\first{u}$, with three consecutive occurrences of $R$.
Since $w\hat{u}\neq\hat{u}$ and $\first{u}\neq\first{w}$, it follows that $q$ falsifies $\ctwo$.
\end{proof} 

%\section{Proofs for Section~\ref{sec:key}}\label{app:key}

% The constant classification
\section{Proofs for Section~\ref{sec:constant}}
\label{appx:constant}

\subsection{Proof of Lemma~\ref{lemma:tail-part-fo}}

Lemma~\ref{lemma:tail-part-fo} is an immediate corollary of Lemma~\ref{lemma:fo-constant-head}, which states that whenever a generalized path query starts with a constant, then $\cqa{q}$ is in \FO. 
Its proof needs two helping lemmas.

\begin{lemma} \label{lemma:connected-components}
	Let $q=q_{1}\cup q_{2}\cup\dotsm\cup q_{k}$ be a Boolean conjunctive query such that for all $1 \leq i < j \leq k$, $\var{q_i}\cap \var{q_j}=\emptyset$. Then, the following are equivalent for every database instance $\db$:
	\begin{enumerate}
	\item\label{it:entire-graph}
	$\db$ is a ``yes''-instance for $\cqa{q}$; and
	\item\label{it:two-cc}
	for each $1\leq i\leq k$, $\db$ is a ``yes''-instance for $\cqa{q_i}$.
	\end{enumerate}
\end{lemma}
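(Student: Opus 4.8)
The plan is to reduce the statement to a purely model-theoretic observation about a single consistent instance: for any consistent database instance (in particular, any repair) $\rep$, we have $\rep\models q$ if and only if $\rep\models q_{i}$ for every $i\in\{1,\dots,k\}$. Granting this observation, both directions of the lemma follow by unfolding the definition of $\cqa{\cdot}$, which quantifies universally over all repairs of $\db$.

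For the observation itself, recall that $q$ represents the conjunction of all its atoms and that $\var{q}=\bigcup_{i=1}^{k}\var{q_i}$. The forward direction is immediate and does not use disjointness: since each $q_i\subseteq q$, any valuation $\theta$ with $\theta(q)\subseteq\rep$ also witnesses $\theta(q_i)\subseteq\rep$, so $\rep\models q$ implies $\rep\models q_i$ for every $i$. The backward direction is where variable-disjointness is essential. If $\rep\models q_i$ for each $i$, fix a witnessing valuation $\theta_i$ over $\var{q_i}$ with $\theta_i(q_i)\subseteq\rep$. Because the sets $\var{q_i}$ are pairwise disjoint, the union $\theta\defeq\theta_1\cup\dotsm\cup\theta_k$ is a well-defined valuation over $\var{q}$, and $\theta(q)=\bigcup_{i=1}^{k}\theta_i(q_i)\subseteq\rep$, so $\rep\models q$.

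With the observation in hand, the lemma is straightforward. For \ref{it:entire-graph}$\implies$\ref{it:two-cc}, assume every repair of $\db$ satisfies $q$; by the forward direction of the observation, every repair satisfies each $q_i$, so $\db$ is a ``yes''-instance of $\cqa{q_i}$ for every $i$. For \ref{it:two-cc}$\implies$\ref{it:entire-graph}, fix an arbitrary repair $\rep$ of $\db$. By hypothesis $\db$ is a ``yes''-instance of $\cqa{q_i}$ for each $i$, and since $\rep$ is one of the repairs quantified over, we obtain $\rep\models q_i$ for every $i$; the backward direction of the observation then yields $\rep\models q$. As $\rep$ was arbitrary, $\db$ is a ``yes''-instance of $\cqa{q}$.

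I do not anticipate a genuine obstacle here: the only point that requires care is the quantifier order in \ref{it:two-cc}$\implies$\ref{it:entire-graph}, where one must combine the per-component witnesses for a \emph{single} fixed repair rather than across different repairs. The disjointness hypothesis is exactly what licenses merging these witnesses into one valuation for $q$, and it is the sole place in the argument where that hypothesis is used.
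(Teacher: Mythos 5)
Your proof is correct and follows essentially the same route as the paper's: the forward direction is immediate from $q_i\subseteq q$, and the backward direction fixes a single repair and merges the per-component witnessing valuations, which is licensed precisely by the pairwise disjointness of the variable sets. The only cosmetic difference is that the paper writes out the case $k=2$ and notes the generalization is straightforward, whereas you handle general $k$ directly.
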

\begin{proof}
We give the proof for $k=2$.
The generalization to larger~$k$ is straightforward.

\framebox{\ref{it:entire-graph}$\implies$\ref{it:two-cc}}
Assume that~\eqref{it:cqaq} holds true.
Then each repair $\rep$ of $\db$ satisfies $q$, and therefore satisfies both $q_1$ and $q_2$. 
Therefore, $\db$ is a ``yes''-instance for both $\cqa{q_1}$ and $\cqa{q_2}$.

\framebox{\ref{it:two-cc}$\implies$\ref{it:entire-graph}}
Assume that~\eqref{it:two-cc} holds true.
Let $\rep$ be any repair of $\db$. 
Then there are valuations $\mu$ from $\var{q_1}$ to $\adom{\db}$, and $\theta$ from $\var{q_2}$ to $\adom{\db}$ such that $\mu(q_1) \subseteq\rep$ and $\theta(q_2) \subseteq\rep$. 
Since $\var{q_1} \cap \var{q_2} = \emptyset$ by construction, we can define a valuation~$\sigma$ as follows, for every variable $z\in\var{q_1}\cup\var{q_2}$:
	$$\sigma(z) = \begin{cases}
						\mu(z) & \mbox{if $z \in \var{q_1}$}\\
						\theta(z) & \mbox{if $z \in \var{q_2}$}
%						c & \text{ if } z = c \text{ is a constant,}
				\end{cases}
	$$
From $\sigma(q) = \sigma(q_1) \cup \sigma(q_2) = \mu(q_1) \cup \theta(q_2) \subseteq\rep$, it follows that $\rep$ satisfies $q$.
Therefore, $\db$ is a ``yes''-instance for $\cqa{q}$.
\end{proof}

\begin{lemma} \label{lemma:constant-end-reduction}
	Let $q$ be a generalized path query with
	$$q = \{R_{1}(\underline{s_{1}},s_{2}), R_{2}(\underline{s_{2}},s_{3}), \dots, R_{k}(\underline{s_{k}},c)\},$$ where $c$ is a constant, and each $s_i$ is either a constant or a variable for all $i \in \{1, \dots, k\}$. Let 
$$p = \{R_{1}(\underline{s_{1}},s_{2}), R_{2}(\underline{s_{2}},s_{3}), \dots, R_{k}(\underline{s_{k}},s_{k+1}), N(\underline{s_{k+1}}, s_{k+2})\},$$ where $s_{k+1}$, $s_{k+2}$ are fresh variables to~$q$ and $N$ is a fresh relation to~$q$. Then there exists a first-order reduction from $\cqa{q}$ to $\cqa{p}$.
\end{lemma}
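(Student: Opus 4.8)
The plan is to exhibit an explicit first-order reduction that, given an input $\db$ for $\cqa{q}$ over the schema $\{R_{1},\dots,R_{k}\}$, produces the instance $\db'\defeq\db\cup\{N(\underline{c},c)\}$ over the schema $\{R_{1},\dots,R_{k},N\}$ of $p$. This map is clearly first-order definable: the $R_{i}$-relations of $\db'$ coincide with those of $\db$, and the $N$-relation is the singleton $\{(c,c)\}$, defined by the quantifier-free formula $x=c\land y=c$. I will then show that $\db$ is a ``yes''-instance of $\cqa{q}$ if and only if $\db'$ is a ``yes''-instance of $\cqa{p}$.

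First I would pin down the repairs of $\db'$. Since $N$ does not occur in $\db$ and we add exactly one $N$-fact, the block $N(\underline{c},*)$ of $\db'$ is the singleton $\{N(\underline{c},c)\}$ and introduces no new conflict. Hence every repair of $\db'$ contains $N(\underline{c},c)$, and the map $\rep\mapsto\rep\cup\{N(\underline{c},c)\}$ is a bijection between the repairs of $\db$ and those of $\db'$; moreover $\rep$ and $\rep'\defeq\rep\cup\{N(\underline{c},c)\}$ agree on all $R_{i}$-facts.

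The heart of the argument is then to show, for each such pair $(\rep,\rep')$, that $\rep\models q$ if and only if $\rep'\models p$. For the forward direction, a valuation $\theta$ witnessing $\rep\models q$ satisfies $R_{k}(\underline{\theta(s_{k})},c)\in\rep$; extending $\theta$ by $\theta(s_{k+1})\defeq c$ and $\theta(s_{k+2})\defeq c$ then satisfies every $R$-atom of $p$ (the last $R$-atom becomes $R_{k}(\underline{\theta(s_{k})},c)$, unchanged, and the atoms $R_{i}(\underline{s_{i}},s_{i+1})$ for $i<k$ are literally shared by $q$ and $p$) together with the $N$-atom, since $N(\underline{c},c)\in\rep'$. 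For the converse, the key point---and the crux of the whole lemma---is that the $N$-atom forces the endpoint to be $c$: any valuation $\theta$ witnessing $\rep'\models p$ must map $s_{k+1}$ to the primary key of an $N$-fact of $\rep'$, and the only such fact is $N(\underline{c},c)$, so $\theta(s_{k+1})=c$. Restricting $\theta$ to $\var{q}$ then witnesses $\rep\models q$, because the $R$-atoms of $p$ under this $\theta$ are precisely the atoms of $q$ (with $s_{k+1}$ playing the role of the constant $c$) and $\rep'$, $\rep$ agree on all $R$-facts; note that self-joins among the $R_{i}$ cause no difficulty, since this identification is forced atom-by-atom.

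Combining the bijection with this equivalence yields that every repair of $\db$ satisfies $q$ iff every repair of $\db'$ satisfies $p$, i.e.\ $\db$ is a ``yes''-instance of $\cqa{q}$ iff $\db'$ is a ``yes''-instance of $\cqa{p}$. The only genuinely delicate step is the determinism argument in the converse direction---checking that the freshly introduced relation $N$, appearing only at $c$, cannot be matched anywhere else---but this is immediate from the construction, so I expect no real obstacle. The remaining work is routine bookkeeping of extending and restricting the valuation, together with the degenerate case $c\notin\adom{\db}$, where no $R_{k}$-fact ends in $c$, so neither $q$ nor $p$ is satisfiable in any repair and both instances are ``no''-instances, keeping the equivalence intact.
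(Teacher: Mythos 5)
Your proof is correct and takes essentially the same approach as the paper: append a single $N$-fact with primary key $c$ to $\db$, note that repairs of the two instances are in bijection since the new singleton block creates no conflict, and use the uniqueness of the $N$-fact to force $\theta(s_{k+1})=c$ in the backward direction. The only (immaterial) difference is that the paper adds $N(\underline{c},d)$ for a fresh constant $d$ rather than $N(\underline{c},c)$.
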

\begin{proof}
Let $\db$ be an instance for $\cqa{q}$ and consider the instance $\db \cup \{N(\underline{c},d)\}$ for $\cqa{p}$ where $d$ is a fresh constant to $\adom{\db}$.

We show that $\db$ is a ``yes''-instance for $\cqa{q}$ if and only if $\db\cup\{N(\underline{c},d)\}$ is a ``yes''- instance for $\cqa{p}$.
 
\framebox{$\implies$}
Assume $\db$ is a ``yes''-instance for $\cqa{q}$. 
	Let $\rep$ be any repair of $\db \cup \{N(\underline{c},d)\}$, and thus $\rep\setminus \{N(\underline{c},d)\}$ is a repair for $\db$. Then there exists a valuation $\mu$ with $\mu(q) \subseteq\rep \setminus \{N(\underline{c},d)\}$. Consider the valuation $\mu^+$ from $\var{q} \cup \{s_{k+1}, s_{k+2}\}$ to $\adom{\db} \cup \{c, d\}$ that agrees with $\mu$ on $\var{q}$ and maps additionally $\mu^+(s_{k+1}) = c$ and $\mu^+(s_{k+2}) = d$. We thus have $\mu^+(p) \subseteq \rep$. 
It is correct to conclude that $\db \cup \{N(\underline{c},d)\}$ is a ``yes''-instance for $\cqa{p}$.	

\framebox{$\impliedby$}
Assume that $\db \cup \{N(\underline{c},d)\}$ is a ``yes''-instance for the problem $\cqa{p}$.
	Let $\rep$ be any repair of $\db$. 
Then $\rep \cup \{N(\underline{c},d)\}$ is a repair of $\db \cup \{N(\underline{c},d)\}$, and thus there exists some valuation $\theta$ with $\theta(p) \subseteq\rep \cup\{N(\underline{c},d)\}$. 
Since $\db$ contains only one $N$-fact, we have $\theta(s_{k+1})=c$.
It follows that $\theta(q)\subseteq\rep$, as desired.
\end{proof}

\begin{lemma} \label{lemma:fo-constant-head}
Let $q$ be a generalized path query with
	$$q = \{R_{1}(\underline{s_{1}},s_{2}), R_{2}(\underline{s_{2}},s_{3}), \dots, R_{k}(\underline{s_{k}},s_{k+1})\}$$ where $s_1$ is a constant, and each $s_i$ is either a constant or a variable for all $i \in \{2, \dots, k+1\}$. Then the problem $\cqa{q}$ is in \FO.
\end{lemma}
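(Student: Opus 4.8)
The plan is to decompose $q$ at the slots that carry a constant, thereby reducing the claim to the constant-free fixed-head case already handled by Lemma~\ref{lemma:constant-head-fo}, and to combine the pieces using Lemmas~\ref{lemma:connected-components} and~\ref{lemma:constant-end-reduction}. Write $q = \{R_{1}(\underline{s_{1}},s_{2}), \ldots, R_{k}(\underline{s_{k}},s_{k+1})\}$ with $s_{1}$ a constant. Call a slot $s_{j}$ a \emph{cut point} if it is a constant; by hypothesis $s_{1}$ is one. Let the cut points be $s_{1}=s_{j_{0}}, s_{j_{1}}, \ldots, s_{j_{m}}$ with $1=j_{0}<j_{1}<\cdots<j_{m}$, and set $j_{m+1}\defeq k+1$. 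For $0\leq i\leq m$, let $q_{i}$ be the maximal factor of $q$ whose interior slots are all variables,
$$q_{i} = \{R_{j_{i}}(\underline{s_{j_{i}}}, s_{j_{i}+1}), \ldots, R_{j_{i+1}-1}(\underline{s_{j_{i+1}-1}}, s_{j_{i+1}})\}.$$
Then $q = q_{0}\cup\cdots\cup q_{m}$, each $q_{i}$ is a path query whose head $s_{j_{i}}$ is a constant, and since all slots $s_{1},\ldots,s_{k+1}$ are distinct while the only slots shared between consecutive $q_{i}$'s are the constant cut points, we have $\var{q_{i}}\cap\var{q_{i'}}=\emptyset$ for $i\neq i'$. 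Hence by Lemma~\ref{lemma:connected-components}, $\db$ is a ``yes''-instance of $\cqa{q}$ if and only if it is a ``yes''-instance of $\cqa{q_{i}}$ for every~$i$.

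It then suffices to prove that each $\cqa{q_{i}}$ is in \FO, because \FO\ is closed under finite conjunction and one may take the conjunction of the individual rewritings. Fix $i$ and let $c\defeq s_{j_{i}}$ be its constant head. If the tail $s_{j_{i+1}}$ is a variable (which can happen only for the last segment $q_{m}$), then every slot of $q_{i}$ other than the head is a fresh distinct variable, so $q_{i}=\fixedhead{p}{c}$ for the constant-free path query $p=R_{j_{i}}\cdots R_{j_{i+1}-1}$ in the sense of Definition~\ref{def:fixedhead}, and Lemma~\ref{lemma:constant-head-fo} gives $\cqa{q_{i}}\in\FO$ directly. If instead the tail $s_{j_{i+1}}$ is a constant, then $q_{i}$ has exactly the form required by Lemma~\ref{lemma:constant-end-reduction}, which yields a first-order reduction from $\cqa{q_{i}}$ to $\cqa{q_{i}'}$, where $q_{i}'$ replaces the constant tail by a fresh variable and appends a fresh atom $N(\underline{s_{j_{i+1}}},\cdot)$. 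Now every non-head slot of $q_{i}'$ is a distinct variable, so $q_{i}'=\fixedhead{p'}{c}$ with $p'=R_{j_{i}}\cdots R_{j_{i+1}-1}N$ constant-free, whence $\cqa{q_{i}'}\in\FO$ by Lemma~\ref{lemma:constant-head-fo}. Composing the first-order reduction with this rewriting, and using that \FO\ is closed under first-order reductions, gives $\cqa{q_{i}}\in\FO$.

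The conceptually delicate point I would verify with care is the validity of the decomposition step in the presence of self-joins: a single repair fact may simultaneously witness atoms lying in two different segments $q_{i}$ and $q_{i'}$. This is precisely why the argument is routed through Lemma~\ref{lemma:connected-components} rather than through the linear path structure: its proof recombines per-segment valuations using \emph{only} variable-disjointness and is therefore insensitive to self-joins, while the constant cut points shared between adjacent segments cause no clash because valuations act as the identity on constants. The remaining verifications---that the cut points partition the atoms of $q$ into variable-disjoint segments, and that each segment is genuinely of the fixed-head shape demanded by Lemmas~\ref{lemma:constant-end-reduction} and~\ref{lemma:constant-head-fo} (including degenerate one-atom segments between two adjacent constant slots, covered by the base case of Lemma~\ref{lemma:constant-head-fo})---are routine consequences of the definition of generalized path queries, in which every constant occupies a single shared slot and all slots are distinct.
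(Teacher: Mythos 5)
Your proposal is correct and follows essentially the same route as the paper's own proof: cut $q$ at the constant slots into variable-disjoint constant-headed segments, apply Lemma~\ref{lemma:connected-components} to reduce to the segments, and handle each segment via Lemma~\ref{lemma:constant-head-fo}, using Lemma~\ref{lemma:constant-end-reduction} first when the segment ends in a constant. Your added remarks on why the decomposition is safe in the presence of self-joins are a sound elaboration of what the paper leaves implicit.
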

\begin{proof}
	Let the $1 = j_1 < j_2 < \dots < j_{\ell} \leq k + 1$ be all the indexes $j$ such that $s_{j}$ is a constant for some $\ell \geq 1$. Let $j_{\ell+1} = k+1$. Then for each $i \in \{1, 2, \dots, \ell\}$, the query
	$$q_i = \bigcup_{j_i \leq j < j_{i+1}} \{R_{j}(\underline{s_{j}}, s_{j+1})\}$$ is a generalized path query where each $s_{j_i}$ is a constant. 

	We claim that $\cqa{q_i}$ is in \FO\ for each $1 \leq i \leq \ell$. Indeed, if $s_{j_{i+1}}$ is a variable, then the claim follows by Lemma~\ref{lemma:constant-head-fo}; if $s_{j_{i+1}}$ is a constant, then the claim follows by Lemma~\ref{lemma:constant-end-reduction} and Lemma~\ref{lemma:constant-head-fo}.

	Since by construction, $q = q_1 \cup q_2 \cup \dots \cup q_{\ell}$, we conclude that $\cqa{q}$ is in \FO\ by Lemma~\ref{lemma:connected-components}.
\end{proof}

The proof of Lemma~\ref{lemma:tail-part-fo} is now simple.

\begin{proof}[Proof of Lemma~\ref{lemma:tail-part-fo}]
If $q$ contains no constants, the lemma holds trivially. Otherwise, $\cqa{p}$ is in \FO\ by Lemma~\ref{lemma:fo-constant-head}.
\end{proof}

\subsection{Elimination of Constants}

In this section, we show how constants can be eliminated from generalized path queries.
The \emph{extended query} of a generalized path query is defined next.

\begin{definition}[Extended query]
 Let $q$ be a generalized path query. The \emph{extended query} of $q$, denoted by $\extend{q}$, is defined as follows:  
\begin{itemize}
\item 
if $q$ does not contain any constant, then $\extend{q}\defeq q$; 
\item
otherwise, $\chr{q}=\{R_1(\underline{x_1}, x_2)$, $R_2(\underline{x_2}, x_3)$, \dots, $R_{\ell}(\underline{x_{\ell}}, c)\}$ for some constant~$c$.
In this case, we define
$$\extend{q}\defeq
\{R_1(\underline{x_1},x_2), \dots, 
  R_{\ell}(\underline{x_{\ell}}, x_{\ell+1}), 
  N(\underline{x_{\ell+1}},x_{\ell+2})\},$$
where $x_{\ell+1}$ and $x_{\ell+2}$ are fresh variables and $N$ is a fresh relation name not occurring in $q$.  
\qed
\end{itemize}
%\qed
\end{definition}

By definition, $\extend{q}$ does not contain any constant.

\begin{example}
Let $q = R(\underline{x}, y), S(\underline{y}, 0), T(\underline{0}, 1), R(\underline{1}, w)$ where $0$ and~$1$ are constants. We have $\extend{q} = R(\underline{x}, y), S(\underline{y}, z), N(\underline{z}, u)$. 
\qed
\end{example}

We show two lemmas which, taken together, show that the problem $\cqa{q}$ is first-order reducible to $\cqa{\extend{q}}$, for every generalized path query~$q$.

\begin{lemma} \label{lemma:prefix-suffices}
For every generalized path query $q$,
there is a first-order reduction from $\cqa{q}$ to $\cqa{\chr{q}}$.
\end{lemma}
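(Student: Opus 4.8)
The plan is to exploit the decomposition $q = \chr{q} \cup p$, where $p \defeq q \setminus \chr{q}$, and to show that $\cqa{q}$ factors as a conjunction of $\cqa{\chr{q}}$ and $\cqa{p}$, the latter being in \FO\ by Lemma~\ref{lemma:tail-part-fo}. First I would dispose of the trivial case: if $q$ is constant-free, then $\chr{q} = q$ and the identity map is the desired first-order reduction. So assume $q$ contains a constant. By the definition of the characteristic prefix, either $\chr{q}$ ends at a constant $c$ in its last non-key position (so $p$ begins with $R_{\ell+1}(\underline{c},\cdot)$), or $\chr{q} = \emptyset$ because $s_1$ is already a constant; I defer this degenerate subcase to the end.

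The key structural observation is that $\chr{q}$ and $p$ share no variables, meeting only at the constant $c$, and that valuations act as the identity on constants. Consequently, for every repair $\rep$ of $\db$ we have $\rep \models q$ iff $\rep \models \chr{q}$ and $\rep \models p$: the forward direction is immediate from $\theta(q) = \theta(\chr{q}) \cup \theta(p)$, while the backward direction glues witnessing valuations for $\chr{q}$ and $p$ into a single valuation for $q$, which is legitimate precisely because $\var{\chr{q}} \cap \var{p} = \emptyset$. Since a universal quantifier distributes over conjunction, this yields
\[
\db \in \cqa{q} \iff \db \in \cqa{\chr{q}} \text{ and } \db \in \cqa{p}.
\]

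It then remains to turn this equivalence into a genuine first-order reduction. By Lemma~\ref{lemma:tail-part-fo}, $\cqa{p}$ is in \FO, so I would fix a first-order sentence $\psi_p$ with $\db \models \psi_p \iff \db \in \cqa{p}$. Since $\chr{q}$ is nonempty here, there is a fixed finite ``no''-instance $\db_0$ of $\cqa{\chr{q}}$ (for example, the empty instance does not satisfy $\chr{q}$). The reduction maps $\db \mapsto \db$ whenever $\db \models \psi_p$, and $\db \mapsto \db_0$ otherwise; this is first-order definable because the case distinction is driven by the first-order sentence $\psi_p$ and both outputs (a verbatim copy of $\db$, or a hardcoded fixed structure) are first-order definable. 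Correctness is a direct check against the displayed equivalence: if $\db \models \psi_p$ then $\db \in \cqa{\chr{q}} \iff \db \in \cqa{q}$, whereas if $\db \not\models \psi_p$ then $\db \notin \cqa{p}$, hence $\db \notin \cqa{q}$, and the output $\db_0$ is indeed a ``no''-instance of $\cqa{\chr{q}}$.

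Finally, the degenerate subcase $\chr{q} = \emptyset$ — which occurs exactly when $q$ starts with a constant — needs only a remark: then $p = q$ and $\cqa{q}$ is already in \FO\ by Lemma~\ref{lemma:fo-constant-head}, so the classification of $\cqa{q}$ is settled directly and no reduction to the (trivially all-accepting) problem $\cqa{\emptyset}$ is required. The main subtlety I expect is not the combinatorics — that $\cqa{q}$ splits as the conjunction of $\cqa{\chr{q}}$ and $\cqa{p}$ is an easy consequence of variable-disjointness — but rather the careful encoding of the \FO\ side-condition $\psi_p$ into a many-one first-order reduction via the canonical ``no''-instance $\db_0$, together with isolating the degenerate empty-prefix case where this device is unavailable.
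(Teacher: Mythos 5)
Your proposal is correct and follows essentially the same route as the paper: decompose $q$ as $\chr{q}\cup p$ with $p=q\setminus\chr{q}$, use the variable-disjointness of the two parts to split $\cqa{q}$ into the conjunction of $\cqa{\chr{q}}$ and $\cqa{p}$ (the paper's Lemma~\ref{lemma:connected-components}), and absorb the \FO-decidable conjunct $\cqa{p}$ into the reduction. Your explicit construction of the many-one reduction via the sentence $\psi_p$ and a hardcoded ``no''-instance, and your isolation of the degenerate case $\chr{q}=\emptyset$, merely spell out details the paper leaves implicit.
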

\begin{proof}
Let $p \defeq q \setminus \chr{q}$.
Since $\var{\chr{q}} \cap \var{p} = \emptyset$, Lemmas~\ref{lemma:connected-components} and~\ref{lemma:fo-constant-head} imply that the following are equivalent for every database instance $\db$:
\begin{enumerate}
\item\label{it:cqaq}
$\db$ is a ``yes''-instance for $\cqa{q}$; and
\item\label{it:cqatwo}
$\db$ is a ``yes''-instance for $\cqa{\chr{q}}$ and a ``yes''-instance for $\cqa{p}$.
\end{enumerate}
To conclude the proof, it suffices to observe that $\cqa{p}$ is in \FO\ by Lemma~\ref{lemma:fo-constant-head}.
\end{proof}

\begin{lemma} \label{lem:last-step}
For every generalized path query $q$, there is a first-order reduction from $\cqa{\chr{q}}$ to $\cqa{\extend{q}}$.  
\end{lemma}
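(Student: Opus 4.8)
The plan is to split on whether $q$ contains a constant, and to discharge the nontrivial case by directly invoking the already-established Lemma~\ref{lemma:constant-end-reduction}.

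First, if $q$ is constant-free, then by the definitions of $\chr{\cdot}$ and $\extend{\cdot}$ we have $\chr{q}=q$ and $\extend{q}=q$, hence $\chr{q}=\extend{q}$, and the identity map is a (trivial) first-order reduction from $\cqa{\chr{q}}$ to $\cqa{\extend{q}}$. So the interesting case is when $q$ contains at least one constant. Here I would first observe that the characteristic prefix $\chr{q}$ necessarily ends with a constant at its last non-key position. Indeed, let $s_{m}$ be the first constant occurring in $q$; then $s_{1},\dots,s_{m-1}$ are all variables while $s_{m}$ is a constant, so by Definition~\ref{def:gpq} the longest admissible prefix has exactly $m-1$ atoms and takes the form $\chr{q}=\{R_{1}(\underline{x_{1}},x_{2}),\dots,R_{\ell}(\underline{x_{\ell}},c)\}$ with $\ell=m-1$ and $c=s_{m}$. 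In particular, every key position and every intermediate position of $\chr{q}$ is a variable, and only its final non-key position is the constant~$c$.

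The key observation is then that the pair $(\chr{q},\extend{q})$ is exactly an instance of the pattern treated in Lemma~\ref{lemma:constant-end-reduction}. I would take the query ``$q$'' of that lemma to be $\chr{q}$, whose last non-key position is the constant $c$, precisely as that lemma requires; and I would take the query ``$p$'' of that lemma to be $\extend{q}$, which by the definition of $\extend{\cdot}$ is obtained from $\chr{q}$ by replacing $c$ with a fresh variable $x_{\ell+1}$ and appending the fresh atom $N(\underline{x_{\ell+1}},x_{\ell+2})$ for a fresh relation name $N$. Since these are syntactically identical to the construction of ``$p$'' from ``$q$'' in Lemma~\ref{lemma:constant-end-reduction}, that lemma immediately yields a first-order reduction from $\cqa{\chr{q}}$ to $\cqa{\extend{q}}$, which is exactly the statement to be proved.

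There is essentially no new technical work: the entire content is the case split together with the recognition that the second case is verbatim an application of Lemma~\ref{lemma:constant-end-reduction}, so I do not expect any genuine obstacle. The one point meriting explicit care is the claim that $\chr{q}$ ends in a constant whenever $q$ contains one, as this is what guarantees the hypothesis of Lemma~\ref{lemma:constant-end-reduction} actually applies; but this follows directly from Definition~\ref{def:gpq} and the fact that in a generalized path query a constant occupies a non-key position (and only possibly the next key position).
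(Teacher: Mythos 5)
Your proposal is correct and follows exactly the same route as the paper: a case split on whether $q$ contains a constant, with the constant-free case trivial because $\chr{q}=\extend{q}=q$, and the remaining case discharged by a direct application of Lemma~\ref{lemma:constant-end-reduction}. The only difference is that you spell out why $\chr{q}$ ends in a constant so that the hypothesis of that lemma applies, a detail the paper leaves implicit in its definition of $\extend{q}$.
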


\begin{proof}
Let $q$ be a generalized path query.
If $q$ contains no constants, the lemma trivially obtains because $\chr{q} = \extend{q} = q$. If $q$ contains at least one constant, then there exists a first-order reduction from $\cqa{\chr{q}}$ to $\cqa{\extend{q}}$ by Lemma~\ref{lemma:constant-end-reduction}.
\end{proof}

\subsection{Complexity Upper Bounds in Theorem~\ref{thm:main-constant}}

\begin{lemma} \label{lem:ptime-degenerate}
Let $q$ be a generalized path query that contains at least one constant. 
If $q$ satisfies $\dthree$, then $q$ satisfies $\dtwo$ and $\extend{q}$ satisfies $\ctwo$.
\end{lemma}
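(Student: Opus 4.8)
The plan is to push everything through to the constant‑free word $\extend{q}$ and reduce the statement to one combinatorial fact about words that end in a fresh symbol. Since $q$ contains a constant, its first constant $c$ sits at the non‑key position of the last atom of $\chr{q}$, so $\chr{q}=\pathcons{W}{c}$ for the word $W=R_{1}\cdots R_{\ell}$ of relation names, $c$ occurring in no key position of $\chr{q}$. Then $\extend{q}$ is the word $WN$, where $N$ is a fresh relation name; in particular $N$ occurs in $WN$ exactly once, at the very end. All the conditions $\done,\dtwo,\dthree$ are evaluated with $\gamma=c$.

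First I would record the bridging characterization: for words $p,p'$ there is a homomorphism from $\pathcons{p}{c}$ to $\pathcons{p'}{c}$ iff $p$ is a suffix of $p'$. This is immediate because $c$ is the object of the last atom and the key of no atom, so any homomorphism must send the final atom to the final atom and then propagate backwards, aligning $p$ and $p'$ from the right. Using this I would establish the equivalence $q\text{ satisfies }\dthree \iff \extend{q}=WN\text{ satisfies }\cthree$. Indeed, a decomposition $WN=uRvRw'$ forces $R\neq N$ (as $N$ is unique), hence $w'=w''N$ and $W=uRvRw''$; and since $N$ occurs only at the end, $WN$ is a factor of $uRvRvRw'$ iff $W$ is a suffix of $uRvRvRw''$, which by the characterization is exactly the $\dthree$ requirement applied to the matching decomposition $\chr{q}=\pathcons{uRvRw''}{c}$.

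The heart of the argument, and the step I expect to be the main obstacle, is the \emph{collapse}: if $WN$ satisfies $\cthree$ (with $N$ fresh and last), then $WN$ satisfies $\ctwo$. Here I would use that, by definition, $\ctwo$ is precisely $\cthree$ together with the extra requirement on three consecutive occurrences. So, assuming for contradiction that $WN\in\cthree$ violates $\ctwo$, there are consecutive occurrences of some $R$ with $WN=\Lambda R v_{1} R v_{2} R \mu$, $R\notin v_{1}\cup v_{2}$, and $v_{1}\neq v_{2}$; as $N$ is unique, $R\neq N$ and $\mu$ ends with the single $N$. Applying the $\cthree$ property to the decomposition taken at the second and third $R$ shows that $WN$ is a factor — hence, by uniqueness of $N$, a suffix — of $\Lambda R v_{1} R v_{2} R v_{2} R \mu$. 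Peeling the common suffix $R v_{2} R \mu$ forces $\Lambda R v_{1}$ to be a suffix of $\Lambda R v_{1} R v_{2}$, so by the symmetric half of Lemma~\ref{lem:repeat}, $R v_{1}$ is a suffix of $\kleene{Rv_{2}}{j}$ for some $j$. A three‑way split on $\card{v_{1}}$ versus $\card{v_{2}}$ then closes the case: when $\card{v_{1}}<\card{v_{2}}$, $Rv_{1}$ is a suffix of $v_{2}$, so $R\in v_{2}$; when $\card{v_{1}}=\card{v_{2}}$, $Rv_{1}=Rv_{2}$, so $v_{1}=v_{2}$; and when $\card{v_{1}}>\card{v_{2}}$, the block $Rv_{2}$ is a proper suffix of $Rv_{1}$, so $R\in v_{1}$ — each contradicting the hypotheses. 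This yields $\extend{q}\in\ctwo$, the second conclusion.

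Finally, the first conclusion, $q$ satisfies $\dtwo$, follows cheaply. The first conjunct of $\dtwo$ is literally $\dthree$, which holds by hypothesis. For the second conjunct, take any decomposition $\chr{q}=\pathcons{uRv_{1}Rv_{2}Rw}{c}$ with consecutive $R$'s; the same pattern appears in $\extend{q}$ as $WN=uRv_{1}Rv_{2}R(wN)$. Since $\extend{q}\in\ctwo$ (just proved), its three‑consecutive‑occurrence clause gives $v_{1}=v_{2}$ or that $R(wN)$ is a prefix of $Rv_{1}$; but the latter is impossible because $wN$ contains $N$ while $v_{1}$ does not. Hence $v_{1}=v_{2}$ for every such decomposition, so the first disjunct of the $\dtwo$ condition always holds and $q\in\dtwo$. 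The only delicate point in the whole proof is the collapse step, whose crux is arguing that the appended fresh symbol forces the $\cthree$‑witness to be a genuine suffix, after which the periodicity provided by Lemma~\ref{lem:repeat} finishes the case analysis; the remaining steps are routine bookkeeping around the characteristic prefix and the homomorphism characterization.
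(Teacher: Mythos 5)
Your proof is correct and follows essentially the same route as the paper's: both exploit the fact that the trailing constant (equivalently, the fresh final symbol $N$ of $\extend{q}$) forces every homomorphism or factor witness to align from the right, both apply the rewind at the last two of the three consecutive occurrences of $R$ to get that $uRv_1$ is a suffix of $uRv_1Rv_2$, and both conclude $v_1=v_2$, which yields $\dtwo$ for $q$ and $\ctwo$ for $\extend{q}$ simultaneously. The only difference is one of explicitness: where the paper simply asserts that the homomorphism ``must map $Rv_1$ to $Rv_2$,'' your appeal to Lemma~\ref{lem:repeat} and the three-way comparison of $\card{v_1}$ with $\card{v_2}$ (using $R\notin\symbols{v_1}\cup\symbols{v_2}$) supplies exactly the justification that assertion needs.
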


\begin{proof}
Assume that $q$ satisfies $\dthree$. Let $\chr{q} = \pathcons{p}{c}$ for some constant $c$. We have $\extend{q} = p \cdot N$ where $N$ is a fresh relation name not occurring in~$p$.

We first argue that $\extend{q}$ is a factor of every word to which $\extend{q}$ rewinds.   
To this end, let $\extend{q}=uRvRwN$ where $p=uRvRw$. 
Since $q$ satisfies $\dthree$, there exists a homomorphism from $\chr{q}=\pathcons{uRvRw}{c}$ to $\pathcons{uRvRvRw}{c}$, implying that $uRvRw$ is a suffix of $uRvRvRw$.
It follows that $uRvRwN$ is a suffix of $uRvRvRwN$. 
Hence $\extend{q}$ satisfies $\cthree$.

The remaining test for $\ctwo$ is where
$\extend{q} = uRv_1Rv_2RwN$ for consecutive occurrences of~$R$.
We need to show that either $v_{1}=v_{2}$ or $RwN$ is a prefix of $Rv_{1}$ (or both).
%To this end, assume that $RwN$ is not a prefix of $Rv_1$. 
We have $p = uRv_1Rv_2Rw$. 
Since $q$ satisfies $\dthree$, there exists a homomorphism from $\chr{q} = \pathcons{uRv_1 Rv_2 Rw}{c}$ to $\pathcons{uRv_1Rv_2 Rv_2 Rw}{c}$. 
Since $c$ is a constant, the homomorphism must map $Rv_1$ to $Rv_2$, implying that~$v_1=v_2$.
It is correct to conclude that $q$ satisfies $\dtwo$ and $\extend{q}$ satisfies~$\ctwo$.
\end{proof}

\begin{lemma} \label{lem:never-higher}
For every generalized path query $q$, 
	\begin{itemize}
		\item if $q$ satisfies $\done$, then  $\extend{q}$ satisfies $\cone$;
		\item if $q$ satisfies $\dtwo$, then  $\extend{q}$ satisfies $\ctwo$; and
		\item if $q$ satisfies $\dthree$, then  $\extend{q}$ satisfies $\cthree$.
	\end{itemize}
\end{lemma}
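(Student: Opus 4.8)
The plan is to reduce everything to the constant-free conditions $\cone$, $\ctwo$, $\cthree$ via a case split on whether $q$ contains a constant. If $q$ is constant-free, then $\chr{q}=q=\extend{q}$, and $\done$, $\dtwo$, $\dthree$ coincide by definition with $\cone$, $\ctwo$, $\cthree$ (taking $\gamma=\top$), so all three implications are immediate. From here on I would assume $q$ contains a constant, so that $\chr{q}=\pathcons{p}{c}$ for a constant $c$ and a constant-free word $p$, and $\extend{q}$ is, up to variable renaming, the word $pN$ in which the fresh relation name $N$ occurs exactly once, at the very end.

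The first thing I would record is a dictionary between homomorphisms and the word order, exploiting that $c$ occurs in $\pathcons{p_2}{c}$ only as the non-key value of its final atom. Consequently, any homomorphism from $\pathcons{p_1}{c}$ to $\pathcons{p_2}{c}$ must send the last atom of the source to the last atom of the target; propagating this identification backwards along the path forces $p_1$ to be a suffix of $p_2$. A \emph{prefix} homomorphism additionally pins the first key value, and the two alignments together force $p_1=p_2$. (The suffix half of this dictionary is precisely what is used inside the proof of Lemma~\ref{lem:ptime-degenerate}.) These are the only combinatorial facts about generalized path queries with a constant endpoint that I need.

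For the $\done$ bullet I would show that $\done$ forces $p$ to be self-join-free. Indeed, if some relation name repeats in $p$, write $p=uRvRw$; then $\done$ demands a prefix homomorphism from $\pathcons{uRvRw}{c}$ to $\pathcons{uRvRvRw}{c}$, which by the dictionary forces $uRvRw=uRvRvRw$, impossible since $\card{Rv}\geq 1$. Hence $p$, and therefore $pN$, is self-join-free, so $\extend{q}$ satisfies $\cone$ vacuously. For the $\dthree$ and $\dtwo$ bullets I would reuse Lemma~\ref{lem:ptime-degenerate} rather than redo its combinatorics: since $\dtwo$ contains $\dthree$ as its first clause, in either bullet the hypothesis already implies that $q$ satisfies $\dthree$, and Lemma~\ref{lem:ptime-degenerate} then yields that $\extend{q}$ satisfies $\ctwo$. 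This settles the $\dtwo$ bullet directly, and the $\dthree$ bullet follows because $\ctwo$ implies $\cthree$.

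The genuinely delicate point — that three consecutive occurrences $uRv_1Rv_2Rw$ of $R$ in $p$ must satisfy $v_1=v_2$, obtained from the fact that $\pathcons{uRv_1Rv_2Rw}{c}$ maps homomorphically into $\pathcons{uRv_1Rv_2Rv_2Rw}{c}$ while the constant $c$ shifts the suffix alignment by one copy of $Rv_2$, so that the block $v_1$ is matched against a block $v_2$ — is already carried out in Lemma~\ref{lem:ptime-degenerate}. Thus the only real obstacle specific to the present lemma is setting up the homomorphism/word dictionary of the second paragraph cleanly, together with the bookkeeping that appending the unique fresh symbol $N$ turns ``suffix of'' into ``factor of'' (because $N$ occurs once in both $pN$ and in every word to which $pN$ rewinds, forcing any occurrence of $pN$ to align at the end); the $\done$ case then reduces to the short self-join-free observation above.
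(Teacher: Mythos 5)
Your proposal is correct and follows essentially the same route as the paper: handle the constant-free case trivially, show that $\done$ forces $\chr{q}$ (hence $\extend{q}$) to be self-join-free so that $\cone$ holds vacuously, and delegate the $\dtwo$ and $\dthree$ bullets to Lemma~\ref{lem:ptime-degenerate} via the observation that $\dtwo$ implies $\dthree$ and $\ctwo$ implies $\cthree$. Your homomorphism-to-word ``dictionary'' merely supplies the justification for the step the paper asserts without detail (that a prefix homomorphism from $\pathcons{uRvRw}{c}$ to $\pathcons{uRvRvRw}{c}$ is impossible, so $\done$ rules out self-joins), and that justification is sound.
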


\begin{proof}
The lemma holds trivially if $q$ contains no constant. 
Assume from here on that $q$ contains at least one constant.

Assume that $q$ satisfies $\done$. Then $\chr{q}$ must be self-join-free. In this case, $\extend{q}$ is self-join-free, and thus $\extend{q}$ satisfies $\cone$.

For the two remaining items, assume that $q$ satisfies $\dtwo$ or $\dthree$.
Since $\dtwo$ logically implies $\dthree$, $q$ satisfies $\dthree$.
By Lemma~\ref{lem:ptime-degenerate}, $\extend{q}$ satisfies $\ctwo$.
Since $\ctwo$ logically implies $\cthree$, $q$ satisfies $\cthree$.
\end{proof}

We can now prove the upper bounds in Theorem~\ref{thm:main-constant}.

\begin{proof}[Proof of upper bounds in Theorem~\ref{thm:main-constant}]
Since first-order reductions compose, by Lemmas~\ref{lemma:prefix-suffices} and~\ref{lem:last-step}, there is a first-order reduction from the problem $\cqa{q}$ to $\cqa{\extend{q}}$. The upper bound results then follow by Lemma~\ref{lem:never-higher}. 
\end{proof}

\subsection{Complexity Lower Bounds in Theorem~\ref{thm:main-constant}}

The complexity lower bounds in Theorem~\ref{thm:main-constant} can be proved by slight modifications of the proofs in Sections~\ref{sec:nlhard} and~\ref{sec:conphard}. We explain these modifications below for a generalized path query~$q$ containing at least one constant.
Note incidentally that the proof in Section~\ref{sec:phard} needs no revisiting, because, by Lemma~\ref{lem:ptime-degenerate}, a violation of~$\dtwo$ implies a violation of $\dthree$.
%Indeed, if $q$ violates $\dtwo$, then, by Lemma~\ref{lem:ptime-degenerate}, it also violates $\dthree$.

In the proof of Lemma~\ref{lemma:nl-hard}, let $\chr{q} = \pathcons{uRvRw}{c}$ where $c$ is a constant and there is no prefix homomorphism from $\chr{q}$ to $\pathcons{uRvRvRw}{c}$. Let $p = q \setminus \chr{q}$. Note that the path query $uRv$ does not contain any constant. We revise the reduction description in Lemma~\ref{lemma:nl-hard} to be  
\begin{itemize}
\item for each vertex $x \in V \cup \{s'\}$, we add $\phi_\bot^x[u]$; 
\item for each edge $(x, y) \in E \cup \{(s',s),(t,t')\}$, we add $\phi_x^y[Rv]$;
\item for each vertex $x \in V$, we add $\phi^c_x[Rw]$;  and
\item add a canonical copy of $p$ (which starts in the constant $c$).
\end{itemize}
An example is shown in Figure~\ref{fig:NL-constant}. 
Since the constant $c$ occurs at most twice in~$q$ by Definition~\ref{def:gpq}, the query $q$ can only be satisfied by a repair including each of $\phi_\bot^x[u]$, $\phi_x^y[Rv]$, $\phi^c_y[Rw]$, and the canonical copy of $p$. 
\NL-hardness can now be proved as in the proof of~Lemma~\ref{lemma:nl-hard}.

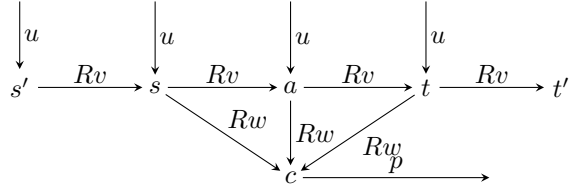
\begin{figure}[!h]\centering
    \begin{tikzpicture}[->,>=stealth,auto=left, scale=1.8,vnode/.style={circle,black,inner sep=1pt,scale=1},el/.style = {inner sep=3/2pt}]
      \node[vnode] (s') at (0,0) {$s'$};
      \node[vnode] (s) at (1,0) {$s$};
      \node[vnode] (a) at (2,0) {$a$};
      \node[vnode] (t) at (3,0) {$t$};
      \node[vnode] (t') at (4,0) {$t'$};
      
      \node[vnode] (ps') at (0,2/3) {};
      \node[vnode] (ps) at (1,2/3) {};
      \node[vnode] (pa) at (2,2/3) {};
      \node[vnode] (pt) at (3,2/3) {};
      
       \node[vnode] (ts) at (1,-2/3) {};
      \node[vnode] (ta) at (2,-2/3) {$c$};
      \node[vnode] (tt) at (3,-2/3) {};
      \node[vnode] (tp) at (3.5, -2/3) {};

       \path[->] (ps') edge node[el] {$u$} (s'); 
       \path[->] (ps) edge node[el] {$u$} (s); 
       \path[->] (pa) edge node[el] {$u$} (a); 
       \path[->] (pt) edge node[el] {$u$} (t); 
       
       \path[->] (s) edge node[el] {$Rw$} (ta); 
       \path[->] (a) edge node[el] {$Rw$} (ta); 
       \path[->] (t) edge node[el] {$Rw$} (ta); 
       \path[->] (ta) edge node[el] {$p$} (tp);

       \path[->] (s') edge node[el] {$Rv$} (s); 
       \path[->] (s) edge node[el] {$Rv$} (a); 
       \path[->] (a) edge node[el] {$Rv$} (t); 
       \path[->] (t) edge node[el] {$Rv$} (t'); 
    \end{tikzpicture}
    \caption{Database instance for the revised \NL-hardness reduction from the graph $G$ with  $V=\{s,a,t\}$ and $E=\{(s,a),(a,t)\}$.}
    \label{fig:NL-constant}
\end{figure}

\begin{figure}[!h]\centering
    \begin{tikzpicture}[->,>=stealth,auto=left, scale=1.8,vnode/.style={circle,black,inner sep=1pt,scale=1},el/.style = {inner sep=3/2pt}]
      \node[vnode] (c2) at (-1/2,0) {$(x_2 \vee \overline{x_3})$};
       \node[vnode] (c1) at (-1/2,1) {$(x_1 \vee \overline{x_2})$};
      
      \node[vnode] (x3) at (1,-1/2) {$x_3$};
      \node[vnode] (x2) at (1,1/2) {$x_2$};
      \node[vnode] (x1) at (1,3/2) {$x_1$};
      
      % \node[vnode] (x3f) at (2,-1+1/4) {$-$};
      % \node[vnode] (x3t) at (2,-1/2+1/4) {$+$};
      % \node[vnode] (x2f) at (2,1/4) {$-$};
      % \node[vnode] (x2t) at (2,1/2+1/4) {$+$};
      % \node[vnode] (x1f) at (2,1+1/4) {$-$};
      % \node[vnode] (x1t) at (2,3/2+1/4) {$+$};
      
      \node[vnode] (c) at (2.5, 1/2) {$c$};
      \node[vnode] (p) at (3, 1/2) {};
      
      \path[->] (c1) edge node[el] {$u$} (x1); 
      \path[->] (c1) edge node[el] {$uRv$} (x2); 
      \path[->] (c2) edge node[el] {$u$} (x2); 
      \path[->] (c2) edge node[el] {$uRv$} (x3); 
      
      \path[->] (c) edge node[el] {$p$} (p);

      \path[->] (x1) edge[bend left] node [pos=.023,above] {$+$} node[el] {$Rw$} (c); 
       \path[->] (x1) edge node [pos=.05,below] {$-$} node[el,left] {$RvRw$} (c); 
       \path[->] (x2) edge[bend left,below] node [pos=.05,above] {$+$} node[el] {$Rw$} (c); 
       \path[->] (x2) edge[bend right] node [pos=.05,below] {$-$} node[el] {$RvRw$} (c); 
       \path[->] (x3) edge node [pos=.05,above] {$+$} node[el] {$Rw$} (c); 
       \path[->] (x3) edge[bend right] node [pos=.025,below] {$-$} node[el,right] {$RvRw$} (c); 
    \end{tikzpicture}
    \caption{Database instance for the revised \coNP-hardness reduction from the formula $\psi = (x_1 \vee \overline{x_2}) \wedge (x_2 \vee \overline{x_3})$.}
    \label{fig:coNP-constant}
\end{figure}
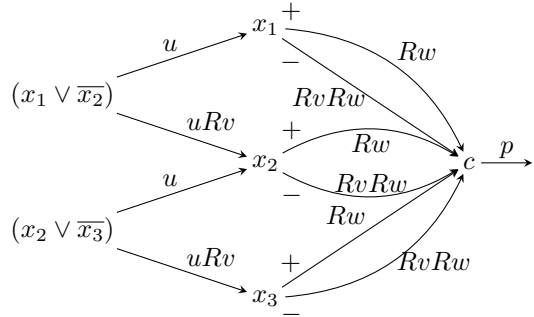

In the proof of Lemma~\ref{lemma:conp-hard}, let $\chr{q} = \pathcons{uRvRw}{c}$ where $c$ is a constant and there is no homomorphism from $\chr{q}$ to $\pathcons{uRvRvRw}{c}$. Let $p = q \setminus \chr{q}$. Note that both path queries $uRv$ and $u$ do not contain any constant. We revise the reduction description in Lemma~\ref{lemma:conp-hard} to be  
\begin{itemize}
\item for each variable $z$, we add $\phi_{z}^c[Rw]$ and $\phi_z^c[RvRw]$; 
\item for each clause $C$ and positive literal $z$ of $C$, we add $\phi_{C}^z[u]$;
\item for each clause $C$ and variable $z$ that occurs in a negative literal of $C$, we add $\phi_{C}^z[uRv]$; and
\item add a canonical copy of $p$ (which starts in the constant~$c$).
\end{itemize}
An example is shown in Figure~\ref{fig:coNP-constant}. Since the constant $c$ occurs at most twice in~$q$, the query $q$ can only be satisfied by a repair~$\rep$ such that either
\begin{itemize}
\item
$\rep$ contains $\phi_{C}^z[uRv]$, $\phi_{z}^c[Rw]$, and the canonical copy of $p$; or 
\item
$\rep$ contains $\phi_{C}^z[u]$, $\phi_z^c[RvRw]$, and the canonical copy of $p$.
\end{itemize}
\coNP-hardness can now be proved as in the proof of~Lemma~\ref{lemma:conp-hard}.

\subsection{Proof of Theorem~\ref{lem:trichotomy}}

\begin{proof}[Proof of Theorem~\ref{lem:trichotomy}]
Immediate consequence of Theorem~\ref{thm:main-constant} and Lemma~\ref{lem:ptime-degenerate}.
\end{proof}

%\onecolumn
%\section{Corrigendum}
%\input{erratum}
% \input{new-algo}
% \input{ufa}
% \input{circuit}

\end{document}